\crefname{figure}{\bf Fig.}{\bf Figs.}
\crefname{table}{\bf Tab.}{Tabs.}
\crefname{algorithm}{\bf Alg.}{\bf Algs.}
\crefname{section}{\bf Sect.}{\bf Sects.}
\crefname{example}{\bf Example}{\bf Examples}
\crefname{chapter}{\bf Section}{\bf Sections}
\crefname{theorem}{\bf Theorem}{\bf Theorems}
\crefname{lemma}{\bf Lemma}{\bf Lemmas}
\crefname{corollary}{\bf Corollary}{\bf Corollaries}
\title{Overlapped Arithmetic Codes}
\author{Fang,Yong}
\affil{Chang'an University; fy@chd.edu.cn}
\begin{document}

\makeabstracttitle

\begin{abstract}
In his seminal paper giving birth to information theory, Shannon raised two fundamental coding problems in modern communications systems, \textit{i.e.}, {\em source coding} for data compression and {\em channel coding} for error correction, which form the two main branches of information theory. Since then, many new coding problems appeared. For example, in the 1970s, Slepian and Wolf created a new branch for information theory, \textit{i.e.}, {\em distributed source coding}, which is also called multiple-terminal source coding. 

Arithmetic codes are usually deemed as the most important means to implement lossless source coding, whose principle is mapping every source symbol to a sub-interval in $[0,1)$. For every source symbol, the length of its mapping sub-interval is exactly equal to its probability. With this symbol-interval mapping rule, the interval $[0,1)$ will be fully covered and there is neither {\em overlapped} sub-interval (corresponds to more than one source symbol) nor {\em forbidden} sub-interval (does not correspond to any source symbol).

It is well-known that there is a duality between source coding and channel coding, so every good source code may also be a good channel code meanwhile, and vice versa. Inspired by this duality, arithmetic codes can be easily generalized to address many coding problems beyond source coding by redefining the source-interval mapping rule. If every source symbol is mapped to an enlarged sub-interval, the mapping sub-intervals of different source symbols will be partially {\em overlapped} and we obtain {\em overlapped arithmetic codes}, which can realize {\em distributed source coding}. On the contrary, if every source symbol is mapped to a narrowed sub-interval, there will be one or more {\em forbidden} sub-intervals in $[0,1)$ that do not correspond to any source symbol and we obtain {\em forbidden arithmetic codes}, which can implement {\em joint source-channel coding}. Furthermore, by allowing the coexistence of {\em overlapped} sub-intervals and {\em forbidden} sub-intervals, we will obtain {\em hybrid arithmetic codes}, which can cope with {\em distributed joint source-channel coding}.

The aim of this monograph is to review recent advances in {\em overlapped arithmetic codes} during the past decade. This monograph includes six sections. \cref{c-intro} introduces the principle of arithmetic codes by emphasizing two implementation issues---{\em underflow handling} and {\em bitstream termination}. \cref{c-ccs} reveals that overlapped arithmetic codes are actually a kind of nonlinear coset codes and hence possess two fundamental properties---{\em Coset Cardinality Spectrum} (CCS) and {\em Hamming Distance Spectrum} (HDS). Then \cref{c-ccs} shows how to calculate the CCS of overlapped arithmetic codes theoretically or numerically. On the basis of CCS, \cref{c-ccs} also derives the decoding algorithm of overlapped arithmetic codes. \cref{c-coexist} introduces a powerful analysis tool---{\em coexisting interval} to calculate the error rate of overlapped arithmetic codes. The topic of HDS is addressed in \cref{c-hds}, where a tight connection between CCS and HDS is also built. Then aided by HDS, \cref{c-tailed} explains why overlapped arithmetic codes can be improved by mapping the last few symbols of a block to non-overlapped intervals. Finally, \cref{c-conclusion} concludes this monograph.
\end{abstract}

\chapter{Review on Arithmetic Codes}\label{c-intro}
\vspace{-25ex}%
\section{Theoretical Foundation}
This subsection will briefly review the main theoretical results of four coding problems. We begin with some notation conventions. Let $(X,Y)$ be a pair of discrete random variables defined over the alphabet ${\cal X}\times{\cal Y}$. The joint {\em probability mass function} (pmf) of $X$ and $Y$ is denoted as $p(x,y)$. The marginal pmfs of $X$ and $Y$ are denoted as $p(x)$ and $p(y)$, respectively. Let $p(x|y)$ be the conditional pmf of $X$ given $Y$, and $p(y|x)$ that of $Y$ given $X$. These pmfs are related by
\begin{align}
	p(x,y) = p(x)p(y|x) = p(y)p(x|y).
\end{align}
 
\begin{definition}[Entropy, Conditional Entropy, Joint Entropy, and Mutual Information]
	Let $X$ and $Y$ be two discrete random variables with finite alphabets ${\cal X}$ and ${\cal Y}$, respectively. The entropy of $X$ is defined as 
	\begin{align}
		H(X)\triangleq-\sum_{x\in{\cal X}}p(x)\log_2{p(x)},
	\end{align}	
	and the entropy of $Y$, denoted by $H(Y)$, is defined in a similar way. The conditional entropy of $X$ given $Y$ is defined as
	\begin{align}
		H(X|Y) \triangleq -\sum_{x\in{\cal X}}\sum_{y\in{\cal Y}}p(x,y)\log_2{p(x|y)},
	\end{align}
	and the conditional entropy of $Y$ given $X$, denoted by $H(Y|X)$, can be defined in a similar way. The joint entropy of $X$ and $Y$ is defined as
	\begin{align}
		H(X,Y) \triangleq -\sum_{x\in{\cal X}}\sum_{y\in{\cal Y}}p(x,y)\log_2{p(x,y)},
	\end{align}
	and the mutual information between $X$ and $Y$ is defined as
	\begin{align}
		I(X;Y) \triangleq -\sum_{x\in{\cal X}}\sum_{y\in{\cal Y}}p(x,y)\log_2\frac{p(x)p(y)}{p(x,y)}.
	\end{align}
\end{definition}
According to the above definitions, it is easy to obtain
\begin{align}
	H(X,Y) = H(X) + H(Y|X) = H(Y) + H(X|Y)
\end{align}
and
\begin{align}
	I(X;Y) = H(X)+H(Y)-H(X,Y).
\end{align}
Now we give Shannon's First Theorem \cite{Shannon}, which lays the theoretical foundation for lossless coding of memoryless discrete sources.
\begin{theorem}[Shannon's First Theorem]\label{thm:shannon1}
	Let $X$ be a discrete random variable with finite alphabet 
	${\cal X}$. Let $x^n\triangleq(x_1,\dots,x_n)$ be $n$ independent realizations of $X$. To compress $x^n$ with zero loss, the achievable rate $R$ (bits/symbol) is lower bounded by $H(X)$ as block length $n\to\infty$.
\end{theorem}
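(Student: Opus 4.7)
The plan is to establish the theorem in its standard two-sided form: achievability (some code with rate arbitrarily close to $H(X)$ has vanishing error) and converse (any code with rate strictly below $H(X)$ cannot be asymptotically lossless), both in the limit $n\to\infty$. The workhorse tool is the Asymptotic Equipartition Property (AEP), which follows from the weak law of large numbers applied to the i.i.d.\ sequence $\{-\log_2 p(X_i)\}_{i=1}^n$, since $\mathbb{E}[-\log_2 p(X_i)] = H(X)$.

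First I would introduce the $\epsilon$-typical set
\begin{align}
A_\epsilon^{(n)} \triangleq \Bigl\{ x^n \in {\cal X}^n : \Bigl| -\tfrac{1}{n}\log_2 p(x^n) - H(X) \Bigr| < \epsilon \Bigr\},
\end{align}
and establish its two key properties: $\Pr(X^n \in A_\epsilon^{(n)}) \to 1$ (by the law of large numbers) and $|A_\epsilon^{(n)}| \leq 2^{n(H(X)+\epsilon)}$ (by summing the pointwise lower bound $p(x^n) \geq 2^{-n(H(X)+\epsilon)}$ valid on $A_\epsilon^{(n)}$). For achievability I would index each typical sequence with $\lceil n(H(X)+\epsilon)\rceil$ bits, prefixed by a single flag bit distinguishing typical from atypical sequences; atypical sequences are encoded trivially using $\lceil n\log_2|{\cal X}|\rceil$ bits. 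The decoding-error probability equals $\Pr((A_\epsilon^{(n)})^c) \to 0$, while the amortized rate tends to $H(X)+\epsilon$, so any $R>H(X)$ is achievable.

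For the converse I would invoke Fano's inequality. Suppose an encoder-decoder pair yields reconstruction $\hat X^n$ with error probability $P_e^{(n)}\to 0$ and the encoder uses at most $2^{nR}$ distinct codewords. Fano's inequality gives
\begin{align}
H(X^n \mid \hat X^n) \leq 1 + P_e^{(n)} \cdot n\log_2|{\cal X}|.
\end{align}
Combining this with the i.i.d.\ identity $H(X^n) = nH(X)$ and the trivial bound $I(X^n;\hat X^n) \leq H(\hat X^n) \leq nR$, I obtain
\begin{align}
nH(X) = I(X^n;\hat X^n) + H(X^n \mid \hat X^n) \leq nR + 1 + n\cdot P_e^{(n)}\log_2|{\cal X}|.
\end{align}
Dividing by $n$ and letting $n\to\infty$ yields $H(X)\leq R$, which is the claimed lower bound on achievable rate.

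The main obstacle is not any single calculation but rather keeping the two-sided argument clean without first developing the full typical-set machinery from scratch. In particular, the cardinality upper bound on $A_\epsilon^{(n)}$ is easy, whereas a matching lower bound (which one would want in order to argue that \emph{no} coarser representation of $A_\epsilon^{(n)}$ suffices) requires a separate concentration estimate on the lower tail of $-\tfrac{1}{n}\log_2 p(X^n)$. On the converse side, the delicate step is justifying Fano's inequality itself, which is the one ingredient not immediately implied by the definitions given in the preceding paragraphs and is typically proved via a short chain-rule argument on $H(E,X^n\mid\hat X^n)$ where $E = \mathbf{1}\{X^n\neq \hat X^n\}$.
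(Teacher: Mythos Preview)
Your proof proposal is the standard textbook argument and is correct. However, the paper does not prove this theorem at all: it simply states Shannon's First Theorem as a classical result (with a citation to Shannon's original paper) and moves on, so there is no ``paper's own proof'' to compare against. Your achievability-via-typical-sets and converse-via-Fano argument is exactly what one finds in, e.g., Cover and Thomas, and nothing in it is flawed; it is just more than the paper itself undertakes for this background statement.
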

We call the random variable $X$ to be compressed as the {\em source}. Since the achievable rate $R\geq H(X)$, lossless source coding is also called {\em entropy coding}. Note that \cref{thm:shannon1} considers only memoryless sources. As for lossless coding of sources with memory, the reader may refer to {\bf Chapter~4} of Cover's textbook \citep{Cover}. 

Then we consider the problem of channel coding. A discrete memoryless transmission {\em channel} can be described by the conditional pmf $p(y|x)$ of the discrete output $Y$ given the discrete input $X$. The most important property of a channel is its {\em capacity}.
\begin{definition}[Channel Capacity]
	Let $p(y|x)$ be a discrete memoryless channel with finite alphabet. Its capacity is defined as
	\begin{align}
		C \triangleq \max_{p(x)}I(X;Y).
	\end{align} 
\end{definition}
Following is Shannon's second theorem, which lays the theoretical foundation for channel coding \cite{Shannon}.
\begin{theorem}[Shannon's Second Theorem]
	Let $(x,y)^n$ be $n$ independent uses of a discrete memoryless channel with finite alphabet. The information amount conveyed by one use of the channel (bits/use) is upper bounded by $C$ as block length $n\to\infty$.
\end{theorem}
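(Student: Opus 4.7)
The plan is to prove Shannon's Second Theorem in the standard two-part form: an \emph{achievability} direction establishing that every rate $R<C$ can be transmitted with arbitrarily small probability of error, and a \emph{converse} direction establishing that any rate $R>C$ forces the probability of error away from zero. Together they pin the supremum of achievable rates to exactly $C$, which is the statement of the theorem.

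For achievability, I would use Shannon's random coding argument. First, fix an input distribution $p(x)$ that attains the maximum in the definition of $C$. Generate a codebook of $2^{nR}$ codewords independently and i.i.d.\ from $\prod_{i=1}^n p(x_i)$, and use joint-typicality decoding: given the received sequence $y^n$, declare $\hat{w}$ if it is the unique message whose codeword is jointly $\epsilon$-typical with $y^n$. The analysis then splits the error event into (i) the transmitted codeword is not jointly typical with $y^n$, whose probability vanishes by the Asymptotic Equipartition Property, and (ii) some other codeword is jointly typical with $y^n$, whose probability is bounded using the packing lemma by $2^{nR}\cdot 2^{-n(I(X;Y)-3\epsilon)}$. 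Taking expectations over the random codebook and choosing $R<I(X;Y)-3\epsilon=C-3\epsilon$, the average error probability tends to $0$, so at least one deterministic codebook achieves the same, and an expurgation step converts average to maximal error. The main obstacle here is the careful bookkeeping of typicality events and making the $\epsilon$'s match up; the conceptual core, however, is already supplied by the definitions of $I(X;Y)$ and $C$.

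For the converse, I would appeal to Fano's inequality combined with the data processing inequality. Let $W$ be the message uniform on $\{1,\dots,2^{nR}\}$, $X^n$ the transmitted codeword, $Y^n$ the channel output, and $\hat{W}$ the decoder output. From $nR=H(W)=I(W;\hat{W})+H(W\mid\hat{W})$ and Fano's inequality $H(W\mid\hat{W})\le 1+P_e^{(n)}\,nR$, and from the Markov chain $W\to X^n\to Y^n\to\hat{W}$ together with memorylessness
\begin{align}
I(X^n;Y^n)=H(Y^n)-H(Y^n\mid X^n)\le\sum_{i=1}^n\bigl(H(Y_i)-H(Y_i\mid X_i)\bigr)\le nC,
\end{align}
I obtain $nR\le nC+1+P_e^{(n)}\,nR$, whence $R\le C+1/n+P_e^{(n)}R$. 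If $R>C$ then $P_e^{(n)}$ is bounded away from $0$ for all large $n$, which is the desired converse.

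The heaviest lifting is in the achievability step, because the converse essentially follows from three lines of entropy inequalities once Fano is in hand, while achievability requires setting up the random ensemble, invoking the AEP, and verifying the packing bound uniformly in the choice of codeword. I would also note that the theorem as stated gives an upper bound on achievable information per channel use; the converse argument is exactly what is needed for that direction, while achievability guarantees this bound is tight.
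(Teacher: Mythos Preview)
Your proof is the standard textbook argument (random coding plus joint-typicality decoding for achievability, Fano's inequality and the data processing inequality for the converse) and is correct. However, the paper does not actually supply a proof of this statement: Shannon's Second Theorem appears in Section~1.1 purely as background, stated without proof and attributed to \cite{Shannon}, with the reader implicitly referred to standard texts such as \cite{Cover}. So there is nothing in the paper to compare your argument against; you have simply written out the classical proof that the paper takes for granted.

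One minor remark: the theorem as phrased in the paper asserts only the upper bound (the converse direction), so strictly speaking your Fano-based converse alone suffices for what is stated; the achievability half you sketched is the complementary direction that makes the bound tight but is not part of the claim as worded.
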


\begin{figure}[!t]
	\centering
	\subfigure[]{\includegraphics[width=.5\linewidth]{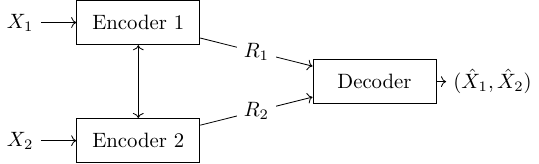}\label{subfig:cscsym}}%
	\subfigure[]{\includegraphics[width=.5\linewidth]{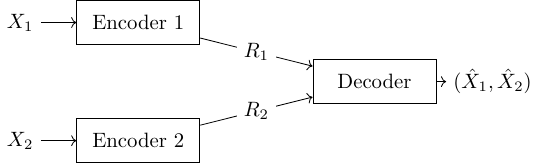}\label{subfig:dscsym}}\\
	\subfigure[]{\includegraphics[width=.5\linewidth]{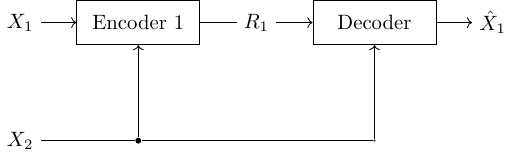}\label{subfig:cscasym}}%
	\subfigure[]{\includegraphics[width=.5\linewidth]{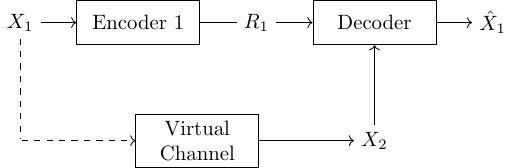}\label{subfig:dscasym}}
	\caption{Four typical frameworks of source coding. (a) Symmetric CSC. (b) Symmetric DSC. (c) Asymmetric CSC. (d) Asymmetric DSC.}
	\label{fig:schemes}
\end{figure}

The third coding problem is {\em multiterminal source coding}, or the so-called {\em Distributed Source Coding} (DSC). Shown in \cref{fig:schemes} are four typical frameworks of source coding. The differences between {\em Conventional/Centralized Source Coding} (CSC) and DSC are quite clear: (a) In the symmetric case, the two DSC encoders do not communicate with each other; and (b) In the asymmetric case, the DSC encoder of $X_1$ does not know the side information $X_2$. Just as CSC, there are two (lossless or lossy) forms of DSC. Following is {\em Slepian-Wolf theorem} \citep{SWC}, which lays the theoretical foundation for lossless DSC.
\begin{theorem}[Slepian-Wolf Theorem]\label{thm:sw}
	Let $X_1$ and $X_2$ be two correlated discrete sources. To compress $X_1$ and $X_2$ with zero loss, the achievable rate region is lower bounded by $R_1\geq H(X_1|X_2)$,  $R_2\geq H(X_2|X_1)$, and  $R_1+R_2\geq H(X_1,X_2)$, no matter whether the encoders of $X_1$ and $X_2$ communicate with each other or not.  
\end{theorem}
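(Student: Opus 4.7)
The plan is to prove the Slepian--Wolf theorem in two halves: an \emph{achievability} part showing every rate pair $(R_1,R_2)$ strictly inside the claimed region is attainable, and a \emph{converse} showing no rate pair outside it can be achieved with vanishing error. The standard tool for achievability is \emph{random binning}, while the converse rests on Fano's inequality together with the chain-rule identities recorded just above \cref{thm:sw}.

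For achievability I would independently and uniformly assign each sequence $x_1^n\in\mathcal X_1^n$ to one of $2^{nR_1}$ bins and each $x_2^n\in\mathcal X_2^n$ to one of $2^{nR_2}$ bins; each encoder transmits only the index of the bin containing its observation. The joint decoder, given indices $(b_1,b_2)$, searches for a unique pair $(\hat x_1^n,\hat x_2^n)$ lying in $b_1\times b_2$ and being jointly $\epsilon$-typical with respect to $p(x_1,x_2)$. The analysis then splits the error event into four pieces: (i) the true pair $(X_1^n,X_2^n)$ is not jointly typical, which vanishes by the AEP; (ii) some other $\tilde x_1^n$ in the same $X_1$-bin is jointly typical with $X_2^n$, whose expected count is $\lesssim 2^{n(H(X_1|X_2)-R_1+\epsilon)}$; (iii) the symmetric event for $X_2^n$, controlled by $R_2>H(X_2|X_1)$; and (iv) some different pair in the same product bin is jointly typical, controlled by $R_1+R_2>H(X_1,X_2)$. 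Summing these bounds and letting $n\to\infty$ shows the expected error probability tends to zero, so at least one deterministic binning attains the desired performance.

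For the converse, suppose encoders produce messages $M_i\in\{1,\dots,2^{nR_i}\}$ and a joint decoder reconstructs $(\hat X_1^n,\hat X_2^n)$ with $\Pr\{(\hat X_1^n,\hat X_2^n)\neq(X_1^n,X_2^n)\}\to 0$. Fano's inequality gives $H(X_1^n,X_2^n\mid M_1,M_2)\leq n\epsilon_n$ with $\epsilon_n\to0$. Using $nR_1\geq H(M_1)\geq H(M_1\mid X_2^n)\geq I(X_1^n;M_1\mid X_2^n)=H(X_1^n\mid X_2^n)-H(X_1^n\mid M_1,X_2^n)\geq nH(X_1\mid X_2)-n\epsilon_n$ (invoking memorylessness for the equality $H(X_1^n\mid X_2^n)=nH(X_1\mid X_2)$ and Fano to bound the conditional entropy given $(M_1,M_2,X_2^n)$), one obtains $R_1\geq H(X_1\mid X_2)$; the bound $R_2\geq H(X_2\mid X_1)$ follows by symmetry, and $R_1+R_2\geq H(X_1,X_2)$ comes from $n(R_1+R_2)\geq H(M_1,M_2)\geq H(X_1^n,X_2^n)-n\epsilon_n$.

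I expect the main obstacle to be the achievability step, specifically the fourth error term where a wrong pair lies in the same product bin and is jointly typical; a loose union bound must be combined with the correct joint-typicality cardinality $|A_\epsilon^{(n)}(X_1,X_2)|\leq 2^{n(H(X_1,X_2)+\epsilon)}$ to yield the tight sum-rate constraint, and care is needed to separate this ``both-wrong'' event from the ``one-wrong'' events (ii) and (iii) so the three individual constraints emerge simultaneously rather than only in combination. The converse, by contrast, is essentially a chain-rule bookkeeping exercise once Fano's inequality and the independence structure of the encoders are properly invoked.
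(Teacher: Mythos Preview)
Your proposal is the standard textbook proof (random binning for achievability, Fano plus chain rule for the converse) and is correct in outline. Note, however, that the paper does not actually prove \cref{thm:sw}: it is stated as a known foundational result with a citation to Slepian and Wolf's original paper, so there is no ``paper's own proof'' to compare against. Your sketch is exactly the argument one finds in Cover--Thomas or similar references, and the care you flag for the fourth error event (both coordinates wrong) is indeed the one place where the union bound must be paired with the joint-typicality count $2^{n(H(X_1,X_2)+\epsilon)}$ rather than a product of marginal counts.
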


After comparing \cref{thm:sw} with \cref{thm:shannon1}, it can be seen that lossless DSC can achieve the same performance as lossless CSC. Lossless DSC is also called {\em Slepian-Wolf coding}. As for lossy DSC, its general form is called {\em Berger-Tung coding} \citep{BTC}, and its asymmetric form with decoder side information is called {\em Wyner-Ziv coding} \citep{WZC}. 

\begin{figure}[!t]
	\centering
	\includegraphics[width=\linewidth]{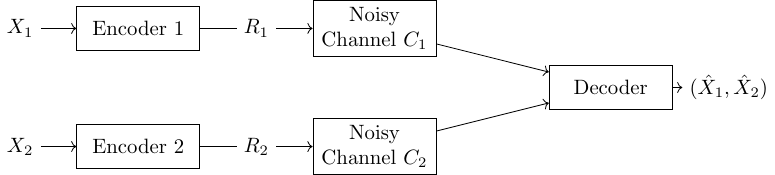}
	\caption{Diagram of the DJSCC problem for two terminals.}
	\label{fig:djscc}
\end{figure}

Finally, we consider the noisy-channel Slepian-Wolf coding problem, which is the lossless version of the {\em Distributed Joint Source-Channel Coding} (DJSCC) problem (cf. \cref{fig:djscc}). According to the Source-Channel Separation Theorem, the following theorem holds.

\begin{theorem}[Noisy-Channel Slepian-Wolf Coding]\label{thm:djscc}
	If two discrete sources $X_1$ and $X_2$ are compressed independently and their bitstreams are conveyed over two noisy channels with capacities $C_1$ and $C_2$, respectively, then the achievable rate region is given by $R_1\geq H(X_1|X_2)/C_1$,  $R_2\geq H(X_2|X_1)/C_2$, and  $C_1R_1+C_2R_2\geq H(X_1,X_2)$.  
\end{theorem}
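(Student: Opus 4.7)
The plan is to prove \cref{thm:djscc} by invoking source-channel separation, which reduces the problem to a cascade of Slepian-Wolf source coding (already supplied by \cref{thm:sw}) and two independent Shannon channel codes, one per link. For achievability, fix any interior rate pair $(R'_1, R'_2)$ of the Slepian-Wolf region, so that $R'_i > H(X_i|X_{3-i})$ and $R'_1+R'_2 > H(X_1,X_2)$. On the $i$-th terminal, first apply a Slepian-Wolf encoder that compresses $X^n_i$ into $\lceil nR'_i\rceil$ bits; then feed these bits into a channel code for the $i$-th channel that operates at a rate arbitrarily close to $C_i$ bits per channel use. This channel code consumes $n_i$ channel uses with $n_i/n \to R'_i/C_i$. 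Setting $R_i \triangleq n_i/n$ and letting the slack go to zero exhibits every rate pair with $C_iR_i > H(X_i|X_{3-i})$ and $C_1R_1+C_2R_2 > H(X_1,X_2)$ as achievable.

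The converse is the place where care is needed. Suppose a coding scheme achieves error probability $P_e^{(n)}\to 0$ while expending $nR_i$ uses of channel $i$. Let $T_i^{nR_i}$ be the channel inputs produced by encoder $i$ (a deterministic function of $X^n_i$ alone, since the encoders are distributed) and $Y_i^{nR_i}$ the corresponding outputs. Fano's inequality yields $H(X^n_1,X^n_2\mid Y_1^{nR_1},Y_2^{nR_2})\le n\epsilon_n$ with $\epsilon_n\to 0$. Then one writes
\begin{align}
nH(X_1,X_2) &= H(X^n_1,X^n_2) \le I(X^n_1,X^n_2;Y_1^{nR_1},Y_2^{nR_2}) + n\epsilon_n \\
&\le I(T_1^{nR_1};Y_1^{nR_1}) + I(T_2^{nR_2};Y_2^{nR_2}) + n\epsilon_n \\
&\le nR_1C_1 + nR_2C_2 + n\epsilon_n,
\end{align}
where the second line uses data processing together with the independence of the two channels and their noises, and the third line uses the definition of channel capacity. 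Dividing by $n$ and sending $n\to\infty$ delivers $C_1R_1+C_2R_2 \ge H(X_1,X_2)$. The individual bounds $C_iR_i \ge H(X_i|X_{3-i})$ follow by the same chain after additionally conditioning on $X^n_{3-i}$, and after observing that $Y_{3-i}^{nR_{3-i}}$ is conditionally independent of $X^n_i$ given $X^n_{3-i}$ (so conditioning on it can only shrink the relevant mutual information).

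The main obstacle is not algebraic but structural: the whole argument rests on the fact that the mutual information between the source pair and the channel output pair factorizes over the two links. This factorization relies on three stipulations present in \cref{fig:djscc} that must be checked explicitly, namely that each $T_i^{nR_i}$ depends on $X^n_i$ alone (distributed encoding with no inter-encoder communication), that the two channels are physically separated with mutually independent noise, and that the noise is independent of the sources. These are precisely the hypotheses that rescue source-channel separation in this distributed setting, where the analogous claim would fail over a genuine multiple-access or broadcast channel.
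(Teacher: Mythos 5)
The paper offers no proof of \cref{thm:djscc}---it merely appeals to the Source-Channel Separation Theorem and moves on---so your proposal is supplying an argument the paper elides rather than reproducing one. Your proof is correct: achievability is the usual cascade of a Slepian-Wolf source code with two independent near-capacity channel codes, one per link, and the converse combines Fano's inequality with the factorization $I(T_1,T_2;Y_1,Y_2)\le I(T_1;Y_1)+I(T_2;Y_2)$, which holds despite $T_1$ and $T_2$ being correlated through the sources because the two channels are conditionally independent given their inputs. You are also right to stress that this factorization, and hence separation, rests on the distributed-encoder and orthogonal-channel structure of \cref{fig:djscc}; the point-to-point separation theorem that the paper cites does not by itself cover the distributed setting, so your converse is doing real work. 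The one loose phrase is the parenthetical ``conditioning on $Y_{3-i}$ can only shrink the relevant mutual information'': that is not a valid general rule, and the precise reason the cross term disappears is $I(X_i^n;Y_{3-i}\mid X_{3-i}^n,Y_i)=0$, since given $X_{3-i}^n$ the output $Y_{3-i}$ is driven by the channel-$(3-i)$ noise alone, which is independent of both $X_i^n$ and $Y_i$. This is a matter of wording, not a gap; the rest of the chain stands.
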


\section{Huffman Codes for Entropy Coding}
Let $X$ be a memoryless discrete source with finite alphabet ${\cal X}$. To compress $X$ with zero loss, Shannon's first theorem reveals that the achievable rate is lower bounded by $H(X)$, the entropy of $X$, but it does not tell how to achieve $H(X)$. In 1952, Huffman proposed a simple but efficient method for lossless data compression \citep{Huffman}, whose principle can be found in every textbook in information theory. However, the performance of Huffman codes is very poor when the source alphabet is small, as shown by \cref{ex:huffman}. In the extreme case of binary source compression, Huffman codes even do not work at all. 

\begin{example}\label{ex:huffman}
As we know, every DNA molecule consists of two strands wound around each other. There are four types of bases in a DNA molecule: \textit{Adenine} ($A$), \textit{Cytosine} ($C$), \textit{Guanine} ($G$), and \textit{Thymine} ($T$). The two strands of a DNA molecule are held together by the bonds between bases. The fractions of $A$, $C$, $G$, and $T$ bases vary for different organisms. Note that $A$ always pairs with $T$, while $C$ always pairs with $G$, so $A$ and $T$ always have the same fraction, while $C$ and $G$ always have the same fraction. Shown in \cref{tab:dnapmf} are the fractions of bases for human, mycoplasma genitalium (MG), arabidopsis thaliana (AT), and caenorhabditis elegans (CE). The Huffman codes for every organism are listed in \cref{tab:dnahuffman}. It can be seen that except MG, the Huffman codes of other organisms are actually fixed-length codes. 
 
\begin{table}[!t]
	\centering
	\caption{Fractions of Bases in DNA Molecules}
	\begin{tabular}{c||c||c||c||c||c}
		\hline	
		Organism &$A$ &$G$ &$T$ &$C$ &$H(X)$ \\
		\hline\hline
		Human &$.295$ &$.295$ &$.205$ &$.205$ &$1.9765$ \\
		\hline
		MG &$.342$ &$.342$ &$.158$ &$.158$ &$1.9$\\
		\hline
		AT &$.325$ &$.325$ &$.175$ &$.175$ &$1.9341$\\
		\hline
		CE &$.32$ &$.32$ &$.18$ &$.18$ &$1.9427$\\
		\hline
	\end{tabular}
	\label{tab:dnapmf}
\end{table}

\begin{table}[!t]
	\centering
	\caption{Huffman Codes for Bases in DNA Molecules}
	\begin{tabular}{c||c||c||c||c||c||c}
		\hline	
		Organism &$A$ &$G$ &$T$ &$C$ &$H(X)$ &$R$ \\
		\hline\hline
		Human &$00$ &$01$ &$10$ &$11$ &$1.9765$ &$2$ \\
		\hline
		MG &$0$ &$10$ &$110$ &$111$  &$1.9$ &$1.974$\\
		\hline
		AT &$00$ &$01$ &$10$ &$11$ &$1.9341$ &$2$\\
		\hline
		CE &$00$ &$01$ &$10$ &$11$ &$1.9427$ &$2$\\
		\hline
	\end{tabular}
	\label{tab:dnahuffman}
\end{table}
\end{example}

The low efficiency of Huffman codes comes from the property that Huffman codes always map every source symbol to a bit string, so that at least one bit is needed for one source symbol. To allow a fractional number of bits for one source symbol, we must map a block of source symbols as a whole to a bit string, which is the idea of arithmetic codes. Though the idea of arithmetic codes can date back to 1950s, Rissanen is widely known as the inventor of arithmetic codes \citep{ac}. For the history of arithmetic codes, the reader may refer to \citep{Langdon}. In the following, we will first introduce the primitive idea behind arithmetic codes, and then discuss two implementation issues, \textit{i.e.}, {\em finite-precision} issue and {\em prefix-code} issue, encountered in practice, which were comprehensively discussed in Witten's classical article \cite{WittenCACM87}.  

\section{Raw Arithmetic Bitstream}\label{subsec:raw}
\begin{figure}[!t]
	\subfigure[$r=1$]{\includegraphics[width=.5\linewidth]{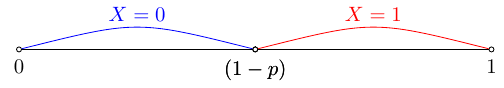}\label{subfig:ac}}%
	\subfigure[$r<1$]{\includegraphics[width=.5\linewidth]{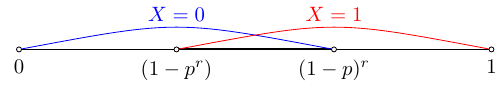}\label{subfig:oac}}
	\subfigure[$r>1$]{\includegraphics[width=.5\linewidth]{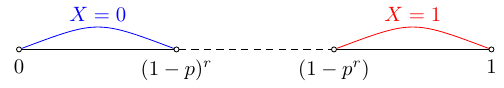}\label{subfig:eac-A}}%
	\subfigure[$r>1$]{\includegraphics[width=.5\linewidth]{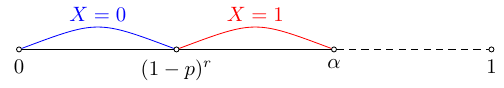}\label{subfig:eac-B}}
	\subfigure[$r>1$]{\includegraphics[width=.5\linewidth]{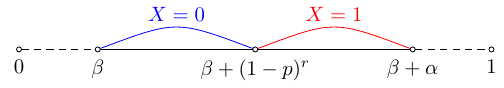}\label{subfig:eac-C}}%
	\subfigure[$r<1$]{\includegraphics[width=.5\linewidth]{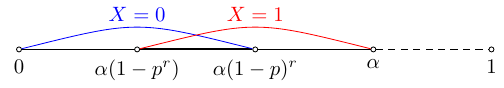}\label{subfig:jscac}}
	\caption{Generalized arithmetic codes for binary sources, where $p\triangleq\Pr(X=1)$. (a) Standard arithmetic codes to handle source coding. (b) Overlapped arithmetic codes to handle DSC (cf. \cref{subfig:dscsym} and \cref{subfig:dscasym}). (c), (d), and (e) Three variants of forbidden arithmetic codes to handle channel coding ($p=1/2$) or joint source-channel coding ($p\neq 1/2$), where $\alpha = (1-p)^r+p^r<1$. (f) Hybrid arithmetic codes to handle the DJSCC problem (cf. \cref{fig:djscc}).}
	\label{fig:gac}
\end{figure}

\begin{figure}[!t]
	\includegraphics[width=\linewidth]{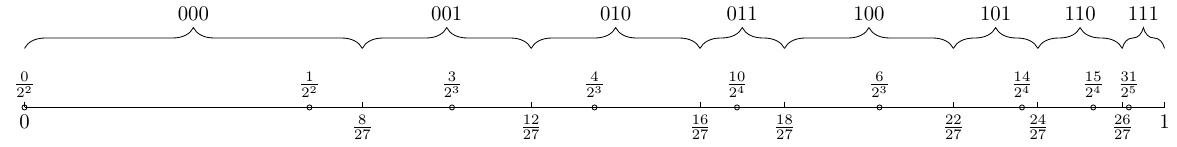}
	\caption{An example of the mapping from $x^n\in\mathbb{B}^n$ to $[l,h)\subset[0,1)$, where $n=3$ and $p=\Pr(X=1)=1/3$.}
	\label{fig:exam}
\end{figure}

Let $X$ be a binary source with bias probability $\Pr(X=1)=p$. Let $x^n\in\mathbb{B}^n$ be $n$ independent realizations of $X$. The arithmetic encoder maps $x^n$ onto an interval $[l,h)\subset[0,1)$. Then from $[l,h)$, the number that can be represented with the fewest bits is binarized as the bitstream of $x^n$. Concretely speaking, the mapping interval $[l,h)$ of $x^n$ is obtained via a recursion, and for every source symbol, the length of its mapping interval is proportional to its probability (cf. \cref{subfig:ac}). Let $[l_i,h_i)$ denote the mapping interval of $x^i\in\mathbb{B}^i$. Initially, $[l_0,h_0)=[0,1)$. According to \cref{subfig:ac}, on knowing $[l_i,h_i)$, we can derive $[l_{i+1},h_{i+1})$ by
\begin{align}\label{eq:I}
	\begin{cases}
		l_{i+1} = l_i + x_{i+1}\cdot(1-p)(h_i-l_i)\\
		h_{i+1} = h_i - (1-x_{i+1})\cdot p(h_i-l_i).
	\end{cases}
\end{align}
Finally, we obtain the mapping interval $[l_n,h_n)$ for $x^n\in\mathbb{B}^n$. Clearly,
\begin{align}
	[l_n,h_n) \subset [l_{n-1},h_{n-1}) \subset \dots \subset [l_1,h_1) \subset [l_0,h_0) = [0,1).
\end{align}
For conciseness, we shorten $[l_n,h_n)$ to $[l,h)$, whose length is
\begin{align}
	(h-l)=p(x^n)\triangleq\Pr(X^n=x^n)=(1-p)^{n-|x^n|}\cdot p^{|x^n|}, 
\end{align}
where $|x^n|$ is the support (the number of nonzero terms) of $x^n$. There are infinitely many real numbers in $[l,h)$, among which the real number that can be represented by the fewest bits is binarized as the bitstream of $x^n$. Obviously, in any interval, integers can be represented by fewer bits than floating-point numbers. However, there is no integer in $[l,h)$. To solve this problem, we introduce the following concept.
\begin{definition}[Normalized Interval]
	Let $[l,h)\subset[0,1)$ be the mapping interval of $x^n\in\mathbb{B}^n$. The normalized mapping interval of $x^n$ is defined as $[l2^m,h2^m)$, where
	\begin{align}\label{eq:m}
		m \triangleq \lceil-\log_2(h-l)\rceil = -\lfloor\log_2(h-l)\rfloor.
	\end{align}	
\end{definition}
\begin{lemma}[Properties of Normalized Interval]
The following properties of $[l2^m,h2^m)$ hold obviously.
\begin{itemize}
	\item $1\leq (h2^m-l2^m) = (h-l)2^m <2$. 
	\item $l2^m$ and $h2^m$ are constrained by
		\begin{align}
			\begin{cases}
				(\lceil{l2^m}\rceil-1)<l2^m\leq \lceil{l2^m}\rceil\\ 
				\lceil{l2^m}\rceil<h2^m<(\lceil{l2^m}\rceil+2).
			\end{cases}
		\end{align}	
	\item If $\log_2(h-l)\in\mathbb{Z}$, then $(h-l)2^m=1$ and there is one and only one integer $\lceil{l2^m}\rceil$ in $[l2^m,h2^m)$. 
	\item If $\log_2(h-l)\notin\mathbb{Z}$, then $1<(h-l)2^m<2$ and there may be one or two integers in $[l2^m,h2^m)$. Further,
	\begin{itemize}
		\item if there are two integers in $[l2^m,h2^m)$, they must be $\lceil{l2^m}\rceil$ and $\lceil{l2^m}\rceil+1$; and 
		\item if there is only one integer in $[l2^m,h2^m)$, it must be $\lceil{l2^m}\rceil$.
	\end{itemize}
\end{itemize}
\end{lemma}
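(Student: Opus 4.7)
The plan is to establish the four bullets in order, using the defining relation $m=\lceil -\log_2(h-l)\rceil=-\lfloor \log_2(h-l)\rfloor$ from \eqref{eq:m}, and then to specialise the analysis depending on whether $\log_2(h-l)\in\mathbb{Z}$. The whole lemma reduces to careful bookkeeping once the first bullet is in hand.

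For the first bullet, I would start from the standard ceiling bounds $-\log_2(h-l)\le m<-\log_2(h-l)+1$, with the left inequality tight precisely when $\log_2(h-l)$ is an integer. Exponentiating gives $1/(h-l)\le 2^m<2/(h-l)$, and multiplying through by $h-l>0$ yields $1\le (h-l)2^m<2$, together with the observation that the left equality is equivalent to $\log_2(h-l)\in\mathbb{Z}$. This single step already embeds the dichotomy used later in bullets three and four.

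For the second bullet, the chain $\lceil l2^m\rceil-1<l2^m\le \lceil l2^m\rceil$ is nothing more than the definition of the ceiling applied to the real number $l2^m$. To bound $h2^m$, I would write $h2^m=l2^m+(h-l)2^m$: combining $l2^m\le \lceil l2^m\rceil$ with $(h-l)2^m<2$ from the first bullet gives the upper estimate $h2^m<\lceil l2^m\rceil+2$, while combining $l2^m>\lceil l2^m\rceil-1$ with $(h-l)2^m\ge 1$ yields the lower estimate $h2^m>\lceil l2^m\rceil$.

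For the last two bullets I would split on integrality of $\log_2(h-l)$. In the dyadic case $(h-l)2^m=1$, so $[l2^m,h2^m)$ is a half-open interval of length exactly one and thus contains exactly one integer, which by the second bullet must be $\lceil l2^m\rceil$. In the non-dyadic case the interval has length strictly between one and two, so it contains either one or two integers; the second bullet confirms $\lceil l2^m\rceil$ lies inside while $\lceil l2^m\rceil+2$ lies outside, so the only candidate for a second integer is $\lceil l2^m\rceil+1$. The lone subtlety is the boundary case where $l2^m$ itself happens to be an integer, for which one must verify that $\lceil l2^m\rceil-1<l2^m$ (trivially) and that $h2^m>\lceil l2^m\rceil$ remains strict (immediate from $(h-l)2^m\ge 1>0$); no genuine obstacle arises.
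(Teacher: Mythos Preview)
Your argument is correct and complete; each bullet follows exactly as you outline, and the boundary case you flag (when $l2^m$ is itself an integer) is handled cleanly by the strictness of $\lceil l2^m\rceil-1<l2^m$. The paper itself offers no proof at all---it simply declares the properties ``hold obviously''---so your write-up is more detailed than what the author provides, but the reasoning is the natural one and there is nothing to compare.
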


According to the above properties, the integer $\lceil{l2^m}\rceil$ always belongs to $[l2^m,h2^m)$, while the integer $\lceil{l2^m}\rceil+1$ may or may not belong to $[l2^m,h2^m)$. The integer $\lceil{l2^m}\rceil$ in the normalized interval $[l2^m,h2^m)$ corresponds to the real number $\lceil{l2^m}\rceil2^{-m}$ in the original interval $[l,h)$. By binarizing $\lceil{l2^m}\rceil$ into $m$ bits, we obtain a bitstream of $x^n$. If the integer $\lceil{l2^m}\rceil+1$ belongs to $[l2^m,h2^m)$, the corresponding real number $(\lceil{l2^m}\rceil+1)2^{-m}$ must also belong to $[l,h)$. By binarizing $\lceil{l2^m}\rceil+1$ into $m$ bits, we obtain another bitstream of $x^n$. 

\begin{definition}[Raw Arithmetic Bitstream]
	Let $[l,h)\subset[0,1)$ be the mapping interval of $x^n\in\mathbb{B}^n$. Let $m=-\lfloor\log_2{(h-l)}\rfloor$. Apparently, $\lceil{l2^m}\rceil2^{-m}\in[l,h)$ holds always, but $(\lceil{l2^m}\rceil+1)2^{-m}\in[l,h)$ may or may not hold. The integer $\lceil{l2^m}\rceil$ can be represented by a string of $m$ bits, which is called a raw arithmetic bitstream of $x^n$. If $(\lceil{l2^m}\rceil+1)2^{-m}\in[l,h)$, the string of $m$ bits representing the integer $\lceil{l2^m}\rceil+1$ is also a raw arithmetic bitstream of $x^n$.
\end{definition}

According to the definition, for $x^n\in\mathbb{B}^n$, if $(\lceil{l2^m}\rceil+1)2^{-m}\notin[l,h)$, there is only one raw arithmetic bitstream; if $(\lceil{l2^m}\rceil+1)2^{-m}\in[l,h)$, there are two raw arithmetic bitstreams.

\begin{example}[Example of Raw Arithmetic Bitstream]
	Let $X$ be a binary source with bias probability $p=\Pr(X=1)$ and $x^n$ be $n$ independent realizations of $X$. \cref{fig:exam} shows how the infinite-precision arithmetic codec partitions the interval $[0,1)$ into sub-intervals $[l,h)$ according to $x^n$, where $p=1/3$ and $n=3$. The labels under the abscissa are the boundaries of sub-intervals, while the labels above the abscissa are $\frac{\lceil{l2^m}\rceil}{2^m}$ (and $\frac{\lceil{l2^m}\rceil+1}{2^m}$ if it belongs to $[l,h)$), \textit{i.e.}, the real numbers within each sub-interval that can be converted into integers by a stride-$m$ bitwise left shift, where $m \triangleq -\lfloor\log_2(h-l)\rfloor$. For example, the mapping sub-interval of $x^n=000$ is $[l,h)=[0,8/27)$ and $m=2$. There are two real numbers $0/2^2$ and $1/2^2$ in $[l,h)$ that can be converted into integers $0$ and $1$, respectively, by a stride-2 bitwise left shift. That means, if we enlarge $[l,h)=[0,8/27)$ by $2^2$ times, there will be two integers $0$ and $1$ in the normalized sub-interval $[l2^m,h2^m)=[0,32/27)$. The integers $0$ and $1$ can be binarized into $m=2$ bits, \textit{i.e.}, $00_b=0$ and $01_b=1$, both of which are raw arithmetic bitstreams of $x^n=000$. In every other sub-interval, there is one and only one real number that can be converted into an integer by a stride-$m$ bitwise left shift, so there is only one raw arithmetic bitstream. For more details, the reader may refer to the \textit{raw} row of \cref{tab:exam}.
\end{example}

\begin{table}[!t]
	\centering
	\footnotesize
	\caption{Example of Arithmetic Code for $p=1/3$, $n=3$, and $w=8$}
	\begin{tabular}{c||c||c||c||c}
		\hline	
		$x^n$ &$000$ &$001$ &$010$ &$011$\\
		\hline\hline
		$p(x^n)$ &$8/27$ &$4/27$ &$4/27$ &$2/27$\\
		\hline
		$m$ &$2$ &$3$ &$3$ &$4$\\
		\hline
		$[l,h)$ &$[0,\frac{8}{27})$ &$[\frac{8}{27},\frac{12}{27})$ &$[\frac{12}{27},\frac{16}{27})$ &$[\frac{16}{27},\frac{18}{27})$\\
		\hline
		$\frac{\lceil l2^m\rceil(+1)}{2^m}$ &$\frac{0}{2^2}$ or $\frac{1}{2^2}$ &$\frac{3}{2^3}$ &$\frac{4}{2^3}$ &$\frac{10}{2^4}$\\
		\hline
		raw &$00$ or $01$ &$011$ &$100$ &$1010$\\
		\hline
		final $\eta$ 
		&${\color{cyan}0}\underline{{\color{red}1}}\underline{{\color{blue}0}010111}$ 
		&${\color{cyan}01}\underline{{\color{red}1}}\underline{{\color{blue}1}000111}$ 
		&$\underline{{\color{red}1}}{\color{green}00}\underline{{\color{blue}1}011111}$ &${\color{cyan}10}\underline{{\color{red}1}}{\color{green}0}\underline{{\color{blue}1}010111}$\\
		\hline
		final $\lambda$  
		&${\color{cyan}0}\underline{{\color{red}0}}\underline{{\color{blue}0}000000}$ 
		&${\color{cyan}01}\underline{{\color{red}0}}\underline{{\color{blue}0}110000}$ 
		&$\underline{{\color{red}0}}{\color{green}11}\underline{{\color{blue}1}001000}$ &${\color{cyan}10}\underline{{\color{red}0}}{\color{green}1}\underline{{\color{blue}1}000000}$\\
		\hline
		prefix &$001$ &$0101/0110$ &$1000$ &$10100$\\
		\hline
		half-tail &$0$ &$01$ &-- &$10$\\
		\hline
		\hline
		\hline	
		$x^n$ &$100$ &$101$ &$110$ &$111$ \\
		\hline\hline
		$p(x^n)$ &$4/27$ &$2/27$ &$2/27$ &$1/27$ \\
		\hline
		$m$ &$3$ &$4$ &$4$ &$5$ \\
		\hline
		$[l,h)$ &$[\frac{18}{27},\frac{22}{27})$ &$[\frac{22}{27},\frac{24}{27})$ &$[\frac{24}{27},\frac{26}{27})$ &$[\frac{26}{27},1)$\\
		\hline
		$\frac{\lceil l2^m\rceil}{2^m}$ &$\frac{6}{2^3}$ &$\frac{14}{2^4}$ &$\frac{15}{2^4}$ &$\frac{31}{2^5}$\\
		\hline
		raw &$110$ &$1110$ &$1111$ &$11111$\\
		\hline
		final $\eta$ 
		&${\color{cyan}1}\underline{{\color{red}1}}{\color{green}0}\underline{{\color{blue}1}000001}$ 
		&${\color{cyan}11}\underline{{\color{red}1}}{\color{green}0}\underline{{\color{blue}0}011011}$ 
		&${\color{cyan}111}\underline{{\color{red}1}}\underline{{\color{blue}0}110011}$ 
		&${\color{cyan}1111}\underline{{\color{red}1}}\underline{{\color{blue}1}111111}$ \\
		\hline
		final $\lambda$  
		&${\color{cyan}1}\underline{{\color{red}0}}{\color{green}1}\underline{{\color{blue}0}101100}$ 
		&${\color{cyan}11}\underline{{\color{red}0}}{\color{green}1}\underline{{\color{blue}0}000100}$ 
		&${\color{cyan}111}\underline{{\color{red}0}}\underline{{\color{blue}0}011100}$ 
		&${\color{cyan}1111}\underline{{\color{red}0}}\underline{{\color{blue}1}101000}$ \\
		\hline
		prefix &$1011/1100$ &$11011$ &$11101$ &$111110$\\
		\hline
		half-tail &$1$ &$11$ &$111$ &$1111$\\
		\hline		
	\end{tabular}
	\label{tab:exam}
\end{table}

\section{Finite-Precision Issue}
\begin{algorithm}[!t]
	\caption{$({\bf z},j,\lambda,\eta) = {\rm encode\_one\_symbol}(x,p,{\bf z},j,\lambda,\eta)$}
	\Comment{$x$: source symbol to be encoded}\\
	\hspace*{\fill}\Comment{$p$: bias probability of the source}\\
	\hspace*{\fill}\Comment{${\bf z}$: array of arithmetic bitstream}\\
	\hspace*{\fill}\Comment{$j$: number of occupied bits in $\bf z$}
	\begin{algorithmic}[1]
		\State $(\lambda,\eta) = {\rm shrink\_window}(x,p,\lambda,\eta)$
		\State $({\bf z},j,\lambda,\eta) = {\rm renormalize\_window}({\bf z},j,\lambda,\eta)$
	\end{algorithmic}
	\label{alg:win}
\end{algorithm}

\begin{algorithm}[!t]
	\caption{$(\lambda,\eta) = {\rm shrink\_window}(x,p,\lambda,\eta)$}
	\begin{algorithmic}[1]
		\If{$x=0$} 
			\State $\lambda$ remains the same 
			\State $\eta\gets\lambda+\lfloor(1-p)(\eta-\lambda+1)\rceil-1$
		\Else
			\State $\lambda\gets\lambda+\lfloor(1-p)(\eta-\lambda+1)\rceil$
			\State $\eta$ remains the same
		\EndIf
	\end{algorithmic}
	\label{alg:shrink}
\end{algorithm}

\begin{algorithm}[!t]
	\caption{$({\bf z},j,\lambda,\eta) = {\rm renormalize\_window}({\bf z},j,\lambda,\eta)$}
	\begin{algorithmic}[1]
		\While{the MSBs of $\lambda$ and $\eta$ match each other}
		\State $z_{j+1}\gets$ the matched MSB
		\State $j\gets j+1$
		\State both $\lambda$ and $\eta$ are left shifted by 1 place
		\State bit 0 is appended to $\lambda$ and bit 1 is appended to $\eta$
		\EndWhile				
	\end{algorithmic}
	\label{alg:renorm}
\end{algorithm}

So far we totally ignored the issue of precision. There is no infinite-precision codec in practice, so accurate values of $l$ and $h$ are unavailable. This problem can be solved by introducing a sliding window. As we know, $0=(.00\cdots)_b$ and $1=(.11\cdots)_b$. We define a width-$w$ sliding window as $[\lambda:\eta]$, where $\lambda$ and $\eta$ are two $w$-bit buffers. Initially, 
\begin{align}\label{eq:win}
	\begin{cases}
		\eta = (\overbrace{1\cdots1}^{w})_b = 2^w-1\\
		\lambda = (\underbrace{0\cdots0}_{w})_b = 0.
	\end{cases}
\end{align}
For conciseness, we write \eqref{eq:win} as $[\lambda:\eta]=[0:2^w)$. To encode $x_i$, we renew $\lambda$ and $\eta$ according to \cref{alg:win}, which calls \cref{alg:shrink} and \cref{alg:renorm} in sequence. In \cref{alg:shrink}, $\lfloor\cdot\rceil$ denotes the rounding function. In \cref{alg:renorm}, MSB is the acronym of {\em Most Significant Bit}, and similarly, the $i$-th MSB will be shortened to $i$-MSB below. Note that $0=(.00\cdots)_b$ and $1=(.11\cdots)_b$, so there are actually infinitely many $0$s following $\lambda$ and infinitely many $1$s following $\eta$. That is why the interval length should be $(\eta-\lambda+1)$ rather than $(\eta-\lambda)$ in \cref{alg:shrink}. After renormalization, the MSB of $\lambda$ is always $0$ and the MSB of $\eta$ is always $1$. 

\begin{figure}[!t]
	\includegraphics[width=\linewidth]{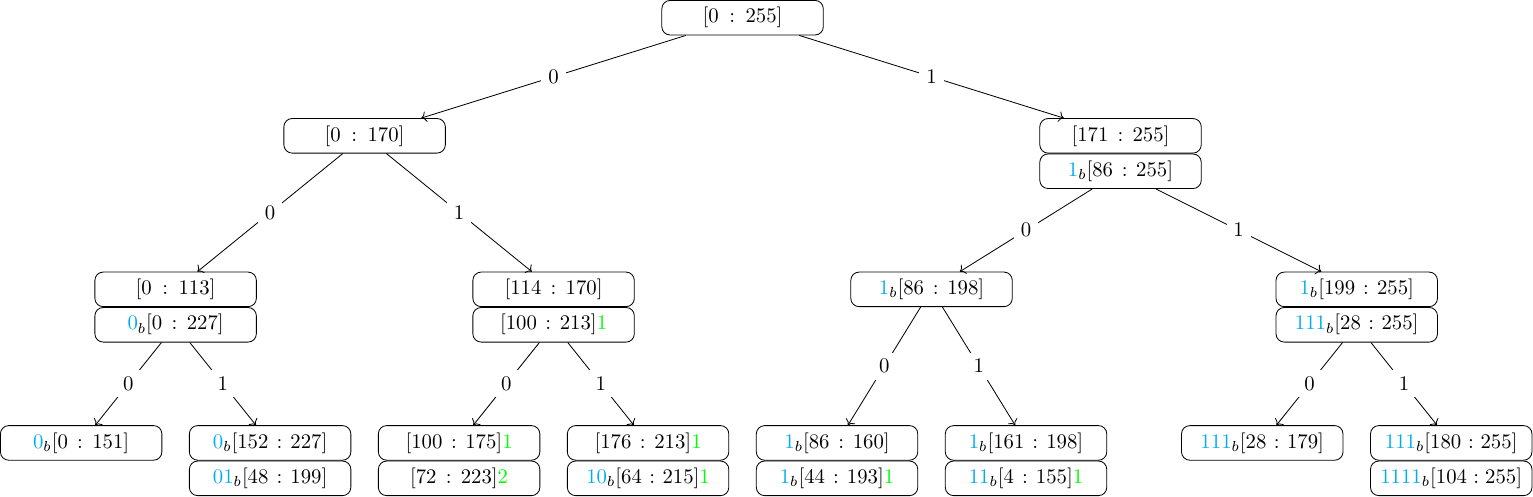}
	\caption{Evolution of sliding window $[\lambda:\eta]$ for different source blocks, where $n=3$, $p=1/3$, and $w=8$. Initially, $[\lambda:\eta]=[0:255]$. Depending on $x_i$, $[\lambda:\eta]$ is shrunk (and then renormalized, if needed). The bitstring on the left of $[\lambda:\eta]$, if any, contains the bits output by the encoder so far, and the number on the right of $[\lambda:\eta]$, if any, is the number of underflow bits currently.}
	\label{fig:enc}
\end{figure}

However, \cref{alg:win} is not enough to solve the finite-precision issue. In the worst case, the sliding window becomes
\begin{align}\label{eq:ufmin}
	\begin{cases}
		\eta = (1\overbrace{0\cdots0}^{(w-1)})_b = 2^{w-1}\\
		\lambda = (0\underbrace{1\cdots1}_{(w-1)})_b = 2^{w-1}-1.
	\end{cases}
\end{align}
We find $(\eta-\lambda+1)=2$! Once this extreme case happens, the arithmetic codec will no longer work.
\begin{definition}[Underflow]
	The phenomenon that the 2-MSB of $\lambda$ is 1 and the 2-MSB of $\eta$ is 0 is called {\em underflow}, meanwhile the 2-MSBs of $\lambda$ and $\eta$ (equal to 1 and 0, respectively) are called underflow bits. 
\end{definition}
If the underflow phenomenon happens, we have 
\begin{align*}
	2\leq(\eta-\lambda+1)\leq2^{w-1}, 
\end{align*}
where $(\eta-\lambda+1)=2$ corresponds to the case of \eqref{eq:ufmin} and $(\eta-\lambda+1)=2^{w-1}$ corresponds to the case of
\begin{align}\label{eq:ufmax}
	\begin{cases}
		\eta = (10\overbrace{1\cdots1}^{(w-2)})_b = 2^{w-1}+2^{w-2}-1\\
		\lambda = (01\underbrace{0\cdots0}_{(w-2)})_b = 2^{w-2}.
	\end{cases}
\end{align}
It can be seen that the underflow phenomenon may make the sliding window $[\lambda:\eta]$ very small. To solve this problem, we define an auxiliary variable $\upsilon$ to record the accumulated number of underflow bits and then introduce \cref{alg:underflow}, where $i$-MSB refers to the $i$-th MSB and LSB is the acronym of {\em Least Significant Bit}. After applying \cref{alg:underflow}, we have 
\begin{align*}
	2^{w-2}+2\leq(\eta-\lambda+1)\leq2^w, 
\end{align*}
where the second equality holds {\em if and only if} (iff) $[\lambda:\eta]=[0:2^w)$ and the first equality holds iff
\begin{align}\label{eq:noufmax}
	\begin{cases}
		\eta = (1z\overbrace{0\cdots0}^{(w-2)})_b = 2^{w-1}+z\cdot2^{w-2}\\
		\lambda = (0z\underbrace{1\cdots1}_{(w-2)})_b = z\cdot2^{w-2}+2^{w-2}-1.
	\end{cases}
\end{align}

\begin{algorithm}[!t]
	\caption{$(\lambda,\eta,\upsilon) = {\rm remove\_underflow\_bits}(\lambda,\eta,\upsilon)$}
	\begin{algorithmic}[1]
		\While{the 2-MSB of $\lambda$ is 1 and the 2-MSB of $\eta$ is 0}
		\State the 2-MSBs of $\lambda$ and $\eta$ are removed
		\State all remaining bits except MSBs, \textit{i.e.}, from 3-MSBs to LSBs, of $\lambda$ and $\eta$ are left shifted by 1 place
		\State bit 0 is appended to $\lambda$ and bit 1 is appended to $\eta$
		\State $\upsilon\gets\upsilon+1$ 
		\EndWhile
	\end{algorithmic}
	\label{alg:underflow}
\end{algorithm}

On the basis of \cref{alg:underflow}, the standard function to encode one source symbol is given by \cref{alg:intact}. To encode $x^n$, we initialize $[\lambda:\eta]=[0:2^w)$ and set $j=\upsilon=0$. Then \cref{alg:intact} is repeated. For every $x_i$, where $1\leq i\leq n$, the encoder calls \cref{alg:shrink} to shrink the sliding window $[\lambda:\eta]$. Afterwards, if the MSBs of $\lambda$ and $\eta$ match each other, the encoder in turn calls \cref{alg:push_underflow_bits} to output the MSB and underflow bits, calls \cref{alg:renorm} to renormalize $[\lambda:\eta]$, and calls \cref{alg:underflow} to remove underflow bits from $[\lambda:\eta]$. Given that underflow bits are removed, the 2-MSB of $\lambda$ will never be greater than the 2-MSB of $\eta$.

\begin{algorithm}[!t]
	\caption{$({\bf z},j,\lambda,\eta,\upsilon) = {\rm encode\_one\_symbol}(x,p,{\bf z},j,\lambda,\eta,\upsilon)$}
	\begin{algorithmic}[1]
		\State $(\lambda,\eta) = {\rm shrink\_window}(x,p,\lambda,\eta)$
		\If{the MSBs of $\lambda$ and $\eta$ match each other}
		\State $({\bf z},j,\lambda,\eta,\upsilon) = {\rm push\_underflow\_bits}({\bf z},j,\lambda,\eta,\upsilon)$
		\State $({\bf z},j,\lambda,\eta) = {\rm renormalize\_window}({\bf z},j,\lambda,\eta)$
		\State $(\lambda,\eta,\upsilon) = {\rm remove\_underflow\_bits}(\lambda,\eta,\upsilon)$
		\EndIf
	\end{algorithmic}
	\label{alg:intact}
\end{algorithm}

\begin{algorithm}[!t]
	\caption{$({\bf z},j,\lambda,\eta,\upsilon) = {\rm push\_underflow\_bits}({\bf z},j,\lambda,\eta,\upsilon)$}
	\begin{algorithmic}[1]
		\If{both MSBs of $\lambda$ and $\eta$ are $0$} 
		\State $(z_{j+1}\dots z_{j+\upsilon+1}) \gets (0\underbrace{1\cdots1}_{\upsilon})$
		\Else 
		\State $(z_{j+1}\dots z_{j+\upsilon+1}) \gets (1\underbrace{0\cdots0}_{\upsilon})$
		\EndIf
		\State $j\gets j+\upsilon+1$
		\State $\upsilon\gets0$
		\State both $\lambda$ and $\eta$ are left shifted by 1 place
		\State bit 0 is appended to $\lambda$ and bit 1 is appended to $\eta$
	\end{algorithmic}
	\label{alg:push_underflow_bits}
\end{algorithm}

\begin{example}[Evolution of Sliding Window]
	An example is given in \cref{fig:enc} to demonstrate how the sliding window $[\lambda:\eta]$ evolves during encoding, where $p=1/3$, $n=3$, and $w=8$. Correspondingly, the final sliding window is given by the $\lambda$ and $\eta$ rows of \cref{tab:exam}, where the $w=8$ bits in the final sliding window are underlined, the matched bits that are left shifted out of the final sliding window are marked in {\color{cyan}cyan}, and underflow bits are marked in {\color{green}green}. For example, if $x^n=101$, two cyan bits $\color{cyan}11$ are output and there is only one underflow bit ($\color{green}0$ for $\eta$ and $\color{green}1$ for $\lambda$). In \cref{tab:exam}, the MSBs and 2-MSBs of the final sliding window are marked in {\color{red}red} and {\color{blue}blue}, respectively.
\end{example}

\section{Prefix-Code Issue}
Even though the finite-precision issue has been perfectly solved, there is still another issue. Actually, when we decode a raw arithmetic bitstream, we implicitly assume that it is followed by an all-zero bitstring. However in practice, a raw arithmetic bitstream may be followed by arbitrary bitstrings, and some special following bitstrings may cause decoding failure. For conciseness, we refer to the string of following bits as {\em tail}. Let us define different kinds of bitstreams.
\begin{definition}[Prefix Bitstream, Zero-tail Bitstream, and Half-tail Bitstream]
	Let $[l,h)\subset[0,1)$ be the mapping interval of $x^n\in\mathbb{B}^n$, and the arithmetic bitstream of $x^n$ is denoted by $z^m\triangleq(z_1\cdots z_m)$. 
	\begin{itemize}
		\item If $(.z_1\cdots z_mz_{m+1}\cdots)_b\in[l,h)$ holds for every tail $(.z_{m+1}\cdots)_b\in[0,1)$, we say that $z^m$ is a prefix bitstream of $x^n$.
		\item If $(.z_1\cdots z_mz_{m+1}\cdots)_b\in[l,h)$ holds for the zero tail $(.z_{m+1}\cdots)_b=(.00\cdots)_b=0$ but not for all tails, we say that $z^m$ is a zero-tail bitstream of $x^n$. 
		\item If $(.z_1\cdots z_mz_{m+1}\cdots)_b\in[l,h)$ holds for the half tail $(.z_{m+1}\cdots)_b=(.100\cdots)_b=(.011\cdots)_b=0.5$ but not for all tails, we say that $z^m$ is a half-tail bitstream of $x^n$.		
	\end{itemize} 
\end{definition}

If $z^m$ is a prefix bitstream of $x^n$, then $x^n$ can always be correctly decoded, no matter what tail $(z_{m+1}\cdots)$ follows $z^m$. On the contrary, if $z^m$ is a zero-tail or half-tail bitstream of $x^n$, there will be a risk of decoding failure for those tails causing $(.z_1\cdots z_mz_{m+1}\cdots)_b\notin[l,h)$. Now we wonder whether raw arithmetic bitstreams are prefix bitstreams.

\begin{theorem}[Raw Arithmetic Bitstream]
	\label{lem:raw}
	Let $[l,h)\subset[0,1)$ be the mapping interval of $x^n$ and $m\triangleq-\lfloor\log_2{(h-l)}\rfloor$. If $\frac{\lceil{l2^m}\rceil+1}{2^m}\geq h$, there is only one raw arithmetic bitstream, which is a zero-tail bitstream representing $\lceil{l2^m}\rceil$; if $\frac{\lceil{l2^m}\rceil+1}{2^m}<h$, there are two raw arithmetic bitstreams, among which the raw arithmetic bitstream representing $\lceil{l2^m}\rceil$ is a prefix bitstream, while the raw arithmetic bitstream representing $(\lceil{l2^m}\rceil+1)$ is a zero-tail bitstream. 
\end{theorem}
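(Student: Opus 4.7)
The plan is to reduce both claims to elementary inequalities about how an $m$-bit prefix together with an arbitrary tail sweeps a dyadic sub-interval of $[0,1)$. Let $a \triangleq \lceil l2^m \rceil$. The normalized-interval lemma already stated above tells us that the only integers that can lie in $[l2^m, h2^m)$ are $a$ and $a+1$, with $a$ always present and $a+1$ present iff $a+1 < h2^m$, i.e., iff $\frac{a+1}{2^m} < h$. That observation alone yields the claimed count of raw bitstreams in the two cases.

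To classify each bitstream I would next observe that if $z_1 \cdots z_m$ is the $m$-bit binary expansion of an integer $k$, then for any tail one has $(.z_1 \cdots z_m z_{m+1} \cdots)_b = k\cdot 2^{-m} + t\cdot 2^{-m}$ with $t = (.z_{m+1}z_{m+2}\cdots)_b \in [0,1)$, so as the tail varies the representable numbers fill the half-open sub-interval $J_k \triangleq [k/2^m, (k+1)/2^m)$. Consequently a raw bitstream for $k$ is a prefix bitstream iff $J_k \subseteq [l,h)$, and a zero-tail bitstream iff $k/2^m \in [l,h)$ yet $J_k \not\subseteq [l,h)$. Each of the three classification claims then reduces to comparing one endpoint of $J_k$ with $l$ or $h$: Case 2 with $k=a$ follows at once from the hypothesis $\frac{a+1}{2^m} < h$ together with $a/2^m \geq l$; Case 2 with $k=a+1$ requires $\frac{a+2}{2^m} > h$; and Case 1 requires $\frac{a+1}{2^m} > h$.

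The genuinely non-routine part is the strict inequality in Case 2 for the bitstream representing $a+1$. Here I would invoke the length bound $(h-l)2^m < 2$ from the normalized-interval properties together with $l \leq a/2^m$ to conclude $h < a/2^m + 2\cdot 2^{-m} = (a+2)/2^m$ strictly, which is the only point at which the upper half of that length bound is actually used. The remaining subtlety is the knife-edge situation $\frac{a+1}{2^m} = h$ inside Case 1, where under a strict reading of the definitions the unique raw bitstream is in fact a prefix bitstream rather than a zero-tail bitstream; I would flag this degenerate sub-case explicitly and then verify the theorem under the generic strict inequality $\frac{a+1}{2^m} > h$, which covers every situation in which $h$ is not itself a dyadic rational with denominator exactly $2^m$.
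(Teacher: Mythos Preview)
Your proposal is correct and follows essentially the same route as the paper: both arguments identify the candidate integers $a$ and $a+1$, observe that appending an arbitrary tail to the $m$-bit expansion of $k$ sweeps exactly $[k/2^m,(k+1)/2^m)$, and then settle the classification by comparing these dyadic endpoints to $l$ and $h$ using the normalized-interval bounds $2^{-m}\le h-l<2^{1-m}$. Your explicit flag of the knife-edge $\frac{a+1}{2^m}=h$ is apt---the paper's proof also relies on the weak inequality $\lceil l2^m\rceil 2^{-m}+2^{-m}\ge h$ there and so shares the same borderline ambiguity with the definition of ``zero-tail.''
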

\begin{proof}
	Since $l2^m\leq\lceil{l2^m}\rceil<l2^m+1$, we have $l\leq\lceil{l2^m}\rceil2^{-m}<l+2^{-m}$. Since $2^{-m}=2^{\lfloor\log_2{(h-l)}\rfloor}\leq2^{\log_2{(h-l)}} = (h-l)$, we have $l+2^{-m}\leq l+(h-l)=h$. Since $l\leq\lceil{l2^m}\rceil2^{-m}<h$, the string of $m$ bits for $\lceil{l2^m}\rceil$ is always a raw arithmetic bitstream of $x^n$. 
	
	Then we proceed to the string of $m$ bits representing $(\lceil{l2^m}\rceil+1)$. Since $2^{-m}=2^{\lfloor\log_2{(h-l)}\rfloor}>2^{\log_2{(h-l)}-1} = (h-l)/2$, we have $l+2^{1-m}>l+(h-l)=h$. In a word, $l+2^{-m}\leq h < l+2^{1-m}$. As we know,
	\begin{align*}
		0\leq(.\underbrace{0\cdots0}_{m}z_{m+1}\cdots)_b<2^{-m},
	\end{align*}
	we have
	\begin{align*}
		(.z_1\cdots z_m)_b \leq (.z_1\cdots z_mz_{m+1}\cdots)_b < (.z_1\cdots z_m)_b+2^{-m},
	\end{align*}
	where $(.z_1\cdots z_m)_b=\lceil{l2^m}\rceil2^{-m}$ or $(\lceil{l2^m}\rceil+1)2^{-m}$. From $l+2^{-m}\leq(\lceil{l2^m}\rceil+1)2^{-m}<l+2^{1-m}$, it is easy to get the following points.
	\begin{itemize}
		\item If $l<l+2^{-m}\leq (\lceil{l2^m}\rceil+1)2^{-m}<h$, there must be two raw arithmetic bitstreams, representing $\lceil{l2^m}\rceil$ and $(\lceil{l2^m}\rceil+1)$, respectively. Since $\lceil{l2^m}\rceil2^{-m}+2^{-m}<h$, the raw arithmetic bitstream representing $\lceil{l2^m}\rceil$ is a prefix bitstream; since $(\lceil{l2^m}\rceil+1)2^{-m}+2^{-m}\geq l+2^{1-m}>h$, the raw arithmetic bitstream representing $(\lceil{l2^m}\rceil+1)$ is a zero-tail bitstream.
		
		\item If $h\leq (\lceil{l2^m}\rceil+1)2^{-m}<l+2^{1-m}$, the string of $m$ bits representing $(\lceil{l2^m}\rceil+1)$ is not a raw arithmetic bitstream and hence there is only one raw arithmetic bitstream representing $\lceil{l2^m}\rceil$, which is a zero-tail bitstream because $\lceil{l2^m}\rceil2^{-m}+2^{-m}\geq h$.
	\end{itemize}
	Now this theorem holds.
\end{proof}

As shown by \cref{lem:raw}, raw arithmetic bitstreams are usually not prefix bitstreams, and their successful decoding can be guaranteed only if they are followed by the zero tail. There is a classical solution to this problem proposed in \citep{WittenCACM87}, as described by \cref{alg:prefix}, which makes use of the MSBs and 2-MSBs of the final sliding window. The reader can easily verify that \cref{alg:prefix} always produces prefix bitstreams. Up to now, we are ready to give the intact pseudo code of arithmetic encoder in \cref{alg:encoder}, where $z^m\in\mathbb{B}^m$ is the arithmetic bitstream of $x^n$ produced by a finite-precision encoder. Note that $m$ in \cref{alg:prefix} and \cref{alg:encoder} is the length of real-world arithmetic bitstreams, rather than $-\lfloor\log_2{(h-l)}\rfloor$, the length of raw arithmetic bitstreams.

\begin{algorithm}[!t]
	\caption{$({\bf z},m) = \mathrm{end\_bitstream}({\bf z},j,\lambda,\eta,\upsilon)$}
	\begin{algorithmic}[1]
		\If{both the 2-MSB of $\lambda$ and the 2-MSB of $\eta$ are $0$} 
		\State 
		$(z_{j+1}\dots z_{j+\upsilon+2}) \gets (0\underbrace{1\cdots1}_{\upsilon}1)$
		\ElsIf{both the 2-MSB of $\lambda$ and the 2-MSB of $\eta$ are $1$}
		\State 
		$(z_{j+1}\dots z_{j+\upsilon+2}) \gets (1\underbrace{0\cdots0}_{\upsilon}0)$
		\Else\Comment{the 2-MSB of $\lambda$ must be $0$ and the 2-MSB of $\eta$ must be $1$ because underflow bits have been removed}
		\State 
		$(z_{j+1}\dots z_{j+\upsilon+2}) \gets (0\underbrace{1\cdots1}_{\upsilon}1)$ or $(1\underbrace{0\cdots0}_{\upsilon}0)$
		\EndIf
		\State $m \gets j+\upsilon+2$
	\end{algorithmic}
	\label{alg:prefix}
\end{algorithm}

\begin{algorithm}[!t]
	\caption{$z^m = \mathrm{arithmetic\_encoder}(x^n,p,w,\text{\color{red}isPrefix})$} 
	\Comment{$m$: length of arithmetic bitstream}\\
	\hspace*{\fill}\Comment{$\text{\color{red}isPrefix}$: flag whether arithmetic bitstream is prefix or not}
	\begin{algorithmic}[1]
		\State Allocate a bulk of memory ${\bf z}$ large enough to accommodate the arithmetic bitstream of $x^n$
		\State $\upsilon\gets0$, $j\gets0$, and $[\lambda:\eta]\gets[0:2^w)$
		\For{$i\gets 1$ to $n$}
		\State $({\bf z},j,\lambda,\eta,\upsilon) = \mathrm{encode\_one\_symbol}(x_i,p,{\bf z},j,\lambda,\eta,\upsilon)$
		\EndFor
		\If{\text{\color{red}isPrefix}}
			\State $({\bf z},m) = \mathrm{end\_bitstream}({\bf z},j,\lambda,\eta,\upsilon)$ \Comment{prefix bitstream}
		\Else
			\State $m\gets j$ \Comment{half-tail bitstream}
		\EndIf
	\end{algorithmic}
	\label{alg:encoder}
\end{algorithm}

The classical solution \cref{alg:prefix} \citep{WittenCACM87} is actually based on such an assumption: The length of arithmetic bitstream $m$ is unknown at the decoder, so it is unpredictable what bits will be appended to $z^m$ at the decoder. However, most of data are nowadays packetized, so the length of arithmetic bitstream can be easily inferred at the decoder from overhead information. If $m$ is available at the decoder, both zero-tail and half-tail bitstreams can always be correctly decoded, because the decoder can append $(00\cdots)$ or $(10\cdots)$ to $z^m$. We have known that raw arithmetic bitstreams are zero-tail bitstreams. It was proven in \citep{FangTIT20} and can also be verified by \cref{tab:exam} that given $m$ available at the decoder, \cref{alg:prefix} is not needed (see the false $\text{\color{red}isPrefix}$ branch of \cref{alg:encoder}), if only $(z_{m+1}\cdots) = (10\cdots)$ is appended to $z^m$. If \cref{alg:prefix} is bypassed, \cref{alg:encoder} will output a half-tail arithmetic bitstream. From \cref{tab:exam}, we can easily get the following finding, which was also proved in \cite{FangTIT20}.

\begin{theorem}[Length of Arithmetic Bitstream]
	For any $x^n\in\mathbb{B}^n$, its prefix bitstream is the longest, its half-tail bitstream is the shortest, and the length of its raw bitstream is in between.
\end{theorem}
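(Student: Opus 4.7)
The plan is to express all three bitstream lengths in terms of the encoder state $(j,\upsilon,\lambda,\eta)$ at the end of \cref{alg:encoder}, and then to sandwich the raw length $m_{\mathrm{raw}}\triangleq -\lfloor\log_2(h-l)\rfloor$ between the other two using structural constraints on the final window. Reading off the algorithms directly, the half-tail bitstream has length $m_{\mathrm{half}}=j$ (the ``false'' branch of \cref{alg:encoder}) while the prefix bitstream has length $m_{\mathrm{pre}}=j+\upsilon+2$ (from \cref{alg:prefix}). Hence $m_{\mathrm{pre}}-m_{\mathrm{half}}=\upsilon+2\geq 2$, already giving half of the theorem; what remains is to prove $m_{\mathrm{half}}\leq m_{\mathrm{raw}}\leq m_{\mathrm{pre}}$.

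The first step is to establish the length invariant
\[
    h-l \;=\; (\eta-\lambda+1)\cdot 2^{-(j+\upsilon+w)}
\]
throughout encoding. At initialization the identity reads $1=2^w\cdot 2^{-w}$. An induction on the encoder loop preserves it: \cref{alg:shrink} scales both sides by the probability of the symbol just absorbed; \cref{alg:renorm} doubles $(\eta-\lambda+1)$ while incrementing $j$, so the right-hand side is unchanged; and \cref{alg:underflow} doubles $(\eta-\lambda+1)$ while incrementing $\upsilon$, again leaving the right-hand side fixed.

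The second step bounds the final window width $\Delta\triangleq\eta-\lambda+1$. Two structural facts hold after the last pass of \cref{alg:intact}: the MSB of $\lambda$ is $0$ and the MSB of $\eta$ is $1$ (forced by \cref{alg:renorm}), and the 2-MSB pair $(1,0)$ is forbidden (ruled out by \cref{alg:underflow}). Enumerating the three admissible 2-MSB patterns and taking the extremal values of $\lambda$ and $\eta$ in each case yields the sharp bound $2^{w-2}+2\leq\Delta\leq 2^w$. Substituting into the invariant gives
\[
    2^{-(j+\upsilon+2)} \;<\; h-l \;\leq\; 2^{-(j+\upsilon)},
\]
so that $j+\upsilon\leq -\log_2(h-l)<j+\upsilon+2$. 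Taking the ceiling and recalling that $m_{\mathrm{raw}}=\lceil-\log_2(h-l)\rceil$ yields the chain
\[
    m_{\mathrm{half}} \;=\; j \;\leq\; j+\upsilon \;\leq\; m_{\mathrm{raw}} \;\leq\; j+\upsilon+2 \;=\; m_{\mathrm{pre}},
\]
which is exactly the claim.

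The main obstacle I foresee is making the invariant of the first step fully rigorous, because of the integer rounding $\lfloor\cdot\rceil$ inside \cref{alg:shrink}: the finite-precision $(h-l)$ is not literally $(1-p)^{n-|x^n|}p^{|x^n|}$, so a pedantic proof must either propagate an explicit rounding error through the induction or, more cleanly, reinterpret $(h-l)$ throughout as the interval that the current window actually represents (namely $\Delta\cdot 2^{-(j+\upsilon+w)}$ by definition). With the latter convention the invariant reduces to a definition chase, and only the second-step case analysis on 2-MSB patterns plus the monotonicity of $\lceil\cdot\rceil$ require honest arithmetic.
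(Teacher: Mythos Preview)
Your argument is sound and well structured: the length invariant $h-l=(\eta-\lambda+1)\,2^{-(j+\upsilon+w)}$, together with the post-normalization window bound $2^{w-2}+2\le\eta-\lambda+1\le 2^w$ that the paper itself states just before \eqref{eq:noufmax}, gives exactly the sandwich $j\le j+\upsilon\le m_{\mathrm{raw}}\le j+\upsilon+2$ you need. Your bookkeeping for how \cref{alg:renorm} and \cref{alg:underflow} each double $\Delta$ while incrementing $j$ or $\upsilon$ is correct.

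There is nothing in the paper to compare against: the theorem is stated without proof here and attributed to \cite{FangTIT20}, with \cref{tab:exam} offered only as numerical illustration. So your write-up would in fact be the first self-contained argument in this text.

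On the rounding obstacle you flag: you are right that it is the only genuine subtlety, and your proposed resolution---defining the ``raw'' length via the interval that the finite-precision encoder actually represents, i.e.\ taking $h-l\triangleq\Delta\cdot 2^{-(j+\upsilon+w)}$ as a definition rather than invoking the infinite-precision $p(x^n)$---is the clean way out. With that convention the invariant is tautological and your chain of inequalities is exact, not approximate. If instead you insist on the infinite-precision $(h-l)=p(x^n)$, the theorem can fail for pathological $w$ (the rounding in \cref{alg:shrink} can shift $-\log_2(h-l)$ across an integer), so the reinterpretation is not merely cosmetic but necessary for a literal statement. You should say this explicitly in the final write-up rather than leave it as a parenthetical.
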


However, note that for both raw bitstream and half-tail bitstream, the length of arithmetic bitstream should transmitted to the decoder as side information; otherwise the decoding may fail.

\section{General Arithmetic Codes}
Arithmetic codes were originally proposed to implement lossless source coding. There is a well-known duality between source coding and channel coding, so every good source code may also be a good channel code meanwhile, and vice versa. For example, {\em Low-Density Parity-Check} (LDPC) codes and polar codes can also be used for data compression. Inspired by this duality, arithmetic codes can be easily generalized to address many coding problems beyond source coding.  We notice that with the symbol-interval mapping rule defined by \cref{subfig:ac}, the interval $[0,1)$ will be fully covered and there is neither {\em overlapped} sub-interval (corresponds to more than one source symbol) nor {\em forbidden} sub-interval (does not correspond to any source symbol). Instead, if we redefine a different symbol-interval mapping rule, arithmetic codes can also realize {\em channel coding}, {\em distributed source coding}, {\em joint source-channel coding}, {\em etc.}, just as shown below. 
\begin{itemize}
	\item As shown by \cref{subfig:oac}, if every source symbol is mapped to an enlarged sub-interval, we will obtain {\em overlapped arithmetic codes}, which can implement {\em distributed source coding} \cite{GrangettoCL07,GrangettoTSP09}. With overlapped arithmetic codes, the mapping sub-intervals of different source symbols will be partially {\em overlapped}, as shown by the thick line segment in \cref{subfig:oac}.
	
	\item As shown by \cref{subfig:eac-A}, \cref{subfig:eac-B}, and \cref{subfig:eac-C}, if every source symbol is mapped to a narrowed sub-interval, we will obtain {\em forbidden arithmetic codes}, which can implement {\em joint source-channel coding} \cite{554275}. Especially for uniformly distributed sources, {\em forbidden arithmetic codes} actually degenerate into {\em error-correcting codes} for {\em channel coding}. With {\em forbidden arithmetic codes}, there will be one or more sub-intervals that do not correspond to any source symbol, which are called {\em forbidden} sub-intervals, as shown by the dashed line segments in \cref{subfig:eac-A}, \cref{subfig:eac-B}, and \cref{subfig:eac-C}).
	
	\item As shown by \cref{subfig:jscac}, by allowing the coexistence of {\em overlapped} sub-intervals and {\em forbidden} sub-intervals, we will obtain {\em hybrid arithmetic codes}, which can implement {\em distributed joint source-channel coding} \cite{5652684}.
\end{itemize}

\section{Arrangement of Remaining Sections}
DSC is an important branch of network information theory featured by separate encoding and joint decoding. Compared with conventional or centralized source coding, DSC has mainly two advantages: (a) The heavy computation burden is shifted from encoder to decoder, making DSC suited to those scenarios requiring lightweight terminals, \textit{e.g.}, wireless sensor network \cite{1328091}; (b) The flow of data across terminals is avoided for privacy protection, making DSC suited to those scenarios where the data to be compressed are highly confidential, \textit{e.g.}, distributed genome data compression \cite{6543135}.

In essence, lossless DSC, also known as Slepian-Wolf coding, is equivalent to channel coding \cite{5281737,5319758} and thus naturally implemented with channel codes \cite{1184140}, {\em e.g.}, turbo codes \cite{957380}, LDPC codes \cite{1042242}, and polar codes \cite{6284254}. However, channel codes are originally designed for the purpose of error correction, so may not be well suited to data compression for several reasons: (a) Most channel codes are binary codes, while real-world sources, \textit{e.g.}, images and videos, are typically nonbinary; (b) The statistical features of real-world sources are usually time-varying, while it is difficult for block channel codes to adapt to nonstationary source characteristics.   

As the most famous method for lossless source coding, arithmetic codes can also be adapted to other coding problems by simply redefining symbol-interval mapping rules. For example, Slepian-Wolf coding can be implemented with overlapped arithmetic codes, a variant of arithmetic codes mapping source symbols to partially overlapped intervals. The following sections of this monograph will review recent advances in {\em overlapped arithmetic codes} during the past decade, while leaving the discussion on {\em forbidden arithmetic codes} and {\em hybrid arithmetic codes} as future work. Besides overlapped arithmetic codes, overlapped {\em quasi}-arithmetic codes were also proposed in \cite{4379079} for the same purpose. In addition, there is another similar scheme that makes use of arithmetic codes to implement Slepian-Wolf coding by bit puncturing rather than interval overlapping, which is named as {\em punctured arithmetic codes} \cite{4957108}. This monograph will discuss only overlapped arithmetic codes with half-tail bitstreams, while ignoring other similar works. Finally, we would like to remind the reader of that for clarity, this monograph treats with only uniformly distributed binary sources. As for the works on non-uniformly distributed binary sources or non-binary sources, the reader may refer to the references recommended in \cref{c-conclusion}.

\chapter{Coset Cardinality Spectrum}\label{c-ccs}
\vspace{-25ex}%
Note that this and the following sections will treat with only uniformly distributed binary sources for clarity and conciseness. 

This section will utilize probability theory to build a mathematical model for overlapped arithmetic codes. It will be found that overlapped arithmetic codes are actually a kind of nonlinear coset codes partitioning source space into unequal-sized cosets. Hence the distribution of coset cardinality, or the so-called {\em Coset Cardinality Spectrum} (CCS), is a fundamental property of overlapped arithmetic codes. This section will give methods to calculate the CCS of overlapped arithmetic codes theoretically or numerically. On the basis of CCS, this section will also derive the decoding algorithm of overlapped arithmetic codes. Experimental results will show that aided by CCS, the performance of low-complexity decoder can be significantly improved. 

\section{Math Foundation}
To begin with, let us introduce an important theorem, which will lay the math foundation for the rest of this monograph. 
\begin{definition}[Counting Function]
	Let $\wedge$ denote boolean AND. Let $\omega=(a_1,a_2,\dots)$ be a sequence of real numbers. The counting function is defined as the number of terms $a_i$'s, where $1\leq i\leq n$, for which $\{a_i\}\triangleq a_i-\lfloor{a_i}\rfloor\in{\cal I}\subseteq[0,1)$. In mathematical language,
	\begin{align}
		c({\cal I};n;\omega) \triangleq |\{a_i: (\{a_i\}\in{\cal I})\wedge(1\leq i\leq n)\}|.
	\end{align}
\end{definition}

With the help of counting function, we introduce the following concept (cf. \textbf{Definition 1.1} of \cite{bible}).
\begin{definition}[u.d. mod 1 Sequence]\label{def:udm1}
	The sequence $\omega = (a_1,a_2,\dots)$ is {\em uniformly distributed modulo 1} (u.d. mod 1) if for every pair of $l$ and $h$	with $0\leq l<h\leq 1$, we have 
	\begin{align}
		\lim_{n\to\infty}\frac{c([l,h);n;\omega)}{n} = h-l.
	\end{align}
\end{definition}

An interesting finding is that almost every geometric sequence is a u.d. mod 1 sequence, as stated by the following lemma (cf. \textbf{Corollary 4.1} and \textbf{Corollary 4.2} of \cite{bible}).
\begin{lemma}[Property of Geometric Sequence]\label{prop:gp}
	The geometric sequence $(a,ar,ar^2,\dots)$, where $a\neq 0$, is u.d. mod 1 for almost every non-integral $r>1$, and the exceptional set has Lebesgue measure zero.
	\label{prop:gpud}
\end{lemma}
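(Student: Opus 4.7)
The plan is to follow Koksma's classical approach: reduce to exponential sums via Weyl's criterion, obtain an $L^2$ decay rate in $r$ on compact subintervals, and promote this to pointwise a.e.\ convergence by a subsequence plus Borel--Cantelli argument.

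First, by Weyl's equidistribution criterion (which appears in the same reference \cite{bible}), the sequence $(ar^n)_{n\geq 0}$ is u.d.\ mod 1 if and only if for every non-zero integer $h$,
\[ S_N(r) \triangleq \frac{1}{N}\sum_{n=0}^{N-1} e^{2\pi i h a r^n} \longrightarrow 0. \]
Since $\mathbb{Z}$ is countable and hence Lebesgue-null, discarding the integer values of $r$ costs nothing, so it suffices to prove $S_N(r) \to 0$ for Lebesgue-a.e.\ $r > 1$ and every fixed non-zero $h \in \mathbb{Z}$.

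Next, I would fix a compact subinterval $[\rho_0,\rho_1] \subset (1,\infty)$ and expand
\[ \int_{\rho_0}^{\rho_1} |S_N(r)|^2 \, dr = \frac{1}{N^2} \sum_{n,m=0}^{N-1} \int_{\rho_0}^{\rho_1} e^{2\pi i h a (r^n - r^m)} \, dr. \]
The diagonal terms contribute $(\rho_1-\rho_0)/N$. For $n > m$, the phase $\phi(r) = ha(r^n - r^m)$ has monotonic derivative satisfying $|\phi'(r)| \geq (1 - \rho_0^{-1})\,|ha|\, n\, \rho_0^{n-1}$ on $[\rho_0,\rho_1]$, so a one-step integration by parts (the standard first-derivative estimate for oscillatory integrals) bounds each off-diagonal integral by $O\bigl(1/(n\rho_0^{n-1})\bigr)$. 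Summing first over the $n$ choices of $m < n$ and then over $n$ gives a geometric series, so the off-diagonal contribution is $O(N^{-2})$ and the whole $L^2$ norm is $O(N^{-1})$.

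Finally, a subsequence argument promotes $L^2$-decay to pointwise a.e.\ convergence: along $N_k = k^2$, Markov's inequality plus $\sum_k N_k^{-1} < \infty$ and Borel--Cantelli force $S_{N_k}(r) \to 0$ for a.e.\ $r \in [\rho_0,\rho_1]$, and the trivial increment bound $|S_{N+1}(r) - S_N(r)| \leq 2/N$ interpolates to convergence along the full sequence. Exhausting $(1,\infty)$ by a countable union of such compact subintervals, and taking the union of the (null) exceptional sets, completes the proof. The main obstacle is verifying the first-derivative estimate uniformly in $(n,m)$ with a decay good enough (geometric in $n$) to absorb the double sum after the $1/N^2$ normalization; this is exactly what makes the exceptional set null rather than merely meagre. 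The remaining ingredients---Weyl, Borel--Cantelli, and the $L^2 \to$ a.e.\ interpolation---are classical bookkeeping.
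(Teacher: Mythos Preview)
Your proof is correct and is essentially Koksma's classical argument. The paper, however, does not prove this lemma at all: it is stated with a bare citation to Corollaries~4.1 and~4.2 of \cite{bible} (Kuipers--Niederreiter), so there is no paper proof to compare against beyond noting that the cited reference uses the same Weyl--$L^2$--Borel--Cantelli machinery you have outlined.

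One small remark on your derivative bound: for $n>m\geq 0$ and $r\in[\rho_0,\rho_1]$, the function $nr^{n-1}-mr^{m-1}$ is increasing (as you implicitly use when asserting monotonicity of $\phi'$), so its minimum on the interval is $n\rho_0^{n-1}-m\rho_0^{m-1}$; your claimed lower bound $(1-\rho_0^{-1})\,n\rho_0^{n-1}$ then reduces to $n\rho_0^{n-2}\geq m\rho_0^{m-1}$, i.e.\ $(n/m)\rho_0^{n-m-1}\geq 1$, which indeed holds since $n>m$ and $\rho_0>1$. So the estimate is clean, the geometric decay in $n$ absorbs the inner sum over $m$, and the rest is standard bookkeeping exactly as you describe.
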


\begin{lemma}[u.d. mod 1 Sequence]
	\label{lem:um1}
	Let $(a_1,a_2,\dots)$ be a u.d. mod 1 sequence and $(X_1,X_2,\dots)$ a sequence of {\em independent and identically distributed} (i.i.d.) discrete random variables. Let $E_n\triangleq\{a_1X_1+\cdots+a_nX_n\}$, where $\{\cdot\}$ denotes the fractional part of a real number. Then $E_n$ will be {\em uniformly distributed} (u.d.) over $[0,1)$ as $n\to\infty$. 
\end{lemma}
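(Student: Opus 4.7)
My plan is to prove the claim via a Fourier/Weyl criterion argument on the torus, treating $[0,1)$ as the additive group $\mathbb{R}/\mathbb{Z}$. A sequence of $[0,1)$-valued random variables converges in distribution to the uniform law iff every nontrivial Fourier coefficient converges to zero, that is, $\mathbb{E}[e^{2\pi i k E_n}]\to 0$ for every nonzero $k\in\mathbb{Z}$. Since multiplication by $e^{2\pi i k}$ is insensitive to taking fractional parts, I can drop the braces and work with the raw sum $S_n\triangleq a_1X_1+\cdots+a_nX_n$, so the goal becomes showing $\mathbb{E}[e^{2\pi i k S_n}]\to 0$ for each nonzero integer $k$.

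The first step is to invoke the independence of the $X_j$'s to factor the characteristic function:
\begin{align}
	\mathbb{E}\bigl[e^{2\pi i k S_n}\bigr] \;=\; \prod_{j=1}^n \phi(k a_j),\qquad \phi(t)\triangleq\mathbb{E}\bigl[e^{2\pi i t X_1}\bigr].
\end{align}
Thus it suffices to prove that $\prod_{j=1}^n|\phi(k a_j)|\to 0$. Assuming $X_1$ is non-degenerate (the degenerate case is trivial), I would next show that $|\phi(t)|\le 1$ with equality only on a discrete ``bad'' set $B\subset[0,1)$ mod $1$: indeed, $|\phi(t)|=1$ forces $tX_1$ to take values in a single coset of $\mathbb{Z}$ almost surely, and for the binary uniform case (the setting actually used later in the paper) one checks directly that $|\phi(t)|^2=1-2p(1-p)(1-\cos 2\pi t)$, so $B=\{0\}$. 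For each $\delta>0$, continuity of $\phi$ yields a constant $c_\delta<1$ with $|\phi(t)|\le c_\delta$ whenever $t$ lies outside the $\delta$-neighborhood $U_\delta$ of $B$ modulo $1$.

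The next step is to transfer the u.d. mod 1 hypothesis from the sequence $(a_j)$ to the sequence $(k a_j)$. By Weyl's criterion, $(a_j)$ u.d. mod 1 gives $\frac{1}{n}\sum_{j=1}^n e^{2\pi i h a_j}\to 0$ for every nonzero integer $h$; applied with $h$ replaced by $hk$, this immediately yields u.d. mod 1 for $(ka_j)$ as well. Consequently, by \cref{def:udm1}, the asymptotic fraction of $j\le n$ with $\{ka_j\}\notin U_\delta$ equals $1-\mu(U_\delta)$, which tends to $1$ as $\delta\downarrow 0$. Splitting the product over ``good'' and ``bad'' indices and using the trivial bound $|\phi|\le 1$ on the latter, I get
\begin{align}
	\prod_{j=1}^n|\phi(ka_j)| \;\le\; c_\delta^{(1-\mu(U_\delta))\,n\,-\,o(n)},
\end{align}
which decays geometrically to $0$ for any fixed $\delta>0$, completing the proof.

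The main obstacle, and the only place one has to be careful, is identifying the bad set $B$ where $|\phi|$ attains $1$ and ensuring it is small (in particular, finite, so that $\mu(U_\delta)\to 0$ with $\delta$). For bounded discrete $X_1$ this is automatic; for the binary case of interest here it is immediate. A secondary technicality is the standard lemma that u.d. mod 1 is preserved under multiplication by nonzero integers, but as noted above this follows at once from Weyl's criterion and so should be dispatched in a single line.
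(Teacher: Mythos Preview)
The paper states this lemma without proof, so there is no argument of the paper's to compare against; your Weyl--criterion approach via characteristic functions on the torus is the standard route and is essentially correct for the case the paper actually uses.

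Two caveats on your write-up, though. First, the degenerate case is not ``trivial'': if $X_1\equiv 0$ a.s.\ then $E_n\equiv 0$ and the conclusion fails outright, so non-degeneracy must be stated as a hypothesis, not dismissed. Second, and more substantively, your bound $|\phi(ka_j)|\le c_\delta$ whenever $\{ka_j\}\notin U_\delta$ tacitly assumes that $|\phi(t)|$ depends only on $\{t\}$, i.e.\ that $|\phi|$ is $1$-periodic. That holds precisely when the support of $X_1$ lies in a single coset of $\mathbb{Z}$; for general discrete $X_1$ the lemma as stated is actually false. A concrete counterexample: take $X_1\in\{0,\sqrt{2}\}$ with equal probability and $a_j=j/\sqrt{2}$; then $(a_j)$ is u.d.\ mod $1$ by Weyl, yet every $a_jX_j$ is an integer, so $E_n\equiv 0$. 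Your argument is therefore complete for integer-valued non-degenerate $X_1$---which covers all the paper's applications, where $X_i\in\{0,1\}$---but you should flag this restriction explicitly rather than asserting it is ``automatic for bounded discrete $X_1$''.
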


\begin{theorem}[Weighted Sum of Geometric Sequence]
	\label{thm:gm}
	Let $(X_1,X_2,\dots)$ be a sequence of i.i.d. discrete random variables. Let $E_n\triangleq\{aX_1+arX_2+\cdots+ar^{n-1}X_n\}$, where $a\neq 0$ and $\{\cdot\}$ denotes the fractional part of a real number. Then for almost every non-integral $r>1$, $E_n$ will be u.d. over $[0,1)$ as $n\to\infty$. 		
\end{theorem}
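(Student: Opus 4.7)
The plan is to obtain \cref{thm:gm} as an immediate consequence of the two preceding lemmas, treating it essentially as a composition. The statement is structured so that the geometric progression $(a, ar, ar^2, \dots)$ plays the role of the coefficient sequence $(a_1, a_2, \dots)$ in \cref{lem:um1}, and \cref{prop:gp} is exactly what supplies the u.d.\ mod~1 hypothesis needed there.

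Concretely, first I would set $a_i \triangleq a r^{i-1}$ for $i \geq 1$, noting that since $a \neq 0$, this is a genuine nondegenerate geometric sequence. Next, I would invoke \cref{prop:gp}: for almost every non-integral $r > 1$, with exceptional set of Lebesgue measure zero, the sequence $(a_1, a_2, \dots) = (a, ar, ar^2, \dots)$ is u.d.\ mod~1. Fix any such $r$ from this full-measure set. Then, since $(X_1, X_2, \dots)$ are i.i.d.\ discrete random variables, the hypotheses of \cref{lem:um1} are satisfied with this choice of $(a_i)$, and the conclusion yields that
\begin{align*}
E_n = \{a_1 X_1 + a_2 X_2 + \cdots + a_n X_n\} = \{aX_1 + arX_2 + \cdots + ar^{n-1}X_n\}
\end{align*}
converges in distribution to the uniform law on $[0,1)$ as $n \to \infty$. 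Finally, since the set of $r$ for which this argument fails is the union of the integers with the Lebesgue-null exceptional set from \cref{prop:gp}, and hence has Lebesgue measure zero, the conclusion holds for almost every non-integral $r > 1$, as claimed.

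There is no real obstacle to overcome, because the heavy lifting is already done in \cref{prop:gp} (a number-theoretic fact about geometric sequences) and \cref{lem:um1} (the Weyl-type equidistribution result for weighted sums of i.i.d.\ discrete random variables). The only point worth stating explicitly in the write-up is that the ``almost every $r$'' quantifier propagates cleanly: the exceptional set for the theorem is the same null set as for \cref{prop:gp}, since \cref{lem:um1} is then applied deterministically for each admissible $r$. If one wanted to be fully pedantic, a brief remark could be added that $\{\cdot\}$ denotes the fractional part consistently across the two lemmas, so that the identification $a_i = ar^{i-1}$ really does give $E_n$ the exact form assumed in \cref{lem:um1}.
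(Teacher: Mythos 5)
Your proposal is correct and matches the paper's intent: the theorem is stated immediately after \cref{prop:gp} and \cref{lem:um1} precisely so that it follows by composing them, with the geometric progression $(ar^{i-1})_{i\geq 1}$ serving as the coefficient sequence, and the paper accordingly gives no separate proof. Your remark that the exceptional null set propagates unchanged from \cref{prop:gp} is exactly the only point worth spelling out.
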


\section{Asymptotic Spectrum}\label{sec:asympt}
We begin with infinite-length overlapped arithmetic codes. Let $X$ be a binary random variable with bias probability $\Pr(X=1)=1/2$ and $X^n=X_1^n\triangleq(X_1,\dots,X_n)$ be $n$ independent copies of $X$. The encoder recursively shrinks the initial interval $[0,1)$ according to $X^n$. Let $[l(X^i),h(X^i))$ be the mapping interval of $X^i$, and initially, $[l(X^i),h(X^i))=[0,1)$. According to \cref{subfig:oac}, for a rate-$r$ overlapped arithmetic code, we have $h(X^i)-l(X^i)=2^{-ir}$ and
\begin{align}
	l(X^i) 
	&= l(X^{i-1}) + X_i\cdot(1-2^{-r})\cdot\left(h(X^{i-1})-l(X^{i-1})\right)\nonumber\\
	&= l(X^{i-1}) + X_i\cdot(1-2^{-r})\cdot2^{-(i-1)r}\label{eq:lXirecusion}.
\end{align}
After rearrangement, we obtain
\begin{align}
	l(X^i) 
	&= (1-2^{-r})\sum_{i'=1}^{i}{\left(X_{i'}\cdot2^{-(i'-1)r}\right)}\nonumber\\
	&= (2^r-1)\sum_{i'=1}^{i}{\left(X_{i'}\cdot2^{-i'r}\right)}.\label{eq:lXi}
\end{align}
For $r>0$, we have
\begin{align}
	\lim_{n\to\infty}{(h(X^n)-l(X^n))} = \lim_{n\to\infty}{2^{-nr}} = 0. 
\end{align}
In other word, as $n$ increases, the mapping interval of $X^n$ will converge to a real number $U$ in $[0,1)$, \textit{i.e.},
\begin{align}\label{eq:U}
	U &= \lim_{n\to\infty}l(X^n)=\lim_{n\to\infty}h(X^n)\nonumber\\
	  &= (2^r-1)\sum_{i=1}^{\infty}{X_i2^{-ir}}.
\end{align}
A natural problem is: How is the random variable $U$ distributed over $[0,1)$? Let $f(u)$ be the \textit{probability density function} (pdf) of $U$. We temporarily name $f(u)$ as the \textit{Asymptotic Spectrum} of overlapping arithmetic codes. According to the properties of pdf, we have
\begin{align}\label{eq:definitionfield}
	\begin{cases}
		f(u) \geq 0,& 0 \leq u < 1\\
		f(u) = 0, 	& u\notin[0,1)
	\end{cases}
\end{align}
and
\begin{align}\label{eq:integral}
	\int_{-\infty}^{\infty}{f(u)\,du} = \int_{0}^{1}{f(u)\,du} = 1.
\end{align}

\begin{theorem}[Implicit Form of Asymptotic Spectrum]\label{thm:asympt}
	For a rate-$r$ overlapped arithmetic code defined by \cref{subfig:oac}, its asymptotic spectrum, denoted as $f(u)$, is given by 
	\begin{align}\label{eq:asympt}
		2^{1-r}f(u) &= f(u2^r) + f((u-(1-2^{-r}))2^r)\\
					&= f(u2^r) + f(1-(1-u)2^r).
	\end{align}
\end{theorem}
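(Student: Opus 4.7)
The plan is to derive the functional equation by conditioning on the first source symbol $X_1$ and exploiting the self-similar structure of the series defining $U$ in equation \eqref{eq:U}. Write
\[
U = (2^r-1)X_1 2^{-r} + 2^{-r}\cdot(2^r-1)\sum_{i=1}^{\infty} X_{i+1} 2^{-ir}.
\]
Since the $X_i$ are i.i.d., the tail sum $V \triangleq (2^r-1)\sum_{i=1}^{\infty} X_{i+1} 2^{-ir}$ has exactly the same distribution as $U$, i.e. pdf $f$. Conditioning on $X_1$ then yields two cases: given $X_1=0$, $U = 2^{-r}V$; given $X_1=1$, $U = (1-2^{-r}) + 2^{-r}V$. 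So $U$ is a $1/2$--$1/2$ mixture of these two affine images of $V$.

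Next I would apply the standard change-of-variables rule for pdfs under an affine map. The pdf of $2^{-r}V$ at the point $u$ is $2^r f(u\cdot 2^r)$, supported on $[0,2^{-r}]$. The pdf of $(1-2^{-r}) + 2^{-r}V$ at $u$ is $2^r f((u-(1-2^{-r}))\cdot 2^r)$, supported on $[1-2^{-r},1]$. By the law of total probability and $\Pr(X_1=0)=\Pr(X_1=1)=1/2$,
\[
f(u) = \tfrac{1}{2}\cdot 2^r f(u\cdot 2^r) + \tfrac{1}{2}\cdot 2^r f\bigl((u-(1-2^{-r}))\cdot 2^r\bigr).
\]
Multiplying both sides by $2^{1-r}/1$ (equivalently moving the $2^r$ factor) produces the first displayed identity in the theorem statement.

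Finally, to establish the second equality I only need the algebraic identity
\[
(u-(1-2^{-r}))\cdot 2^r = u\cdot 2^r - 2^r + 1 = 1-(1-u)\cdot 2^r,
\]
which shows the two arguments of $f$ in the two forms are literally equal, so the two right-hand sides coincide.

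The one point that requires a little care, and which I would expect to be the main (though minor) obstacle, is the treatment of supports when $r<1$: the two affine images of $[0,1]$, namely $[0,2^{-r}]$ and $[1-2^{-r},1]$, overlap on the interval $[1-2^{-r},2^{-r}]$, and the functional equation in fact reflects this overlap---both terms on the right can be simultaneously nonzero there. For $r>1$ the two images are disjoint subintervals of $[0,1]$ and for $r=1$ they partition $[0,1]$. In either case the derivation above is unchanged, but when verifying the equation pointwise one must remember that $f$ vanishes outside $[0,1)$ (by \eqref{eq:definitionfield}), so each term on the right is active only on its own affine image of $[0,1)$.
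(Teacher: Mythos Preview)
Your proof is correct and follows essentially the same approach as the paper: condition on $X_1$, use the self-similarity of the tail series to identify $V\stackrel{d}{=}U$, and apply the affine change-of-variables to the two conditional pdfs before mixing with weights $1/2$. The paper phrases the scaling/shifting step slightly more informally (via a figure and the observation that $f_0,f_1$ have ``the same shape'' as $f$), whereas you invoke the change-of-variables formula directly and also spell out the algebraic identity behind the second displayed form; but the underlying argument is identical.
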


\begin{figure}[!t]
	\subfigure[$q=1/\sqrt{2}$]{\includegraphics[width=.5\linewidth]{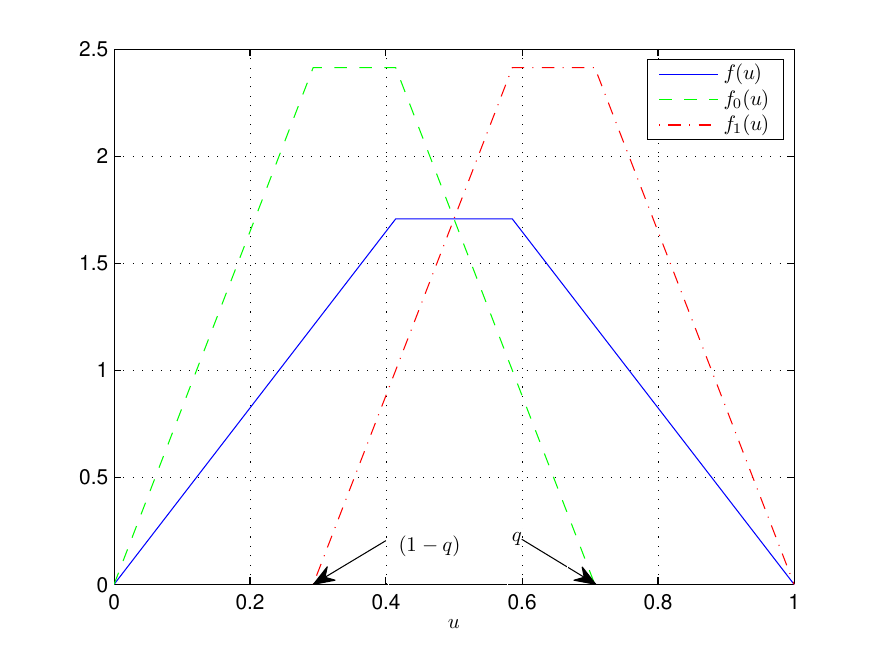}\label{subfig:sqrt2}}%
	\subfigure[$1/2<q\leq(\sqrt{5}-1)/2$]{\includegraphics[width=.5\linewidth]{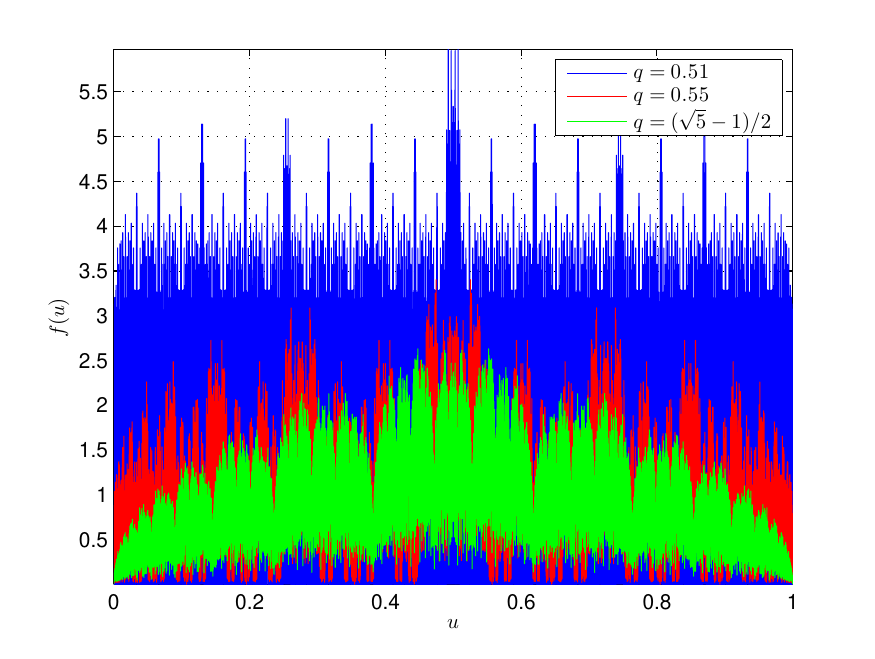}\label{subfig:sqrt5}}\\
	\subfigure[$(\sqrt{5}-1)/2<q<1/\sqrt{2}$]{\includegraphics[width=.5\linewidth]{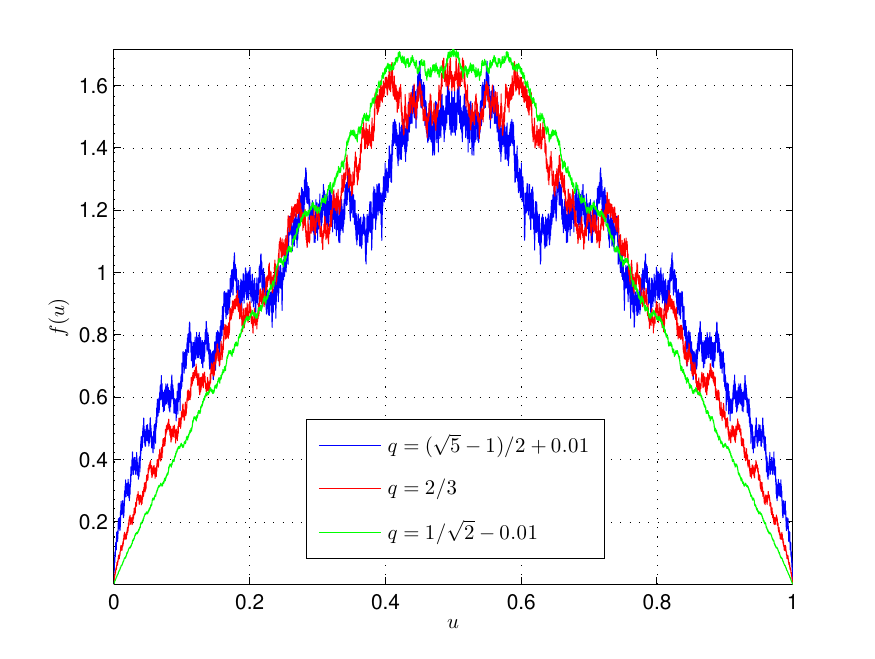}\label{subfig:sqrt52}}%
	\subfigure[$q>1/\sqrt{2}$]{\includegraphics[width=.5\linewidth]{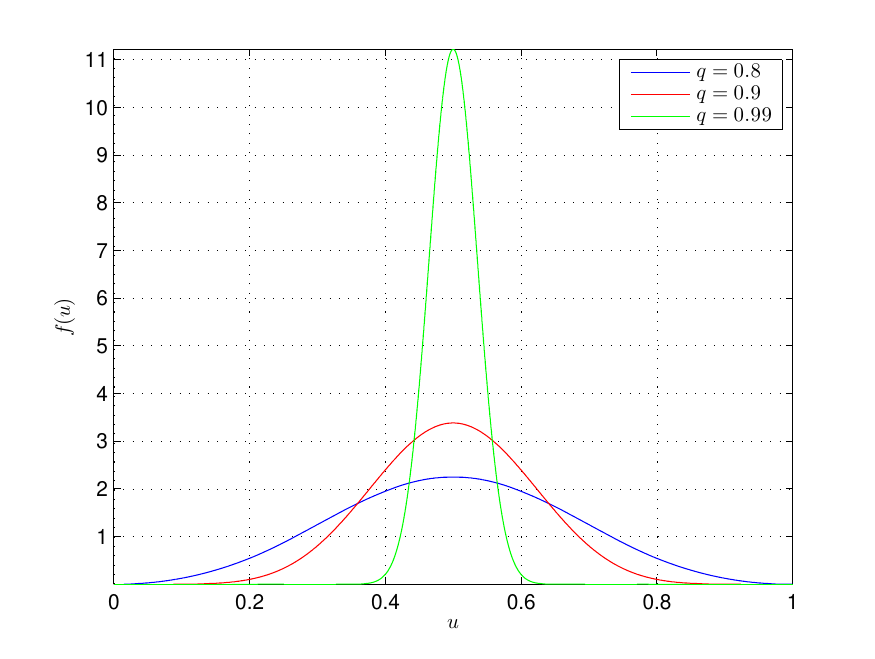}\label{subfig:to1}}
	\caption{Examples of asymptotic spectrum, where $q=2^{-r}=2^{-1/2}$. (a) An illustration of the relations between $f(u)$, $f_0(u)$, and $f_1(u)$. Both $f_0(u)$ and $f_1(u)$ have the same shape as $f(u)$ because $f_0(u)$ is obtained by scaling $f(u)$ by $q$ times along the $u$-axis and $f_1(u)$ is obtained by shifting $f_0(u)$ right by $(1-q)$. It can also be found that $f(\frac{1}{2}) = f_0(\frac{1}{2}) = f_1(\frac{1}{2})$. (b) When $1/2<q\leq(\sqrt{5}-1)/2$, $f(u)$ has many zero points. (c) When $(\sqrt{5}-1)/2<q<1/\sqrt{2}$, $f(u)$ has no zero point, but is still not a smooth function. (d) When $q>1/\sqrt{2}$, $f(u)$ is a smooth function.}
	\label{fig:asympt}
\end{figure}

\begin{proof}
	If $X_1=0$, then $X_1^\infty$ will be mapped to a real number in $[0,2^{-r})$; if $X_1=1$, then $X_1^\infty$ will be mapped to a real number in $[(1-2^{-r}),1)$. Let $f_x(u)$ be the conditional pdf of $U$ given $X_1=x\in\mathbb{B}$. According to \cref{subfig:sqrt2}, it is easy to obtain
	\begin{align}\label{eq:fu}
		f(u) 
		&= \Pr(X_1=0)f_0(u) + \Pr(X_1=1)f_1(u)\nonumber\\
		&= (f_0(u)+f_1(u))/2.
	\end{align}
	Since both $X_2^\infty$ and $X_1^\infty$ are infinite-length sequences of i.i.d. random variables, $f(u)$, $f_0(u)$, and $f_1(u)$ must be similar to each other, and the only difference between them is that they are defined in different domains, \textit{i.e.}, $f(u)$ is defined over $[0,1)$, while $f_0(u)$ is defined over $[0,2^{-r})$ and $f_1(u)$ is defined over $[(1-2^{-r}),1)$ (cf. \cref{subfig:sqrt2}). Based on this finding, we can obtain $f_0(u)\propto f(u/q)=f(u2^r)$ and thus
	\begin{align}\label{eq:f0u}
		f_0(u) = \frac{f(u2^r)}{\int_{0}^{2^{-r}}{f(v2^r)dv}} = 2^rf(u2^r).
	\end{align}
	Similarly, we have $f_1(u)=f_0(u-(1-q))=f_0(u-(1-2^{-r}))$ and thus
	\begin{align}\label{eq:f1u}
		f_1(u) = 2^rf((u-(1-2^{-r}))2^r).
	\end{align}
	Finally, substituting \eqref{eq:f0u} and \eqref{eq:f1u} into \eqref{eq:fu} gives \eqref{eq:asympt}.
\end{proof}

It can be seen that \eqref{eq:asympt} is a functional equation {\em with respect to} (w.r.t.) $f(u)$. After solving \eqref{eq:asympt}, we can obtain the closed form of $f(u)$. According to \eqref{eq:definitionfield} and \eqref{eq:asympt}, we can obtain
\begin{align}\label{eq:nonoverlap}
	2^{1-r}f(u) = 
	\begin{cases}
		f(u2^r), &0\leq u<(1-2^{-r})\\
		f(1-(1-u)2^r), &2^{-r}\leq u<1.
	\end{cases}
\end{align}
From the first branch of \eqref{eq:nonoverlap}, we can get $2^{1-r}f(0)=f(0)$, and hence $f(0)=f(1)=0$ for $r<1$ \citep{FangSPL09}. The following corollary then follows.

\begin{corollary}[Symmetry of Asymptotic Spectrum]
	For $r<1$, $f(u)$ is strictly symmetric around $u=\frac{1}{2}$, \textit{i.e.}, $f(u)=f(1-u)$.
\end{corollary}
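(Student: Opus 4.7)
My plan is to read the symmetry off the explicit probabilistic representation in \eqref{eq:U} rather than wrestle with the functional equation \eqref{eq:asympt} directly. Because the source is uniformly distributed binary, each $X_i$ is distributionally identical to $1-X_i$, and by independence the sequence $(1-X_1, 1-X_2, \dots)$ shares its joint distribution with $(X_1, X_2, \dots)$. Substituting the bit-flipped sequence into \eqref{eq:U}, the random variable
\[U' \triangleq (2^r-1)\sum_{i=1}^{\infty}(1-X_i)\,2^{-ir}\]
therefore has the same distribution as $U$.

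Next I would execute a one-line geometric-series computation,
\[(2^r-1)\sum_{i=1}^{\infty}2^{-ir} \;=\; (2^r-1)\cdot\frac{2^{-r}}{1-2^{-r}} \;=\; 1,\]
which gives $U' = 1-U$. Combined with the previous step, $U$ and $1-U$ are equal in distribution, so their common density satisfies $f(u)=f(1-u)$, establishing the claim.

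\textbf{Main obstacle.} There is essentially no obstacle here; once one has the explicit series \eqref{eq:U}, the symmetry is just bit-flip invariance of the uniform Bernoulli source pushed through the linear map $(X_i)\mapsto U$. The constraint $r<1$ does not enter this argument; it is inherited from the enclosing section and serves only to guarantee that $f$ is a genuine continuous density on $[0,1)$ with $f(0)=f(1)=0$, so that the pointwise symmetry claim is unambiguous. A purely analytic alternative would substitute $u\mapsto 1-u$ into \eqref{eq:asympt} and observe that the two summands on the right-hand side get swapped, so that $u\mapsto f(1-u)$ satisfies the same functional equation as $f$; finishing that route, however, would require a uniqueness theorem for solutions of the functional equation, and the probabilistic argument above is the shortest way to obtain it.
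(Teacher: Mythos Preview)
Your probabilistic argument is correct. The bit-flip invariance of the uniform Bernoulli sequence plus the geometric-series identity $(2^r-1)\sum_{i\geq 1}2^{-ir}=1$ give $U\stackrel{d}{=}1-U$ in one stroke, and the density symmetry follows.

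This is a genuinely different route from what the paper does. The paper places the corollary immediately after the functional equation \eqref{eq:asympt} and the observation $f(0)=f(1)=0$, and simply asserts that the symmetry ``then follows''; no proof is written out. The implied argument is the analytic one you mention at the end: substituting $u\mapsto 1-u$ into \eqref{eq:asympt} swaps the two right-hand summands, so $u\mapsto f(1-u)$ satisfies the same functional equation, and one then appeals (tacitly) to uniqueness of the density solving \eqref{eq:asympt} under the normalization \eqref{eq:integral}. Your approach is cleaner precisely because it sidesteps that uniqueness question: the distribution of $U$ is defined by \eqref{eq:U}, and the symmetry is an immediate distributional identity rather than a property deduced from an equation that $f$ happens to satisfy. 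The paper's route, on the other hand, stays entirely within the functional-equation framework and would generalize more easily to situations where the explicit series representation is less transparent.

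One minor remark: your statement that ``the constraint $r<1$ does not enter this argument'' is correct for the distributional identity $U\stackrel{d}{=}1-U$, which holds for any $r>0$. The restriction $r<1$ is indeed only there so that $f$ is a genuine density (for $r=1$ the support degenerates to dyadic rationals, and for $r>1$ the support is a Cantor-type set), making the pointwise equality $f(u)=f(1-u)$ meaningful in the sense the paper intends.
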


\begin{corollary}[Asymptotic Spectrum of Classical Arithmetic Codes]\label{corollary:classicAC}
	The asymptotic spectrum of classical arithmetic codes is a shifted unit gate function, \textit{i.e.} if $r=1$, then 
	$f(u+\frac{1}{2}) = G_1(u)$ \cite{FangSPL09}, where the gate function $G_T(u)$ is defined as
	\begin{align}
		G_T(u) = \begin{cases}
			1/T, 	& -\frac{T}{2} \leq u < \frac{T}{2}\\
			0,  	& u\notin[-\frac{T}{2},\frac{T}{2}).
		\end{cases}
	\end{align}
\end{corollary}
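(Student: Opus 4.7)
The plan is to bypass the functional equation \eqref{eq:asympt} and read the result directly off the probabilistic representation \eqref{eq:U}. Setting $r=1$ in \eqref{eq:U} collapses the prefactor $(2^r-1)$ to $1$, giving
\begin{equation*}
U \;=\; \sum_{i=1}^{\infty} X_i\,2^{-i},
\end{equation*}
where the $X_i$ are i.i.d.\ Bernoulli$(1/2)$ bits. This is nothing but the standard dyadic expansion of a real number in $[0,1)$ with independent unbiased digits. By elementary measure theory (each dyadic interval $[k\cdot 2^{-n},(k+1)\cdot 2^{-n})$ is hit with probability exactly $2^{-n}$ because it is determined by the first $n$ digits), the random variable $U$ is uniformly distributed on $[0,1)$. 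Hence its density is $f(u) = 1$ on $[0,1)$ and $0$ elsewhere, which on comparison with the definition of $G_T$ reads $f(u) = G_1(u - \tfrac{1}{2})$, i.e., $f(u+\tfrac12) = G_1(u)$.

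As a consistency check, I would verify that the candidate $f \equiv 1$ on $[0,1)$ solves the functional equation \eqref{eq:asympt} specialised to $r=1$, namely $f(u) = f(2u) + f(2u-1)$. For $u \in [0,\tfrac{1}{2})$ one has $2u \in [0,1)$ and $2u-1 < 0$, so the right-hand side equals $1+0 = 1$; symmetrically, for $u \in [\tfrac{1}{2},1)$ one has $2u \geq 1$ and $2u-1 \in [0,1)$, so the right-hand side equals $0+1 = 1$. The normalisation $\int_0^1 f(u)\,du = 1$ from \eqref{eq:integral} is also immediate.

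The main subtlety, and the reason I would not simply quote \cref{thm:gm}, is that \cref{prop:gp} (and hence \cref{thm:gm}) is stated for non-integral $r > 1$, excluding precisely the boundary value $r=1$ that interests us here. Fortunately this is not a genuine obstacle: the case $r=1$ is actually the \emph{easiest} case, since the geometric weights $2^{-i}$ produce an honest base-$2$ expansion rather than a generic $u.d.\,\bmod\,1$ sequence, and no ergodic or equidistribution input is needed. Thus the corollary is a direct instance of the classical construction of the uniform measure on $[0,1)$ via fair-coin binary digits, with the functional equation serving only as a sanity check.
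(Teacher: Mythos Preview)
Your proof is correct. The paper does not give its own proof of this corollary, merely citing \cite{FangSPL09}; the result is stated as a consequence of the implicit form \eqref{eq:asympt}, and it could equally be read off from the later special closed form \eqref{eq:specialClosedForm} (take $m=1$, so $T_1=(1-2^{-1})\cdot 2=1$ and $f(u+\tfrac12)=G_1(u)$). Your route via the probabilistic representation \eqref{eq:U} is more direct and self-contained than either of these: recognising $U=\sum_{i\ge 1}X_i 2^{-i}$ as the fair-coin binary expansion immediately identifies the law of $U$ as Lebesgue measure on $[0,1)$, with no need to solve a functional equation or invoke Fourier inversion. Your remark that \cref{prop:gp} and \cref{thm:gm} are inapplicable at the boundary $r=1$ is well taken, and your functional-equation check is a nice confirmation that the answer is consistent with \cref{thm:asympt}.
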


\begin{corollary}[Zero Points]\label{corollary:zeros}
	If $\frac{1}{2}<2^{-r}\leq\frac{\sqrt{5}-1}{2}$, then $\forall k \in \mathbb{N}$,
	\begin{align}
		f(\tfrac{2^{-kr}}{2^{-r}+1}) = f(1-\tfrac{2^{-kr}}{2^{-r}+1}) = 0.
	\end{align}
\end{corollary}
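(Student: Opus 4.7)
Set $q = 2^{-r}$, so the hypothesis becomes $1/2 < q \le (\sqrt{5}-1)/2$, which is equivalent to $q^2 + q \le 1$. Abbreviate $u_k \triangleq q^k/(q+1)$. The plan is to first establish $f(u_0) = 0$ directly from the functional equation combined with the symmetry corollary just proved, and then propagate this vanishing to all $k \ge 1$ by a short induction built on the first branch of equation (40). The reflected points $1 - u_k$ are handled for free by $f(1-u) = f(u)$.

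For the base case, I would check that $u_0 = 1/(q+1)$ lies in $[q,1)$: the inequality $u_0 \ge q$ rearranges to $q^2 + q \le 1$, which is precisely the hypothesis. Hence the second branch of (40) applies at $u = u_0$, yielding $2q\, f(u_0) = f\bigl(1 - (1-u_0)/q\bigr)$. A one-line substitution using $1 - u_0 = q/(q+1)$ gives $(1-u_0)/q = 1/(q+1) = u_0$, so the argument on the right collapses to $1 - u_0$. Invoking the symmetry $f(1 - u_0) = f(u_0)$, I obtain $(2q - 1)\, f(u_0) = 0$, and since $q > 1/2$ forces $2q - 1 > 0$, this forces $f(u_0) = 0$. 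The conceptual content of the base case is thus a fixed-point identity at $u_0$.

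For the induction step with $k \ge 1$, I need $u_k \in [0, 1-q)$ so that the first branch of (40) applies. Multiplying through by $(q+1)$, the inequality $u_k < 1-q$ reduces to $q^k < 1 - q^2$, which follows from the hypothesis $q^2 + q \le 1$ already at $k=1$ and then a fortiori for all larger $k$. The first branch then gives $2q\, f(u_k) = f(u_k/q) = f(u_{k-1})$, and iterating yields $f(u_k) = f(u_0) / (2q)^k = 0$. The symmetry corollary then delivers $f(1 - u_k) = 0$ at no extra cost.

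The subtlety I expect will be the boundary case $q = (\sqrt{5}-1)/2$, where $u_0 = q$ and $u_1 = 1-q$ sit exactly on the seams of the piecewise equation (40) rather than in the interiors of its branches. There I would fall back to the full two-term form (39) and note that the ``wrong'' term evaluates to $f$ at an endpoint of $[0,1)$, which vanishes by the observation $f(0) = f(1) = 0$ derived right before the symmetry corollary; alternatively one can appeal to continuity to push the values from the strict-inequality case to the boundary. Either way, both the fixed-point identity $2q\, f(u_0) = f(u_0)$ and the recursion $f(u_k) = f(u_{k-1})/(2q)$ survive, so the conclusion is unaffected.
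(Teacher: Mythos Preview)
Your argument is correct. The paper states this corollary without proof, so there is no paper argument to compare against; your route via the fixed-point identity $2q\,f(u_0)=f(1-u_0)=f(u_0)$ at $u_0=1/(q+1)$, followed by the first-branch recursion $f(u_k)=f(u_{k-1})/(2q)$, is exactly the natural one and all the range checks go through as you describe, including the boundary case $q=(\sqrt{5}-1)/2$ where $u_0=q$, $u_1=1-q$, and the extra terms in the full functional equation land at $f(1)$ or $f(0)$ and vanish.
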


\begin{corollary}[Limit of Asymptotic Spectrum]\label{corollary:limitInitial}
	As $r$ goes to $0$, $f(u+1/2)$ will converge to the unit Dirac delta function $\delta(u)$.
\end{corollary}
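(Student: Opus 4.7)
The plan is to exploit the explicit series representation of $U$ from \eqref{eq:U} and show that its mean equals $1/2$ for every admissible $r>0$ while its variance vanishes as $r\to 0^{+}$. Chebyshev's inequality then forces $U$ to concentrate at $1/2$, which is precisely the statement that $f(u+1/2)\to\delta(u)$ in the distributional sense.

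First I would compute the mean. Since the $X_i$ are i.i.d.\ Bernoulli$(1/2)$, linearity of expectation together with the geometric series yields
\begin{align*}
E[U]=(2^{r}-1)\sum_{i=1}^{\infty}\tfrac{1}{2}\cdot 2^{-ir}=\frac{2^{r}-1}{2}\cdot\frac{2^{-r}}{1-2^{-r}}=\frac{1}{2},
\end{align*}
so the mean is \emph{exactly} $1/2$ for every $r>0$. This incidentally reproduces the already noted symmetry of $f$ around $1/2$ and identifies the only sensible candidate for a concentration point. Next I would exploit independence to obtain the variance:
\begin{align*}
\mathrm{Var}(U)=(2^{r}-1)^{2}\sum_{i=1}^{\infty}\tfrac{1}{4}\cdot 2^{-2ir}=\frac{(2^{r}-1)^{2}}{4\,(2^{2r}-1)}=\frac{2^{r}-1}{4\,(2^{r}+1)},
\end{align*}
which visibly tends to $0$ as $r\to 0^{+}$.

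With the mean pinned at $1/2$ and the variance vanishing, Chebyshev's inequality gives, for every $\varepsilon>0$,
\begin{align*}
\Pr\!\left(\left|U-\tfrac{1}{2}\right|>\varepsilon\right)\leq\frac{\mathrm{Var}(U)}{\varepsilon^{2}}\xrightarrow[r\to 0^{+}]{}0,
\end{align*}
so $U$ converges in probability, and hence in distribution, to the constant $1/2$. Because the limiting law is a point mass at $1/2$, the densities $f(\cdot)$ converge weakly to $\delta(\cdot-1/2)$, i.e., $f(u+1/2)\to\delta(u)$.

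The only delicate point is interpretational: pointwise convergence of densities to a delta function cannot hold, so I would state the conclusion as weak convergence of the associated probability measures (equivalently, $\int\varphi(u)\,f(u+1/2)\,du\to\varphi(0)$ for every bounded continuous test $\varphi$), which is the standard meaning of ``$f(u+1/2)\to\delta(u)$'' and follows immediately from the convergence in distribution established above. Everything else is routine summation of geometric series; no recourse to the functional equation \eqref{eq:asympt} is needed, since as $r\to 0^{+}$ that equation degenerates to the tautology $2f(u)=2f(u)$ and yields no information.
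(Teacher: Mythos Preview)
Your argument is correct. The mean and variance computations are right, Chebyshev gives convergence in probability to the constant $1/2$, and you correctly note that the conclusion must be read as weak convergence of measures rather than pointwise convergence of densities.

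The paper does not actually prove this corollary; it is simply stated and later reinforced informally by the remark after \cref{corollary:closedForm} that ``as $r$ decreases from $1$ to $0$, $f(u)$ will be gradually concentrated around $u=1/2$.'' The most one could extract from the surrounding material is an argument via the explicit Fourier or convolution representations \eqref{eq:initialExplicitForm}--\eqref{eq:specialClosedForm}, but neither route is carried out. Your probabilistic approach is both more elementary and more direct: it uses only the series \eqref{eq:U}, the independence of the $X_i$, and a geometric sum, bypassing the functional equation and the Fourier machinery entirely. The observation that \eqref{eq:asympt} degenerates to a tautology at $r=0$ is a nice explanation of why the implicit form alone cannot yield the limit.
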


\begin{theorem}[Explicit Form of Asymptotic Spectrum]
	Let $\mathscr{F}^{-1}$ denote inverse Fourier transform. For uniform binary sources, the asymptotic spectrum of rate-$r$ overlapped arithmetic code is given by \cite{FangTCOM13}
	\begin{align}\label{eq:initialExplicitForm}
		f(u+1/2) = \mathscr{F}^{-1} 
		\left\{\prod_{k=0}^{\infty}{\cos(\omega(1-2^{-r})2^{-kr-1})}\right\}.
	\end{align}
\end{theorem}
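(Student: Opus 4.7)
The plan is to realize $f(u+\tfrac{1}{2})$ as the probability density of an infinite sum of independent symmetric increments, and then recognize the infinite cosine product as the characteristic function of that sum. Starting from \eqref{eq:U}, rewrite
\begin{align*}
	U = \sum_{i=1}^{\infty} c_i X_i, \qquad c_i \triangleq (1-2^{-r})2^{-(i-1)r},
\end{align*}
and observe that $\sum_{i=1}^{\infty} c_i = (1-2^{-r})/(1-2^{-r}) = 1$. Introducing Rademacher variables $Y_i \triangleq 2X_i-1$ (i.i.d., $\pm 1$ with probability $1/2$), the centered variable becomes
\begin{align*}
	V \triangleq U - \tfrac{1}{2} = \tfrac{1}{2}\sum_{i=1}^{\infty} c_i Y_i.
\end{align*}
For every $r>0$ the weights $c_i$ decay geometrically, so $\sum c_i^2 < \infty$ and the series converges almost surely by Kolmogorov's three-series theorem. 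The density of $V$ is exactly $f_V(v)=f(v+\tfrac{1}{2})$.

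Next I would compute the characteristic function $\phi_V(\omega)=\mathbb{E}[e^{i\omega V}]$ by exploiting independence and the elementary identity $\mathbb{E}[e^{itY_i}]=\cos t$:
\begin{align*}
	\phi_V(\omega) = \prod_{i=1}^{\infty}\cos(\omega c_i/2).
\end{align*}
Substituting $c_i/2 = (1-2^{-r})2^{-(i-1)r-1}$ and reindexing by $k=i-1$ produces precisely the bracketed infinite product in \eqref{eq:initialExplicitForm}. Applying the inverse Fourier transform (equivalently, Lévy's inversion theorem) to $\phi_V$ then yields $f(u+\tfrac{1}{2})$, closing the argument.

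The main obstacle I anticipate is justifying the inverse Fourier transform rigorously, because $\phi_V$ need not be integrable for every $r$: at $r=1$ the density degenerates to a gate by \cref{corollary:classicAC}, and as $r\to 0$ it tends to a Dirac delta by \cref{corollary:limitInitial}, so one cannot hope for uniform classical integrability. The resolution is to interpret $\mathscr{F}^{-1}$ in the tempered-distribution sense, combined with the fact that pointwise convergence of the characteristic-function partial products together with tightness of the partial sums of $\tfrac{1}{2}\sum c_i Y_i$ determines the law uniquely via Lévy's continuity theorem. As an independent cross-check (and alternative derivation that avoids probabilizing $U$), one may Fourier-transform the functional equation \eqref{eq:asympt} after centering by $u\mapsto v+\tfrac{1}{2}$; the two branches collapse by the symmetry $f(u)=f(1-u)$ into the self-similar identity $\hat g(\omega) = \cos\!\bigl(\omega(1-2^{-r})/2\bigr)\,\hat g(\omega/2^r)$ with $\hat g\triangleq \mathscr{F}\{f(\cdot+\tfrac{1}{2})\}$, and iterating using $\hat g(0)=1$ reproduces the same infinite cosine product.
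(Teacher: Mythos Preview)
Your argument is correct. The paper itself does not supply a proof of this theorem; it simply states the result with a citation to \cite{FangTCOM13}, so there is no in-text derivation to compare against directly. Your primary route---reading off from \eqref{eq:U} that $U$ is an infinite weighted sum of i.i.d.\ Bernoullis, centering to Rademacher variables, and multiplying characteristic functions---is the cleanest way to obtain the infinite cosine product, and your handling of the inversion step (tempered distributions together with L\'evy continuity) correctly anticipates the borderline cases flagged by \cref{corollary:classicAC} and \cref{corollary:limitInitial}. Your alternative sketch, Fourier-transforming the centered version of \eqref{eq:asympt} to get $\hat g(\omega)=\cos\bigl(\omega(1-2^{-r})/2\bigr)\,\hat g(2^{-r}\omega)$ and iterating with $\hat g(0)=1$, is the natural continuation of \cref{thm:asympt} and is presumably closer in spirit to the derivation in the cited reference; both routes are valid and land on the same product.
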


\begin{corollary}[Expansion of sinc Function]
	\begin{align}\label{eq:sinc}
		\mathrm{sinc}(\omega) = \frac{\sin(\omega)}{\omega} = \prod_{k=1}^{\infty}{\cos(\omega/2^k)}.
	\end{align}
\end{corollary}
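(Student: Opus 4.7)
The plan is to derive the identity by iterating the double-angle formula $\sin(2\theta)=2\sin\theta\cos\theta$, equivalently $\sin\theta=2\sin(\theta/2)\cos(\theta/2)$. First I would apply this recursively $N$ times to obtain
\[
\sin(\omega) \;=\; 2^N \sin(\omega/2^N)\prod_{k=1}^{N}\cos(\omega/2^k),
\]
verified by a straightforward induction on $N$: the base case is the double-angle formula itself, and the induction step applies it once more to the innermost factor $\sin(\omega/2^N)$. Then I would pass to the limit $N\to\infty$: the prefactor $2^N\sin(\omega/2^N)=\omega\cdot\frac{\sin(\omega/2^N)}{\omega/2^N}$ tends to $\omega$ by the standard limit $\sin(x)/x\to 1$. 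Dividing by $\omega$ gives $\mathrm{sinc}(\omega)=\prod_{k=1}^{\infty}\cos(\omega/2^k)$.

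The only technical point is convergence of the infinite product, which I would settle by the estimate $\log\cos(\omega/2^k)=-\tfrac{1}{2}(\omega/2^k)^2+O((\omega/2^k)^4)$, so that $\sum_k\log\cos(\omega/2^k)$ converges absolutely for every fixed $\omega$. This also justifies interchanging the limit with the product in the induction formula above. No other real obstacle arises---this is essentially Vi\`ete's classical identity specialized to dyadic bisections.

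An alternative route, more in keeping with the flow of the paper, is to specialize the explicit-form theorem to $r=1$. Then $(1-2^{-r})2^{-kr-1}=2^{-k-2}$, so
\[
f(u+\tfrac12) \;=\; \mathscr{F}^{-1}\!\Bigl\{\prod_{k=0}^{\infty}\cos(\omega\cdot 2^{-k-2})\Bigr\}.
\]
By \cref{corollary:classicAC}, the left side is just the unit gate $G_1(u)$, whose Fourier transform is the well-known $\mathrm{sinc}(\omega/2)$. Equating transforms and reindexing via $j=k+2$ yields $\mathrm{sinc}(\omega/2)=\prod_{j\ge 2}\cos(\omega/2^j)$; substituting $\omega\mapsto 2\omega$ shifts the index back to $k\ge 1$ and delivers the stated identity. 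Care is only needed with the reindexing and with fixing a Fourier convention consistent with the one implicit in the explicit-form theorem.
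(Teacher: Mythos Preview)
Your proposal is correct. The paper does not actually supply a proof of this corollary; it simply states the identity and remarks that Euler was the first to discover it. Its placement as a \emph{corollary} of the explicit-form theorem is exactly what your second route exploits: set $r=1$, use \cref{corollary:classicAC} to identify $f(u+\tfrac12)=G_1(u)$, take Fourier transforms, and reindex. Your first route via iterated double-angle is the classical Euler derivation the paper alludes to. Both arguments are sound; the second is the one the paper's structure implicitly intends, while the first is self-contained and avoids any dependence on Fourier conventions.
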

However, we have to point out that it is not the first time that \eqref{eq:sinc} is found. In fact, Euler was the first one who discovered \eqref{eq:sinc}. From (\ref{eq:sinc}), we can get the special closed form of $f(u)$ for rate-$\frac{1}{m}$ overlapped arithmetic code, where $m\in\mathbb{Z}^+$.

\begin{corollary}[Special Closed Form of Asymptotic Spectrum]\label{corollary:closedForm}
	Let $\otimes$ be convolution operation and $T_m = (1-2^{-r})2^{mr}$. The closed 
	form of $f(u)$ for rate-$\frac{1}{m}$ overlapped arithmetic code, where $m\in\mathbb{Z}^{+}$, is
	\begin{align}\label{eq:specialClosedForm}
		f(u+1/2) = \bigotimes_{m'=1}^{m}{G_{T_{m'}}(u)}.
	\end{align}
\end{corollary}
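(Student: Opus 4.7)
The plan is to work in the Fourier domain. I start from the explicit form \eqref{eq:initialExplicitForm}, specialise to $r=1/m$, and then show that the infinite product of cosines collapses to a finite product of exactly $m$ sinc factors, each of which is the Fourier transform of one of the gates $G_{T_{m'}}$ appearing on the right-hand side of \eqref{eq:specialClosedForm}.

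First, setting $r=1/m$ and abbreviating $c\triangleq 1-2^{-1/m}$, the Fourier transform of $f(u+1/2)$ equals $\prod_{k=0}^{\infty}\cos(\omega c\,2^{-k/m-1})$. Every $k\in\mathbb{N}$ can be uniquely written as $k=jm+k'$ with $j\geq 0$ and $k'\in\{0,\dots,m-1\}$, so the product factorises as $\prod_{k'=0}^{m-1}\prod_{j=0}^{\infty}\cos(\beta_{k'}\,2^{-j})$ where $\beta_{k'}\triangleq\omega c\,2^{-k'/m-1}$. The Vieta identity \eqref{eq:sinc} gives $\prod_{j=1}^{\infty}\cos(\beta_{k'}/2^{j})=\mathrm{sinc}(\beta_{k'})$, so absorbing the missing $j=0$ term via the double-angle relation $\sin(2\beta)=2\sin(\beta)\cos(\beta)$ collapses each inner block to $\cos(\beta_{k'})\mathrm{sinc}(\beta_{k'})=\mathrm{sinc}(2\beta_{k'})=\mathrm{sinc}(\omega c\,2^{-k'/m})$, and hence $\mathscr{F}\{f(u+1/2)\}=\prod_{k'=0}^{m-1}\mathrm{sinc}(\omega c\,2^{-k'/m})$.

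Next, I identify this finite product with the Fourier transform of the target convolution. The centred gate $G_T$ of width $T$ and height $1/T$ has Fourier transform $\mathrm{sinc}(\omega T/2)$, so by the convolution theorem $\mathscr{F}\{\bigotimes_{m'=1}^{m}G_{T_{m'}}\}=\prod_{m'=1}^{m}\mathrm{sinc}(\omega T_{m'}/2)$ with $T_{m'}/2=(1-2^{-1/m})2^{m'/m-1}$. Reindexing through $k'=m-m'$ turns $2^{m'/m-1}$ into $2^{-k'/m}$ and matches this product factor by factor against the one obtained in the previous paragraph. Since both sides of \eqref{eq:specialClosedForm} are integrable probability densities on $[-1/2,1/2)$ whose Fourier transforms agree, applying inverse Fourier transform concludes the proof.

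The main obstacle is the careful index bookkeeping: the Vieta identity \eqref{eq:sinc} begins the cosine product at $k=1$ whereas our inner product begins at $j=0$, which forces the stray $\cos(\beta_{k'})$ factor to be merged into the sinc through the double-angle formula, and the two finite index sets $\{0,\dots,m-1\}$ and $\{1,\dots,m\}$ must then be aligned via $k'=m-m'$. Once this alignment is done, the identification of the two Fourier transforms and the subsequent inversion are routine.
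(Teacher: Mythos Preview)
Your proof is correct and follows the route the paper intends: start from the explicit Fourier form \eqref{eq:initialExplicitForm}, specialise to $r=1/m$, regroup the infinite cosine product into $m$ geometric blocks, collapse each block to a sinc via \eqref{eq:sinc}, and recognise the resulting finite product as the Fourier transform of the $m$-fold gate convolution. The paper itself does not spell this out---it merely says ``From \eqref{eq:sinc}, we can get the special closed form''---so your argument is a faithful elaboration of the same idea, with the index bookkeeping (the $j=0$ double-angle absorption and the $k'=m-m'$ reindexing) made explicit.
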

Especially, when $r=1/2$, it can be deduced from \eqref{eq:specialClosedForm} that $f(u+1/2) = G_{T_{1}}(u) \otimes G_{T_{2}}(u)$, where $T_1=\sqrt{2}-1$ and $T_2=2-\sqrt{2}$. Hence we have the following classical CCS \citep{FangTCOM13}
\begin{align}\label{eq:closedForm_halfRate}
	f(u) = \begin{cases}
		\frac{u}{3\sqrt{2}-4}, 		& 0 \leq u < \sqrt{2}-1\\
		\frac{1}{2-\sqrt{2}}, 		& \sqrt{2}-1 \leq u < 2-\sqrt{2}\\
		\frac{1-u}{3\sqrt{2}-4}, 	& 2-\sqrt{2} \leq u < 1.%
	\end{cases}
\end{align}

From \cref{corollary:classicAC}, \cref{corollary:limitInitial}, and \cref{corollary:closedForm}, it can be found that as $r$ decreases from $1$ to $0$, $f(u)$ will be gradually concentrated around $u=1/2$ (changing gradually from the uniform distribution to a Dirac delta spike at $u=1/2$).

\begin{example}[Asymptotic Spectrum]
\cref{fig:asympt} gives some examples of $f(u)$ to show how the shape of $f(u)$ changes w.r.t. $r$ or $q=2^{-r}$. It can be observed that $q=\frac{\sqrt{5}-1}{2}$ and $q = \frac{1}{\sqrt{2}}$ are two watersheds of $f(u)$ because $f(u)$ shows very different shapes when $q$ falls into $(\frac{1}{2},\frac{\sqrt{5}-1}{2}]$, $(\frac{\sqrt{5}-1}{2},\frac{1}{\sqrt{2}})$, and $[\frac{1}{\sqrt{2}},1)$. \cref{subfig:sqrt5} confirms the zero points of $f(u)$ when $\frac{1}{2}<q\leq\frac{\sqrt{5}-1}{2}$, as predicted by \cref{corollary:zeros}. It can be seen that though $q$ differs by only $0.01$, the shape of $f(u)$ is very different for $q=\frac{\sqrt{5}-1}{2}$ and $q=\frac{\sqrt{5}-1}{2}+0.01$. \cref{subfig:to1} shows that $f(u)$ becomes smooth when $\frac{1}{\sqrt{2}}\leq q<1$. It can also be found from \cref{subfig:to1} that $f(u)$ does converge to the Dirac delta spike centered at $u=0.5$ as $q$ goes to $1$ (or $r\to0$), as predicted by \cref{corollary:limitInitial}.
\end{example}

\section{Source Space Partitioning}\label{sec:spacepar}
Now we proceed to finite-length overlapped arithmetic codes. Still consider uniform binary sources. For simplicity, we assume $nr\in\mathbb{Z}$ below. To generate the bitstream, the final mapping interval $[l(X^n),h(X^n))$ is enlarged by $2^{nr}$ times to obtain a normalized interval $[s(x^n),s(x^n)+1)$ with unit length, where
\begin{align}\label{eq:ell}
	s(x^n) 
	&\triangleq 2^{nr}l(X^n) = (2^r-1)\sum_{i=1}^{n}{X_i2^{(n-i)r}}\nonumber\\
	&\overset{d}{=} (1-2^{-r})\sum_{i=1}^{n}{X_i2^{ir}},
\end{align}
where $\stackrel{d}{=}$ denotes two random variables equal in distribution. Clearly, there is one and only one integer in $[s(x^n),s(x^n)+1)$, which is just
\begin{align}\label{eq:mXn}
	M = m(X^n) \triangleq \lceil s(x^n) \rceil.
\end{align}
Obviously, $0\le s(x^n)\le(2^{nr}-1)$ and $M$ is drawn from $[0:2^{nr})\triangleq\{0,1,\dots,(2^{nr}-1)\}$, so $M$ can be represented by $nr\leq n$ bits, which is just the raw arithmetic bitstream of $X^n$. In this sense, overlapped arithmetic codes realize a many-to-one nonlinear mapping from $\mathbb{B}^n$ to $[0:2^{nr})$ or $\mathbb{B}^{nr}$, which partitions source space $\mathbb{B}^n$ into $2^{nr}$ cosets, and $M$ is just the index of the coset containing $X^n$. Note that source space partitioning by overlapped arithmetic codes differs from source space partitioning by linear block codes in that the former results in unequal-sized cosets, while the latter results in equal-sized cosets. 

As an analogy, each source block $x^n\in\mathbb{B}^n$ can be taken as a ball and each coset can be taken as a bin. Clearly, source space partitioning is just a procedure of ball binning. Let us give a simple example to demonstrate this analog. Assume $n=4$ and $r=0.5$. For every $x^n\in\mathbb{B}^n$, we calculate $s(x^n)$ according to \eqref{eq:ell}. The location of $s(x^n)$ along the real number axis is illustrated by \cref{subfig:mapping}. As shown by \cref{subfig:binning}, all source blocks (balls) falling into the interval $(m-1,m]$ will be classified into the $m$-th coset/bin ${\cal C}_m$. The codewords in each coset are structured into an incomplete binary tree, as shown by \cref{subfig:tree}.

\begin{figure}[!t]
	\centering
	\subfigure[]{\includegraphics[width=.8\linewidth]{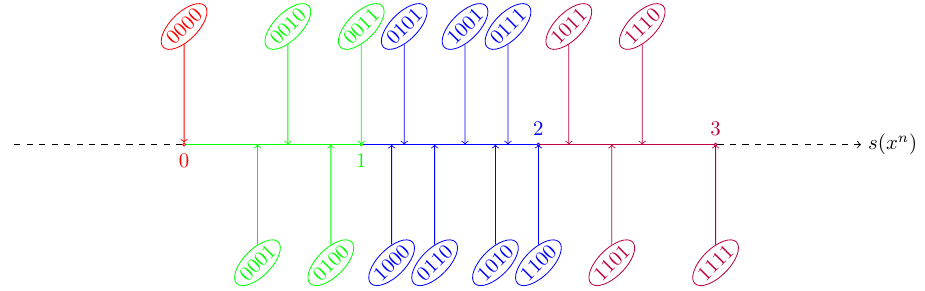}\label{subfig:mapping}}\\
	\subfigure[]{\includegraphics[width=.8\linewidth]{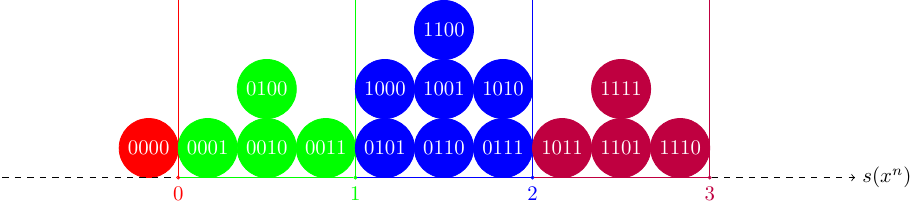}\label{subfig:binning}}
	\subfigure[]{\includegraphics[width=.8\linewidth]{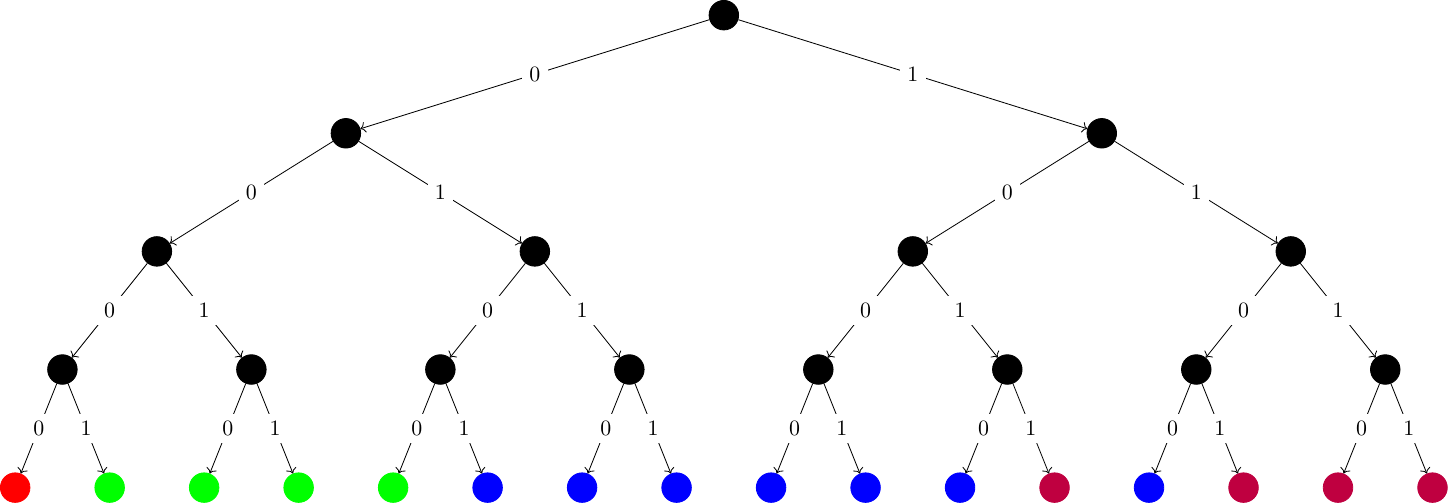}\label{subfig:tree}}
	\caption{An example of coset partitioning of source space with overlapped arithmetic codes for $n=4$ and $r=0.5$. (a) Mapping from $x^n\in\mathbb{B}^n$ to $s(x^n)\in\mathbb{R}$, where $s(x^n)$ is calculated by \eqref{eq:ell}. In this example, $0\le s(x^n)\le (2^{nr}-1)=3$. (b) A ball-binning analogy of overlapped arithmetic codes. Each source block $x^n\in\mathbb{B}^n$ can be taken as a ball. All $2^n=16$ balls in $\mathbb{B}^n$ are classified into $2^{nr}=4$ bins according to $s(x^n)$. Each bin corresponds to a coset and the $m$-th coset is denoted by ${\cal C}_m$. In general, the $m$-th bin is just the interval $(m-1,m]$. There is only one exception, \textit{i.e.}, the $0$-th bin ${\cal C}_0$ is a point $\{0\}$ and there is always one and only one ball $0^n$ in ${\cal C}_0$. (c) Tree structures of cosets.}
	\label{fig:part}
\end{figure}

\section{Definition of CCS}\label{sec:ccs}
As we know, linear block codes always partition source space into equal-sized cosets so that the distribution of coset cardinality is never a problem. On the contrary, as shown by \cref{subfig:binning}, overlapped arithmetic codes partition source space into unequal-sized cosets, so the distribution of coset cardinality is an important problem. Let $|{\cal C}_m|$ be the cardinality of ${\cal C}_m$, where $0\le m< 2^{nr}$. For example, in \cref{subfig:binning}, $|{\cal C}_0|=1$, $|{\cal C}_1|=|{\cal C}_3|=4$, and $|{\cal C}_2|=7$. It is very difficult to deduce the distribution of $|{\cal C}_m|$ for finite $n$ as $M$ is a discrete random variable. To solve this problem, we introduce the following concept.
\begin{definition}[Bitstream Projection]
	Let $M=m(X^n)=\lceil s(X^n)\rceil\sim p(m)$ for $m\in[0:2^{nr})$, where $s(X^n)$ is defined by \eqref{eq:ell}, be the bitstream of a length-$n$ and rate-$r$ overlapped arithmetic code. The projection of $M$ onto the interval $[l(X^i),h(X^i))$, where $l(X^i)$ is given by \eqref{eq:lXi} and $h(X^i)=l(X^i)+2^{-ir}$, is defined as
	\begin{align}\label{eq:Uin}
		U_{i,n} \triangleq \frac{\overbrace{2^{-nr}M}^{U_{0,n}}-l(X^i)}{\underbrace{h(X^i)-l(X^i)}_{2^{-ir}}} = 2^{ir}\left(U_{0,n}-l(X^i)\right).
	\end{align}
\end{definition}
Evidently, $U_{i,n}$ is defined over $[0,1)$ for every $i\in[0:n]\triangleq\{0,1,\dots,n\}$. After observing \eqref{eq:U} and \eqref{eq:Uin}, we find 
\begin{align}
	U &= \lim_{n\to\infty}{U_{0,n}} = \lim_{n\to\infty}{l(X^n)} = \lim_{n\to\infty}{h(X^n)}\nonumber\\
	&= (2^r-1)\sum_{i=1}^{\infty}{X_i2^{-ir}}.
\end{align}
For conciseness, $U_{i,n}$ can be shortened to $U_i$ without causing ambiguity. Especially, $U_0=2^{-nr}M$ is called the \textit{initial} projection and $U_n$ is called the \textit{final} projection. It is easy to obtain
\begin{align}\label{eq:Un}
	U_n = M-2^{nr}l(X^n) = \lceil s(x^n)\rceil-s(x^n), 
\end{align}
showing that $U_n$ is the ceiling error of $s(x^n)$. From \eqref{eq:Uin}, we have $U_i=2^{ir}\left(U_0-l(X^i)\right)$, and conversely, 
\begin{align}\label{eq:U=}
	U_0 
	&= 2^{-ir}U_i + l(X^i) = 2^{-(i+1)r}U_{i+1} + l(X^{i+1})\nonumber\\
	&\stackrel{(a)}{=} 2^{-(i+1)r}U_{i+1} + l(X^i) + X_{i+1}(1-2^{-r})2^{-ir},
\end{align}
where $(a)$ comes from \eqref{eq:lXirecusion}. After removing $2^{-ir}$ at both sides, \eqref{eq:U=} can be rewritten as 
\begin{align*}
	U_{i-1} = 2^{-r}U_i + X_i(1-2^{-r}), 
\end{align*}
which can be illustrated by \cref{fig:Uibackward}. Conversely, we have 
\begin{align*}
	U_i = 2^r\left(U_{i-1} - X_i(1-2^{-r})\right). 
\end{align*}
If $M=m(X^n)$ is decoded along the path $X^n$, we will obtain the sequence $U_0^n\triangleq(U_0,\dots,U_n)$ via the forward recursion. With the help of \cref{fig:Uibackward}, the following theorem holds obviously.
\begin{theorem}[Markov Chain]\label{thm:markov}
	The sequence $(U_{0},\dots,U_{n})$ forms a Markov chain, \textit{i.e.}, $U_{0} \to \dots \to U_{n}$. The sequence $(U_{0},\dots,U_{i-1},X_i)$ also forms a Markov chain, \textit{i.e.}, $U_{0} \to \dots \to U_{i-1} \to X_i$. The relation between $U_{i-1}$, $X_i$, and $U_i$ can be summarized by
	\begin{align}\label{eq:Ui}
		\begin{cases}
			U_{i-1} = 2^{-r}U_i + X_i(1-2^{-r}),\\
			U_i = 2^r\left(U_{i-1} - X_i(1-2^{-r})\right).
		\end{cases}
	\end{align}
\end{theorem}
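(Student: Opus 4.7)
The plan splits into two parts: first establish the pair of identities in \eqref{eq:Ui} by direct substitution, then derive the two Markov properties from them together with an independence argument.

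For the identities, I would start from \eqref{eq:Uin}, namely $U_i = 2^{ir}(U_0 - l(X^i))$, and substitute the one-step recursion $l(X^i) = l(X^{i-1}) + X_i(1-2^{-r})2^{-(i-1)r}$ from \eqref{eq:lXirecusion}. Factoring out $2^r$ and identifying $2^{(i-1)r}(U_0 - l(X^{i-1}))$ with $U_{i-1}$ yields $U_i = 2^r(U_{i-1} - X_i(1-2^{-r}))$; the backward form $U_{i-1} = 2^{-r}U_i + X_i(1-2^{-r})$ is an immediate rearrangement. The key structural consequence for the rest of the proof is that $U_i$ is a deterministic function of the pair $(U_{i-1}, X_i)$.

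For the Markov statements, both assertions reduce to the single claim that $X_i$ is conditionally independent of $(U_0, \ldots, U_{i-2})$ given $U_{i-1}$: the first chain $U_0 \to \cdots \to U_n$ then follows because $U_i$ is a function of $(U_{i-1}, X_i)$, and the second chain $U_0 \to \cdots \to U_{i-1} \to X_i$ is precisely that statement. To prove the conditional independence, I would decompose $U_{i-1} = (M - a)/2^{(n-i+1)r}$, where $a \triangleq 2^{nr} l(X^{i-1})$ depends only on $X^{i-1}$, $b \triangleq 2^{nr}(l(X^n) - l(X^{i-1}))$ depends only on $X_i^n$, and $M = \lceil a+b \rceil$. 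Because $X^{i-1}$ and $X_i^n$ are independent i.i.d. blocks, the random variables $a$ and $b$ are independent. The event $\{U_{i-1} = u\}$ factors as $\{b \in (u\, 2^{(n-i+1)r} - 1,\, u\, 2^{(n-i+1)r}]\} \cap \{a + u\, 2^{(n-i+1)r} \in \mathbb{Z}\}$, with the first factor depending only on $b$ and the second only on $a$. Since $X_i$ is a function of $b$ and is independent of $a$, a routine factorisation gives $\Pr(X_i \mid U_{i-1} = u)$ as a function of the $b$-range alone; additional conditioning on $(U_0, \ldots, U_{i-2})$ only reveals $X^{i-1}$, which is a function of $a$, and therefore does not change the distribution of $X_i$.

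The main obstacle, in my view, is not the algebra but this conditional-independence step: $U_{i-1}$ does carry nontrivial information about $X^{i-1}$ through the integrality constraint on $M$, so one cannot naively claim that $X_i$ is fresh randomness independent of the past. The clean product structure $a \perp b$, together with the fact that the $a$-constraint and the $b$-constraint genuinely decouple inside the level set $\{U_{i-1} = u\}$, is what rescues the argument; verifying this decoupling carefully is the delicate part, after which both Markov chain statements drop out of the deterministic recursion established in the first step.
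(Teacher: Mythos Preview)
Your derivation of the recursion \eqref{eq:Ui} is exactly what the paper does in the text preceding the theorem: it manipulates \eqref{eq:Uin} together with \eqref{eq:lXirecusion} to obtain $U_{i-1}=2^{-r}U_i+X_i(1-2^{-r})$ and its inverse. On that part you and the paper coincide.

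For the Markov claims, the paper gives no proof at all: it simply draws the backward-recursion diagram $U_n\to U_{n-1}\to\cdots\to U_0$ with $X_i$ feeding into $U_{i-1}$ and declares the result ``obvious''. That diagram is suggestive but, as you correctly sensed, it does \emph{not} by itself establish the conditional independence, because the starting node $U_n=\lceil s(X^n)\rceil-s(X^n)$ depends on the entire block $X^n$ and is therefore not independent of the $X_i$'s. Your factorisation $\{U_{i-1}=u\}=\{b\in(c-1,c]\}\cap\{a+c\in\mathbb{Z}\}$ with $c=u\,2^{(n-i+1)r}$, $a=2^{nr}l(X^{i-1})$, $b=2^{nr}(l(X^n)-l(X^{i-1}))$, is precisely the missing ingredient: it shows that the level set of $U_{i-1}$ is a measurable rectangle in $(X^{i-1},X_i^n)$, from which the conditional independence of $X_i$ and the past follows. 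So your argument is a genuine completion of what the paper leaves implicit.

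One small wording fix: you write that additional conditioning on $(U_0,\dots,U_{i-2})$ ``only reveals $X^{i-1}$, which is a function of $a$''. The second clause need not hold literally (distinct $X^{i-1}$ could in principle give the same $a$); what you actually need, and what is true, is that once $U_{i-1}=u$ is fixed, each $U_j$ with $j<i-1$ is a deterministic function of $X^{i-1}$ (since $U_j=2^{jr}(2^{-(i-1)r}U_{i-1}+l(X^{i-1})-l(X^j))$). Hence further conditioning on $(U_0,\dots,U_{i-2})$ only constrains $X^{i-1}$, which by your product-set argument is conditionally independent of $X_i^n$ given $U_{i-1}$. With that phrasing the argument is airtight.
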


\begin{figure}
	\begin{align}
		\begin{array}{ccccccccccc}
			U_0 & \leftarrow\cdots\leftarrow & U_{i-1} & \leftarrow & U_i & \leftarrow & U_{i+1} & \leftarrow\cdots\leftarrow & U_{n-1} & \leftarrow & U_n\\
			\uparrow & \cdots & \uparrow &  & \uparrow &  & \uparrow & \cdots & \uparrow\\
			X_1 & \cdots & X_i &  & X_{i+1} &  & X_{i+2} & \cdots & X_n\\
		\end{array}\nonumber
	\end{align}
	\caption{Backward-recursion deduction of $U_i$.}
	\label{fig:Uibackward}
\end{figure}

\begin{definition}[Coset Cardinality Spectrum]
	The pdf of $U_{i,n}$, denoted as $f_{i,n}(u)$, is called the level-$i$ {\em Coset Cardinality Spectrum} (CCS).
\end{definition}
For conciseness, $f_{i,n}(u)$ can be abbreviated to $f_i(u)$ without causing ambiguity. Especially, $f_0(u)$ is called the {\em initial} CCS and $f_n(u)$ is called the {\em final} CCS. It is now clear that the so-called {\em asymptotic spectrum} $f(u)$ that is temporarily named in \cref{sec:asympt} should be formally called {\em asymptotic initial CCS}.
\begin{corollary}[Alternative Explanation of Asymptotic Spectrum]
	We define $l(x^n)$ as \eqref{eq:lXi}. Let $\delta(u)$ denote the Dirac delta function. For uniform binary sources, we have
	\begin{align}\label{eq:futrivial}
		f(u) = \lim_{n\rightarrow\infty}f_{0,n}(u) = \lim_{n\rightarrow\infty}2^{-n}\sum_{x^n\in\mathbb{B}^n}{\delta(u-l(x^n))}.
	\end{align}
\end{corollary}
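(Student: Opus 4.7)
The plan is to show that all three objects appearing in \eqref{eq:futrivial} are (distributional) limits of random variables that converge almost surely to $U=(2^r-1)\sum_{i=1}^{\infty}X_i2^{-ir}$, whose density is $f(u)$ by the very definition of the asymptotic spectrum.

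First I would handle the rightmost equality. Since $X^n$ is uniform over $\mathbb{B}^n$, every block has probability $2^{-n}$, so the law of $l(X^n)$ is exactly the empirical measure $2^{-n}\sum_{x^n\in\mathbb{B}^n}\delta(u-l(x^n))$. From \eqref{eq:lXi}, $l(X^n)$ is the $n$-th partial sum of the absolutely convergent series defining $U$, and a direct estimate gives $|U-l(X^n)|\leq(2^r-1)\sum_{i>n}2^{-ir}=2^{-nr}\to 0$, so $l(X^n)\to U$ almost surely. Hence the empirical measures converge weakly to the law of $U$, which by definition has pdf $f(u)$.

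Next I would handle the leftmost equality. From \eqref{eq:mXn} and the definition $U_{0,n}=2^{-nr}M$, one has $U_{0,n}=2^{-nr}\lceil 2^{nr}l(X^n)\rceil$, so $0\leq U_{0,n}-l(X^n)<2^{-nr}$; equivalently, $U_{0,n}-l(X^n)=2^{-nr}U_n$, where $U_n\in[0,1)$ is the ceiling error of \eqref{eq:Un}. Combining this with $l(X^n)\to U$ from the previous step, I get $U_{0,n}\to U$ almost surely, its law converges weakly to that of $U$, and so $f_{0,n}(u)\to f(u)$ in the same distributional sense.

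The main subtlety will be the meaning of the two limits in \eqref{eq:futrivial}: $U_{0,n}$ takes only $2^{nr}$ values and $l(X^n)$ takes at most $2^n$ values, so both $f_{0,n}(u)$ and $2^{-n}\sum_{x^n}\delta(u-l(x^n))$ are genuine sums of Dirac masses rather than ordinary densities, and convergence to the classical pdf $f(u)$ must be understood as weak convergence of measures (equivalently, convergence of CDFs at every continuity point of the limit). That the limiting law does admit an ordinary pdf $f(u)$ was already guaranteed by \cref{thm:asympt} and its corollaries for $r<1$ (with the shifted gate-function convention when $r=1$). Once this interpretation is adopted, the proof reduces to the observation that both discrete sequences $U_{0,n}$ and $l(X^n)$ are within $2^{-nr}$ of each other and each converges almost surely to $U$, so all three distributions agree in the limit.
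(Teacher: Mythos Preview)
Your proposal is correct and follows the same line as the paper: the corollary is stated without a separate proof because the paper has already recorded, just before the definition of CCS, that $U=\lim_{n\to\infty}U_{0,n}=\lim_{n\to\infty}l(X^n)$, from which \eqref{eq:futrivial} is meant to follow immediately. Your write-up supplies the details behind that one-line observation---the a.s.\ bounds $|U-l(X^n)|\le 2^{-nr}$ and $0\le U_{0,n}-l(X^n)<2^{-nr}$---and, usefully, makes explicit that both $f_{0,n}$ and the empirical measure are sums of Dirac masses so that the limit in \eqref{eq:futrivial} must be read as weak convergence; the paper leaves this interpretation implicit.
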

To deduce $f_{i,n}(u)$, we should begin with $f_{n,n}(u)$ and then go back to $f_{0,n}(u)$ recursively \citep{FangTCOM16b} (cf. \cref{fig:Uibackward}). According to \eqref{eq:Un}, $U_n$ is the ceiling error of $s(x^n)$. Furthermore, $s(x^n)$ is the weighted sum of $n$ leading terms drawn from the geometric sequence $(a,a2^r,a2^{2r},\dots)$, where $a=(2^r-1)$ (cf. \eqref{eq:ell}). Hence from \cref{thm:gm}, we can obtain the following theorem.  
\begin{theorem}[Final CCS]\label{thm:finalccs}
	For almost every code rate $r<1$, given $nr\in\mathbb{Z}$, we have
	\begin{align}\label{eq:Pi}
		\lim_{n\to\infty}f_{n,n}(u) = \Pi(u) \triangleq 
		\begin{cases}
			0, 	&u\notin[0,1)\\
			1,	&0\le u<1.
		\end{cases}
	\end{align}
\end{theorem}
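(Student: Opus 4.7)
The plan is to reduce the claim to Theorem~1.6 (Weighted Sum of Geometric Sequence). The key identity I would build on is the one implicit in \eqref{eq:Un}: since $M=\lceil s(x^n)\rceil$, we have $U_n = \lceil s(X^n)\rceil - s(X^n)$, which equals $1-\{s(X^n)\}$ outside the measure-zero event $s(X^n)\in\mathbb{Z}$. Because the map $t\mapsto 1-t$ sends the uniform distribution on $[0,1)$ to itself, it therefore suffices to show that $\{s(X^n)\}$ converges in distribution to the uniform law on $[0,1)$.

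From the displayed form \eqref{eq:ell},
\begin{align*}
s(X^n)\;\overset{d}{=}\;(1-2^{-r})\sum_{i=1}^{n} X_i\, 2^{ir}
\;=\;\sum_{i=1}^{n} X_i\,a\,b^{\,i-1},
\end{align*}
with $a=(1-2^{-r})\,2^{r}=2^{r}-1\neq 0$ and common ratio $b=2^{r}$. For $0<r<1$ the ratio $b$ lies in $(1,2)$, so $b$ is a non-integral real strictly greater than $1$. Applying Theorem~1.6 with this geometric sequence of weights, we conclude that for almost every $r\in(0,1)$ the fractional part $\{s(X^n)\}$ is u.d. mod~1 as $n\to\infty$, which by Definition~1.3 means that for every $0\le l<h\le 1$, $\Pr\bigl(\{s(X^n)\}\in[l,h)\bigr)\to h-l$.

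Combining this with the identity $U_n = 1-\{s(X^n)\}$ a.s., I would conclude that $U_n$ converges in distribution to the uniform law on $[0,1)$, i.e.\ the sequence of laws $f_{n,n}(u)\,du$ converges weakly to the Lebesgue law $\Pi(u)\,du$ on $[0,1)$. The hypothesis $nr\in\mathbb{Z}$ is used only to ensure that $M$ takes integer values in $[0:2^{nr})$, so that the identity $U_n=\lceil s(X^n)\rceil-s(X^n)$ arises from a genuine binarization of the normalized interval.

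The main obstacle, in my view, is strictly speaking a matter of interpretation rather than of hard analysis: $U_n$ is at each finite $n$ a discrete random variable supported on $\{\lceil s\rceil-s : x^n\in\mathbb{B}^n\}$, so $f_{n,n}$ is a sum of Dirac masses, and the statement $\lim_n f_{n,n}(u)=\Pi(u)$ must be read as weak convergence of measures (equivalently, convergence of the counting function used in Definition~1.3). One should also note the exceptional null set of code rates excluded by \emph{Lemma~1.5} (Property of Geometric Sequence), which is the source of the ``almost every code rate $r<1$'' qualifier in the theorem.
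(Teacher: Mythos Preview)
Your proposal is correct and follows essentially the same route as the paper: the paper's argument (given as a one-line sketch immediately preceding the theorem) is precisely to observe that $U_n$ is the ceiling error of $s(X^n)$, that $s(X^n)$ is a weighted sum of i.i.d.\ variables against the geometric sequence $(a,a2^r,a2^{2r},\dots)$ with $a=2^r-1$, and then to invoke Theorem~1.6. Your additional remarks on the $t\mapsto 1-t$ transfer and on reading the limit as weak convergence are useful clarifications the paper leaves implicit.
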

After knowing $f_{n,n}(u)$, we can then deduce $f_{i,n}(u)$, where $0\leq i<n$, via a backward recursion. To achieve this goal, let us first introduce the concept of conditional CCS.
\begin{definition}[Conditional CCS]
	The conditional pdf of $U_{i,n}$ given $X_j=x\in\mathbb{B}$ is called the conditional CCS and denoted by $f_{i|j,n}(u|x)$, or just $f_{i|j}(u|x)$ for conciseness.
\end{definition}
\begin{lemma}[Conditional CCS]
	The conditional CCS is given by \cite{FangTCOM16b}
	\begin{align}\label{eq:condccs}
		f_{i-1|i}(u|x) = 2^rf_i((u-x(1-2^{-r}))2^r).
	\end{align}
\end{lemma}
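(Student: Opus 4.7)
The plan is to derive the lemma by applying the one-dimensional change-of-variables formula to the forward recursion $U_{i-1} = 2^{-r}U_i + X_i(1-2^{-r})$ established in the Markov Chain theorem. Fixing $X_i = x$ turns this recursion into a monotone affine bijection $g_x\colon v \mapsto 2^{-r}v + x(1-2^{-r})$, whose inverse is $g_x^{-1}(u) = 2^r(u - x(1-2^{-r}))$ and whose constant slope $2^{-r}$ gives a Jacobian that is trivial to invert.

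First I would pushforward the distribution of $U_i$ through $g_x$. Because $U_{i-1} = g_x(U_i)$ deterministically once $X_i = x$ is fixed, the standard change-of-variables formula yields
\begin{equation*}
f_{i-1|i}(u|x) \;=\; |g_x'|^{-1}\cdot f_{U_i\mid X_i}\bigl(g_x^{-1}(u)\,\big|\,x\bigr) \;=\; 2^r\,f_{U_i\mid X_i}\bigl(2^r(u-x(1-2^{-r}))\,\big|\,x\bigr).
\end{equation*}
So obtaining the claimed formula reduces to replacing $f_{U_i\mid X_i}(\cdot|x)$ with the unconditional CCS $f_i(\cdot)$, i.e., to showing that $U_i$ is independent of $X_i$.

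Next I would establish this independence. The intuition is that $U_i = 2^{ir}(U_{0,n} - l(X^i))$ records only the \emph{normalized residual position} of $M/2^{nr}$ inside the sub-interval $[l(X^i),h(X^i))$: the subtraction of $l(X^i)$ followed by the rescaling by $2^{ir}$ is designed precisely to strip away the contribution of the already-coded prefix, so whatever randomness survives in $U_i$ should come from the tail $X_{i+1}^n$, which is i.i.d.\ and independent of $X_i$. A clean formalization leverages the Markov structure $U_0\to\dots\to U_{i-1}\to X_i$ from the Markov Chain theorem, together with its forward extension $U_i = 2^r(U_{i-1}-X_i(1-2^{-r}))$ and the uniform i.i.d.\ source assumption, so that conditioning on $X_i$ does not alter the marginal of $U_i$.

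The main obstacle is exactly this independence step, because $U_i$ is built from the entire block $X^n$ through the nonlinear ceiling operation $M = \lceil s(X^n)\rceil$, so a priori the dependence between $U_i$ and $X_i$ cannot be ruled out by inspection. Once this decoupling is secured—either by the Markov-chain argument above or, equivalently, by adopting the backward-recursion viewpoint in which $U_n$ is drawn from $f_n$ and the $X_j$'s are generated as independent Bernoulli$(1/2)$ variables—substitution back into the change-of-variables expression gives $f_{i-1|i}(u|x) = 2^r f_i\bigl(2^r(u-x(1-2^{-r}))\bigr)$, completing the proof.
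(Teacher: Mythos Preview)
Your approach is correct and matches the paper's treatment: the paper does not prove this lemma in-line (it cites \cite{FangTCOM16b}), but its proof of the analogous asymptotic result \eqref{eq:f0u}--\eqref{eq:f1u} uses exactly the same scaling/change-of-variables idea, and the Backward Deduction theorem \eqref{eq:ccs} is then obtained by averaging the two conditional densities.

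One small point: the Markov chain $U_0\to\cdots\to U_{i-1}\to X_i$ you invoke does not by itself yield the independence $U_i\perp X_i$; that chain only says $X_i$ is conditionally independent of $U_0,\dots,U_{i-2}$ given $U_{i-1}$. The correct justification is your second option, the backward-recursion viewpoint: $U_i$ is a deterministic function of $(U_n,X_{i+1},\dots,X_n)$, and in the regime where $f_n\approx\Pi$ is taken as the seed (\cref{thm:finalccs}), $U_n$ is (asymptotically) independent of each $X_j$, so $U_i\perp X_i$ follows from the i.i.d.\ source assumption. This is exactly how the paper sets up the recursion, so your argument is aligned with it.
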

\begin{theorem}[Backward Deduction of CCS]\label{thm:ccs}
	The level-$i$ CCS $f_{i,n}(u)$ can be deduced via the following backward recursion
	\begin{align}\label{eq:ccs}
		f_{i-1}(u) 
		&= \Pr(X_i=0)f_{i-1|i}(u|0) + \Pr(X_i=1)f_{i-1|i}(u|1)\nonumber\\
		&= 2^{r-1}\left(f_i(u2^r) + f_i((u-(1-2^{-r}))2^{r})\right).
	\end{align}
\end{theorem}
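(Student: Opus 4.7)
The plan is to obtain the backward recursion by combining two ingredients already available: the law of total probability (to split $f_{i-1}(u)$ over the two possible values of $X_i$), and the previously stated conditional CCS formula \eqref{eq:condccs} (to turn each conditional density into an affine pullback of $f_i$). The Markov structure from \cref{thm:markov} makes this clean: given $X_i$, $U_{i-1}$ is an affine function of $U_i$ alone, so conditioning on the single symbol $X_i$ is enough to bridge level $i$ and level $i-1$.

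Step one, I would write the marginal density of $U_{i-1,n}$ as a mixture over $X_i$,
\begin{align}
	f_{i-1}(u) = \sum_{x\in\mathbb{B}} \Pr(X_i=x)\, f_{i-1|i}(u\,|\,x),\nonumber
\end{align}
which is just the definition of a conditional density. Because we are treating uniformly distributed binary sources throughout this chapter, $\Pr(X_i=0)=\Pr(X_i=1)=1/2$, giving the first equality in the statement. Step two, I would substitute the conditional CCS identity \eqref{eq:condccs}, namely $f_{i-1|i}(u|x)=2^r f_i((u-x(1-2^{-r}))2^r)$, for $x=0$ and $x=1$ separately. Step three, combining the factor $1/2$ with $2^r$ yields the prefactor $2^{r-1}$ and the two shifted-and-scaled copies of $f_i$ that appear on the right-hand side of \eqref{eq:ccs}. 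This is nothing more than bookkeeping once the previous lemma is in hand.

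The only point that deserves a sanity check is why the mixture-over-$X_i$ representation is actually correct, i.e.\ why $X_i$ is the right thing to condition on. This is justified by \cref{thm:markov}: the forward recursion $U_{i-1}=2^{-r}U_i+X_i(1-2^{-r})$ shows that $U_{i-1}$ is a deterministic function of $(U_i,X_i)$, and the Markov chain $U_0\to\cdots\to U_{i-1}\to X_i$ implies that conditioning on $X_i$ (without further conditioning on $X_{i+1},\dots,X_n$) is consistent with the joint law of $(U_{i-1},U_i)$. So no hidden correlation is being ignored when we marginalize $X_i$ against $f_{i-1|i}(u|x)$.

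In short, there is no real obstacle: the theorem is a two-line consequence of \eqref{eq:condccs} plus the $1/2$--$1/2$ prior on $X_i$. The heavier lifting already happened upstream, in establishing the conditional CCS formula and the Markov property. Accordingly the proof I would write is essentially a one-display derivation, with a brief sentence pointing to \cref{thm:markov} to justify that conditioning on $X_i$ alone suffices.
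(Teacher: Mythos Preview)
Your proposal is correct and matches the paper's approach exactly: the theorem is stated as a chain of equalities that first applies the law of total probability over $X_i$ and then substitutes the conditional CCS formula \eqref{eq:condccs} with the uniform prior $\Pr(X_i=0)=\Pr(X_i=1)=1/2$. The paper gives no separate proof because, as you observe, the result is an immediate consequence of the preceding lemma.
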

\begin{corollary}[Conditional Probability of Source Symbol Given CCS]
	Let $p_{j|i}(x|u) \triangleq \Pr(X_j=x|U_i=u)$. Then we have
	\begin{align}\label{eq:pxu}
		p_{i|i-1}(x|u) 
		&= \frac{\Pr(X_i=x)f_{i-1|i}(u|x)}{f_{i-1}(u)} = \frac{f_{i-1|i}(u|x)}{2f_{i-1}(u)}\nonumber\\
		&= 2^{r-1}\frac{f_i((u-x(1-2^{-r}))2^r)}{f_{i-1}(u)}.
	\end{align}
	\label{lem:condccs}
\end{corollary}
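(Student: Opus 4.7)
The plan is to derive this identity as a direct application of Bayes' rule, using the conditional CCS formula from the preceding lemma together with the fact that the source is uniform binary so $\Pr(X_i{=}0) = \Pr(X_i{=}x) = 1/2$ for every $x \in \mathbb{B}$.

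First I would expand the definition: $p_{i|i-1}(x|u) = \Pr(X_i = x \mid U_{i-1} = u)$. Since $f_{i-1|i}(u|x)$ denotes the conditional density of $U_{i-1}$ given $X_i = x$, Bayes' rule gives
\begin{align*}
p_{i|i-1}(x|u) = \frac{\Pr(X_i = x)\, f_{i-1|i}(u|x)}{f_{i-1}(u)},
\end{align*}
which establishes the first equality in the corollary. Substituting $\Pr(X_i = x) = 1/2$ yields the second equality $p_{i|i-1}(x|u) = f_{i-1|i}(u|x)/(2 f_{i-1}(u))$.

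Next I would invoke the conditional CCS formula already established in the preceding lemma, namely $f_{i-1|i}(u|x) = 2^r f_i\!\left((u - x(1-2^{-r}))2^r\right)$. Plugging this into the previous display and combining the constants $2^r / 2 = 2^{r-1}$ produces the third, explicit form of the corollary. No recursion or analysis is needed beyond these three substitutions.

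The main obstacle is essentially bookkeeping rather than mathematical depth: one must be careful to track that the indices in $f_{i-1|i}$ refer to the density of $U_{i-1}$ conditioned on $X_i$ (not the other way around), so that Bayes' rule is applied in the correct direction, and that the factor of $2$ in the denominator comes from the uniform prior on $X_i$ assumed throughout this chapter. Once the notation is unpacked, every step is immediate.
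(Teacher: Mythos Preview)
Your proposal is correct and matches the paper's approach: the corollary is stated in the paper without an explicit proof block, as it follows immediately by Bayes' rule from the preceding conditional CCS lemma together with the uniform prior $\Pr(X_i=x)=1/2$. Your derivation is exactly the intended one-line justification.
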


An example will be helpful for us to understand \cref{thm:ccs} and \cref{lem:condccs}. Consider the following classical example. For uniform binary sources and $r=0.5$, as $n\to\infty$, $f_0(u)$ will converge to $f(u)$ given by \eqref{eq:closedForm_halfRate}. Actually, $f_i(u)=f(u)$ for any $i$ satisfying $(n-i)=\infty$. We plot $f_{i-1}(u)$ in \cref{subfig:fux}. Given \eqref{eq:closedForm_halfRate}, it is easy to get
\begin{align}
	f_{i-1|i}(u|0) = \begin{cases}
		\frac{2u}{3\sqrt{2}-4}, 		&0\leq u<(1-\frac{1}{\sqrt{2}})\\
		\frac{\sqrt{2}}{2-\sqrt{2}},	&(1-\frac{1}{\sqrt{2}})\leq u<(\sqrt{2}-1)\\
		\frac{\sqrt{2}-2u}{3\sqrt{2}-4},&(\sqrt{2}-1)\leq u<\frac{1}{\sqrt{2}}\\
		0,								&\frac{1}{\sqrt{2}}\leq u<1.
	\end{cases}
\end{align}
We also plot $f_{i-1|i}(u|0)$ and $f_{i-1|i}(u|1)=f_{i-1|i}((u-(1-\frac{1}{\sqrt{2}}))|0)$ in \cref{subfig:fux}. In turn, we obtain  
\begin{align}
	p_{i|i-1}(0|u) = 
	\begin{cases}
		1, &0\leq u<(1-\frac{1}{\sqrt{2}})\\
		\frac{\frac{\sqrt{2}}{2-\sqrt{2}}}{\frac{\sqrt{2}}{2-\sqrt{2}}+\frac{\sqrt{2}-2(1-u)}{3\sqrt{2}-4}}, &(1-\frac{1}{\sqrt{2}})\leq u<(\sqrt{2}-1)\\
		\frac{\frac{\sqrt{2}-2u}{3\sqrt{2}-4}}{\frac{\sqrt{2}-2u}{3\sqrt{2}-4}+\frac{\sqrt{2}-2(1-u)}{3\sqrt{2}-4}}, &(\sqrt{2}-1)\leq u<(2-\sqrt{2})\\
		\frac{\frac{\sqrt{2}-2u}{3\sqrt{2}-4}}{\frac{\sqrt{2}-2u}{3\sqrt{2}-4}+\frac{\sqrt{2}}{2-\sqrt{2}}}, &(2-\sqrt{2})\leq u<\frac{1}{\sqrt{2}}\\ 
		0, 			&\frac{1}{\sqrt{2}}\leq u<1.
	\end{cases}
\end{align}
After simplification, we get
\begin{align}
	p_{i|i-1}(0|u) = \begin{cases}
		1, &0\leq u<(1-\frac{1}{\sqrt{2}})\\
		\frac{2-\sqrt{2}}{2u}, &(1-\frac{1}{\sqrt{2}})\leq u<(\sqrt{2}-1)\\
		\frac{\sqrt{2}-2u}{2(\sqrt{2}-1)}, &(\sqrt{2}-1)\leq u<(2-\sqrt{2})\\
		1-\frac{2-\sqrt{2}}{2(1-u)}, &(2-\sqrt{2})\leq u<\frac{1}{\sqrt{2}}\\ 
		0, &\frac{1}{\sqrt{2}}\leq u<1.
	\end{cases}
\end{align}
We plot $p_{i|i-1}(0|u)$ and $p_{i|i-1}(1|u)=1-p_{i|i-1}(0|u)$ in \cref{subfig:pxu}. It can be found that in the overlapped interval $[(1-\frac{1}{\sqrt{2}}),\frac{1}{\sqrt{2}})$, the probability $p_{i|i-1}(0|u)$ is monotonously decreasing from $1$ to $0$, while the probability $p_{i|i-1}(1|u)$ is monotonously increasing from $0$ to $1$. Hence in the overlapped interval $[(1-\frac{1}{\sqrt{2}})$, $\frac{1}{\sqrt{2}})$, we have $p_{i|i-1}(1|u)\neq p_{i|i-1}(0|u)$, except at $u=0.5$. That means: Even though $U_{i-1}$ falls into the overlapped interval, it still provides partial information of $X_i$, and the two branches are not equiprobable.
\begin{figure}[!t]
	\centering
	\subfigure[]{\includegraphics[width=.5\linewidth]{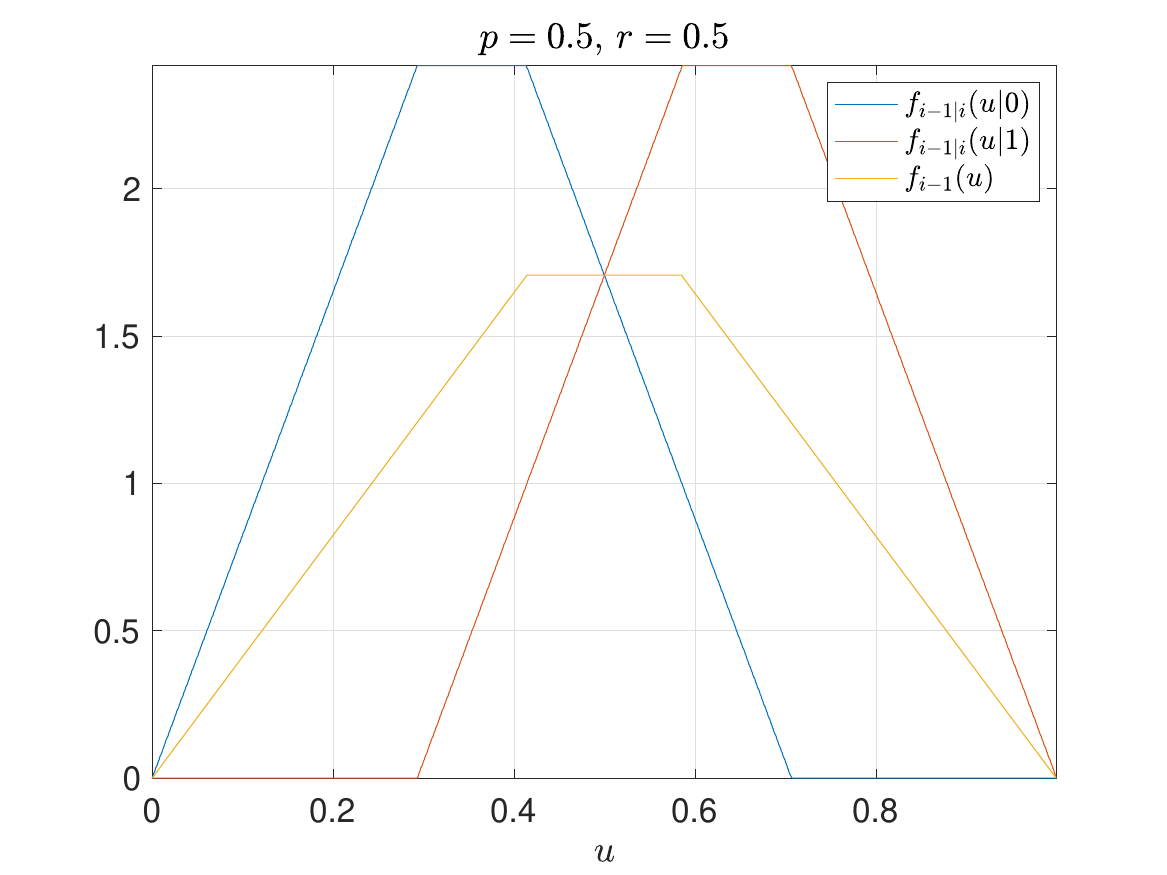}\label{subfig:fux}}%
	\subfigure[]{\includegraphics[width=.5\linewidth]{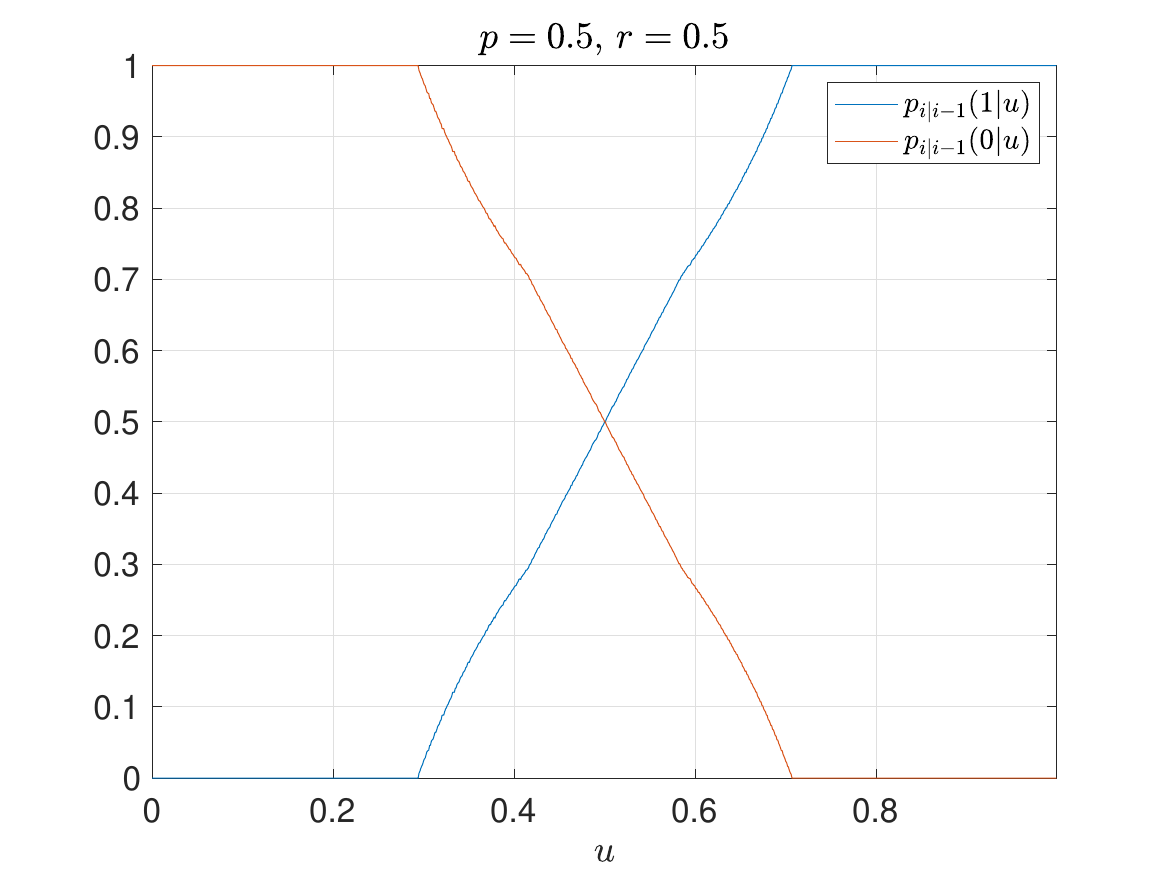}\label{subfig:pxu}}\\
	\caption{(a) Example of CCS and conditional CCS. (b) Conditional probability of source symbol given CCS.}
	\label{fig:fuxpxu}
\end{figure}

\section{Numerical Algorithm of CCS}\label{sec:numerical}
For conciseness, $f_{i,n}(u)$ will be abbreviated to $f_i(u)$ below. It is usually very hard to deduce the exact closed form of $f_i(u)$ according to \eqref{eq:ccs}, so we have to resort to a numerical algorithm. 

\subsection{Rounding Numerical Algorithm}
Let us discretize the interval $[0,1)$ into $N$ segments of equal length. We take $\hat{f}_i(j)$ as the approximation of $f_i(j/N)$. Initially, $\hat{f}_n(j)=1$ for every $j\in[0:N)$ (cf. \cref{thm:finalccs}). Let us define
\begin{align}\label{eq:lambda}
	\begin{cases}
		\lambda_0\triangleq j2^r\\
		\lambda_1\triangleq(j-N(1-2^{-r}))2^r.
	\end{cases}
\end{align}
Let $\lfloor\cdot\rceil$ denote the rounding function. Then \eqref{eq:ccs} can be discretized by
\begin{align}\label{eq:furounding}
	\hat{f}_{i-1}(j) = 2^{r-1}\left(\hat{f}_i(\lfloor\lambda_0\rceil) + \hat{f}_i(\lfloor\lambda_1\rceil)\right).
\end{align}
Sometimes we have $\lfloor\lambda_0\rceil\notin[0:N)$ or $\lfloor\lambda_1\rceil\notin[0:N)$. It does not matter because $\hat{f}_i(j)\equiv0$ for any $j\notin[0:N)$ according to the definition of CCS. Since the core of \eqref{eq:furounding} is the rounding function, this method is referred to as \textit{rounding} numerical algorithm.

\subsection{Linear Numerical Algorithm}
Let $\lfloor\cdot\rfloor$ and $\lceil\cdot\rceil$ denote the floor and ceiling functions, respectively. The rounding numerical algorithm \eqref{eq:furounding} can be slightly improved by
\begin{align}\label{eq:linear}
	\hat{f}_{i-1}(j) = 2^{r-1}\left(g_0(\lambda_0) + g_1(\lambda_1)\right),
\end{align}
where
\begin{align}\label{eq:inter}
	\begin{cases}
		g_0(\lambda_0) = (\lceil\lambda_0\rceil-\lambda_0)\hat{f}_{i}(\lfloor\lambda_0\rfloor) + (\lambda_0-\lfloor\lambda_0\rfloor)\hat{f}_{i}(\lceil\lambda_0\rceil)\\
		g_1(\lambda_1) = (\lceil\lambda_1\rceil-\lambda_1)\hat{f}_{i}(\lfloor\lambda_1\rfloor) + (\lambda_1-\lfloor\lambda_1\rfloor)\hat{f}_{i}(\lceil\lambda_1\rceil).
	\end{cases}
\end{align}
Especially, if $\lambda_0\in\mathbb{Z}$, then $g_0(\lambda_0)=\hat{f}_i(\lambda_0)$; if $\lambda_1\in\mathbb{Z}$, then $g_1(\lambda_1)=\hat{f}_i(\lambda_1)$. Since \eqref{eq:inter} is similar to the linear interpolation formula, \eqref{eq:linear} is named as \textit{linear} numerical algorithm.

\subsection{Fine Numerical Algorithm}
It must be pointed out that for both rounding and linear numerical algorithms, a renormalization step is needed after \eqref{eq:furounding} and \eqref{eq:linear}:
\begin{align}
	\hat{f}_i(j) = \left.\hat{f}_i(j)\middle/\sum_{j'=0}^{N-1}\hat{f}_i(j')\right..
\end{align}
To remove this renormalization step, the \textit{fine} numerical algorithm was proposed in \citep{Fine}. Now let $\hat{f}_i(j)$ be an approximation of the scaled-up probability of $U_i$ falling into the interval $[j/N,(j+1)/N))$, \textit{i.e.}, 
\begin{align}\label{eq:hatfij}
	\hat{f}_i(j)\approx N\int_{j/N}^{(j+1)/N}f_i(u)\,du. 
\end{align}
According to \eqref{eq:ccs}, we have
\begin{align}\label{eq:fi1u}
	\int_{j/N}^{(j+1)/N}{f_{i-1}(u)\,du} 
		=\; &2^{r-1}\int_{j/N}^{(j+1)/N}{\left(f_i(u2^r) + \right.}\nonumber\\
		  &\qquad\quad\;\;{\left.f_i((u-(1-2^{-r}))2^r)\right)\,du}.
\end{align}
Besides $\lambda_0$ and $\lambda_1$ defined by \eqref{eq:lambda}, we also define
\begin{align}
\begin{cases}
	\eta_0\triangleq (j+1)2^r = \lambda_0+2^r\\
	\eta_1 \triangleq (j+1-N(1-2^{-r}))2^r = \lambda_1+2^r.
\end{cases}
\end{align}
Let $v\triangleq u2^r$ and $w\triangleq (u-(1-2^{-r}))2^r$. Then \eqref{eq:fi1u} will become
\begin{align}
	\int_{\frac{j}{N}}^{\frac{j+1}{N}}{f_{i-1}(u)\,du} 
	&= \left(\int_{\frac{\lambda_0}{N}}^{\frac{\eta_0}{N}}{f_i(v)\,d{v}} + \int_{\frac{\lambda_1}{N}}^{\frac{\eta_1}{N}}{f_i(w)\,d{w}}\right)/2\nonumber\\
	&= \left(\int_{\frac{\lambda_0}{N}}^{\frac{\eta_0}{N}}{f_i(u)\,d{u}} + \int_{\frac{\lambda_1}{N}}^{\frac{\eta_1}{N}}{f_i(u)\,d{u}}\right)/2,
\end{align}
which can be written as 
\begin{align}\label{eq:refined}
	\hat{f}_{i-1}(j) = \left(g_0(j) + g_1(j)\right)/2,
\end{align}
where $\hat{f}_{i-1}(j)$ is defined by \eqref{eq:hatfij} and
\begin{align}
	\begin{cases}
		g_0(j)\approx N\int_{\lambda_0/N}^{\eta_0/N}{f_i(u)\,du}\\ 	
		g_1(j)\approx N\int_{\lambda_1/N}^{\eta_1/N}{f_i(u)\,du}.
	\end{cases}
\end{align}
In plain words, $g_0(j)$ is an approximation of the scaled-up probability of $U_i$ falling into $[\lambda_0/N, \eta_0/N)$ and $g_1(j)$ is an approximation of the scaled-up probability of $U_i$ falling into $[\lambda_1/N, \eta_1/N)$.

For $0<r<1$, we have $1<2^r<2$. Hence, $\lfloor\eta_0\rfloor-\lceil\lambda_0\rceil=0$ or $1$. It is easy to obtain
\begin{align}
	g_0(j) =\; 
	&N\int_{\lambda_0/N}^{\lceil\lambda_0\rceil/N}{f_i(u)\,du} + N\int_{\lceil\lambda_0\rceil/N}^{\lfloor\eta_0\rfloor/N}{f_i(u)\,du} \;+ \nonumber \\
	&N\int_{\lfloor\eta_0\rfloor/N}^{\eta_0/N}{f_i(u)\,du}.
\end{align}
For $N$ sufficiently large, $f_i(u)$ will be approximately uniform over the interval $[j/N,(j+1)/N)$. Hence we have
\begin{align}
	N\int_{\lambda_0/N}^{\lceil\lambda_0\rceil/N}{f_i(u)\,du} 
	&\approx (\lceil\lambda_0\rceil-\lambda_0)N\int_{(\lceil\lambda_0\rceil-1)/N}^{\lceil\lambda_0\rceil/N}{f_i(u)\,du}\nonumber\\ 
	&\approx (\lceil\lambda_0\rceil-\lambda_0)\hat{f}_i(\lceil\lambda_0\rceil-1)
\end{align}
and
\begin{align}
	N\int_{\lfloor\eta_0\rfloor/N}^{\eta_0/N}{f_i(u)\,du} 
	&\approx (\eta_0-\lfloor\eta_0\rfloor)N\int_{\lfloor\eta_0\rfloor/N}^{(\lfloor\eta_0\rfloor+1)/N}{f_i(u)\,du}\nonumber\\
	&\approx (\eta_0-\lfloor\eta_0\rfloor)\hat{f}_i(\lfloor\eta_0\rfloor).
\end{align}
According to the above analysis, we can obtain
\begin{align}
	g_0(j) =\;&(\lceil\lambda_0\rceil-\lambda_0)\hat{f}_i(\lceil\lambda_0\rceil-1) + (\lfloor\eta_0\rfloor-\lceil\lambda_0\rceil)\hat{f}_i(\lceil\lambda_0\rceil)\;+ \nonumber\\
	&(\eta_0-\lfloor\eta_0\rfloor)\hat{f}_i(\lfloor\eta_0\rfloor).
\end{align}
Especially, if $\lfloor\eta_0\rfloor=\lceil\lambda_0\rceil$, 
\begin{align}
	g_0(j) = (\lceil\lambda_0\rceil-\lambda_0)\hat{f}_i(\lceil\lambda_0\rceil-1) + (\eta_0-\lfloor\eta_0\rfloor)\hat{f}_i(\lfloor\eta_0\rfloor).
\end{align}
Similarly, 
\begin{align}
	g_1(j) =\;&(\lceil\lambda_1\rceil-\lambda_1)\hat{f}_i(\lceil\lambda_1\rceil-1) + (\lfloor\eta_1\rfloor-\lceil\lambda_1\rceil)\hat{f}_i(\lceil\lambda_1\rceil)\;+ \nonumber\\
	&(\eta_1-\lfloor\eta_1\rfloor)\hat{f}_i(\lfloor\eta_1\rfloor),
\end{align}
and especially, if $\lfloor\eta_1\rfloor=\lceil\lambda_1\rceil$, 
\begin{align}
	g_1(j) = (\lceil\lambda_1\rceil-\lambda_1)\hat{f}_i(\lceil\lambda_1\rceil-1) + (\eta_1-\lfloor\eta_1\rfloor)\hat{f}_i(\lfloor\eta_1\rfloor).
\end{align}

\section{Information Theoretic Understanding}\label{sec:infthm}
With the asymptotic initial CCS $f(u)$, which is given by \eqref{eq:asympt}, we can have a better understanding of overlapped arithmetic codes. 

\subsection{Expected Coset Cardinality}
Let $M=\lceil s(X^n)\rceil\sim p(m)$, where $m\in[0:2^{nr})$, be the coset index of $X^n$. 
Let ${\cal C}_m$ be the $m$-th coset, whose cardinality is denoted by $|{\cal C}_m|$. Since $\sum_{m=0}^{2^{nr}-1}{|{\cal C}_m|}=|\mathbb{B}^n|=2^n$ and $U_{0,n}=2^{-nr}M$, where $U_{0,n}$ is the initial bitstream projection, we have
\begin{align*}
	f_{0,n}(u) = 2^{-n}\sum_{m=0}^{2^{nr}-1}|{\cal C}_m|\cdot\delta(u-m2^{-nr}),
\end{align*}
which is followed by
\begin{align*}
	|{\cal C}_m| = 2^n\int_{(m2^{-nr})_{-}}^{((m+1)2^{-nr})_{-}}f_{0,n}(u)\,du = 2^n\int_{(m2^{-nr})_{-}}^{(m2^{-nr})_{+}}f_{0,n}(u)\,du.
\end{align*}
For $n$ sufficiently large, we have
\begin{align*}
	2^n\int_{(m2^{-nr})_{-}}^{((m+1)2^{-nr})_{-}}f_{0,n}(u)\,du 
	&\approx 2^n\int_{m2^{-nr}}^{(m+1)2^{-nr}}f(u)\,du\\
	&\approx 2^n\cdot f(m2^{-nr})\cdot 2^{-nr},
\end{align*}
where $f(u)$ is the asymptotic initial CCS defined by \eqref{eq:asympt}, and thus
\begin{align}\label{eq:C_m}
	|{\cal C}_m|\approx f(m2^{-nr}) \cdot 2^{n(1-r)}. 
\end{align}
The probability of ${\cal C}_m$ is
\begin{align*}
	p(m) = \Pr(X^n\in{\cal C}_m) = \sum_{x^n\in{\cal C}_m}{p(x^n)} \overset{(a)}{=} |{\cal C}_m|\cdot 2^{-n},
\end{align*}
where $(a)$ comes from $p(x^n) \equiv 2^{-n}$ for uniform binary sources. From \eqref{eq:C_m}, for $n$ sufficiently large, we have
\begin{align}\label{eq:pm}
	p(m) \approx f(m2^{-nr}) \cdot 2^{-nr}.
\end{align}
Though the accurate value of $p(m)$ changes with $n$ and thus is hard to find, it can be well approximated by $f(m2^{-nr})2^{-nr}$ for large $n$.

\begin{definition}[Expected Coset Cardinality (ECC)]
	We refer to $\mathbb{E}[|{\cal C}_M|]$, the expectation of $|{\cal C}_M|$, as the ECC.
\end{definition}

It is easy to find that the ECC is actually equal to the expected number of codewords searched by a full-search decoder. 
According to the definition of ECC, we have
\begin{align*}
	\mathbb{E}[|{\cal C}_M|] = \sum_{m=0}^{2^{nr}-1}{p(m)\cdot|{\cal C}_m|}.
\end{align*}
From \eqref{eq:C_m} and \eqref{eq:pm}, for $n$ sufficiently large, we can obtain
\begin{align}\label{eq:ecc}
	\mathbb{E}[|{\cal C}_M|] 	
	&\approx 2^{n(1-r)}\sum_{m=0}^{2^{nr}-1}{f^2(m2^{-nr})\cdot2^{-nr}}\nonumber\\
	&\approx 2^{n(1-r)}\int_{0}^{1}{f^2(u)\,du}.
\end{align}
Given $f(u)\geq 0$ and $\int_{0}^{1}{f(u)\,du}=1$, we have $\int_{0}^{1}{f^2(u)\,du}\geq 1$ and thus
\begin{align*}
	\mathbb{E}[|{\cal C}_M|]\geq 2^{n(1-r)}. 
\end{align*}
This point can be explained in an intuitive way: The cosets of larger cardinalities are more likely to appear and further, more codewords must be searched for those cosets of larger cardinalities, so the complexity of full-search decoder should be scaled by $\int_0^1{f^2(u)du}\geq 1$.

\begin{definition}[Normalized Expected Coset Cardinality]
	We refer to $2^{-n(1-r)}\cdot\mathbb{E}[|{\cal C}_M|]$ as the {\em normalized} ECC.
\end{definition}

According to \eqref{eq:ecc}, it is easy to obtain
\begin{align}
	\lim_{n\rightarrow\infty}\frac{\mathbb{E}[|{\cal C}_M|]}{2^{n(1-r)}} = \int_{0}^{1}{f^2(u)\,du} \geq 1.
\end{align}
For $r=1$, we have $f(u)=\Pi(u)$, where $\Pi(u)$ is defined by \eqref{eq:Pi}, and $\int_{0}^{1}{f^2(u)\,du}=1$.
As $r\to0$, $f(u)$ will converge to $\delta(u-1/2)$, where $\delta(u)$ is the Dirac delta function, and $\int_{0}^{1}{f^2(u)\,du}\to\infty$. In a word, as $r$ decreases from $1$ to $0$, the normalized ECC will increase from $1$ to $\infty$.

\subsection{Block-wise Rate Loss}
As we know, the asymptotic initial spectrum $f(u)$ is not uniform over $[0,1)$ for $r<1$, hence the coset index $M$ is not uniformly distributed over $[0:2^{nr})$, which implies a possible rate loss. 

\begin{figure}[!t]
	\centering
	\subfigure[$H(X^n)=n$]{\includegraphics[width=.5\linewidth]{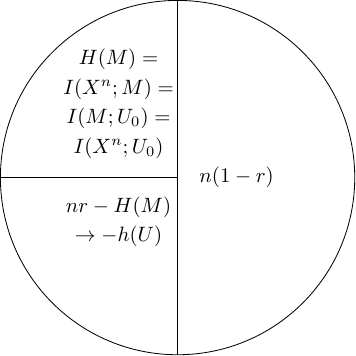}\label{subfig:block}}%
	\subfigure[$H(X_i)=1$]{\includegraphics[width=.5\linewidth]{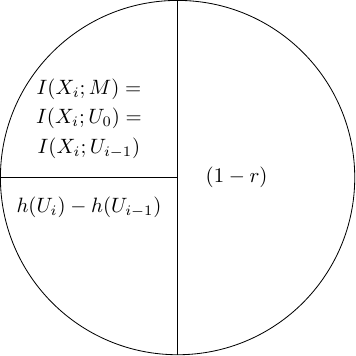}\label{subfig:symbol}}
	\caption{(a) The entropy of $X^n$ is $n$. Though $X^n$ is encoded at rate $r$, the information amount of $X^n$ provided by $M$ is smaller than $nr$ and the remaining uncertainty of $X^n$ is higher than $n(1-r)$. The block-wise rate loss will converge to $-h(U)$ as $n\to\infty$. (b) The entropy of $X_i$ is $1$. Though $X_i$ is encoded at rate $r$, the information amount of $X_i$ provided by $M$ is smaller than $r$ and the remaining uncertainty of $X_i$ is larger than $(1-r)$. The symbol-wise rate loss is $(h(U_i)-h(U_{i-1}))$.}
\end{figure}

Since $U_0=2^{-nr}M$ is a bijective function, we have
\begin{align}
	\begin{cases}
		H(X^n|M) = H(X^n|U_0)\\
		I(X^n;M) = I(X^n;U_0) = I(M;U_0) = H(M)
	\end{cases}
\end{align}
and
\begin{align}
	H(X^n) 
	&= H(M)+H(X^n|M)\nonumber\\
	&= I(X^n;M)+H(X^n|M)\nonumber\\
	&= I(X^n;U_0)+H(X^n|U_0) = n,
\end{align}
just as illustrated by \cref{subfig:block}.

\begin{theorem}[Block-wise Rate Loss]\label{thm:blkloss}
	The rate loss of coding $X^n$ with a rate-$r$ overlapped arithmetic code is
	\begin{align}\label{eq:blkloss}
		\lim_{n\to\infty}(nr-H(M)) 
		&= \lim_{n\to\infty}(H(X^n|M)-n(1-r))\nonumber\\
		&= -h(U),
	\end{align}
	where $U$ is the asymptotic initial projection defined by \eqref{eq:U}.
\end{theorem}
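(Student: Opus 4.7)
The plan is to exploit two ingredients that are already in hand: (i) the algebraic fact that $M$ is a deterministic function of $X^n$ together with $H(X^n)=n$ for a uniform binary source, and (ii) the asymptotic approximation $p(m)\approx f(m2^{-nr})\cdot2^{-nr}$ from \eqref{eq:pm}. Together these reduce the statement to a standard quantization-of-entropy computation of the form $H([U]_\Delta)\approx h(U)-\log_2\Delta$ with step size $\Delta=2^{-nr}$.

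First I would dispose of the first equality in \eqref{eq:blkloss}, which is purely algebraic and holds for every $n$ (no limit required). Since $M=m(X^n)$ is a function of $X^n$ we have $H(M|X^n)=0$, so the chain rule yields $H(X^n,M)=H(X^n)=n$ and also $H(X^n,M)=H(M)+H(X^n|M)$. For a uniform binary source $H(X^n)=n$, hence $H(X^n|M)=n-H(M)$, and
\begin{align}
	H(X^n|M)-n(1-r) \;=\; n-H(M)-n(1-r) \;=\; nr-H(M),\nonumber
\end{align}
which is exactly the first equality.

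Next I would attack the second equality, $\lim_{n\to\infty}(nr-H(M))=-h(U)$, by estimating $H(M)$ directly from \eqref{eq:pm}. Writing $U_0=2^{-nr}M$, so that $H(M)=H(U_0)$ and the mass points of $U_0$ are spaced by $\Delta=2^{-nr}$, the approximation $p(m)\approx f(m2^{-nr})\Delta$ gives
\begin{align}
	H(M)
	&= -\sum_{m=0}^{2^{nr}-1}p(m)\log_2 p(m)\nonumber\\
	&\approx -\sum_{m=0}^{2^{nr}-1}f(m2^{-nr})\,2^{-nr}\bigl[\log_2 f(m2^{-nr})-nr\bigr]\nonumber\\
	&= -\sum_{m=0}^{2^{nr}-1}f(m2^{-nr})\,2^{-nr}\log_2 f(m2^{-nr}) \;+\; nr\sum_{m=0}^{2^{nr}-1}f(m2^{-nr})\,2^{-nr}.\nonumber
\end{align}
The two sums on the right are Riemann sums for $-\int_0^1 f(u)\log_2 f(u)\,du=h(U)$ and for $\int_0^1 f(u)\,du=1$, respectively. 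Taking $n\to\infty$ therefore yields $H(M)\to h(U)+nr$, i.e. $nr-H(M)\to -h(U)$, as claimed.

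The main obstacle is making the Riemann-sum step rigorous. Two subtleties need comment: the approximation \eqref{eq:pm} is only asymptotic in $n$, and $f(u)$ can vanish on subintervals (see \cref{corollary:zeros}), so $\log_2 f$ is unbounded below and the integral $\int f\log_2 f$ must be interpreted with the convention $0\log_2 0=0$. Both issues are handled in the standard way: split the interval $[0,1)$ into a region where $f\ge\epsilon$ (on which the Riemann sum converges uniformly) and a region where $f<\epsilon$ (whose contribution to both the sum and the integral is $O(\epsilon\log(1/\epsilon))$ and vanishes as $\epsilon\to 0$), then let $n\to\infty$ before $\epsilon\to 0$. Apart from this technicality, the proof is a direct translation of \eqref{eq:pm} into entropy language.
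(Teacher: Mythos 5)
Your proof follows the same route as the paper's: both establish the first equality from $H(X^n)=H(M)+H(X^n|M)=n$, and both obtain the second by substituting the asymptotic form $p(m)\approx f(m2^{-nr})2^{-nr}$ from \eqref{eq:pm} into $H(M)$ and recognizing the resulting Riemann sum for $-\int_0^1 f(u)\log_2 f(u)\,du=h(U)$. You are merely more explicit about the algebraic preliminaries and the rigor of the Riemann-sum limit (vanishing of $f$, interchanging limits), which the paper states without elaboration.
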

\begin{proof}
	The key is to compute the entropy of $M$:
\begin{align}
	H(M) 	
	&= -\sum_{m=0}^{2^{nr}-1}{p(m)\log_2{p(m)}}\nonumber\\
	&\to -\sum_{m=0}^{2^{nr}-1}{p(m)\log_2{(f(m2^{-nr})2^{-nr})}}\nonumber\\
	&= nr - \sum_{m=0}^{2^{nr}-1}{f(m2^{-nr})\log_2{f(m2^{-nr})}\cdot2^{-nr}}.
\end{align}
Since $H(X^n)=H(M)+H(X^n|M)=n$, we have
\begin{align}\label{eq:totalrateloss}
	\lim_{n\rightarrow\infty}(H(X^n|M)-n(1-r)) 
	&= \lim_{n\rightarrow\infty}{(nr-H(M))}\nonumber\\
	&= \int_{0}^{1}{f(u)\log_2{f(u)}\,du} = -h(U).
\end{align}
\end{proof}
Because $m\in[0:2^{nr})$, we have $H(M)\leq nr$ and the equality holds iff $M$ is uniformly distributed over $[0:2^{nr})$. As $n\to\infty$, the block rate loss will tend to a constant $-h(U)$, where $U$ is the asymptotic initial bitstream projection defined by \eqref{eq:U}. As we know, $h(U)\leq 0$, where the equality holds iff $f(u)$ is uniform over $[0,1)$. The rate loss per source symbol is $\frac{nr-H(M)}{n}$. For $r>0$, we have $h(U)>-\infty$ and $\frac{nr-H(M)}{n}\to0$ as $n\to\infty$, so the rate loss per source symbol will vanish as code length increases. As for $r=0$, we have $f(u)=\delta(u-1/2)$ and $h(U)=-\infty$, so the rate loss per source symbol is unknown.

\subsection{Symbol-wise Rate Loss}
Now we consider the symbol-wise rate loss of overlapped arithmetic codes. As shown by \cref{thm:markov}, $U_0\to\dots\to U_{i-1} \to X_i$, so
\begin{align}
	\begin{cases}
		H(X_i|M) = H(X_i|U_0) = H(X_i|U_{i-1})\\
		I(X_i;M) = I(X_i;U_0) = I(X_i;U_{i-1})
	\end{cases}
\end{align}
and 
\begin{align}
	H(X_i) 
	&= I(X_i;M)+H(X_i|M)\nonumber\\
	&= I(X_i;U_0)+H(X_i|U_0)\nonumber\\
	&= I(X_i;U_{i-1})+H(X_i|U_{i-1})=1,
\end{align}
as illustrated by \cref{subfig:symbol}. By the chain rule, we can get
\begin{align}
	H(X^n|U_0) 
	&= \sum_{i=1}^{n}{H(X_i|U_0,X^{i-1})}\nonumber\\
	&\overset{(a)}{=} \sum_{i=1}^{n}{H(X_i|U_0)} = \sum_{i=1}^{n}{H(X_i|M)},
\end{align}
where $(a)$ is because $X^n$ is an i.i.d. random process. Therefore,
\begin{align}
	H(X^n|M) = H(X^n|U_0) 
	&= \sum_{i=1}^{n}{H(X_i|M)}\nonumber\\
	&= \sum_{i=1}^{n}{H(X_i|U_{i-1})}.
\end{align}

\begin{theorem}[Symbol-wise Rate Loss]\label{thm:symloss}
	For a rate-$r$ overlapped arithmetic code, as $n\to\infty$, the rate loss of coding $X_i$ is
	\begin{align}\label{eq:HXiUi}
		r-I(X_i;M) 
		&= H(X_i|M)-(1-r)\nonumber\\
		&= H(X_i|U_{i-1})-(1-r)\nonumber\\
		&= h(U_i)-h(U_{i-1}).
	\end{align}
\end{theorem}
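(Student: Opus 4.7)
The plan is to chain three equalities. The first is an immediate rearrangement: since $X_i$ is uniform binary, $H(X_i)=1$, so $I(X_i;M)=1-H(X_i|M)$ and hence $r-I(X_i;M)=H(X_i|M)-(1-r)$. The second, $H(X_i|M)=H(X_i|U_{i-1})$, follows from the bijection $M\leftrightarrow U_0=2^{-nr}M$ together with the Markov chain $U_0\to\cdots\to U_{i-1}\to X_i$ from \cref{thm:markov}, as already displayed in the paragraphs immediately preceding the theorem. The substantive content is therefore the third equality, $H(X_i|U_{i-1})-(1-r)=h(U_i)-h(U_{i-1})$, which I would establish in two stages: a Jacobian computation and an asymptotic independence argument.

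For the first stage, the recursion \eqref{eq:Ui} makes $(U_{i-1},X_i)$ and $(U_i,X_i)$ related by an invertible affine map whose $U$-component, conditional on $X_i=x$, is $U_{i-1}=2^{-r}U_i+x(1-2^{-r})$, i.e.\ an affine rescaling of $U_i$ with slope $2^{-r}$. The standard change-of-variables formula for differential entropy then yields $h(U_{i-1}|X_i=x)=h(U_i|X_i=x)+\log_2(2^{-r})=h(U_i|X_i=x)-r$ for every $x\in\mathbb{B}$, and averaging over $X_i$ gives $h(U_{i-1}|X_i)=h(U_i|X_i)-r$. Combining this with the chain-rule decomposition $h(U_j)=h(U_j|X_i)+I(X_i;U_j)=h(U_j|X_i)+H(X_i)-H(X_i|U_j)$ for $j\in\{i-1,i\}$ and subtracting produces
\begin{align*}
h(U_i)-h(U_{i-1})=r+H(X_i|U_{i-1})-H(X_i|U_i).
\end{align*}
Comparing this with the target identity, the theorem reduces to establishing $H(X_i|U_i)=H(X_i)=1$ in the limit $n\to\infty$.

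The second stage, which is the main obstacle, is this asymptotic independence $X_i\perp U_i$. The heuristic is clear: unrolling the forward recursion $U_j=2^r(U_{j-1}-X_j(1-2^{-r}))$ through $X_{i+1},\dots,X_n$ and letting $n\to\infty$ expresses the limit of $U_i$ as $(2^r-1)\sum_{k=1}^{\infty}X_{i+k}2^{-kr}$, a function of the post-$i$ source symbols alone, and these are independent of $X_i$ by the i.i.d.\ assumption. The technical difficulty is that one needs more than convergence in distribution: the conditional densities $f_{U_i|X_i}(u|0)$ and $f_{U_i|X_i}(u|1)$ must converge in a sense strong enough to force $I(X_i;U_i)\to 0$, not merely $\Pr(U_i\in\cdot\,|X_i=0)\to \Pr(U_i\in\cdot\,|X_i=1)$ weakly. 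Bounded support on $[0,1)$, together with the spectral smoothness granted by the explicit Fourier product in \eqref{eq:initialExplicitForm} (valid for almost every $r$ via \cref{prop:gp}), should suffice to pass the limit inside the mutual information functional. Once $H(X_i|U_i)=1$ is established, substituting into the display above collapses its right-hand side to $H(X_i|U_{i-1})-(1-r)$, completing the three-term chain.
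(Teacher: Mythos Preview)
Your proposal is correct and follows essentially the same route as the paper. The paper's proof is more compact only because it invokes the conditional CCS formula \eqref{eq:condccs}, which already writes $f_{i-1|i}(u|x)=2^r f_i((u-x(1-2^{-r}))2^r)$ with the \emph{unconditional} density $f_i$ on the right; this single line packages both your Jacobian step and your asymptotic-independence step $U_i\perp X_i$, yielding $h(U_{i-1}|X_i)=h(U_i)-r$ directly and hence $H(X_i|U_{i-1})=1-h(U_{i-1})+h(U_i)-r$. What you identify as the ``main obstacle'' is precisely the content hidden in that lemma (stated without proof in the paper, citing \cite{FangTCOM16b}), so your more explicit treatment of it is not wasted effort but rather an unpacking of where the $n\to\infty$ hypothesis actually enters.
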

\begin{proof}
	According to \eqref{eq:condccs} and the properties of differential entropy, $h(U_{i-1}|X_i)=h(U_i)-r$ as $n\to\infty$. Hence,
	\begin{align}\label{eq:HXiUi}
		H(X_i|U_{i-1}) 
		&= H(X_i) - I(X_i;U_{i-1})\nonumber\\
		&= 1 - (h(U_{i-1})-h(U_{i-1}|X_i)) \nonumber\\
		&= 1 - h(U_{i-1}) + h(U_i) - r \nonumber\\
		&= (1-r) + h(U_i) - h(U_{i-1}).
	\end{align}
	Since $X_i$ is coded at rate $r$, the rate loss is $(h(U_i)-h(U_{i-1}))$.
\end{proof}

\begin{corollary}
	The sequence $(h(U_0),\dots,h(U_n))$ is a monotonously increasing sequence, \textit{i.e.}, $h(U_0) \leq \cdots \leq h(U_n)\to 0$, where $h(U_n)\to 0$ is because $f_n(u)$ tends to be uniform over $[0,1)$ as $n\to\infty$.
\end{corollary}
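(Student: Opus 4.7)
The plan is to split the claim into two parts: the endpoint limit $h(U_n)\to 0$ and the pointwise monotonicity $h(U_{i-1})\le h(U_i)$ for every $i\in\{1,\dots,n\}$. The endpoint is immediate from \cref{thm:finalccs}: the final CCS $f_{n,n}$ converges to $\Pi$, the uniform density on $[0,1)$, whose differential entropy is $0$, so $h(U_n)\to h(\Pi)=0$.

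For the monotonicity step, I would invoke the rate-loss identity from the preceding theorem,
\begin{align}
h(U_i)-h(U_{i-1}) \;=\; H(X_i\mid U_{i-1}) \;-\; (1-r),\nonumber
\end{align}
and reduce the claim to the per-symbol rate bound $I(X_i;U_{i-1})\le r$ (equivalently, $H(X_i\mid U_{i-1})\ge 1-r$). The governing intuition is that $U_{i-1}$, being the rate-$r$ coded representation of the source, can retain at most $r$ bits of ``information budget'' about any single source symbol $X_i$.

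To make this rigorous, first apply data processing: since $U_{i-1}$ is a deterministic function of $(M,X^{i-1})$ via \eqref{eq:Uin}, we have $I(X_i;U_{i-1})\le I(X_i;M,X^{i-1})$, and the i.i.d.\ hypothesis ($I(X_i;X^{i-1})=0$) collapses this upper bound to $I(X_i;M\mid X^{i-1})$. Summing over $i$, the chain rule together with $H(M)\le nr$ (because $M\in[0{:}2^{nr})$) gives
\begin{align}
\sum_{i=1}^{n}I(X_i;M\mid X^{i-1}) \;=\; I(X^n;M) \;=\; H(M) \;\le\; nr,\nonumber
\end{align}
so the \emph{average} per-index mutual information is at most $r$.

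The main obstacle will be upgrading this aggregate bound to the uniform per-index bound $I(X_i;M\mid X^{i-1})\le r$. I would attack it by exploiting the stationarity of the arithmetic-code dynamics in the asymptotic regime: by \cref{thm:markov} together with the fixed-point functional equation \eqref{eq:asympt} characterizing the asymptotic CCS, the distribution of $U_{i,n}$ stabilizes as $n\to\infty$, so the quantities $I(X_i;M\mid X^{i-1})$ become asymptotically equal across $i$ and each is therefore at most the common average $r$. Once the per-$i$ inequality $I(X_i;U_{i-1})\le r$ is in hand, the rate-loss identity yields $h(U_i)-h(U_{i-1})\ge 0$ for every $i$, which combined with the endpoint limit completes the proof.
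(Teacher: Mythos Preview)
The paper states this corollary without proof, treating it as an immediate consequence of \cref{thm:symloss}: since each difference $h(U_i)-h(U_{i-1})$ equals the per-symbol rate loss $r-I(X_i;M)$, and a ``loss'' is tacitly nonnegative, monotonicity is asserted. So there is no argument in the paper to compare against beyond this appeal to terminology; your task is genuinely to supply one.

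Your reduction to the per-index bound $I(X_i;U_{i-1})\le r$ is correct, and the averaging steps are fine: $I(X_i;U_{i-1})\le I(X_i;M,X^{i-1})=I(X_i;M\mid X^{i-1})$ by data processing and independence, and $\sum_i I(X_i;M\mid X^{i-1})=H(M)\le nr$. The gap is the final ``stationarity'' step. Your claim that the terms $I(X_i;M\mid X^{i-1})$ become asymptotically equal across $i$ is false: the densities $f_{i,n}$ do \emph{not} share a common limit. For indices with $(n-i)\to\infty$ one has $f_{i,n}\to f$ and the rate-loss difference tends to $0$, but for $i=n-k$ with $k$ fixed, $f_{i,n}$ converges to the $k$-fold iterate of the recursion \eqref{eq:ccs} applied to $\Pi$, not to $f$. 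Concretely, at $i=n$ one computes (from $f_n=\Pi$) that $H(X_n\mid U_{n-1})=2-2^r$ and hence $I(X_n;U_{n-1})=2^r-1$, which is strictly less than $r$ on $(0,1)$; so the per-index terms are genuinely unequal and cannot all be pinned to the common average.

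More fundamentally, the inequality $h(Tg)\le h(g)$ for the map $T$ of \eqref{eq:ccs} is \emph{false} for general densities on $[0,1)$: if $g$ is uniform on $[0,\epsilon)$ with $\epsilon<2^r-1$, the two shifted--scaled copies produced by $T$ are disjoint and one gets $h(Tg)=h(g)+(1-r)>h(g)$. Hence no argument based solely on the abstract information identities (data processing, chain rule, $H(M)\le nr$) can succeed; a correct proof must exploit structure specific to the orbit $(\Pi,T\Pi,T^2\Pi,\ldots)$---for instance the symmetry about $1/2$ and the way mass stays inside the overlap region $[1-2^{-r},2^{-r})$---to verify $H(X_i\mid U_{i-1})\ge 1-r$ at every step.
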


\begin{remark}[A Neat Proof of \cref{thm:blkloss}]
	According to \cref{thm:symloss}, we can obtain a neat proof of \cref{thm:blkloss}. The block rate loss is equal to the sum of symbol losses, \textit{i.e.},
	\begin{align}
		\sum_{i=1}^{n}(h(U_i)-h(U_{i-1})) = h(U_n)-h(U_0) \overset{(a)}{\to} -h(U),
	\end{align}
	where $(a)$ is because $h(U_n)\to 0$ and $U_0\to U$ as $n\to\infty$.
\end{remark}

As $(n-i)\rightarrow\infty$, both $f_{i-1,n}(u)$ and $f_{i,n}(u)$ will converge to $f(u)$ and thus $(h(U_{i-1,n})-h(U_{i,n}))\to0$. Hence, there is hardly any symbol-wise rate loss for those symbols far away from the end of a block, and the block-wise rate loss comes mainly from ending symbols of a block.

\begin{remark}[An Intuitive Explanation of Rate Loss]
The rate loss of overlapped arithmetic codes can be explained intuitively with an example. We first compress $X^n$ with a rate-$r$ overlapped arithmetic encoder to obtain bitstream $M$ and then compress $X^n$ with a rate-$1$ standard arithmetic encoder parameterized with the probability sequence $(p_{1|0}(x|u),\dots,p_{n|n-1}(x|u))$ to obtain another bitstream $\hat{M}$. Obviously, the length of $M$ is always $nr$ and the expected length of $\hat{M}$ is about $n(1-r)-h(U_0)$. Thus, the total length of $M$ and $\hat{M}$ is $n-h(U_0)\geq n$. It is easy to see that the error-free recovery of $X^n$ is achievable by the interaction between the overlapped arithmetic decoder of $M$ and the standard arithmetic decoder of $\hat{M}$. First, according to $U_0$, the overlapped arithmetic decoder can obtain $p_{1|0}(x|u)$, which is then used by the standard arithmetic decoder to recover $X_1$ exactly. Next according to $U_0$ and $X_1$, the overlapped arithmetic decoder can obtain $U_1$ and $p_{2|1}(x|u)$, which is then used by the standard arithmetic decoder to recover $X_2$ exactly. Such operations are repeated until $X_n$ is recovered. This example shows that $X^n$ can be exactly represented by two bitstreams $M$ and $\hat{M}$, whose total length is slightly larger than $H(X^n)=n$, and the rate loss tends to $-h(U)$ as $n\to\infty$.
\end{remark}

\section{Decoding Algorithm}
\begin{figure}[!t]
	\includegraphics[width=\linewidth]{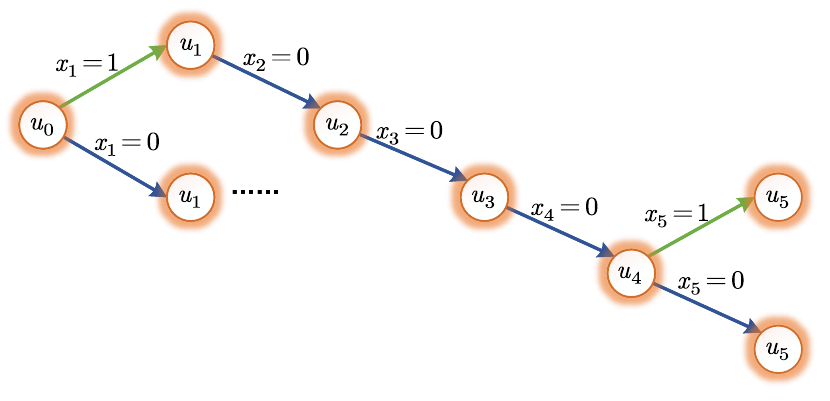}
	\caption{An incomplete binary tree generated by the decoder.}
	\label{fig:dec}
\end{figure}

Up to now, the reader may wonder why we need to study CCS or what we can gain from CCS? One reason is that the decoder can make use of CCS to achieve a better performance. 

Let $u_0$ be a realization of $U_0$. From $u_0$, a depth-$n$ incomplete binary tree ${\cal T}(u_0,n)$ will be created by the decoder \citep{FangTCOM13}, as shown by \cref{fig:dec}. In ${\cal T}(u_0,n)$, each length-$i$ path corresponds to a length-$i$ binary block. If $i=n$, it is called a {\em full} path; otherwise, {\em i.e.}, if $i<n$, it is called a {\em partial} path. We refer to the path corresponding to binary block $x^i\in\mathbb{B}^i$ as the $x^i$-path, and its end node is called the $x^i$-node. Let us define
\begin{equation}\label{eq:uxi}
	u(x^i) \triangleq \frac{u_0-l(x^i)}{h(x^i)-l(x^i)} = 2^{ir}(u_0-l(x^i)).
\end{equation}
Initially, $u(x^0)=u_0$. By comparing \eqref{eq:uxi} with \eqref{eq:Uin}, it can be found that $u(x^i)$ is actually a realization of $U_i$. From \eqref{eq:Ui}, we can obtain
\begin{align}\label{eq:uxi_recursive}
	u(x^i) = 2^r\left(u(x^{i-1}) - x_i(1-2^{-r})\right),
\end{align}
showing that $u(x^i)$ is a function w.r.t. $x_i$ and $u(x^{i-1})$. With the help of side information $y^n\in\mathbb{B}^n$, the decoder searches through the incomplete binary tree ${\cal T}(u_0,n)$ to find the best binary block $\hat{x}^n$ such that
\begin{align}
	\hat{x}^n &= \arg\max_{x^n\in{\cal T}(u_0,n)}p(y^n|x^n)\label{eq:ml}\\
			  &\overset{(a)}{=} \arg\max_{x^n\in{\cal T}(u_0,n)}p(x^n|y^n),\label{eq:map}
\end{align}
where $(a)$ holds because the maximum {\em a posteriori} (MAP) decoder \eqref{eq:map} is equivalent to the maximum likelihood (ML) decoder \eqref{eq:ml} for uniformly distributed $X$.

There are mainly two tree searching strategies: {\em breath}-first search, {\em i.e.}, the $M$-algorithm, and {\em metric}-first search, {\em i.e.}, the stack algorithm. This monograph omits the details of these two algorithms, as they can be found in many textbooks. Instead, we put an emphasis on how to calculate path metrics. We model the correlation between $X$ and $Y$ as a binary symmetric channel with crossover probability $\epsilon=\Pr(X\neq Y)$. Let $\oplus$ denote {\em eXclusive OR} (XOR). By intuition, the metric of the $x^i$-path should be calculated by
\begin{align}\label{eq:eta}
	\varsigma(x^i) 
	&\triangleq p(x^i|y^n) \overset{(a)}{=} p(x^i|y^i) \overset{(b)}{=} \prod_{i'=1}^{i}{p(x_{i'}|y_{i'})}\nonumber\\ 
	&= \epsilon^{|z^i|}\cdot(1-\epsilon)^{i-|z^i|},
\end{align}
where $(a)$ and $(b)$ are because both $X^n$ and $Y^n$ are i.i.d. random processes, and $|z^i|$ denotes the number of nonzero terms in $z^i=x^i\oplus y^i$. Equivalently in the log domain,
\begin{align}\label{eq:logeta}
	\varsigma(x^i) = |z^i|\cdot\log{\epsilon} + (i-|z^i|)\cdot\log(1-\epsilon).
\end{align}

However, \eqref{eq:eta} totally ignores the properties of $M$. As we know, the nonuniform distribution of $M$ will cause rate loss. Hence, a gain can be achieved if $M$ is also taken into consideration during decoding. Now we define the {\em overall} metric of the $x^i$-path as
\begin{align}
	\sigma(x^i) 
	&\triangleq p(x^i|y^n,m) = p(x^i|y^i,m)\nonumber\\
	&\overset{(a)}{\propto} p(x^i|y^i)\cdot p(x^i|m) = \varsigma(x^i)\cdot p(x^i|m),
\end{align}
where $(a)$ is due to the Markov chain $Y^n\to X^n\to M$. We call $\varsigma(x^i)$ the {\em extrinsic} metric and $p(x^i|m)$ the {\em intrinsic} metric. 

\begin{lemma}[Intrinsic Metric]
	The intrinsic metric of the $x^i$-path is
	\begin{align}\label{eq:pxm}
		p(x^i|m) = 2^{i(r-1)}\frac{f_i(u(x^i))}{f_0(u_0)} \propto f_i(u(x^i)).
	\end{align}
\end{lemma}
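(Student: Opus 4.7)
The plan is to compute $p(x^i|m)$ via Bayes' rule, exploiting that $u_0 = 2^{-nr}m$ is a bijection of $m$ so that $p(x^i|m) = p(x^i|u_0)$, and then to perform a change of variables using the affine relation $U_0 = 2^{-ir}U_i + l(X^i)$ from (\ref{eq:U=}) to reduce everything to the level-$i$ CCS $f_i$.

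First I would write
\begin{align}
p(x^i|m) \;=\; p(x^i|u_0) \;=\; \frac{p(x^i)\, f_{U_0|X^i}(u_0|x^i)}{f_0(u_0)},\nonumber
\end{align}
which is just Bayes' rule applied to the discrete $X^i$ and the continuous $U_0$. For a uniform binary source the prior is $p(x^i) = 2^{-i}$.

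Next, I would use $U_0 = 2^{-ir}U_i + l(X^i)$. Conditioning on $X^i = x^i$ freezes $l(X^i) = l(x^i)$, so $U_0$ becomes an affine function of $U_i$ with Jacobian $2^{-ir}$, giving
\begin{align}
f_{U_0|X^i}(u_0|x^i) \;=\; 2^{ir}\, f_{U_i|X^i}\!\bigl(u(x^i)\,\bigl|\,x^i\bigr),\nonumber
\end{align}
where $u(x^i) = 2^{ir}(u_0 - l(x^i))$ as in (\ref{eq:uxi}).

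The main obstacle—and the key step—is to justify replacing $f_{U_i|X^i}(\cdot|x^i)$ with the unconditional CCS $f_i(\cdot)$, i.e.\ to argue that $U_i$ is independent of $X^i$. The intuition is that after stripping the first $i$ contributions via the normalization $U_i = 2^{ir}(U_0 - l(X^i))$, what remains is driven only by $X_{i+1}^n$, which is independent of $X^i$ by the i.i.d.\ assumption; this is precisely the implicit hypothesis already used in deriving the conditional CCS formula (\ref{eq:condccs}), so I would appeal to it rather than re-prove it.

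Combining the pieces gives
\begin{align}
p(x^i|m) \;=\; \frac{2^{-i}\cdot 2^{ir}\, f_i(u(x^i))}{f_0(u_0)} \;=\; 2^{i(r-1)}\,\frac{f_i(u(x^i))}{f_0(u_0)},\nonumber
\end{align}
and since $f_0(u_0)$ does not depend on $x^i$, the proportionality $p(x^i|m) \propto f_i(u(x^i))$ follows immediately.
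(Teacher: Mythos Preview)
Your proof is correct and takes a genuinely different route from the paper's. The paper argues by chain-rule factorization: it writes $p(x^i|u_0)=\prod_{i'=1}^{i}p(x_{i'}|u(x^{i'-1}))$ via the Markov chain $U_0\to\cdots\to U_{i'-1}\to X_{i'}$, plugs in the symbol-level formula \eqref{eq:pxu}, $p(x_{i'}|u(x^{i'-1}))=2^{r-1}f_{i'}(u(x^{i'}))/f_{i'-1}(u(x^{i'-1}))$, and lets the product telescope. You instead apply Bayes' rule in one shot and reduce $f_{U_0|X^i}(u_0|x^i)$ to $f_i(u(x^i))$ by the affine change of variables $U_0=2^{-ir}U_i+l(X^i)$ together with the (approximate) independence $U_i\perp X^i$. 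Your approach is more direct and makes the geometric content---a single Jacobian factor $2^{ir}$ and a single prior factor $2^{-i}$---transparent; the paper's telescoping argument, on the other hand, ties the result more explicitly to the stepwise decoding recursion and to the already-established \eqref{eq:pxu}. You are also right that the independence $U_i\perp X^i$ you invoke is exactly the hypothesis underlying \eqref{eq:condccs} and \eqref{eq:pxu}, so your appeal to it is at the same level of rigor as the paper's own argument.
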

\begin{proof}
	Due to the i.i.d. nature of $X^n$ and $U_{0} \to \dots \to U_{i-1} \to X_i$,
\begin{align*}
	p(x^i|m) = p(x^i|u_0) = \prod_{i'=1}^{i}p(x_{i'}|u_0) = \prod_{i'=1}^{i}p(x_{i'}|u(x^{i'-1})).
\end{align*}
According to \eqref{eq:pxu} and \eqref{eq:uxi_recursive}, we have
\begin{align*}
	p(x_i|u(x^{i-1})) = 2^{r-1} \frac{f_i(u(x^i))}{f_{i-1}(u(x^{i-1}))}.
\end{align*}
Therefore,
\begin{align*}
	p(x^i|m) 
	&= 2^{i(r-1)}\frac{f_1(u(x_1))}{f_0(u_0))}\times\cdots\times\frac{f_i(u(x^i))}{f_{i-1}(u(x^{i-1}))}\\
	&= 2^{i(r-1)}\frac{f_i(u(x^i))}{f_0(u_0)} \overset{(a)}{\propto}f_i(u(x^i)),
\end{align*}
where $(a)$ is because $2^{i(r-1)}/f_0(u_0)$ is a constant for every path. 
\end{proof}
\begin{theorem}[Overall Metric]\label{thm:metric}
	The overall metric of the $x^i$-path is
	\begin{align}\label{eq:musi}
		\sigma(x^i) \propto \epsilon^{|z^i|}\cdot(1-\epsilon)^{i-|z^i|} \cdot f_i(u(x^i)),
	\end{align}
	where $z^i=x^i\oplus y^i$, and equivalently in the log domain,
	\begin{align}\label{eq:logmusi}
		\sigma(x^i) = |z^i|\cdot\log{\epsilon} + (i-|z^i|)\cdot\log(1-\epsilon) +\log{f_i(u(x^i))}.
	\end{align}
\end{theorem}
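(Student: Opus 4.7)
The plan is to observe that this theorem is essentially a consolidation of the two expressions the paper has already isolated: the extrinsic metric $\varsigma(x^i)$ from \eqref{eq:eta} and the intrinsic metric $p(x^i|m)$ from the preceding lemma \eqref{eq:pxm}. Since the overall metric was defined as $\sigma(x^i)\propto\varsigma(x^i)\cdot p(x^i|m)$ via the Markov chain $Y^n\to X^n\to M$, the whole proof reduces to substituting those two ingredients and checking that the ignored normalizing factors really are path-independent.

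First, I would restate
\begin{align*}
\sigma(x^i) = p(x^i\mid y^n,m) = p(x^i\mid y^i,m),
\end{align*}
using that $(Y_{i+1}^n)$ is independent of $X^i$ given $X^n$ and hence does not contribute information about $X^i$ once $m$ and $y^i$ are known. Next, applying Bayes' rule together with the Markov chain $Y^n\to X^n\to M$ gives $p(x^i\mid y^i,m)\propto p(y^i\mid x^i)\,p(x^i\mid m)$, where the proportionality constant depends only on $y^i$ and $m$, not on the candidate path $x^i$, so it is common to every path at depth $i$ in $\mathcal{T}(u_0,n)$ and may be dropped. Now I would plug in the two already-established formulas: the BSC likelihood yields $p(y^i\mid x^i)=\varsigma(x^i)=\epsilon^{|z^i|}(1-\epsilon)^{i-|z^i|}$ with $z^i=x^i\oplus y^i$, and the intrinsic metric lemma gives $p(x^i\mid m)=2^{i(r-1)}f_i(u(x^i))/f_0(u_0)$. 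The prefactor $2^{i(r-1)}/f_0(u_0)$ is the same for every length-$i$ path, so discarding it yields \eqref{eq:musi}.

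Finally, the log-domain form \eqref{eq:logmusi} follows by taking $\log_2$ of \eqref{eq:musi} and once again absorbing the additive constant coming from the proportionality. I would briefly remark that the stack and $M$-algorithm both compare candidate paths of the same length (or use length-normalized versions of the metric), so discarding the length-$i$ constant is safe for tree search.

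The only delicate point — the closest thing to an obstacle — is keeping track of what the proportionality symbol absorbs at each step. Concretely, one must verify that $p(y^i\mid m, x^i)=p(y^i\mid x^i)$ (which uses $Y^n\to X^n\to M$) and that the factor $2^{i(r-1)}/f_0(u_0)$ is independent of $x^i$ so it can be folded into the $\propto$. Neither is deep, but both rely on the Markov structure already set up in \cref{thm:markov} and on the fact that $u_0$ (hence $f_0(u_0)$) is fixed once $m$ is observed. Once those points are noted, the two displayed equations follow by direct substitution and taking logarithms.
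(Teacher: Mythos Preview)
Your proposal is correct and follows essentially the same route as the paper: the theorem is presented there as an immediate corollary of combining the extrinsic metric $\varsigma(x^i)=\epsilon^{|z^i|}(1-\epsilon)^{i-|z^i|}$ from \eqref{eq:eta} with the intrinsic-metric lemma \eqref{eq:pxm}, using the Markov chain $Y^n\to X^n\to M$ to justify the factorization $\sigma(x^i)\propto\varsigma(x^i)\cdot p(x^i\mid m)$. Your write-up is in fact a bit more explicit than the paper about which constants are absorbed into the proportionality, which is fine.
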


\begin{remark}[An Intuitive Explanation on Intrinsic Path Metric]\label{subsec:extrinsic}
Let us rewrite \eqref{eq:pxm} as
\begin{align}\label{eq:pxm_variant}
	p(x^i|m) = {2^{i(r-1)}}\frac{f_i(u(x^i))}{f_0(u_0)} = \frac{f_i(u(x^i))\cdot{2^{(n-i)(1-r)}}}{f_0(u_0)\cdot{2^{n(1-r)}}}.
\end{align}
The denominator of \eqref{eq:pxm_variant} is in fact the total number of leaf nodes of the tree ${\cal T}(u_0,n)$, and the numerator of \eqref{eq:pxm_variant} is in fact the number of leaf nodes of the subtree grown from the $x^i$-node given $U_i=u(x^i)$. For uniform binary sources, all full paths are actually of the same probability $2^{-n}$, so the intrinsic metric of the $x^i$-path must be equal to the ratio of the number of leaf nodes of the subtree grown from the $x^i$-node to the total number of leaf nodes of the tree grown from the root node, which perfectly coincides with our intuition.

Continue the above analysis. It has been proved that as $n\to\infty$, the final CCS $f_n(u)$ will become uniform over $[0,1)$ for uniform sources \cite{FangTCOM16b}. Thus, \eqref{eq:pxm_variant} will become 
\begin{align}
	p(x^n|m) = \frac{f_n(u(x^n))}{f_0(u_0)\cdot2^{n(1-r)}} = \frac{1}{f_0(u_0)\cdot2^{n(1-r)}},
\end{align}
{\em i.e.}, all full paths in the tree ${\cal T}(u_0,n)$ have the same intrinsic metric. This is because for uniform binary sources, all codewords $x^n\in\mathbb{B}^n$ are of the same probability and the tree ${\cal T}(u_0,n)$ contains $f_0(u_0)2^{n(1-r)}$ leaf nodes. It again perfectly coincides with our intuition.
\end{remark}

\begin{remark}[Breath-First Search]
In the original papers on overlapped arithmetic codes \cite{GrangettoCL07,GrangettoTSP09}, the decoder was realized with the breadth-first search, {\em i.e.}, the $M$-algorithm. Let $n$ be source block length. The decoding process includes $n$ stages and the decoder maintains a list of paths. At each stage, the list of paths is traversed and one source symbol is decoded for each path. After each stage, at most $M$ optimal paths with larger metrics are retained by pruning others, and then at the next stage, at most $2M$ branches are created. Hence the decoder should allocate a memory block of $2M$ nodes for each stage to avoid memory overflow, and the order of memory size of the $M$-algorithm is ${\cal O}(2nM)$. Such process is repeated and finally after the $n$-th stage, the best length-$n$ path is selected. It can be found that the $M$-algorithm decoder of overlapped arithmetic codes is quite similar to the famous {\em Successive-Cancellation List} (SCL) decoder of polar codes \cite{7055304}. 
\end{remark}

\begin{figure}
	\centering
	\subfigure[]{\includegraphics[width=.5\linewidth]{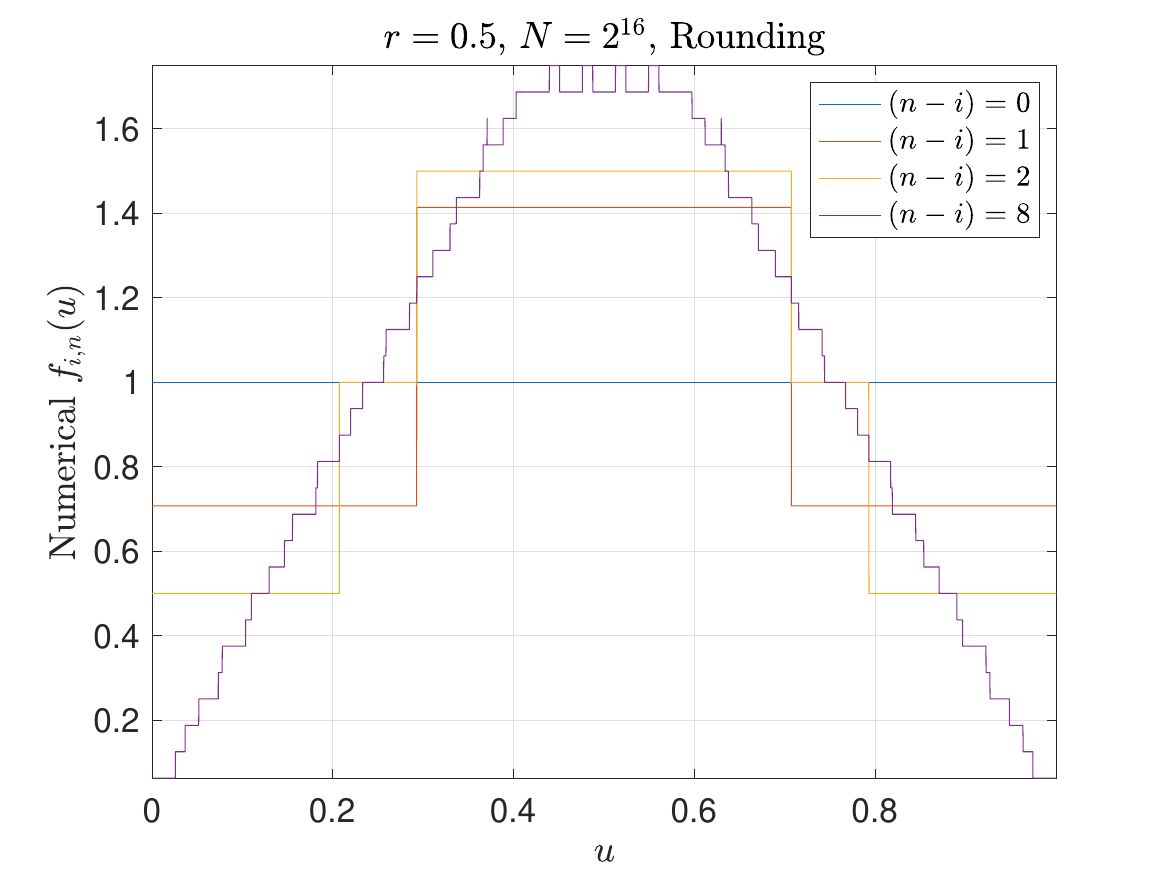}\label{subfig:roundsmall}}%
	\subfigure[]{\includegraphics[width=.5\linewidth]{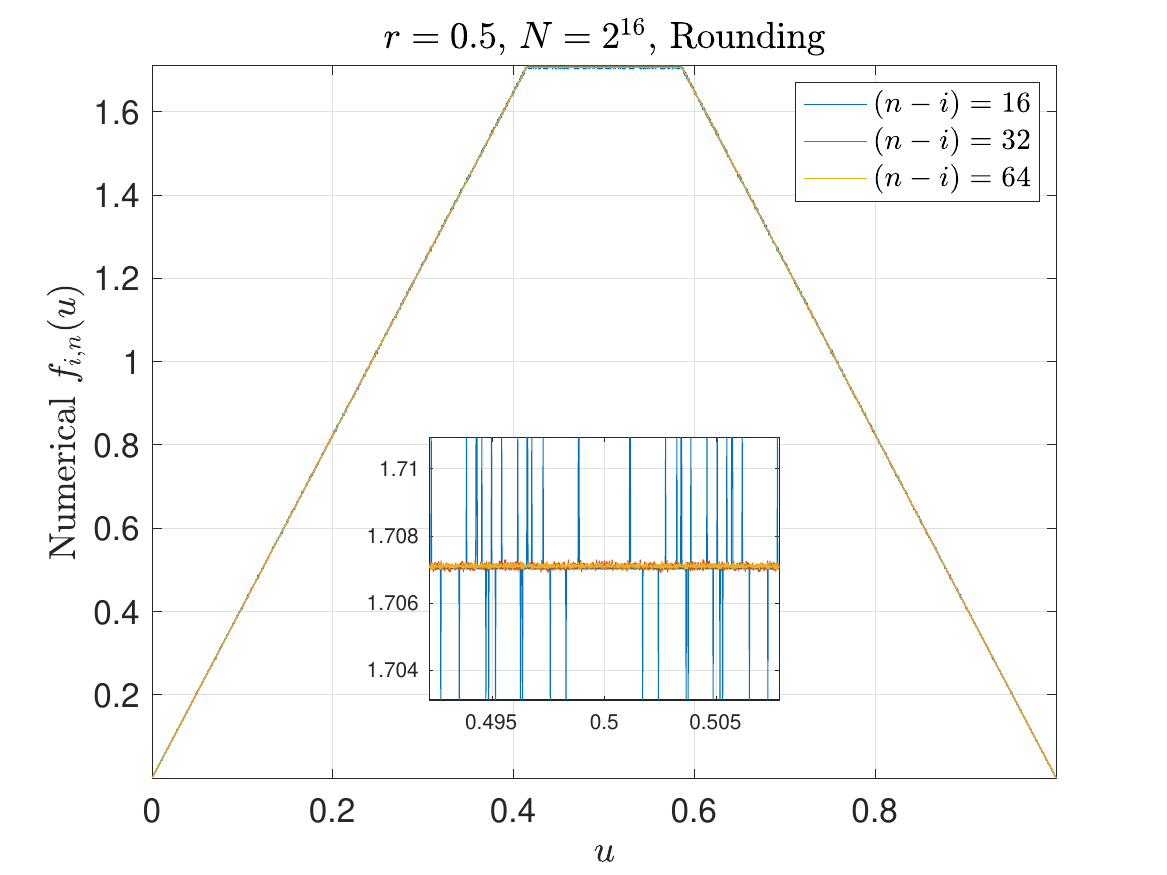}\label{subfig:roundlarge}}\\
	\subfigure[]{\includegraphics[width=.5\linewidth]{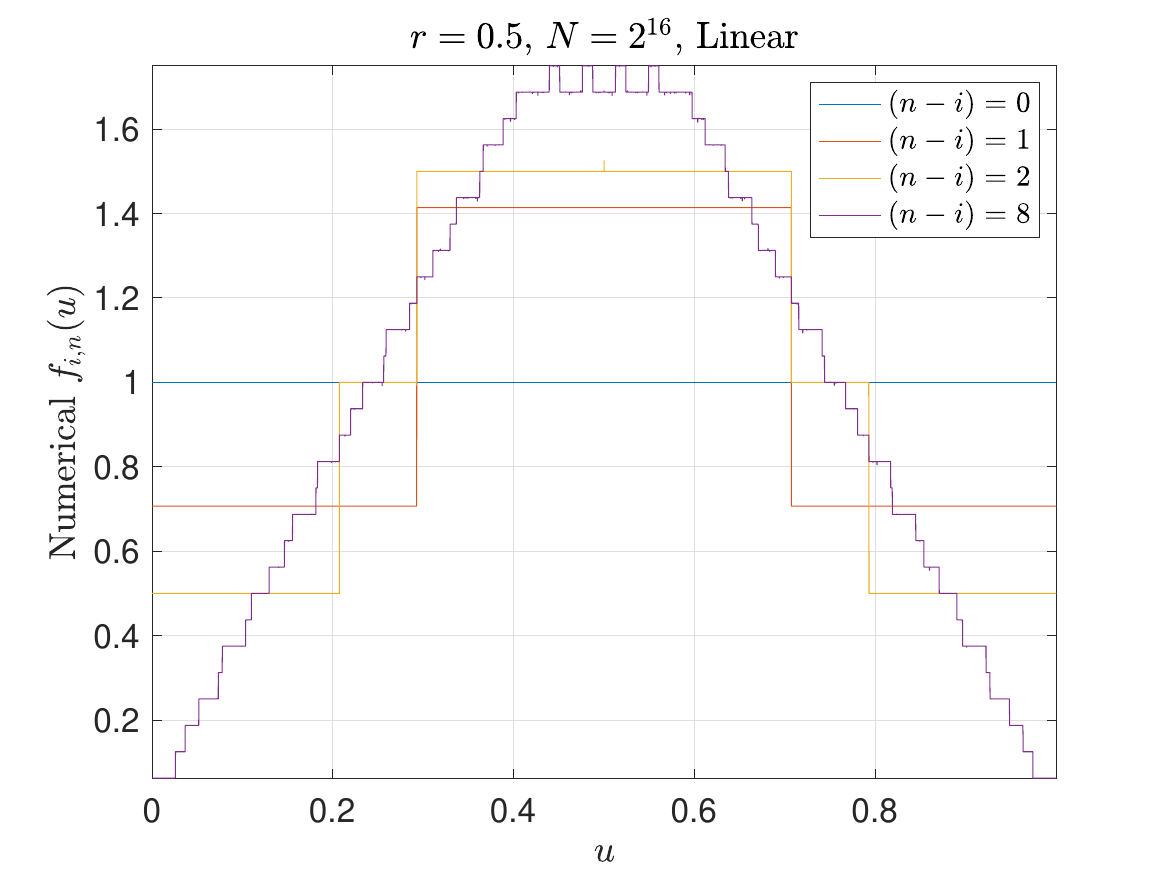}\label{subfig:linearsmall}}%
	\subfigure[]{\includegraphics[width=.5\linewidth]{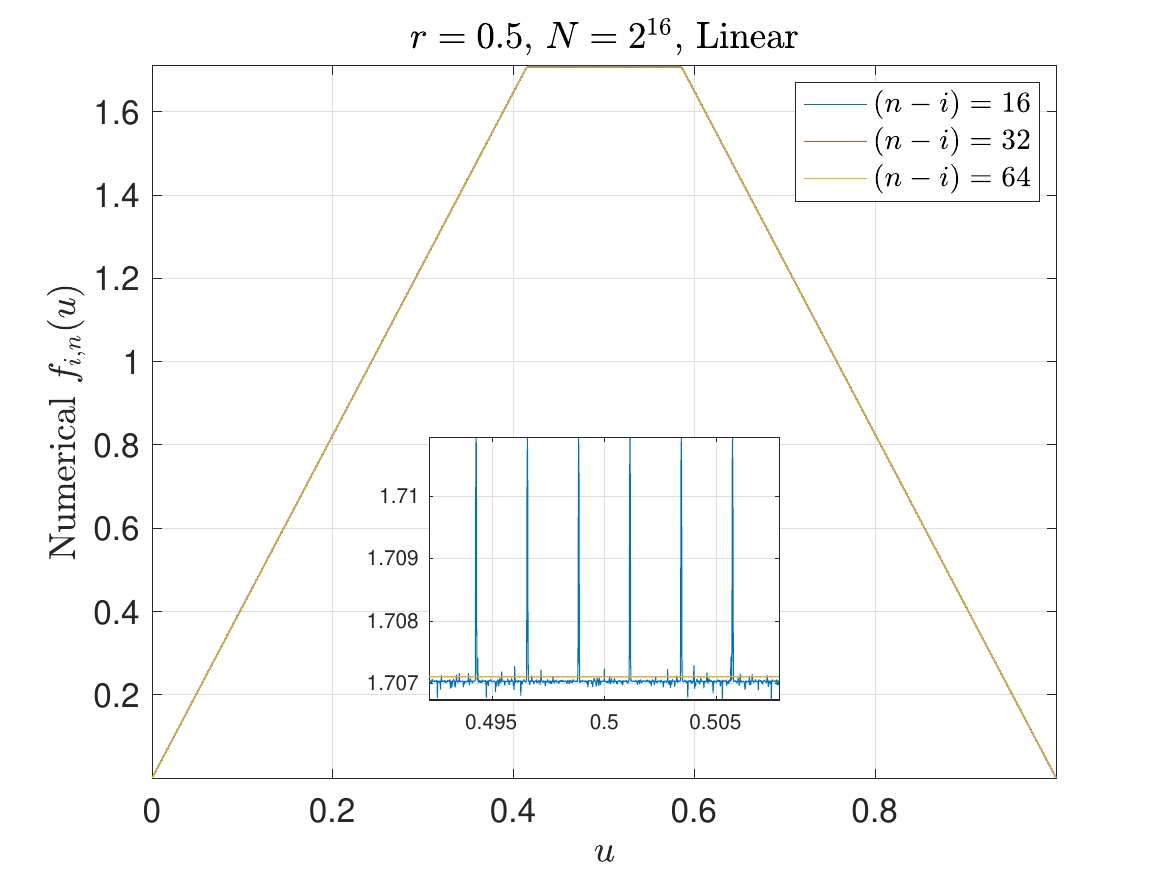}\label{subfig:linearlarge}}\\
	\subfigure[]{\includegraphics[width=.5\linewidth]{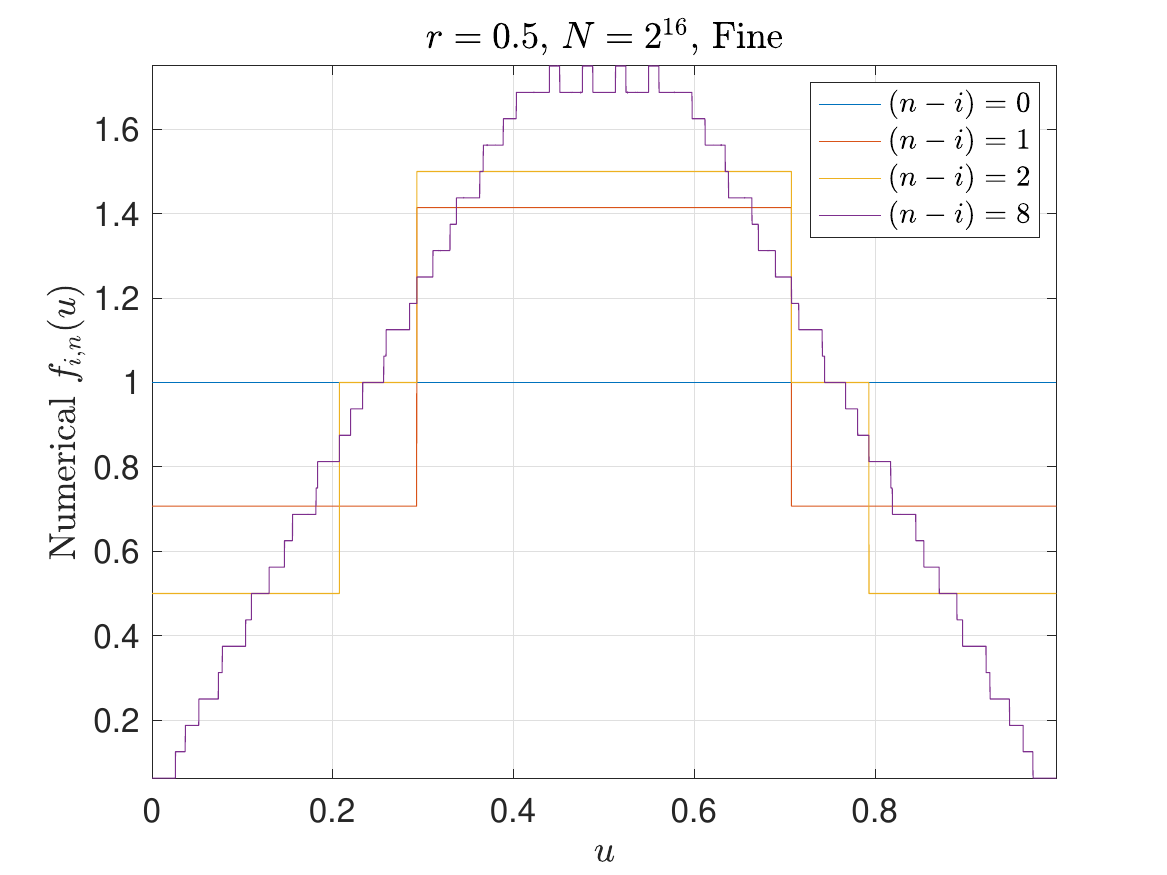}\label{subfig:finesmall}}%
	\subfigure[]{\includegraphics[width=.5\linewidth]{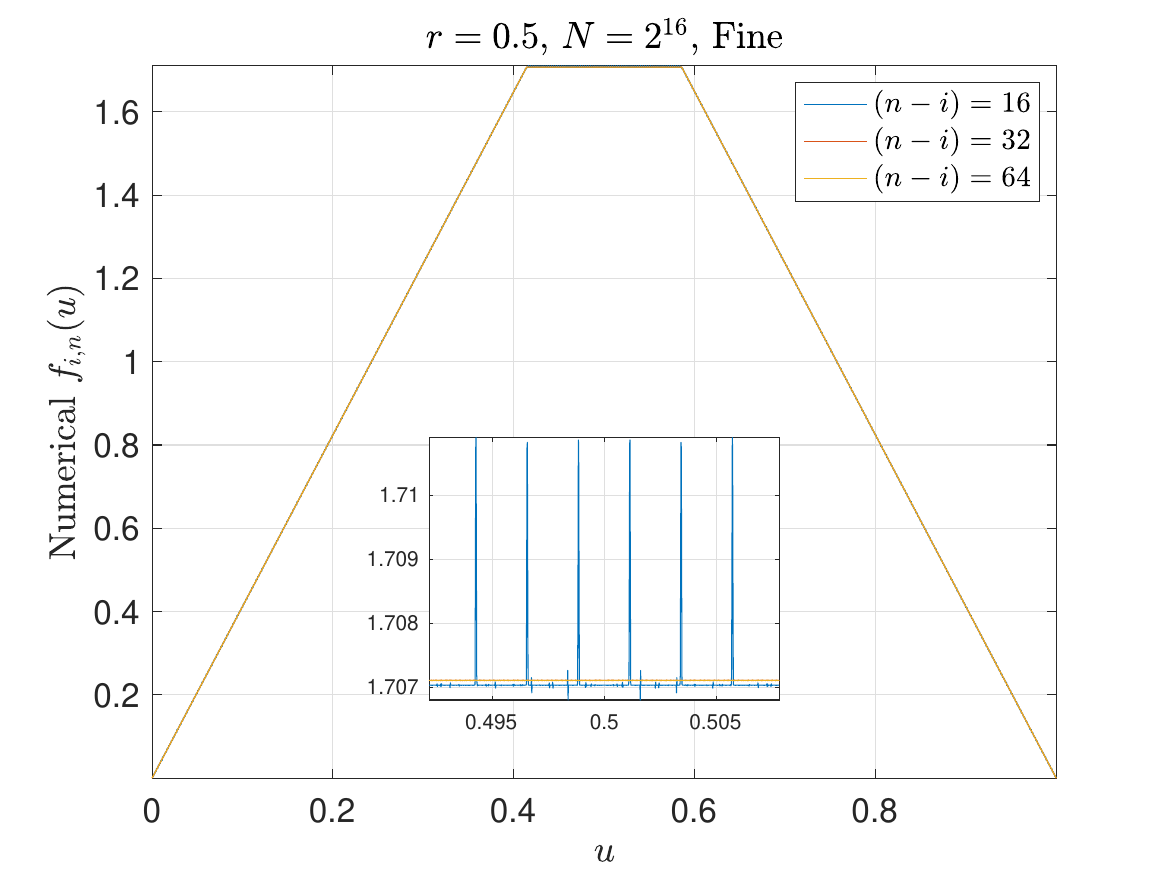}\label{subfig:finelarge}}
	\caption{The interval $[0,1)$ is divided into $N=2^{16}$ segments. (a) and (b) are some examples of the rounding numerical algorithm. It has two spikes when $(n-i)=8$ and dithers along the real CCS when $(n-i)=16$. (c) and (d) are some examples of the linear numerical algorithm. It has one spike when $(n-i)=2$ and dithers when $(n-i)=8$. (e) and (f) are some examples of the fair numerical algorithm. It overcomes all weaknesses of the rounding and linear numerical algorithms.}
	\label{fig:numerical}
\end{figure}

\begin{remark}[Backward-Replacing Algorithm]	
As we know, the decoding tree of overlapped arithmetic codes is incomplete, so fewer than $2M$ branches are usually created at each stage, which means that partial memory will be wasted unavoidably. In \cite{FangTCOM13}, the concept of {\em Expansion Factor} is defined as $\mu_i \triangleq 1+\int_{1-2^{-r}}^{2^{-r}}{f_i(u)}\,du \in [1,2)$. If $M$ paths are retained after the $i$-th stage, then the number of branches created at the $(i+1)$-th stage is $M\mu_i$. For example, if $r=0.5$, we have $\mu_i=2-\frac{\sqrt{2}}{4}\approx 1.65$ and $M\mu_i\approx 1.65M$, meaning that $(2-1.65)/2>1/6$ of the memory is wasted. If $r=1$, we have $\mu_i=1$, and as $r\to 0$, we have $\mu_i\to 2$. Hence, the problem of memory waste will become especially serious at high rates. To solve this problem, we propose the {\em Backward-Replacing Algorithm}: The decoder allocates a block of memory for $nM$ nodes in total and maintains at most $M$ paths at each stage. 

The following example shows how the backward-replacing algorithm works. Let $M=6$. After the $i$-th stage, we have only $5$ paths, sorted in the descending order of metric and numbered from $1$ to $5$. These $5$ paths are stored as a list $\{1,2,3,4,5,\_\}$, where $\_$ denotes an unoccupied cell. The decoder begins with the $1$-st path, which has two children. After the $1$-st path is coped with, the list becomes $\{1a,2,3,4,5,1b\}$, where $1a$ and $1b$ denote the two children of the $1$-st path. Then the $2$-nd path is dealt with, which has only one child. Thus the list becomes $\{1a,2*,3,4,5,1b\}$, where $2*$ denotes the sole child of the $2$-nd path. Then the $3$-rd path is tackled, which has two children and the list becomes $\{1a,2*,3a,4,3b,1b\}$, {\em i.e.}, the $5$-th path is replaced directly without being handled. In turn, the $4$-th path is treated, which has two children. However, there is no more space, so only the better child is retained, while the other is discarded. If the $4b$ child is better than the $4a$ child, the list finally becomes $\{1a,2*,3a,4b,3b,1b\}$. This list is sorted and the above actions are repeated for the next stage.
\end{remark}

\section{Experimental Verification}
\subsection{Numerical Algorithm}\label{subsec:example}
We use the classical CCS when $r=0.5$ to compare three numerical algorithms proposed in \cref{sec:numerical}. When $r=0.5$, for $n$ sufficiently large, as $i$ decreases from $n$ to $0$, $f_{i,n}$ will tend to $f(u)$ as given by \eqref{eq:closedForm_halfRate}. For simplicity, we set the final CCS $f_n(u)=1$ for $u\in[0,1)$. Some results are included in \cref{fig:numerical}, where $N=2^{16}$. It can be observed that for small $(n-i)$, the rounding and linear numerical algorithms have some spikes, which are however eliminated by the fine numerical algorithm. For large $(n-i)$, the rounding and linear numerical algorithms are sometimes in a dither, which is removed or alleviated by the fine numerical algorithm. In general, there results confirm that the {\em fine} numerical algorithm is superior to the {\em linear} numerical algorithm, while the {\em linear} numerical algorithm is superior to the {\em rounding} numerical algorithm.

\subsection{Decoding Algorithm}\label{subsec:decoding}
\begin{figure}[!t]
	\centering
	\subfigure[]{\includegraphics[width=.5\linewidth]{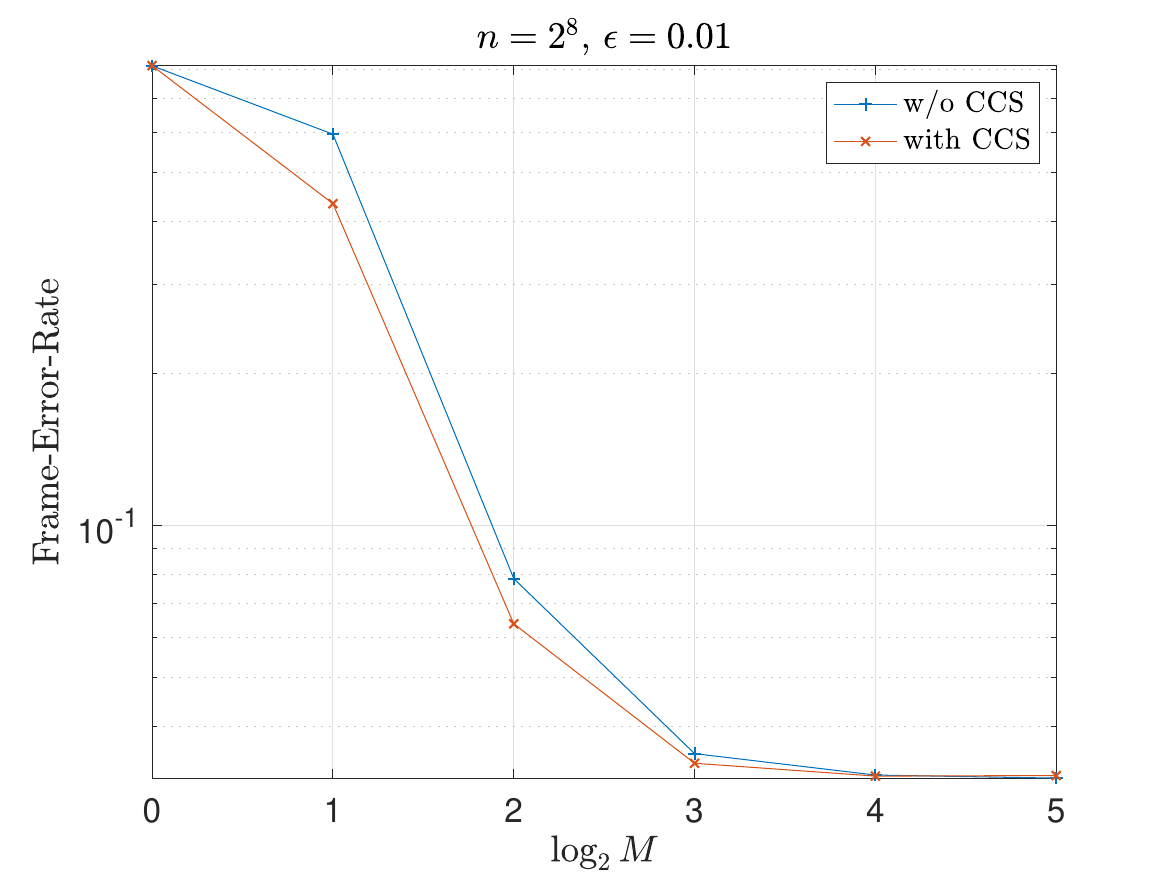}\label{subfig:fer.01}}%
	\subfigure[]{\includegraphics[width=.5\linewidth]{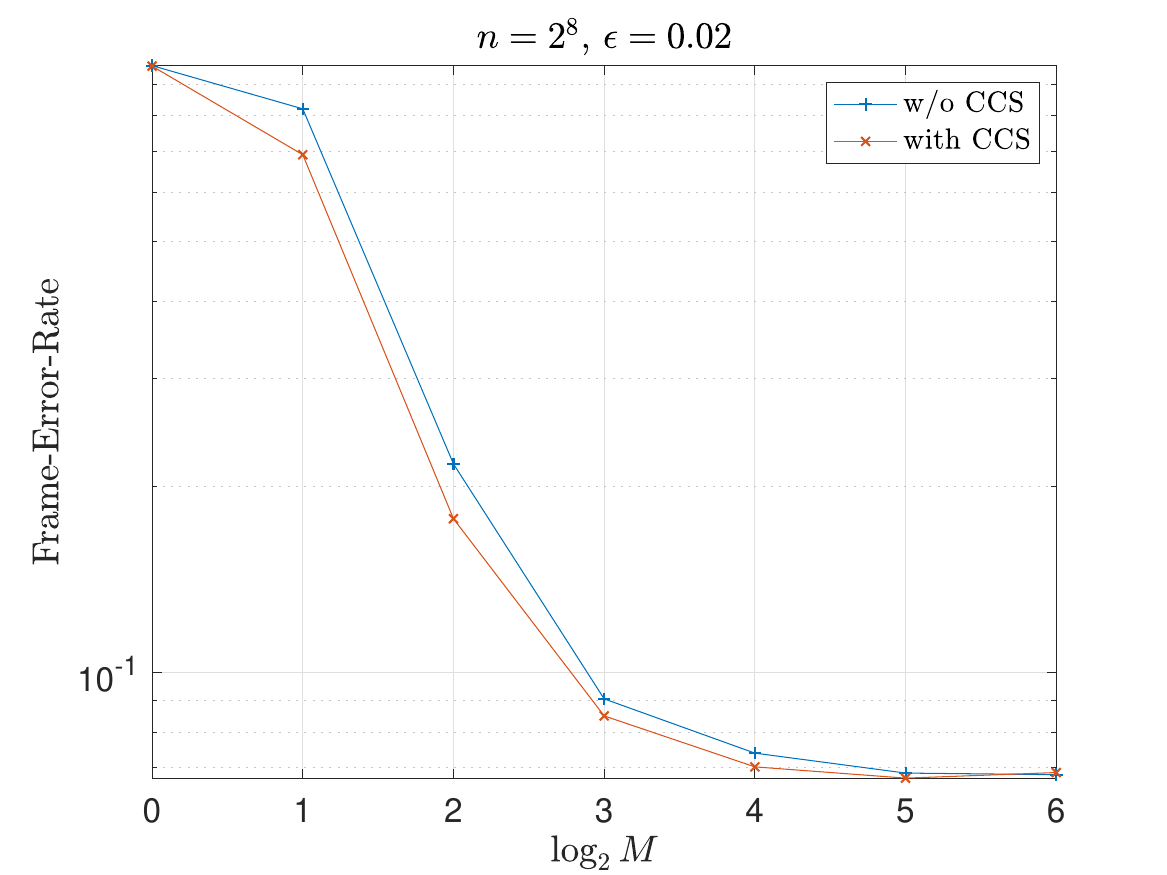}\label{subfig:fer.02}}\\
	\subfigure[]{\includegraphics[width=.5\linewidth]{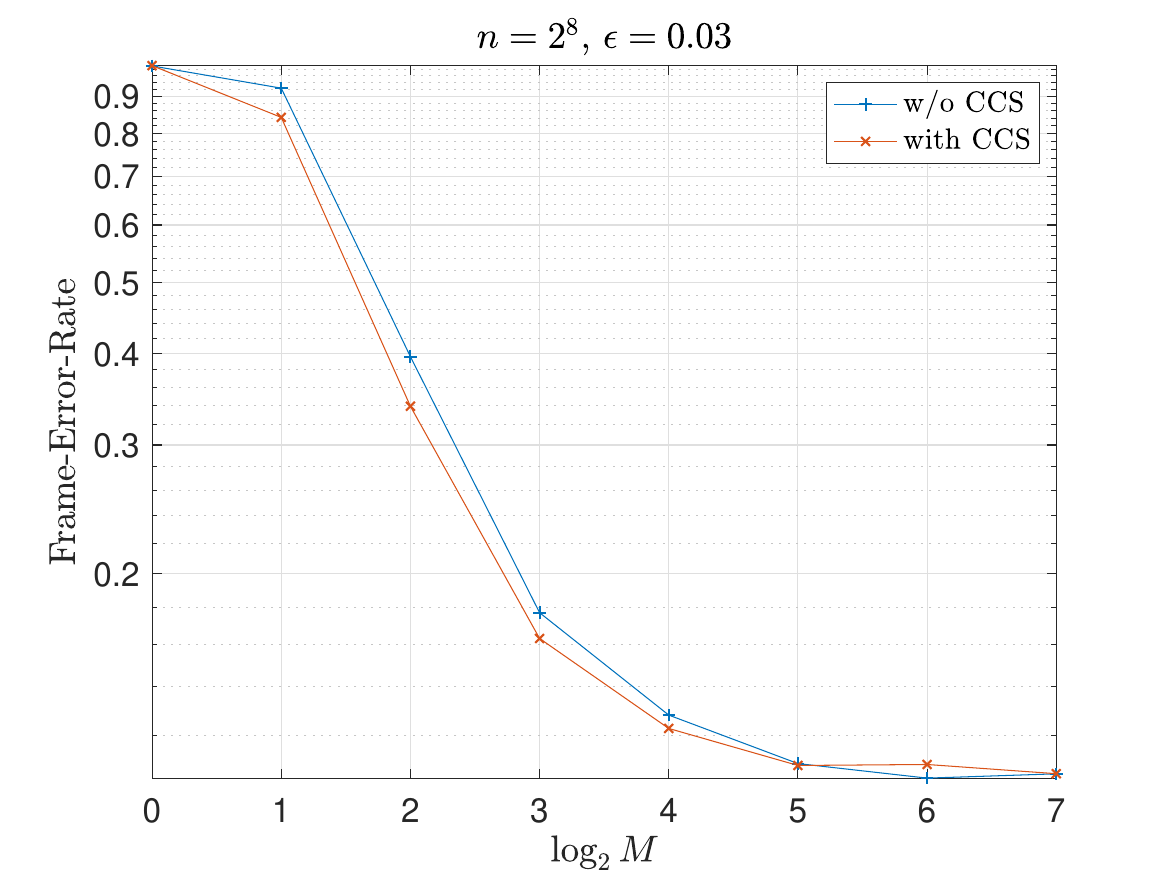}\label{subfig:fer.03}}%
	\subfigure[]{\includegraphics[width=.5\linewidth]{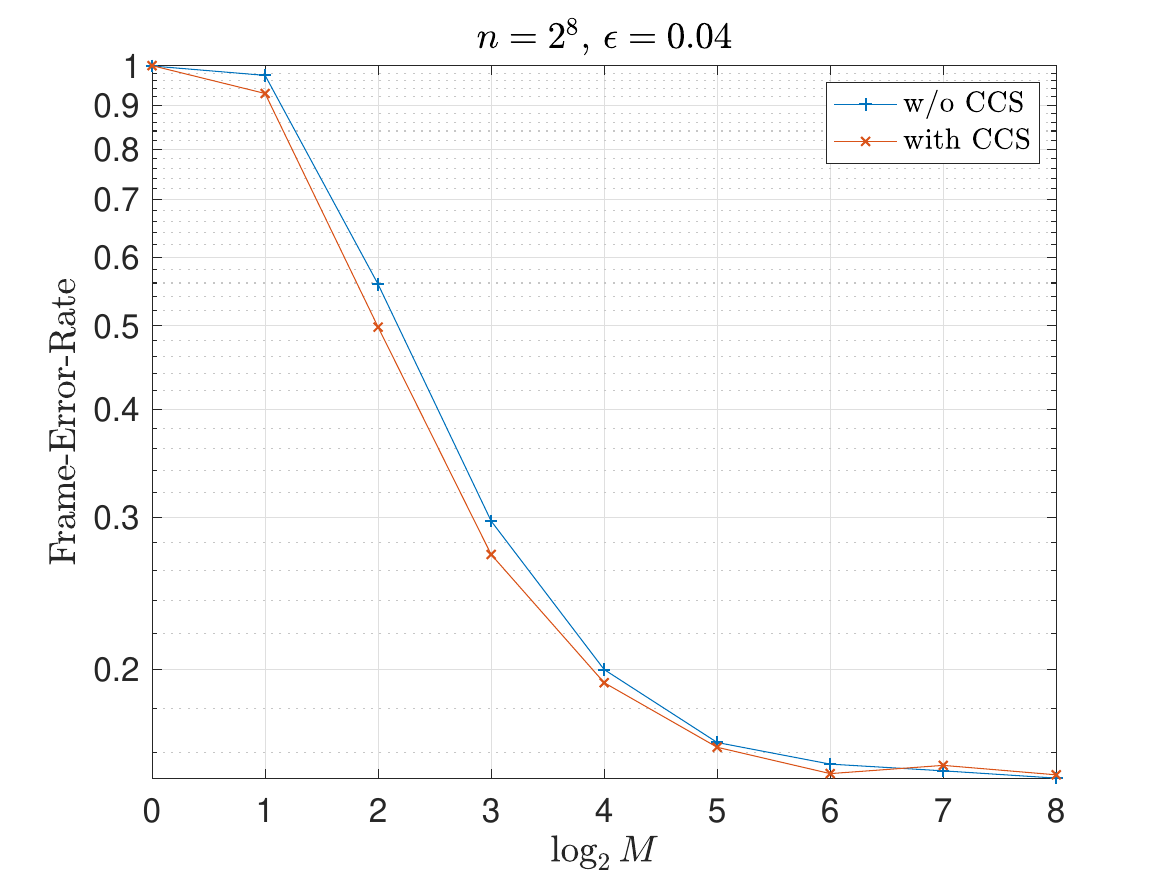}\label{subfig:fer.04}}\\
	\caption{Decoder without (w/o) CCS versus decoder with CCS, where $M$ is the maximum number of survival paths after pruning.}
	\label{fig:ccsdec}
\end{figure}
\cref{fig:ccsdec} gives some examples to verify the superiority of \eqref{eq:musi} over \eqref{eq:eta}, where $M$ is the maximum number of survival paths after pruning. As predicted by theoretical analyses, the decoder with CCS is beneficial to the proper path so that the proper path is more likely to be retained after pruning and the mis-pruning risk of the proper path is reduced. \cref{fig:ccsdec} shows that a lower frame error rate is achieved by making use of CCS for $M=2^1$ to $2^4$. However, \cref{fig:ccsdec} also shows that the advantage of the decoder with CCS will gradually vanish for large $M$ (greater than $2^5$). There are two reasons for this phenomenon: (a) For uniform binary sources, all codewords are of the same probability $2^{-n}$; (b) The $M$-algorithm decoder will be more like the full-search decoder as $M$ increases. Hence, the decoder with CCS is more useful in the lightweight (small $M$) case.

\chapter{Coexisting Interval and Error Rate Analysis}\label{c-coexist}
\vspace{-25ex}%
With {\em Coset Cardinality Spectrum} (CCS), we have known how overlapped arithmetic codes partition source space into unequal-sized cosets. This section will introduce the concept of {\em Coexisting Interval}, which is a powerful analysis tool of overlapped arithmetic codes and has found applications to the deductions of \textit{Hamming Distance Spectrum} (HDS) \cite{FangTCOM16a,FangTCOM16b} and block error rate \cite{FangCOMML21}. This section will also show how to deduce the block error rate of overlapped arithmetic codes with coexisting interval, while the next section will make use of coexisting interval to deduce the HDS of overlapped arithmetic codes. Note that though coexisting interval was formally defined in \cite{FangTCOM16b}, it is actually originated from {\em Risky Interval} defined in \cite{FangTCOM16a}.

\section{Definition of Coexisting Interval}\label{subsec:coexist}
For conciseness, we will not distinguish random variables equal in distribution below. If $X^n$ is an i.i.d. random process, then 
\begin{align}
	\sum_{i=1}^{n}{X_i\cdot2^{(n-(i-1))r}} \stackrel{d}{=} \sum_{i=1}^{n}{X_i2^{ir}},\nonumber
\end{align}
where $\stackrel{d}{=}$ means that two random variables are equal in distribution. Therefore, according to \eqref{eq:ell},
\begin{align}\label{eq:ellequal}
	s(x^n) \overset{d}{=} (1-2^{-r})\sum_{i=1}^{n}{X_i2^{ir}} = (2^r-1)\sum_{i=0}^{n-1}{X_{i+1}2^{ir}}.
\end{align}
For any $x^n\in\mathbb{B}^n$, it is easy to know
\begin{align}\label{eq:bound}
	0 = s(0^n) \leq s(x^n) \leq s(1^n) = (2^r-1)\frac{2^{nr}-1}{2^r-1} = 2^{nr} - 1.\nonumber
\end{align}
The following lemmas hold obviously.
\begin{lemma}[Relation Between $s$-Function and Coset]\label{lem:coset}
	The necessary and sufficient condition for the event that $x^n\in\mathbb{B}^n$ belongs to the $m$-th coset ${\cal C}_m$ is $s(x^n)\in(m-1,m]$, where $m\in[0:2^{nr})$. Let $\{\cdot\}\leftrightarrow\{\cdot\}$ denote the equivalence between two events. Then
	\begin{align}
		\{x^n\in{\cal C}_m\} \leftrightarrow \{s(x^n)\in(m-1,m]\}. 
	\end{align}	
\end{lemma}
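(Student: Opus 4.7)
The plan is to derive the lemma as a direct unfolding of two ingredients that are already in place: the definition of the coset index in equation \eqref{eq:mXn}, namely $M = m(X^n) \triangleq \lceil s(x^n)\rceil$, and the elementary characterization of the ceiling function. First, by the construction of the coset partition in \cref{sec:spacepar}, the block $x^n$ belongs to the $m$-th coset ${\cal C}_m$ precisely when its bitstream index equals $m$; that is, $\{x^n\in{\cal C}_m\} \leftrightarrow \{\lceil s(x^n)\rceil = m\}$. Second, for any real $y$, the standard property $\lceil y \rceil = m \Leftrightarrow m-1 < y \leq m$ applied with $y = s(x^n)$ immediately yields $\{\lceil s(x^n)\rceil = m\} \leftrightarrow \{s(x^n)\in(m-1,m]\}$. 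Chaining the two equivalences proves the lemma.

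The only subtle points are the two endpoints of the coset index range, and both are handled by bounds that have just been established. For $m=0$, the interval $(-1,0]$ intersected with the range of $s$ collapses to $\{0\}$ because $s(x^n)\geq 0$ is manifest from \eqref{eq:ellequal} (a non-negative weighted sum of bits), forcing $x^n = 0^n$; this is consistent with the remark in \cref{subfig:binning} that $|{\cal C}_0|=1$. For the upper end, the bound $s(x^n)\leq s(1^n) = 2^{nr}-1$ derived just before the lemma guarantees $\lceil s(x^n)\rceil \in [0:2^{nr})$, matching the stated range of $m$. I do not foresee any real obstacle: the argument is purely a definitional chase, and the main thing to be careful about is respecting the half-open orientation $(m-1,m]$ dictated by the ceiling convention rather than the floor convention.
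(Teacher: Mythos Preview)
Your proposal is correct and matches the paper's treatment: the paper states that this lemma ``holds obviously'' and gives no explicit proof, and your argument is precisely the definitional chase (via $m(x^n)=\lceil s(x^n)\rceil$ and the standard ceiling characterization) that makes ``obvious'' explicit, including the boundary checks at $m=0$ and $m=2^{nr}-1$.
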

\begin{lemma}[Coexistence of Codewords]\label{lem:coexist}
	The necessary and sufficient condition for the event that two codewords $x^n\in\mathbb{B}^n$ and $y^n\in\mathbb{B}^n$ coexist in the same coset ${\cal C}_m$ is $\lceil{s(x^n)}\rceil=\lceil{s(y^n)}\rceil=m$. 
\end{lemma}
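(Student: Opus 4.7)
The plan is to derive this lemma as an almost immediate corollary of the preceding Lemma (Relation Between $s$-Function and Coset). That lemma already establishes the single-codeword equivalence $\{x^n\in{\cal C}_m\}\leftrightarrow\{s(x^n)\in(m-1,m]\}$, so what remains is essentially to (i) translate the half-open interval condition into a ceiling condition, and (ii) apply the resulting equivalence to both $x^n$ and $y^n$ simultaneously.

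First I would observe that for any real number $s\in[0,2^{nr}-1]$ and any integer $m\in[0:2^{nr})$, the condition $s\in(m-1,m]$ is equivalent to $\lceil s\rceil=m$. This is the standard definition of the ceiling function, and it handles the endpoint cases correctly: $s=m$ gives $\lceil s\rceil=m$, while $s=m-1$ (which is excluded from $(m-1,m]$) gives $\lceil s\rceil=m-1$. Applied to $s(x^n)$, this yields $\{x^n\in{\cal C}_m\}\leftrightarrow\{\lceil s(x^n)\rceil=m\}$, and the same for $y^n$.

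Next, I would note that ``$x^n$ and $y^n$ coexist in the same coset ${\cal C}_m$'' is by definition the conjunction $\{x^n\in{\cal C}_m\}\wedge\{y^n\in{\cal C}_m\}$. Using the single-codeword equivalence on each conjunct gives $\{\lceil s(x^n)\rceil=m\}\wedge\{\lceil s(y^n)\rceil=m\}$, which is exactly $\lceil s(x^n)\rceil=\lceil s(y^n)\rceil=m$. This establishes both directions of the ``if and only if'' claim.

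There is no real obstacle here; the only subtlety worth mentioning is the degenerate coset ${\cal C}_0$. Since $s(0^n)=0$ and $\lceil 0\rceil=0$, while every other $x^n\neq 0^n$ has $s(x^n)>0$ and hence $\lceil s(x^n)\rceil\geq 1$, the formula $\lceil s(x^n)\rceil=\lceil s(y^n)\rceil=0$ correctly forces $x^n=y^n=0^n$, consistent with $|{\cal C}_0|=1$ as noted in Section~\ref{sec:ccs}. Because the argument is this direct, the proof will be a short two-line deduction rather than a calculation.
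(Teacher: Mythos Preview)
Your proposal is correct and matches the paper's treatment: the paper does not give an explicit proof but simply states that this lemma (together with \cref{lem:coset}) ``holds obviously,'' and your two-step argument---rewriting $s\in(m-1,m]$ as $\lceil s\rceil=m$ and then conjoining the single-codeword equivalence for $x^n$ and $y^n$---is precisely the obvious deduction the paper has in mind.
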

For $0\leq d\leq n$, we define $b^d\triangleq(b_1,\dots,b_d)\in\mathbb{B}^d$ and
\begin{align}\label{eq:jd}
	j^d\triangleq\{j_1,\dots,j_d\}\subseteq[n]\triangleq\{1,\dots,n\}, 
\end{align}
where 
\begin{align*}
	1\leq j_1<j_2<\dots<j_d\leq n. 
\end{align*}
Especially, $j^0\equiv\emptyset$ and $j^n\equiv[n]$. Further, we define
\begin{align}\label{eq:Jnd}
	{\cal J}_{n,d} \triangleq \{j^d:1\leq j_1<j_2<\dots<j_d\leq n\}.
\end{align}
The following properties of ${\cal J}_{n,d}$ are obvious.
\begin{itemize}
	\item ${\cal J}_{n,0}=\{\emptyset\}$, ${\cal J}_{n,1}=\{\{1\},\dots,\{n\}\}$, and ${\cal J}_{n,n}=\{[n]\}$.
	\item The cardinality $|{\cal J}_{n,d}| = \binom{n}{d}$. Especially, $|{\cal J}_{n,0}|=|{\cal J}_{n,n}|=1$.
	\item $\sum_{d=0}^{n}{|{\cal J}_{n,d}|} = 2^n$ and $\sum_{d=0}^{n}{\left(2^d|{\cal J}_{n,d}|\right)} = 3^n$.
\end{itemize}

\begin{definition}[Shift Function]
	For $j^d\in{\cal J}_{n,d}$ and $b^d\in\mathbb{B}^d$, we define the shift function as \cite{FangTCOM16b} 
	\begin{align}\label{eq:tau}
		\tau(j^d,b^d) \triangleq (1-2^{-r})\sum_{d'=1}^d{(1-2b_{d'})2^{rj_{d'}}}\in\mathbb{R}.
	\end{align}
\end{definition}

\begin{lemma}[Properties of Shift Function]
	For $j^d\in{\cal J}_{n,d}$ and $b^d\in\mathbb{B}^d$,  
	\begin{itemize}
		\item $\tau(j^d,b^d\oplus1^d) = -\tau(j^d,b^d)$; and
		\item $|\tau(j^d,b^d)|\leq (2^{nr}-1)<2^{nr}$.
	\end{itemize}
\end{lemma}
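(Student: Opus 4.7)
The plan is to verify the two properties directly from the definition \eqref{eq:tau}, since both follow from elementary sign and magnitude considerations about the summand $(1-2b_{d'})2^{rj_{d'}}$.

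For the first property, I would observe that bit-wise XOR with $1^d$ flips each $b_{d'}\in\{0,1\}$ to $1-b_{d'}$. Under this flip, the coefficient $(1-2b_{d'})\in\{-1,+1\}$ becomes $(1-2(1-b_{d'}))=-(1-2b_{d'})$, so every term of the sum in \eqref{eq:tau} changes sign while the exponential factors $2^{rj_{d'}}$ remain unchanged. Pulling the common factor $-1$ out of the sum yields $\tau(j^d,b^d\oplus 1^d)=-\tau(j^d,b^d)$. This is essentially a one-line calculation.

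For the magnitude bound I would apply the triangle inequality and then bound the resulting geometric-like sum. Since $|1-2b_{d'}|=1$, we have
\begin{align}
|\tau(j^d,b^d)| \leq (1-2^{-r})\sum_{d'=1}^{d}2^{rj_{d'}}
\leq (1-2^{-r})\sum_{i=1}^{n}2^{ri},\nonumber
\end{align}
where the second inequality is a worst-case enlargement obtained by replacing the selected index set $j^d\subseteq[n]$ with all of $[n]$ (legal because every term is positive). Using $(1-2^{-r})=(2^r-1)/2^r$ and the geometric series identity $\sum_{i=1}^{n}2^{ri}=2^r(2^{rn}-1)/(2^r-1)$, the right-hand side telescopes to exactly $2^{nr}-1$, giving $|\tau(j^d,b^d)|\leq 2^{nr}-1<2^{nr}$ as claimed.

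There is no real obstacle here; both assertions are structural facts about the definition. The only small care needed is to justify dropping the mixed signs inside the sum before bounding, which is handled cleanly by the triangle inequality, and to recognize that the maximum is attained when $d=n$, $j^d=[n]$, and $b^d=0^n$ or $1^n$ (so that all signs align), which is what makes the bound $2^{nr}-1$ tight. I would end by noting this tightness as a brief remark, since it is useful later when coexisting intervals are analyzed.
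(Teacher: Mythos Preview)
Your proposal is correct and essentially matches the paper's argument: the sign-flip identity for part one is identical, and for part two both you and the paper bound $|\tau(j^d,b^d)|$ by $(1-2^{-r})\sum_{d'=1}^{d}2^{rj_{d'}}$ and then enlarge to the full geometric sum over $[n]$ to obtain $2^{nr}-1$. The only cosmetic difference is that the paper records the decomposition $\tau(j^d,b^d)=c(j^d)-2v(j^d,b^d)$ explicitly (equation~\eqref{eq:tauvar}), which it reuses later, whereas you go straight through the triangle inequality; mathematically these are the same step.
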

\begin{proof}
	Since $(1-2b)=-(1-2(b\oplus1))$ for $b\in\mathbb{B}$, we have 
	\begin{align*}
		\tau(j^d,b^d\oplus1^d) = -\tau(j^d,b^d). 
	\end{align*}
	We rewrite \eqref{eq:tau} as
	\begin{align}\label{eq:tauvar}
		\tau(j^d,b^d)
		&= \underbrace{(1-2^{-r})\sum_{d'=1}^{d}{2^{rj_{d'}}}}_{c(j^d)} - 2\underbrace{(1-2^{-r})\sum_{d'=1}^{d}{b_{d'}2^{rj_{d'}}}}_{v(j^d,b^d)}\nonumber\\
		&= c(j^d) - 2v(j^d,b^d).
	\end{align}
	Since $b^d\in\mathbb{B}^d$, we have 
	\begin{align*}
		0 = v(j^d,0^d) \leq v(j^d,b^d)\leq v(j^d,1^d) = c(j^d).
	\end{align*}
	Therefore, 
	\begin{align*}
		-c(j^d) = \tau(j^d,1^d) \leq \tau(j^d,b^d) \leq \tau(j^d,0^d) = c(j^d).
	\end{align*}
	Since $(1-2^{-r})>0$ for $0<r<1$, we have
	\begin{align*}
		c(j^d)\leq c(j^n)=s(1^n)=(2^{nr}-1). 
	\end{align*}
	Therefore,
	\begin{align*}
		-(2^{nr}-1) = \tau(j^n,1^n) \leq \tau(j^d,b^d) \leq \tau(j^n,0^n) = (2^{nr}-1).
	\end{align*}
	Hence, $|\tau(j^d,b^d)|\leq (2^{nr}-1)<2^{nr}$.
\end{proof}

\begin{lemma}[Physical Meaning of Shift Function]\label{prop:tau}
	Let $x^n\in\mathbb{B}^n$, $y^n\in\mathbb{B}^n$, and $z^n=x^n\oplus y^n\in\mathbb{B}^n$. We define $x_{j^d}\triangleq(x_{j_1},\dots,x_{j_d})\in\mathbb{B}^d$. In a similar way, $y_{j^d}$ and $z_{j^d}$ are defined. If $z_{j^d}=1^d$ and $z_{[n]\setminus j^d}=0^{n-d}$, then
	\begin{align}\label{eq:ell_yn}
		\begin{cases}
			s(y^n) = s(x^n) + \tau(j^d,x_{j^d})\\
			s(x^n) = s(y^n) + \tau(j^d,y_{j^d}).
		\end{cases}
	\end{align}
\end{lemma}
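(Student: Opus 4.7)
My plan is to prove both equations by direct substitution into the distribution-equivalent form of $s(\cdot)$ adopted by this section, exploiting the identity $y_i - x_i = (1 - 2x_i)\,z_i$ and the fact that $z$ is supported only on $j^d$.

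First, I invoke the convention stated at the start of the section (no distinction between random variables equal in distribution), so I work with the reindexed form from \eqref{eq:ellequal}:
\begin{align*}
    s(x^n) = (1-2^{-r})\sum_{i=1}^{n}x_i 2^{ir}, \qquad s(y^n) = (1-2^{-r})\sum_{i=1}^{n}y_i 2^{ir}.
\end{align*}
Subtracting, I obtain $s(y^n) - s(x^n) = (1-2^{-r})\sum_{i=1}^{n}(y_i-x_i)2^{ir}$. Since $y_i = x_i \oplus z_i$ with $z_i \in \mathbb{B}$, an elementary case analysis on $z_i$ gives $y_i - x_i = (1-2x_i)\,z_i$, so only the indices where $z_i = 1$ contribute.

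Next, I use the hypothesis $z_{j^d}=1^d$ and $z_{[n]\setminus j^d}=0^{n-d}$: the sum collapses to the $d$ indices in $j^d$, yielding
\begin{align*}
    s(y^n) - s(x^n) = (1-2^{-r})\sum_{d'=1}^{d}(1-2x_{j_{d'}})2^{r j_{d'}},
\end{align*}
which is exactly $\tau(j^d, x_{j^d})$ by \eqref{eq:tau}. This proves the first identity. The second identity is then obtained by swapping the roles of $x^n$ and $y^n$; the set of differing positions $j^d$ is symmetric in $(x^n,y^n)$ since $z^n = x^n \oplus y^n = y^n \oplus x^n$, so the same argument gives $s(x^n) - s(y^n) = \tau(j^d, y_{j^d})$.

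There is no serious obstacle here—everything reduces to bookkeeping. The only point where one must be careful is the choice of representation for $s$: if one uses the original definition $s(x^n) = (2^r-1)\sum_{i=1}^{n}x_i 2^{(n-i)r}$ from \eqref{eq:ell} rather than the equal-in-distribution form of \eqref{eq:ellequal}, the exponents $2^{r j_{d'}}$ appearing in $\tau$ would instead be $2^{(n-j_{d'}+1)r}$, and the identity would fail as a pointwise statement. Thus the lemma hinges on the blanket convention introduced at the top of the section, and the proof is essentially a one-line substitution once that convention is in place.
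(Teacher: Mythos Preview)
Your proof is correct and essentially identical to the paper's: both compute $s(y^n)-s(x^n)$ term by term using the identity $y_i-x_i=(1-2x_i)$ at positions where $z_i=1$ (the paper derives this via explicit case analysis on $(x_i,y_i)\in\{(0,1),(1,0)\}$ and then merges, while you state the identity $y_i-x_i=(1-2x_i)z_i$ directly). Your remark about the dependence on the equal-in-distribution convention of \eqref{eq:ellequal} is a useful clarification that the paper leaves implicit.
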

\begin{proof}
	For $1\leq i\leq n$, the following two branches hold obviously:
	\begin{itemize}
		\item If $(x_i,y_i)=(0,1)$, then $x_i2^{ir}=0$ and $y_i2^{ir}=2^{ir}=x_i2^{ir}+2^{ir}$;  
		\item If $(x_i,y_i)=(1,0)$, then $x_i2^{ir}=2^{ir}$ and $y_i2^{ir}=0=x_i2^{ir}-2^{ir}$. 
	\end{itemize}
	The above two branches can be merged as 
	\begin{align*}
		y_i2^{ir}=x_i2^{ir} + (1-2x_i)2^{ir}. 
	\end{align*}
	Then \eqref{eq:ell_yn} follows immediately.
\end{proof}

\begin{definition}[Coexisting Interval]
	Let $\{(a,b]+\tau\} \triangleq (a+\tau,b+\tau]\subset\mathbb{R}$. For $m\in[1:2^{nr})$, the $m$-th coexisting interval associated with $j^d\in{\cal J}_{n,d}$ and $b^d\in\mathbb{B}^d$ is defined as
	\begin{align}\label{eq:Im}
		{\cal I}_m{(j^d,b^d)} &\triangleq \{(m-1,m]-\tau(j^d,b^d)\} \cap (m-1,m]\nonumber\\
		&\subseteq (m-1,m].
	\end{align}
\end{definition}

We refer to $(m-1,m]$ as the $m$-th unit interval. Note that the above definition of coexisting interval does not consider $m=0$. Of course, we can define the $0$-th coexisting interval as $(-1,0]\cap[0,2^{nr})=[0,0]=\{0\}$, which includes only one point $0$ in $\mathbb{R}$. However, there is always one and only one codeword $0^n$ in the $0$-th coset ${\cal C}_0$ (cf. \cref{subfig:binning}), so we will no longer discuss the $0$-th coexisting interval.

\begin{lemma}[Concrete Form of Coexisting Interval]
	Depending on the value of $\tau(j^d,b^d)$, the $m$-th coexisting interval associated with $j^d\in{\cal J}_{n,d}$ and $b^d\in\mathbb{B}^d$ has different forms:
	\begin{align}\label{eq:frakI}
		{\cal I}_m(j^d,b^d) = 
		\begin{cases}
			\emptyset, 				& |\tau(j^d,b^d)|\geq 1\\
			(m-1, m-\tau(j^d,b^d)],	& 0\leq\tau(j^d,b^d)<1\\
			(m-1-\tau(j^d,b^d), m],	& -1<\tau(j^d,b^d)\leq0.
		\end{cases}
	\end{align}
	Consequently, the complement of ${\cal I}_m(j^d,b^d)$ is
	\begin{align}\label{eq:frakI_bar}
		\bar{\cal I}_m(j^d,b^d) 
		&\triangleq (m-1,m]\setminus{\cal I}_m(j^d,b^d)\nonumber\\
		&= \begin{cases}
			(m-1,m], 				& |\tau(j^d,b^d)|\geq 1\\
			(m-\tau(j^d,b^d),m],	& 0\leq\tau(j^d,b^d)<1\\
			(m-1,m-1-\tau(j^d,b^d)],& -1<\tau(j^d,b^d)\leq0.
		\end{cases}
	\end{align}
\end{lemma}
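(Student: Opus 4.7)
The plan is to unpack the definition and carry out a case analysis on the sign and magnitude of $\tau\triangleq\tau(j^d,b^d)$. Using the notation $\{(a,b]+\sigma\}=(a+\sigma,b+\sigma]$ fixed just before the lemma statement, the translate is
\[
(m-1,m]-\tau = (m-1-\tau,\,m-\tau],
\]
which is again a half-open interval of unit length. Hence ${\cal I}_m(j^d,b^d)$ is the intersection of two unit-length half-open intervals whose left endpoints differ by exactly $\tau$, so its geometry is completely determined by where $\tau$ sits on the real line.

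Next I would split into three cases. When $|\tau|\geq 1$ the two intervals are separated: for $\tau\geq 1$ one has $m-\tau\leq m-1$, so the translate lies in $(-\infty,m-1]$ and is disjoint from $(m-1,m]$; for $\tau\leq -1$ the symmetric inequality $m-1-\tau\geq m$ pushes the translate into $(m,\infty)$, again disjoint. This yields the empty-set branch. When $0\leq\tau<1$, I would check endpoints: $m-1-\tau\leq m-1<m-\tau\leq m$, so the intersection inherits its lower endpoint $m-1$ (open) from $(m-1,m]$ and its upper endpoint $m-\tau$ (closed) from the translate, yielding $(m-1,m-\tau]$. The case $-1<\tau\leq 0$ is handled by a mirror argument: $m-1<m-1-\tau<m\leq m-\tau$, giving the intersection $(m-1-\tau,m]$.

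Finally, the formulas for $\bar{\cal I}_m(j^d,b^d)$ follow at once by set subtraction of the half-open subinterval just identified from the ambient interval $(m-1,m]$, which preserves the half-open structure in the expected way.

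The only delicate point I anticipate is bookkeeping of open versus closed endpoints under translation and intersection, particularly at the boundary values $\tau=\pm 1$ and $\tau=0$. One must verify, for example, that the half-open convention indeed returns the empty set (rather than a single isolated point) when $|\tau|=1$, and that the cases $\tau=0$ in the last two branches agree and jointly recover $(m-1,m]$ itself as ${\cal I}_m(j^d,b^d)$.
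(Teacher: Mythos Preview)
Your proposal is correct and takes essentially the same approach as the paper: the paper's proof simply says that \eqref{eq:frakI} follows from the definition \eqref{eq:Im} ``with the help of \cref{fig:coexist}'' and that \eqref{eq:frakI_bar} is then immediate, so your explicit case analysis on the sign and magnitude of $\tau$ is exactly the computation the figure encodes. Your attention to the boundary cases $\tau=\pm 1$ and $\tau=0$ (where the half-open convention matters) is appropriate and goes slightly beyond what the paper spells out.
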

\begin{proof}
	According to \eqref{eq:Im}, it is easy to obtain \eqref{eq:frakI} with the help of \cref{fig:coexist}, while \eqref{eq:frakI_bar} is a direct result of \eqref{eq:frakI}. 
\end{proof}

\begin{figure}[!t]
	\subfigure[$0\leq\tau(j^d,b^d)<1$]{\includegraphics[width=.5\linewidth]{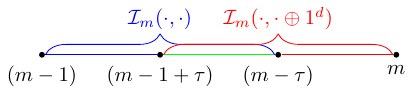}\label{subfig:pos}}%
	\subfigure[$-1<\tau(j^d,b^d)\leq0$]{\includegraphics[width=.5\linewidth]{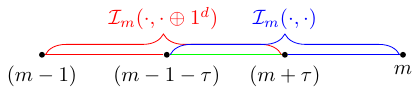}\label{subfig:neg}}
	\caption{The $m$-th pair of mirror coexisting intervals associated with $j^d\in{\cal J}_{n,d}$ and $b^d\in\mathbb{B}^d$, where $\tau$ is a shorthand of $\tau(j^d,b^d)$, ${\cal I}_m(\cdot,\cdot)$ is a shorthand of ${\cal I}_m(j^d,b^d)$, and ${\cal I}_m(\cdot,\cdot\oplus 1^d)$ is a shorthand of ${\cal I}_m(j^d,b^d\oplus 1^d)$.}
	\label{fig:coexist}
\end{figure}

\begin{lemma}[Length of Coexisting Interval]
	Let $|{\cal I}|$ be the length of continuous interval ${\cal I}$. Then 
	\begin{align}\label{eq:len}
		|{\cal I}_m(j^d,b^d)| 
		&= \left(1-|\tau(j^d,b^d)|\right)^+ \nonumber\\
		&\triangleq \max(0,1-|\tau(j^d,b^d)|)\in[0,1].
	\end{align}
\end{lemma}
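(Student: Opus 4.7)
The plan is to obtain the length formula as an immediate consequence of the concrete-form lemma \eqref{eq:frakI} by a three-case analysis on the sign and magnitude of $\tau(j^d,b^d)$. Since each case in \eqref{eq:frakI} already specifies either the empty set or a half-open interval with explicit endpoints, the length is read off directly and then the three expressions are unified.

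First I would handle the degenerate case $|\tau(j^d,b^d)|\geq 1$: by \eqref{eq:frakI} the coexisting interval is empty, so $|{\cal I}_m(j^d,b^d)|=0$, which matches $\max(0,1-|\tau(j^d,b^d)|)$ since $1-|\tau(j^d,b^d)|\leq 0$. Next, for $0\leq\tau(j^d,b^d)<1$, the interval is $(m-1,\,m-\tau(j^d,b^d)]$, whose length is $(m-\tau(j^d,b^d))-(m-1)=1-\tau(j^d,b^d)=1-|\tau(j^d,b^d)|$, which is positive and therefore equals $(1-|\tau(j^d,b^d)|)^+$. Symmetrically, for $-1<\tau(j^d,b^d)\leq 0$, the interval is $(m-1-\tau(j^d,b^d),\,m]$, with length $m-(m-1-\tau(j^d,b^d))=1+\tau(j^d,b^d)=1-|\tau(j^d,b^d)|$, again positive and matching the claim. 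Combining the three cases gives the stated identity, and the range $[0,1]$ follows because $\max(0,1-|\tau(j^d,b^d)|)\leq 1$ with equality iff $\tau(j^d,b^d)=0$.

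There is essentially no obstacle here; the statement is a routine bookkeeping consequence of the preceding lemma. The only mild subtlety is verifying that the two nondegenerate cases meet consistently at $\tau(j^d,b^d)=0$, where both branches of \eqref{eq:frakI} reduce to the full unit interval $(m-1,m]$ of length $1$, in agreement with $(1-|0|)^+=1$. I would mention this continuity check briefly so the reader sees that the piecewise definitions glue correctly, and that $|{\cal I}_m(j^d,b^d)|\in[0,1]$ follows since $\max(0,1-|\tau(j^d,b^d)|)$ is clamped below by $0$ and above by $1$.
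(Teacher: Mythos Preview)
Your proposal is correct and follows exactly the approach the paper implicitly intends: the lemma is stated in the paper without proof as a direct consequence of the concrete-form expression \eqref{eq:frakI}, and your three-case reading of the endpoints is precisely the expected verification.
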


\begin{definition}[Mirror Coexisting Interval]
	We say that ${\cal I}_m(j^d,b^d)$ and ${\cal I}_m(j^d,b^d\oplus1^d)$ form the $m$-th pair of mirror coexisting intervals associated with $j^d\in{\cal J}_{n,d}$ and $b^d\in\mathbb{B}^d$.
\end{definition}

\cref{fig:coexist} shows a pair of mirror coexisting intervals associated with $j^d\in{\cal J}_{n,d}$ and $b^d\in\mathbb{B}^d$ for $|\tau(j^d,b^d)|<1$.

\begin{lemma}[Mirror Coexisting Intervals]\label{prop:mirror} 
	With the help of \cref{fig:coexist}, we have the following properties of (mirror) coexisting intervals.
	\begin{itemize}
		\item The mirror of a coexisting interval can be obtained by a shift:
		\begin{align}\label{eq:mirror}
			{\cal I}_m(j^d,b^d\oplus1^d) = {\cal I}_m(j^d,b^d) + \tau(j^d,b^d).
		\end{align}
		\item A pair of mirror coexisting intervals must belong to the same unit interval. More concretely, the $m$-th pair of mirror coexisting intervals must belong to the $m$-th unit interval.
		\item The $m$-th pair of mirror coexisting intervals are almost symmetric around the point $(m-1/2)$ (except at the end points of coexisting intervals). 
	\end{itemize}
\end{lemma}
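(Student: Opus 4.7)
The plan is to prove the three bullets in order, each one leaning on what has already been established in \cref{eq:Im}, \eqref{eq:frakI}, and the identity $\tau(j^d,b^d\oplus 1^d)=-\tau(j^d,b^d)$ noted just before the lemma. In broad strokes, all three facts are consequences of the explicit piecewise description of ${\cal I}_m(j^d,b^d)$, so there is no deep estimate needed; the work is almost entirely bookkeeping of endpoints and half-open brackets.

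For the first bullet, the clean algebraic route is to start from $\tau(j^d,b^d\oplus 1^d)=-\tau(j^d,b^d)$ and apply the definition \eqref{eq:Im} directly, giving ${\cal I}_m(j^d,b^d\oplus 1^d)=\{(m-1,m]+\tau(j^d,b^d)\}\cap(m-1,m]$. I then use the fact that shifting a set by a constant distributes over intersection: $\bigl({\cal I}_m(j^d,b^d)\bigr)+\tau(j^d,b^d)=\bigl(\{(m-1,m]-\tau(j^d,b^d)\}+\tau(j^d,b^d)\bigr)\cap\bigl((m-1,m]+\tau(j^d,b^d)\bigr)=(m-1,m]\cap\{(m-1,m]+\tau(j^d,b^d)\}$, which matches the previous expression. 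This gives \eqref{eq:mirror}.

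The second bullet is immediate from \eqref{eq:Im}: both ${\cal I}_m(j^d,b^d)$ and ${\cal I}_m(j^d,b^d\oplus 1^d)$ are intersected with $(m-1,m]$ in their very definition, so they are each contained in the $m$-th unit interval, and therefore any mirror pair lives inside the single unit interval $(m-1,m]$.

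For the third bullet I would split into the three cases of \eqref{eq:frakI}. If $|\tau(j^d,b^d)|\ge 1$ both members of the mirror pair are empty and symmetry is vacuous. If $0\le\tau(j^d,b^d)<1$, then ${\cal I}_m(j^d,b^d)=(m-1,\,m-\tau(j^d,b^d)]$ while ${\cal I}_m(j^d,b^d\oplus 1^d)=(m-1+\tau(j^d,b^d),\,m]$, and the reflection $u\mapsto 2(m-\tfrac12)-u=2m-1-u$ sends the open endpoint $m-1$ to the closed endpoint $m$ and the closed endpoint $m-\tau(j^d,b^d)$ to the open endpoint $m-1+\tau(j^d,b^d)$. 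The case $-1<\tau(j^d,b^d)\le 0$ is symmetric. Hence the two intervals are reflections of one another about $m-\tfrac12$ except that the roles of open/closed boundaries swap, which is exactly what ``almost symmetric (except at the end points)'' means. The only mild obstacle is being careful that the mismatch between $($ and $]$ does not upgrade to a genuine asymmetry of lengths, which the identity $|{\cal I}_m(j^d,b^d)|=|{\cal I}_m(j^d,b^d\oplus 1^d)|=(1-|\tau(j^d,b^d)|)^{+}$ from \eqref{eq:len} confirms.
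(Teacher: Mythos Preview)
Your proof is correct. For the first bullet you take a genuinely cleaner route than the paper: rather than splitting on the sign of $\tau(j^d,b^d)$ and writing out the endpoints from \eqref{eq:frakI} in each case, you argue directly from the definition \eqref{eq:Im} using the set identity $(A\cap B)+\tau=(A+\tau)\cap(B+\tau)$ together with $\tau(j^d,b^d\oplus 1^d)=-\tau(j^d,b^d)$. This is case-free and slightly more transparent; the paper's case analysis, on the other hand, has the side benefit of producing the explicit expressions \eqref{eq:pos}--\eqref{eq:neg}, which it immediately reuses for the third bullet. For the second and third bullets your argument matches the paper's: both are contained in $(m-1,m]$ by definition, and the near-symmetry about $m-\tfrac12$ is read off from the explicit endpoints (the paper phrases this as ``the sum of the infimum of one and the maximum of the other equals $2m-1$'', which is exactly your reflection $u\mapsto 2m-1-u$).
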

\begin{proof}
	If $\tau(j^d,b^d)\geq0$, then $\tau(j^d,b^d\oplus1^d)=-\tau(j^d,b^d)\leq0$. According to \eqref{eq:frakI}, we have ${\cal I}_m(j^d,b^d) = (m-1, m-\tau(j^d,b^d)]$ and
	\begin{align}\label{eq:pos}
		{\cal I}_m(j^d,b^d\oplus1^d) 
		&= (m-1-\tau(j^d,b^d\oplus1^d), m]\nonumber\\
		&= (m-1+\tau(j^d,b^d), m].
	\end{align}
	If $\tau(j^d,b^d)\leq0$, then $\tau(j^d,b^d\oplus1^d)=-\tau(j^d,b^d)\geq0$. According to \eqref{eq:frakI}, we have ${\cal I}_m(j^d,b^d) = (m-1-\tau(j^d,b^d),m]$ and
	\begin{align}\label{eq:neg}
		{\cal I}_m(j^d,b^d\oplus1^d) 
		&= (m-1,m-\tau(j^d,b^d\oplus1^d)]\nonumber\\
		&= (m-1,m+\tau(j^d,b^d)].
	\end{align}
	Combining the above two cases, we can obtain \eqref{eq:mirror} immediately\footnote{Note that $(a,b]=\emptyset$ if $a\geq b$, and $\emptyset$ is a subset of every set.}. 
	
	The second bullet point holds obviously, so its proof is omitted to avoid verbosity. As for the symmetry between ${\cal I}_m(j^d,b^d)$ and ${\cal I}_m(j^d,b^d\oplus1^d)$, it can be found from \eqref{eq:pos} and \eqref{eq:neg} that the sum of the infimum of ${\cal I}_m(j^d,b^d)$ and the maximum of ${\cal I}_m(j^d,b^d\oplus1^d)$ is always $(2m-1)=2(m-1/2)$, and so is the sum of the maximum of ${\cal I}_m(j^d,b^d)$ and the infimum of ${\cal I}_m(j^d,b^d\oplus1^d)$. Thus the symmetry holds.
\end{proof}

\begin{definition}[Coexisting Interval Set] 
	The set of coexisting intervals associated with $j^d\in{\cal J}_{n,d}$ and $b^d\in\mathbb{B}^d$ is
	\begin{align}\label{eq:Ijdbd}
		{\cal I}(j^d,b^d) \triangleq \left\{{\cal I}_m(j^d,b^d):m\in[1:2^{nr})\right\}.
	\end{align}
\end{definition}

\begin{theorem}[Necessary and Sufficient Condition for Coexistence]\label{thm:equiv}
	Consider two binary blocks $x^n\in\mathbb{B}^n$ and $y^n\in\mathbb{B}^n$. Let $z^n=x^n\oplus y^n\in\mathbb{B}^n$. If $z_{j^d}=1^d$ and $z_{[n]\setminus j^d}=0^{n-d}$, the necessary and sufficient condition for the event that $x^n$ and $y^n$ coexist in the same coset is $s(x^n)\in{\cal I}(j^d,x_{j^d})$, or equivalently $s(y^n)\in{\cal I}(j^d,y_{j^d})={\cal I}(j^d,x_{j^d}\oplus 1^d)$. That is
	\begin{align}\label{eq:equiv}
		\{\lceil s(x^n)\rceil = \lceil s(y^n)\rceil\} 
		&\leftrightarrow \{s(x^n)\in{\cal I}(j^d,x_{j^d})\}\nonumber\\
		&\leftrightarrow \{s(y^n)\in{\cal I}(j^d,y_{j^d})\}.
	\end{align}
\end{theorem}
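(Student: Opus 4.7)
The plan is to chain together three facts already established in the excerpt: the coset-membership characterization via the $s$-function (\cref{lem:coset}), the coexistence characterization $\lceil s(x^n)\rceil = \lceil s(y^n)\rceil$ (\cref{lem:coexist}), and the shift identity $s(y^n) = s(x^n) + \tau(j^d, x_{j^d})$ from \cref{prop:tau}. Coexistence then reduces to a purely geometric question on the real line: for which values of $s(x^n)$ do both $s(x^n)$ and its shifted image land in the same unit interval $(m-1,m]$?

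First I would fix notation and reduce to a per-$m$ statement. Coexistence in some coset means there exists $m \in [1:2^{nr})$ with $s(x^n) \in (m-1,m]$ and $s(y^n) \in (m-1,m]$ simultaneously (the $m=0$ case being trivial since $\mathcal{C}_0$ contains only $0^n$, and $z^n \neq 0^n$ by hypothesis since $d\geq 1$ whenever $x^n \neq y^n$; if $d=0$ the claim is vacuous). Using $s(y^n) = s(x^n) + \tau(j^d,x_{j^d})$ from \cref{prop:tau}, the condition $s(y^n) \in (m-1,m]$ is equivalent to $s(x^n) \in (m-1,m] - \tau(j^d,x_{j^d})$. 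Therefore the conjunction
\begin{align}
\{s(x^n) \in (m-1,m]\} \wedge \{s(y^n) \in (m-1,m]\}
\end{align}
is equivalent to $s(x^n) \in (m-1,m] \cap \bigl((m-1,m]-\tau(j^d,x_{j^d})\bigr) = {\cal I}_m(j^d,x_{j^d})$ by the very definition \eqref{eq:Im}.

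Next I would take the union over $m$. Since the unit intervals $\{(m-1,m]: m\in[1:2^{nr})\}$ are disjoint, the events $\{s(x^n)\in{\cal I}_m(j^d,x_{j^d})\}$ are mutually exclusive, and the coexistence event is
\begin{align}
\{\lceil s(x^n)\rceil = \lceil s(y^n)\rceil\} \leftrightarrow \bigcup_{m=1}^{2^{nr}-1}\{s(x^n)\in{\cal I}_m(j^d,x_{j^d})\} \leftrightarrow \{s(x^n)\in{\cal I}(j^d,x_{j^d})\},
\end{align}
by the definition \eqref{eq:Ijdbd} of the coexisting interval set. The symmetric statement for $y^n$ follows by swapping the roles of $x^n$ and $y^n$ in \cref{prop:tau}, which gives $s(x^n) = s(y^n)+\tau(j^d,y_{j^d})$, together with the identity $y_{j^d} = x_{j^d}\oplus 1^d$ (since $z_{j^d}=1^d$).

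I do not anticipate a serious technical obstacle; the argument is essentially a definition chase. The only mildly delicate point is bookkeeping at the boundary: I must be consistent about half-open intervals $(m-1,m]$ so that the intersection defining ${\cal I}_m(j^d,b^d)$ produces exactly the three cases in \eqref{eq:frakI}, and I must verify that no spurious $m=0$ contribution arises. Beyond that, the proof is a direct substitution of the shift identity into the coset-membership criterion.
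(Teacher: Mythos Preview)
Your proposal is correct and follows essentially the same route as the paper. The only cosmetic difference is that the paper splits the argument into a sufficiency part (invoking the mirror-interval relation \eqref{eq:mirror} to show $s(x^n)\in{\cal I}_m(j^d,x_{j^d})$ forces $s(y^n)\in{\cal I}_m(j^d,y_{j^d})\subseteq(m-1,m]$) and a separate converse part (showing $s(x^n)\in\bar{\cal I}_m(j^d,x_{j^d})$ pushes $s(y^n)$ outside $(m-1,m]$ via \eqref{eq:frakI_bar}), whereas you obtain both directions at once by reading the intersection definition \eqref{eq:Im} as an equivalence; your version is marginally more economical but not a genuinely different idea.
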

\begin{proof}
	According to \eqref{eq:ell_yn} of \cref{prop:tau}, we have 
	\begin{align*}
		s(y^n)=s(x^n)+\tau(j^d,x_{j^d}), 
	\end{align*}
	where $j^d\in{\cal J}_{n,d}$ and $x_{j^d}\in\mathbb{B}^d$. If 
	\begin{align*}
		s(x^n)\in{\cal I}_m(j^d,x_{j^d})\subseteq(m-1,m], 
	\end{align*}
	then according to \eqref{eq:mirror} of \cref{prop:mirror}, we have
	\begin{align*}
		s(y^n) \in {\cal I}_m(j^d,x_{j^d}) + \tau(j^d,x_{j^d}) 
		&= {\cal I}_m(j^d,x_{j^d}\oplus 1^d)\nonumber\\
		&= {\cal I}_m(j^d,y_{j^d})\subseteq(m-1,m].
	\end{align*}
	Since $s(x^n)\in(m-1,m]$ and $s(y^n)\in(m-1,m]$, both $x^n$ and $y^n$ must belong to the $m$-th coset. After generalization for all $m\in[1:2^{nr})$, if $s(x^n)\in{\cal I}(j^d,x_{j^d})$, or equivalently $s(y^n)\in{\cal I}(j^d,y_{j^d})={\cal I}(j^d,x_{j^d}\oplus 1^d)$, both $x^n$ and $y^n$ must belong to the same coset.
	
	{\bf Converse}. If $s(x^n)\in\bar{\cal I}_m(j^d,x_{j^d})\subseteq(m-1,m]$, where $\bar{\cal I}_m(j^d,x_{j^d})$ is the complement of ${\cal I}_m(j^d,x_{j^d})$ given by \eqref{eq:frakI_bar}, then according to $s(y^n)=s(x^n)+\tau(j^d,x_{j^d})$, we have
	\begin{align*}
		s(y^n) \in
		\begin{cases}
			(m-1+\tau(j^d,x_{j^d}),m+\tau(j^d,x_{j^d})], & |\tau(j^d,x_{j^d})|\geq 1\\
			(m,m+\tau(j^d,x_{j^d})],	& 0\leq\tau(j^d,x_{j^d})<1\\
			(m-1+\tau(j^d,x_{j^d}),m-1],& -1<\tau(j^d,x_{j^d})\leq0.
		\end{cases}
	\end{align*}
	Clearly, in every of the above three cases, $s(y^n)\notin(m-1,m]$. While as we know, $s(x^n)\in(m-1,m]$. According to \cref{lem:coset}, it is evident that $x^n$ and $y^n$ do not coexist in the same coset.
\end{proof}

\begin{corollary}[Necessary Condition for Coexistence]\label{corol:neccoe}
	Let $x^n\in\mathbb{B}^n$, $y^n\in\mathbb{B}^n$, and $z^n=x^n\oplus y^n$. If $z_{j^d}=1^d$ and $z_{[n]\setminus j^d}=0^{n-d}$, the event that $x^n$ and $y^n$ coexist in the same coset happens only if
	$|\tau(j^d,x_{j^d})|<1$.
\end{corollary}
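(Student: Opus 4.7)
The plan is to obtain this corollary as an immediate consequence of Theorem~\ref{thm:equiv} combined with the explicit description of the coexisting intervals given in equation~\eqref{eq:frakI}. The proof should be a short contrapositive argument, since \cref{thm:equiv} already supplies the characterization of coexistence in terms of membership in ${\cal I}(j^d,x_{j^d})$, and \eqref{eq:frakI} already tells us exactly when the individual coexisting intervals are empty.

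Concretely, I would argue as follows. By \cref{thm:equiv}, the event that $x^n$ and $y^n$ coexist in the same coset is equivalent to $s(x^n)\in{\cal I}(j^d,x_{j^d})=\bigcup_{m\in[1:2^{nr})}{\cal I}_m(j^d,x_{j^d})$. Now suppose for contradiction that $|\tau(j^d,x_{j^d})|\geq 1$. The first branch of \eqref{eq:frakI} then gives ${\cal I}_m(j^d,x_{j^d})=\emptyset$ for every $m\in[1:2^{nr})$, so the union ${\cal I}(j^d,x_{j^d})$ is itself empty. Since $s(x^n)$ cannot belong to the empty set, coexistence in any coset with index $m\geq 1$ is impossible. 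This immediately yields the stated necessary condition.

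The one thing to double-check is the boundary coset $m=0$, which was deliberately excluded from the definition of coexisting intervals in \cref{subsec:coexist}. This is harmless: the $0$-th coset contains only the single codeword $0^n$ (cf.\ \cref{subfig:binning}), so if $x^n$ and $y^n$ both lie in ${\cal C}_0$ then $x^n=y^n=0^n$, forcing $d=0$ and $\tau(j^0,x_{j^0})=0$, in which case $|\tau(j^d,x_{j^d})|=0<1$ trivially. For $d\geq 1$ the two codewords differ, so coexistence in ${\cal C}_0$ is impossible and the argument above covers all remaining cases.

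I do not anticipate any real obstacle: the corollary is essentially a restatement of the first branch of \eqref{eq:frakI} through the lens of \cref{thm:equiv}. The only writing care needed is to phrase the contrapositive cleanly and to explicitly dispose of the $m=0$ edge case so that the reader is not left wondering whether the omitted coset invalidates the necessity claim.
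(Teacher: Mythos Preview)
Your proposal is correct and follows exactly the route the paper intends: the corollary is stated in the paper without proof, immediately after \cref{thm:equiv}, precisely because it drops out of that theorem together with the first branch of \eqref{eq:frakI}. Your contrapositive argument and the disposal of the $m=0$ edge case are fine; if anything, you are being more careful than the paper itself.
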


\section{Probability of Coexisting Interval}
Now we ponder over such an important problem: Given $X_{j^d}=b^d\in\mathbb{B}^d$, how possible will $s(x^n)$ fall into ${\cal I}(j^d,b^d)$? To answer this question, we introduce the following important random variable.

\begin{definition}[Conditional Value of $s(X^n)$]
	For $j^d\in{\cal J}_{n,d}$ and $b^d\in\mathbb{B}^d$, we define the conditional value of $s(X^n)$ given $X_{j^d}=b^d$ as
	\begin{align}\label{eq:E}
		E{(j^d,b^d)}\triangleq s(X^n|X_{j^d}=b^d).
	\end{align}	
\end{definition}

According to \eqref{eq:ellequal}, we have
\begin{align}\label{eq:Evar}
	E{(j^d,b^d)} 
	&\overset{d}{=} 
	\underbrace{(1-2^{-r})\sum_{d'=1}^{d}{b_{d'}2^{rj_{d'}}}}_{{\rm deterministic~constant}~c(j^d,b^d)} + \underbrace{(1-2^{-r})\sum_{i\in[n]\setminus j^d}{X_i2^{ir}}}_{{\rm random~variable}~V([n]\setminus j^d)}\nonumber\\
	&= c(j^d,b^d) + V([n]\setminus j^d),
\end{align}
where $c(j^d,b^d)$ is a deterministic constant parameterized by $j^d$ and $b^d$, while $V([n]\setminus j^d)$ is a function w.r.t. $(n-d)$ random variables $X_{[n]\setminus j^d}$. Consequently, $E{(j^d,b^d)}$ is also a function w.r.t. $(n-d)$ random variables $X_{[n]\setminus j^d}$. If $d<n$, then $V([n]\setminus j^d)$ is itself a random variable and in turn $E{(j^d,b^d)}$ is also a random variable; otherwise, if $d=n$, then $V([n]\setminus j^d)=0$ and $E{(j^d,b^d)}$ actually degenerates into a deterministic constant parameterized by $b^n$: 
\begin{align*}
	E{(j^n,b^n)} = s(b^n) = c(j^n,b^n) = (1-2^{-r})\sum_{i=1}^{n}{b_i2^{ir}}.
\end{align*}

\begin{remark}[Comparison of $E(j^d,b^d)$ with $\tau(j^d,b^d)$]
	After observing \eqref{eq:tauvar} and \eqref{eq:Evar}, the reader may find that $\tau(j^d,b^d)$ and $E(j^d,b^d)$ are very similar to each other, where 
	\begin{align*}
		\begin{cases}
			\tau(j^d,b^d) = c(j^d)-2v(j^d,b^d)\\
			E(j^d,b^d) = c(j^d,b^d)+V([n]\setminus j^d).
		\end{cases}
	\end{align*}
	However, the reader should notice the differences between them. 
	\begin{itemize}
		\item $c(j^d)$ is a special case of $c(j^d,b^d)$ when $b^d=1^d$.
		\item $v(j^d,b^d)$ is a deterministic constant, while $V([n]\setminus j^d)$ is a random variable (for $d<n$).
		\item $v(j^d,b^d)$ is the sum of $d$ terms, while $V([n]\setminus j^d)$ is the sum of $(n-d)$ terms.
		\item $\tau(j^d,b^d)$ is a deterministic constant over $(-2^{nr},2^{nr})$, while $E(j^d,b^d)$ is a random variable (for $d<n$) over $[0,2^{nr})$.
	\end{itemize}
\end{remark}

\begin{theorem}[Asymptotic Probability of Coexisting Interval]\label{thm:prob}
	For almost every $0<r<1$, we have
	\begin{align}\label{eq:prob}
		\lim_{(n-d)\to\infty}\Pr\left\{E{(j^d,b^d)}\in{\cal I}(j^d,b^d)\right\} = \left(1-|\tau(j^d,b^d)|\right)^+.
	\end{align}
\end{theorem}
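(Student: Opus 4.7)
The plan is to reduce $\{E(j^d,b^d)\in{\cal I}(j^d,b^d)\}$ to a fractional-part condition on $E(j^d,b^d)$, and then invoke \cref{thm:gm} to show that this fractional part equidistributes on $[0,1)$ as $(n-d)\to\infty$. First, \eqref{eq:frakI} gives
\begin{equation*}
{\cal I}(j^d,b^d) \;=\; \bigcup_{m\in[1:2^{nr})}\bigl\{y\in(m-1,m]:\lceil y+\tau\rceil=\lceil y\rceil\bigr\},
\end{equation*}
where $\tau\triangleq\tau(j^d,b^d)$, because each ${\cal I}_m$ is precisely the subset of $(m-1,m]$ on which a shift by $\tau$ keeps $y$ inside the same unit interval. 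Up to a null set of integer endpoints, membership in ${\cal I}(j^d,b^d)$ is therefore equivalent to $\{E\}\in(0,1-\tau]$ when $\tau\ge 0$ and to $\{E\}\in(-\tau,1)$ when $\tau\le 0$; both targets are intervals in $[0,1)$ of Lebesgue measure $(1-|\tau|)^+$.

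Second, by \eqref{eq:Evar} I write $E(j^d,b^d)\stackrel{d}{=}c+V([n]\setminus j^d)$ with $c\triangleq c(j^d,b^d)$ deterministic, and split $V([n]\setminus j^d)=V_1+V_2$, where $V_1=(1-2^{-r})\sum_{i\in[j_d]\setminus j^d}X_i 2^{ir}$ collects the free coordinates inside the frozen prefix and $V_2=(1-2^{-r})\sum_{i=j_d+1}^{n}X_i 2^{ir}$ is the post-$j_d$ tail. Factoring, $V_2=a\sum_{k=1}^{n-j_d}X_{j_d+k}\rho^{k-1}$ with $a=(1-2^{-r})2^{(j_d+1)r}\neq 0$ and $\rho=2^r\in(1,2)$ non-integral, so $V_2$ meets the hypotheses of \cref{thm:gm}, giving $\{V_2\}$ u.d.\ on $[0,1)$ as $n-j_d\to\infty$ for almost every $r$. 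Independence of $V_1$ and $V_2$ together with multiplicativity of characteristic functions yields $\mathbb{E}[e^{2\pi ik(c+V_1+V_2)}]=e^{2\pi ikc}\,\mathbb{E}[e^{2\pi ikV_1}]\,\mathbb{E}[e^{2\pi ikV_2}]\to 0$ for every integer $k\neq 0$, which by Weyl's criterion promotes to $\{E(j^d,b^d)\}\to$~u.d.\ on $[0,1)$ in distribution; combining with Step~1 gives
\begin{equation*}
\lim_{(n-d)\to\infty}\Pr\bigl\{E(j^d,b^d)\in{\cal I}(j^d,b^d)\bigr\}=(1-|\tau(j^d,b^d)|)^+.
\end{equation*}

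The main obstacle is that $V([n]\setminus j^d)$ is indexed by the subset $[n]\setminus j^d$ rather than by the consecutive integers required by \cref{thm:gm}; this is precisely what motivates the split $V_1+V_2$, which isolates a genuine geometric tail to which \cref{thm:gm} applies directly while letting the product of characteristic functions absorb the independent finite prefix $V_1$ and the constant shift $c$. A secondary subtlety is that the argument implicitly uses $n-j_d\to\infty$ along the limit, which holds cleanly when $d$ is fixed (or $j^d$ has bounded depth into $[n]$); in the fully general regime where $d=d(n)$ and $\max(j^d)$ both drift with $n$, one would instead show $|\mathbb{E}[e^{2\pi ikV}]|\to 0$ directly from the u.d.\ mod 1 property granted by \cref{prop:gp}, but the core reduction to a fractional-part equidistribution is unchanged.
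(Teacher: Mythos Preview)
Your proof is correct and follows essentially the same approach as the paper: show that $E(j^d,b^d)$ is u.d.\ mod~1 via \cref{thm:gm}, then conclude that the probability of landing in the coexisting interval equals its length $(1-|\tau|)^+$. Your treatment is in fact more careful than the paper's, which invokes \cref{thm:gm} directly on $V([n]\setminus j^d)$ without addressing the non-consecutive indexing---the very issue you resolve with the $V_1+V_2$ split and the characteristic-function/Weyl argument.
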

\begin{proof}	
	As shown by \eqref{eq:Evar}, $E(j^d,b^d)$ is the sum of a deterministic constant $c(j^d,b^d)$ and a random variable $V([n]\setminus j^d)$ (for $d<n$). Let us pay attention to $V([n]\setminus j^d)$. The cardinality of $[n]\setminus j^d$ will go to infinity as $(n-d)\to\infty$. According to \cref{thm:gm}, for almost every $0<r<1$, $V([n]\setminus j^d)$ will be u.d. mod 1 as $(n-d)\to\infty$, and in turn $E(j^d,b^d)$ will also be u.d. mod 1. In other words, if $E{(j^d,b^d)}\in(m-1,m]$, then $E(j^d,b^d)$ will be u.d. over $(m-1,m]$ as $(n-d)\to\infty$. Therefore,
	\begin{align}
		\lim_{(n-d)\to\infty}\Pr\left\{E{(j^d,b^d)}\in{\cal I}_m(j^d,b^d)\middle|E{(j^d,b^d)}\in(m-1,m]\right\}\nonumber\\ 
		= \frac{|{\cal I}_m(j^d,b^d)|}{|(m-1,m]|} \stackrel{(a)}{=} \left(1-|\tau(j^d,b^d)|\right)^+,
	\end{align}
	where $(a)$ comes from \eqref{eq:len}. After averaging over all $m\in[1:2^{nr})$, we will obtain \eqref{eq:prob} (cf. \eqref{eq:Ijdbd} for the definition of ${\cal I}(j^d,b^d)$). 
\end{proof}

\begin{remark}[A Note about the Proof of \cref{thm:prob}]
	If $(n-d)=\infty$, the cardinality $|[n]\setminus j^d|=\infty$ and for almost every $0<r<1$, $V([n]\setminus j^d)$ will be a continuous random variable u.d. mod 1, and so is $E(j^d,b^d)$. If $(n-d)<\infty$, then $V([n]\setminus j^d)$ will be a discrete random variable, and so is $E(j^d,b^d)$. Finally, if $d=n$, then $V([n]\setminus j^d)=0$ and hence $E(j^d,b^d)$ is a deterministic constant rather than a random variable.
\end{remark}

\section{Error Rate for One Unknown Ending Symbol}\label{sec:review}
Equipped with coexisting interval, we are now ready to compute the frame error rate of overlapped arithmetic codes. Due to the difficult nature of this problem, we will start with some simple cases. It was found in \cite{FangTCOM16a} that the residual symbol errors of overlapped arithmetic codes after decoding usually happen at the end of each block, so we will begin with the simplest case that {\em all but the last} symbols of each block are known at the decoder.

Let $X^n$ be the source and $Y^n$ be the side information known only at the decoder. The correlation between $X^n$ and $Y^n$ is modeled as a binary symmetric channel with crossover probability $\epsilon$. Let $Z^n=X^n\oplus Y^n$. Then the event $X_i\neq Y_i$ is equivalent to the event $Z_i=1$. On receiving the bitstream $M=\lceil{s(X^n)}\rceil$, the decoder will search through the coset ${\cal C}_M$ to find the codeword closest (in Hamming distance) to $Y^n$, which is denoted by $\hat{X}^n$, and take it as the estimate of $X^n$. The {\em frame-error-rate} (FER) after decoding is denoted by $\Pr(e)$, where $e$ denotes the event $\{\hat{X}^n\neq X^n{|Y^n}\}$. If $X^{n-1}$ is known at the decoder, then $\hat{X}^{n-1}\equiv X^{n-1}$. The conditional FER given $X^{n-1}$ known at the decoder is
\begin{align*}
	\Pr(e|X^{n-1}) = \Pr(\hat{X}_n\neq X_n{|Y_n}) < \Pr(e).
\end{align*}

\begin{theorem}[FER Given $X^{n-1}$]
	For almost every $0<r<1$, 
	\begin{align}\label{eq:Pexn1}
		\lim_{n\to\infty}{\Pr(e|X^{n-1})} = (2-2^r)\epsilon.
	\end{align}
\end{theorem}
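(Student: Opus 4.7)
The plan is to reduce the problem to a single-symbol coset-coexistence analysis and then invoke the asymptotic uniformity supplied by \cref{thm:gm}.

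Since $X^{n-1}$ is revealed to the decoder along with $M=\lceil s(X^n)\rceil$ and $Y^n$, the only ambiguity concerns $X_n$. The candidate set $\{b\in\mathbb{B}:(X^{n-1},b)\in{\cal C}_M\}$ always contains the true $X_n$; I would first characterize when it also contains the complementary bit. From $s(x^n)=(2^r-1)\sum_{i=1}^{n}x_i 2^{(n-i)r}$, flipping $x_n$ shifts $s$ by exactly $\pm(2^r-1)\in(0,1)$, so with $s_0\triangleq(2^r-1)\sum_{i=1}^{n-1}X_i 2^{(n-i)r}$ the two candidate $s$-values are $s_0$ and $s_0+(2^r-1)$. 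The coexistence event $A$ is precisely $\{s_0\in(M-1,M+1-2^r]\text{ for some integer }M\}$---a sub-interval of length $2-2^r$ inside every unit interval---which is the $d=1$ case of \cref{thm:equiv} with shift magnitude $|\tau|=2^r-1$.

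Next I would compute $\Pr(A)$. Rewriting $s_0=(2^r-1)\sum_{k=1}^{n-1}X_{n-k}2^{kr}$ exposes $s_0$ as a Bernoulli-weighted partial sum of the geometric sequence of ratio $2^r>1$; hence \cref{thm:gm} (underwritten by \cref{prop:gp}) makes $\{s_0\}$ asymptotically u.d.\ on $[0,1)$ for almost every $0<r<1$. This yields $\Pr(A)\to(1-|2^r-1|)^+=2-2^r$, which is the $d=1$ specialization of \cref{thm:prob}.

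Finally I would combine the two regimes. When $A^c$ holds, $(X^{n-1},M)$ pins $X_n$ down uniquely, so $\hat X_n=X_n$ and no error is possible. When $A$ holds, both full-length candidates share $X^{n-1}$, so the ML/MAP rule reduces to $\hat X_n=Y_n$, which errs iff the BSC flipped $X_n$---an event of probability $\epsilon$, independent of $A$ since $A$ depends only on $X^{n-1}$ whereas the noise at position $n$ depends only on $(X_n,Y_n)$. Therefore $\Pr(e\mid X^{n-1})=\Pr(A)\cdot\epsilon\to(2-2^r)\epsilon$, establishing \eqref{eq:Pexn1}. The one piece of bookkeeping that could trip up a careful reader is the discrepancy that in the shift function $\tau$ of \cref{c-coexist}---which is derived from the distribution-equivalent form $s'$---plugging in $j^1=\{n\}$ naively gives the superficially alarming shift $(1-2^{-r})2^{nr}$; however, the identity \eqref{eq:ellequal} reverses the index ordering, so the flip of $X_n$ under $s$ corresponds to $j^1=\{1\}$ under $s'$, recovering $|\tau|=2^r-1$ as expected.
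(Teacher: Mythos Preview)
Your proposal is correct and follows essentially the same approach as the paper: identify the spare codeword obtained by flipping $X_n$, reduce to the coexistence event via the shift $|\tau|=2^r-1$, and then invoke the asymptotic uniform-mod-1 distribution (the paper routes this through \cref{thm:equiv} and \cref{thm:prob}, whereas you appeal to \cref{thm:gm} directly, but the content is identical). Your explicit remark that $A$ depends only on $X^{n-1}$---hence is independent of $Z_n$---and your closing note resolving the index-reversal ambiguity between the two forms of $s(\cdot)$ are both correct and in fact make the argument slightly cleaner than the paper's own presentation, which silently uses the reversed indexing when evaluating $\tau(n,0)$.
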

\begin{proof}
Let us define $\tilde{X}^n\triangleq(X^n\oplus(0^{n-1}1))$, which is called a {\em spare} codeword of $X^n$. From \cref{lem:coexist}, we have the following findings.
\begin{itemize}
	\item If $\lceil{s(X^n)}\rceil\neq\lceil{s(\tilde{X}^n)}\rceil$, then $X^n$ and $\tilde{X}^n$ do not coexist in the same coset, so decoding will always succeed, regardless of $Y^n$.
	\item If $\lceil{s(X^n)}\rceil=\lceil{s(\tilde{X}^n)}\rceil$, then $X^n$ and $\tilde{X}^n$ coexist in the same coset, so decoding correctness purely depends on $Y_n$.
\end{itemize}
Now it is obvious that
\begin{align*}
	\Pr(\hat{X}_n\neq X_n{|Y_n}) 
	&= \Pr(\lceil{s(X^n)}\rceil=\lceil{s(\tilde{X}^n)}\rceil) \cdot \Pr(Z_n=1)\nonumber\\
	&= \epsilon \cdot \Pr(\lceil{s(X^n)}\rceil=\lceil{s(\tilde{X}^n)}\rceil).
\end{align*}
The key is how to calculate $\Pr(\lceil{s(X^n)}\rceil=\lceil{s(\tilde{X}^n)}\rceil)$. Since $X^n\oplus\tilde{X}^n=(0^{n-1}1)$, the Hamming distance between $X^n$ and $\tilde{X}^n$ is $d=1$. By \eqref{eq:equiv} of \cref{thm:equiv}, the equivalent event of $\lceil{s(X^n)}\rceil=\lceil{s(\tilde{X}^n)}\rceil$ is $s(X^n)\in{\cal I}(j^d,X_{j^d})$, where $j^d=\{n\}$ and $X_{j^d}=X_n$. That is,
\begin{align*}
	\{\lceil{s(X^n)}\rceil=\lceil{s(\tilde{X}^n)}\rceil\} \leftrightarrow \{s(X^n)\in{\cal I}(n,X_n)\},
\end{align*}
where ${\cal I}(n,X_n)={\cal I}(\{n\},X_n)$ for conciseness. In turn,
\begin{align*}
	&\Pr(s(X^n)\in{\cal I}(n,X_n)) \nonumber\\
	&\qquad= \sum_{b\in\mathbb{B}}\Pr(X_n=b)\cdot\Pr(s(X^n|X_n=b)\in{\cal I}(n,b))\nonumber\\
	&\qquad\overset{(a)}{=} \tfrac{1}{2}\sum_{b\in\mathbb{B}}\Pr(E(n,b)\in{\cal I}(n,b)),
\end{align*}
where $E(n,b)=E(\{n\},b)$ for conciseness and $(a)$ comes from \eqref{eq:E}. By \eqref{eq:prob} of \cref{thm:prob}, we have
\begin{align*}
	\lim_{n\to\infty}\Pr\left\{E{(n,b)}\in{\cal I}(n,b)\right\} 
	&= \left(1-|\tau(n,b)|\right)^+\nonumber\\
	&= 1-(1-2^{-r})2^r = 2-2^r,	
\end{align*}
where $\tau(n,b)=\tau(\{n\},b)$ for conciseness. Hence \eqref{eq:Pexn1} holds \cite{FangTCOM14}.
\end{proof}

\section{Error Rate for Two Unknown Ending Symbols}\label{sec:double}
After understanding the simplest case, now we consider a slightly more complex case: {\em All but the last two} symbols of each block are known at the decoder. It will be seen that the analysis in the second simplest case is much more difficult than that in the simplest case, and the extension is not straightforward. Let $\Pr(e|X^{n-2})$ be the conditional FER given $X^{n-2}$ at the decoder, then
\begin{align*}
	\Pr(e|X^{n-2}) = \Pr(\hat{X}_{n-1}^n \neq X_{n-1}^n{|Y_{n-1}^n}),
\end{align*}
where $X_{n-1}^n=(X_{n-1},X_n)$. Let us define $(2^2-1)=3$ spare codewords of $X^n$ as 
\begin{align*}
	\begin{cases}
		\tilde{X}^n_{01} = (X^n\oplus (0^{n-2}01))\\
		\tilde{X}^n_{10} = (X^n\oplus (0^{n-2}10))\\
		\tilde{X}^n_{11} = (X^n\oplus (0^{n-2}11)).
	\end{cases}
\end{align*}
Let ${\cal E}_{01}\triangleq\{\lceil{s(\tilde{X}^n_{01})}\rceil = \lceil{s(X^n)}\rceil\}$ and $\bar{{\cal E}}_{01}\triangleq\{\lceil{s(\tilde{X}^n_{01})}\rceil \neq \lceil{s(X^n)}\rceil\}$. In a similar way, ${\cal E}_{10}$, $\bar{{\cal E}}_{10}$, ${\cal E}_{11}$, and $\bar{{\cal E}}_{11}$ are also defined. Then
\begin{align}\label{eq:events}
	\begin{cases}
		{\cal E}_{01} \leftrightarrow \{s(X^n)\in{\cal I}(n,X_n)\}\\
		{\cal E}_{10} \leftrightarrow \{s(X^n)\in{\cal I}(n-1,X_{n-1})\}\\
		{\cal E}_{11} \leftrightarrow \{s(X^n)\in{\cal I}(\{n-1,n\},X_{n-1}^n)\}.
	\end{cases}
\end{align}

\subsection{Enumeration of Event Combinations}
There are $2^{2^2-1}=2^3=8$ event combinations in total.
\begin{enumerate}
	\item $\bar{\cal E}_{01}\wedge\bar{\cal E}_{10}\wedge\bar{\cal E}_{11}$: The decoding succeeds always.
	
	\item ${\cal E}_{01}\wedge\bar{\cal E}_{10}\wedge\bar{\cal E}_{11}$: If $Z_n=X_n\oplus Y_n=1$, \textit{i.e.}, $X_n\neq Y_n$, the decoding will fail, so the failure probability is $\epsilon$.
	
	\item $\bar{\cal E}_{01}\wedge{\cal E}_{10}\wedge\bar{\cal E}_{11}$: Just as case 2, the failure probability is $\epsilon$.
	
	\item $\bar{\cal E}_{01}\wedge\bar{\cal E}_{10}\wedge{\cal E}_{11}$: If $Z_{n-1}^n=(X_{n-1}^n\oplus Y_{n-1}^n)=1^2$, the decoding will fail; if $Z_{n-1}^n=10$ or $01$, since $|Y^n\oplus\tilde{X}_{11}^n|=|Y^n\oplus X^n|=1$, the failure probability is $0.5$. The overall failure probability is
	\begin{align*}
		\Pr(Z_{n-1}^n=1^2) + \tfrac{1}{2}\left(\Pr(Z_{n-1}^n=10)+\Pr(Z_{n-1}^n=01)\right)\nonumber\\
		=\epsilon^2+\tfrac{1}{2}(\epsilon(1-\epsilon)+(1-\epsilon)\epsilon) = \epsilon.
	\end{align*}
	
	\item ${\cal E}_{01}\wedge\bar{\cal E}_{10}\wedge{\cal E}_{11}$: If $Y_n\neq X_n$, the decoding will fail; if $Z_{n-1}^n=10$, we have $|Y^n\oplus\tilde{X}_{01}^n|=2$ and $|Y^n\oplus\tilde{X}_{11}^n|=|Y^n\oplus X^n|=1$, so the failure probability is $0.5$. The overall failure probability is
	\begin{align*}
		\Pr(Y_n\neq X_n) + \tfrac{1}{2}\Pr(Z_{n-1}^n=10)\nonumber\\
		\qquad=\epsilon + \epsilon(1-\epsilon)/2 = \epsilon(3-\epsilon)/2.
	\end{align*}
	
	\item $\bar{\cal E}_{01}\wedge{\cal E}_{10}\wedge{\cal E}_{11}$: This case is a mirror image of case 5 by swapping the roles of $(X_{n-1},Y_{n-1})$ and $(X_n,Y_n)$, so the failure probability is also $\epsilon(3-\epsilon)/2$.
	
	\item ${\cal E}_{01}\wedge{\cal E}_{10}\wedge\bar{\cal E}_{11}$: The decoding succeeds only if $X_{n-1}^n=Y_{n-1}^n$, so the failure probability is
	\begin{align*}
		1-\Pr(Z_{n-1}^n=0^2) = 1-(1-\epsilon)^2 = \epsilon(2-\epsilon).
	\end{align*}
	
	\item ${\cal E}_{01}\wedge{\cal E}_{10}\wedge{\cal E}_{11}$: The decoding succeeds only if $X_{n-1}^n=Y_{n-1}^n$, so the failure probability is also $\epsilon(2-\epsilon)$.	
\end{enumerate}
\begin{table}
	\centering
	\caption{Event Combinations and Their Failure Probabilities}
	\begin{tabular}{c||c}
		\hline	
		Error Prob.& Event Combinations\\
		\hline\hline
		$0$ &$\bar{\cal E}_{01}\wedge\bar{\cal E}_{10}\wedge\bar{\cal E}_{11}$\\
		\hline
		$\epsilon$ &${\cal E}_{01}\wedge\bar{\cal E}_{10}\wedge\bar{\cal E}_{11}$, 
		$\bar{\cal E}_{01}\wedge{\cal E}_{10}\wedge\bar{\cal E}_{11}$, 
		$\bar{\cal E}_{01}\wedge\bar{\cal E}_{10}\wedge{\cal E}_{11}$\\
		\hline
		$\epsilon(3-\epsilon)/2$ &${\cal E}_{01}\wedge\bar{\cal E}_{10}\wedge{\cal E}_{11}$,
		$\bar{\cal E}_{01}\wedge{\cal E}_{10}\wedge{\cal E}_{11}$\\
		\hline
		$\epsilon(2-\epsilon)$ &${\cal E}_{01}\wedge{\cal E}_{10}\wedge\bar{\cal E}_{11}$, 
		${\cal E}_{01}\wedge{\cal E}_{10}\wedge{\cal E}_{11}$\\
		\hline
	\end{tabular}
	\label{tab:error}
\end{table}
We summarize eight event combinations and their failure probabilities in \cref{tab:error}. The failure probabilities in case 7 and case 8 are the same, so they can be merged into
\begin{align}
	({\cal E}_{01}\wedge{\cal E}_{10}\wedge{\cal E}_{11}) \vee ({\cal E}_{01}\wedge{\cal E}_{10}\wedge\bar{\cal E}_{11}) = {\cal E}_{01}\wedge{\cal E}_{10}.
\end{align}
Finally, we will obtain
\begin{align}\label{eq:exn2}
	&\Pr(e|X^{n-2}) = 
	\epsilon(2-\epsilon)\Pr({\cal E}_{01}\wedge{\cal E}_{10})\,+\nonumber\\
	&\epsilon\left(\Pr({\cal E}_{01}\wedge\bar{\cal E}_{10}\wedge\bar{\cal E}_{11}) + 
	\Pr(\bar{\cal E}_{01}\wedge{\cal E}_{10}\wedge\bar{\cal E}_{11}) + 
	\Pr(\bar{\cal E}_{01}\wedge\bar{\cal E}_{10}\wedge{\cal E}_{11})\right)\nonumber\\  
	&+\tfrac{\epsilon(3-\epsilon)}{2}\left(\Pr({\cal E}_{01}\wedge\bar{\cal E}_{10}\wedge{\cal E}_{11})  
	+ \Pr(\bar{\cal E}_{01}\wedge{\cal E}_{10}\wedge{\cal E}_{11})\right).
\end{align}
The problem now is how to determine the occurrence probabilities of event combinations. To achieve this goal, we should consider different $b^2$. Denote the FER given $X^{n-2}$ at the decoder and $X_{n-1}^n=b^2$ by
\begin{align*}
	\Pr(e|b^2)\triangleq\Pr(e|X^{n-2},X_{n-1}^n=b^2). 
\end{align*}
Then $\Pr(e|X^{n-2}\}$ can be expanded to
\begin{align*}
	\Pr(e|X^{n-2}\} = \tfrac{1}{4}\sum_{b^2\in\mathbb{B}^2}\Pr(e|b^2).
\end{align*}
Since $\tau(j^d,b^d\oplus1^d)=-\tau(j^d,b^d)$, we have $\Pr(e|0^2)=\Pr(e|1^2)$ and $\Pr(e|01)=\Pr(e|10)$. Therefore,
\begin{align*}
	\Pr(e|X^{n-2}) = \tfrac{1}{2}\left(\Pr(e|0^2)+\Pr(e|01)\right).
\end{align*}

\subsection{Conditional FER Given $b^2=0^2$}
\begin{figure}[!t]
	\centering
	\subfigure[$1<2q\leq(\sqrt{5}-1)$]{
		\includegraphics[width=\linewidth]{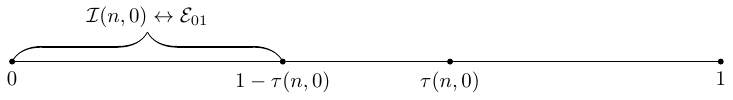}%
		\label{subfig:golden00}
	}\\
	\subfigure[$(\sqrt{5}-1)<2q\leq\sqrt{2}$]{
		\includegraphics[width=\linewidth]{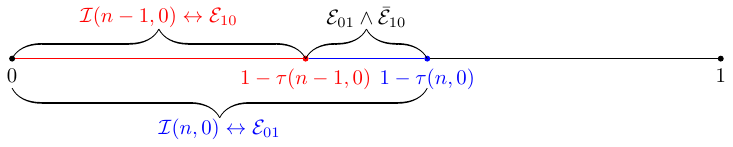}%
		\label{subfig:sqrt200}
	}\\
	\subfigure[$\sqrt{2}<2q<2$]{
		\includegraphics[width=\linewidth]{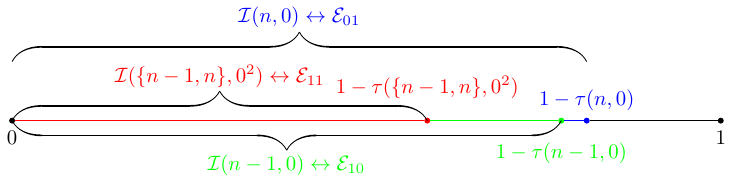}%
		\label{subfig:large00}	
	}
	\caption{Examples of $\tau(j^d,b^d)$ and ${\cal I}(j^d,b^d)$ for $b^2=0^2$.}
	\label{fig:err00}
\end{figure}

\begin{lemma}[$b^2=0^2$]
	Given $X^{n-2}$ known at the decoder and $X_{n-1}^n=0^2$, for almost every $0<r<1$, we have
	\begin{align}\label{eq:e00}
		\lim_{n\to\infty}\Pr(e|0^2) = \epsilon(1-\tau(n,0)) + \epsilon(1-\epsilon)(1-\tau(n-1,0))^+.
	\end{align}
\end{lemma}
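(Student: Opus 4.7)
The plan is to specialize the general formula \eqref{eq:exn2} to the conditioning event $X_{n-1}^n=0^2$, exploit a nested-containment structure among the three relevant coexisting-interval sets to collapse the eight-way case analysis in \cref{tab:error} to essentially two surviving cases, and then close by invoking \cref{thm:prob} on the surviving marginals. First I would compute the three shifts. Because $X_{n-1}^n=0^2$ puts $b^d=0^d$, every factor $(1-2b_{d'})$ in \eqref{eq:tau} equals $+1$, giving $\tau(n,0)=(1-2^{-r})2^r$, $\tau(n-1,0)=(1-2^{-r})2^{2r}$, and $\tau(\{n-1,n\},0^2)=\tau(n,0)+\tau(n-1,0)$, all strictly positive and strictly ordered, $0<\tau(n,0)<\tau(n-1,0)<\tau(\{n-1,n\},0^2)$. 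By the concrete form \eqref{eq:frakI}, each of these three coexisting intervals within any unit cell $(m-1,m]$ takes the form $(m-1,m-\tau]$ (or is $\emptyset$ once $\tau\geq 1$), so the sets are nested, ${\cal I}_m(\{n-1,n\},0^2)\subseteq{\cal I}_m(n-1,0)\subseteq{\cal I}_m(n,0)\subseteq(m-1,m]$. Via \eqref{eq:events} this yields the event implications ${\cal E}_{11}\Rightarrow{\cal E}_{10}\Rightarrow{\cal E}_{01}$, uniformly in $r\in(0,1)$, with the implications being vacuously true whenever the inner coexisting interval is empty.

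Next I would use these implications to annihilate five of the eight combinations in \cref{tab:error}: the three combinations featuring $\bar{\cal E}_{01}$ together with ${\cal E}_{10}$ or ${\cal E}_{11}$, and also $\{{\cal E}_{01},\bar{\cal E}_{10},{\cal E}_{11}\}$, all have probability zero. Consequently the $\epsilon(3-\epsilon)/2$ coefficient in \eqref{eq:exn2} multiplies zero, the $\epsilon$ coefficient reduces to the single probability $\Pr({\cal E}_{01}\wedge\bar{\cal E}_{10}\wedge\bar{\cal E}_{11})=\Pr({\cal E}_{01})-\Pr({\cal E}_{10})$, and $\Pr({\cal E}_{01}\wedge{\cal E}_{10})$ simplifies to $\Pr({\cal E}_{10})$. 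Writing $p_1\triangleq\Pr({\cal E}_{01}\mid X_{n-1}^n=0^2)$ and $p_2\triangleq\Pr({\cal E}_{10}\mid X_{n-1}^n=0^2)$, each is the probability that $E(j^d,b^d)$ falls in ${\cal I}(j^d,b^d)$ for $d=1$, so \cref{thm:prob} applies; the residual sum has $n-2\to\infty$ free i.i.d.\ summands, u.d.\ mod 1 for almost every $r$ by \cref{prop:gp}, yielding $p_1=(1-\tau(n,0))^+=1-\tau(n,0)$ (the $(\cdot)^+$ is vacuous since $\tau(n,0)=2^r-1<1$ for $r<1$) and $p_2=(1-\tau(n-1,0))^+$.

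Finally, substituting into the collapsed identity gives
\begin{align*}
\Pr(e\mid 0^2) = \epsilon(2-\epsilon)p_2 + \epsilon(p_1-p_2) = \epsilon p_1 + \epsilon(1-\epsilon)p_2,
\end{align*}
which is exactly \eqref{eq:e00}. The main subtle point is accommodating the threshold $\tau(n-1,0)=1$, which occurs near $r=\log_2((1+\sqrt{5})/2)$: for $\tau(n-1,0)<1$ the three coexisting intervals genuinely nest, while for $\tau(n-1,0)\geq 1$ the inner two collapse to $\emptyset$ so that $p_2=0$ and the implications ${\cal E}_{11}\Rightarrow{\cal E}_{10}$ remain vacuously valid; both regimes are absorbed cleanly by the $(\cdot)^+$ in the final formula. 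A more mundane technical point is to verify that conditioning on two bits rather than one does not obstruct the asymptotic uniformity required by \cref{thm:prob}, which is immediate since the theorem only requires $n-d\to\infty$.
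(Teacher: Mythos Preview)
Your proposal is correct and follows essentially the same approach as the paper: compute the three positive shifts, observe the nested containment ${\cal I}_m(\{n-1,n\},0^2)\subseteq{\cal I}_m(n-1,0)\subseteq{\cal I}_m(n,0)$ and hence ${\cal E}_{11}\Rightarrow{\cal E}_{10}\Rightarrow{\cal E}_{01}$, collapse \eqref{eq:exn2} to $\epsilon\Pr({\cal E}_{01})+\epsilon(1-\epsilon)\Pr({\cal E}_{10})$, and invoke \cref{thm:prob}. Your explicit handling of the $\tau(n-1,0)\gtrless 1$ threshold and the remark that conditioning on two bits still leaves $n-2\to\infty$ free summands are both on point and match the paper's treatment.
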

\begin{proof}
Given $X_{n-1}^n=0^2$, \eqref{eq:events} becomes
\begin{align*}
	\begin{cases}
		{\cal E}_{01} \leftrightarrow \{E(n,0)\in{\cal I}(n,0)\}\\
		{\cal E}_{10} \leftrightarrow \{E(n-1,0)\in{\cal I}(n-1,0)\}\\
		{\cal E}_{11} \leftrightarrow \{E(\{n-1,n\},0^2)\in{\cal I}(\{n-1,n\},0^2)\},
	\end{cases}
\end{align*}
where $E{(j^d,b^d)}$ is defined by \eqref{eq:E}. Let $q\triangleq 2^{-r}$. From \eqref{eq:tau}, we have
\begin{align}
	\begin{cases}
		\tau(n,0) = (1-q)q^{-1} = 2^r-1 > 0\\
		\tau(n-1,0) = (1-q)q^{-2} = 2^{2r}-2^r  > 0\\
		\tau(\{n-1,n\},0^2) = \tau(n,0)+\tau(n-1,0)>0.
	\end{cases}
	\label{eq1}
\end{align}
For $0<r<1$, we have $1<2q<2$, which is followed by
\begin{align*}
	\begin{cases}
		0<\tau(n,0)<1\\
		0<\tau(n-1,0)<2\\
		0<\tau(\{n,n-1\},0^2)<3
	\end{cases}
\end{align*}
and
\begin{align*}
	0 < \tau(n,0) < \tau(n-1,0) < \tau(\{n,n-1\},0^2) < 3.
\end{align*}
Hence,
\begin{align*}
	{\cal I}(\{n,n-1\},0^2) \subset {\cal I}(n-1,0) \subset {\cal I}(n,0) \neq \emptyset,
\end{align*}
as shown by \cref{fig:err00}. Therefore, 
\begin{align*}
	{\cal E}_{11}\wedge{\cal E}_{10}\wedge{\cal E}_{01} = {\cal E}_{11}\wedge{\cal E}_{10} = {\cal E}_{11},
\end{align*}
which is followed by
\begin{align*}
	\begin{cases}
		{\cal E}_{01}\wedge{\cal E}_{10} = {\cal E}_{10}\\
		{\cal E}_{01}\wedge\bar{\cal E}_{10}\wedge\bar{\cal E}_{11} = 
		{\cal E}_{01}\wedge\bar{\cal E}_{10}\\
		\bar{\cal E}_{01}\wedge{\cal E}_{10}\wedge\bar{\cal E}_{11} = 
		\bar{\cal E}_{01}\wedge\bar{\cal E}_{10}\wedge{\cal E}_{11} = \emptyset\\
		{\cal E}_{01}\wedge\bar{\cal E}_{10}\wedge{\cal E}_{11} = 
		\bar{\cal E}_{01}\wedge{\cal E}_{10}\wedge{\cal E}_{11} = \emptyset.
	\end{cases}
\end{align*}
Consequently, given $X_{n-1}^n=0^2$, \eqref{eq:exn2} is reduced to
\begin{align*}
	\Pr(e|0^2) = \epsilon\Pr({\cal E}_{01}\wedge\bar{\cal E}_{10}) + \epsilon(2-\epsilon)\Pr({\cal E}_{10}).
\end{align*}
Since ${\cal I}(n-1,0) \subset {\cal I}(n,0)$, we have
\begin{align*}
	\Pr({\cal E}_{01}\wedge\bar{\cal E}_{10}) = \Pr({\cal E}_{01})  - \Pr({\cal E}_{10}) 
\end{align*}
and hence
\begin{align*}
	\Pr(e|0^2) = \epsilon\Pr({\cal E}_{01}) + \epsilon(1-\epsilon)\Pr({\cal E}_{10}).
\end{align*}
According to \cref{thm:prob}, as $n\to\infty$,
\begin{align*}
	\begin{cases}
		\Pr({\cal E}_{10}) = (1-\tau(n-1,0))^+\\
		\Pr({\cal E}_{01}) = (1-\tau(n,0))^+ = 1-\tau(n,0).
	\end{cases}
\end{align*}
Finally, we obtain \eqref{eq:e00}.
\end{proof}

\begin{remark}[Discussion on $b^2=0^2$]
It is easy to know that ${\cal I}(n,0)$ is always non-empty, but ${\cal I}(n-1,0)$ and ${\cal I}(\{n,n-1\},0^2)$ may or may not be empty, depending on the value of $q$. By solving $\tau(n-1,0)=1$, we get $2q=(\sqrt{5}-1)$; by solving $\tau(\{n,n-1\},0^2)=1$, we get $2q=\sqrt{2}$. Hence we divide the interval $2q\in(1,2)$ into three sub-intervals.
\begin{itemize}
	\item{$1<2q\leq(\sqrt{5}-1)$}: It is easy to get 
	\begin{align*}
		1\leq\tau(n-1,0)<\tau(\{n,n-1\},0^2)
	\end{align*}
	and hence 
	\begin{align*}
		{\cal I}(n-1,0)={\cal I}(\{n,n-1\},0^2)=\emptyset,
	\end{align*}
	which is followed by $\Pr({\cal E}_{10})=\Pr({\cal E}_{11})=0$. Then \eqref{eq:e00} is reduced to $\Pr(e|0^2) = \epsilon(1-\tau(n,0))$. This case is illustrated by \cref{subfig:golden00}.
	
	\item{$(\sqrt{5}-1)<2q\leq\sqrt{2}$}: It is easy to get 
	\begin{align*}
		\tau(n-1,0)<1\leq\tau(\{n,n-1\},0^2),
	\end{align*}
	so ${\cal I}(n-1,0)\neq\emptyset$ and ${\cal I}(\{n,n-1\},0^2)=\emptyset$. Then \eqref{eq:e00} can be reduced to $\Pr(e|0^2) = \epsilon(1-\tau(n,0)) + \epsilon(1-\epsilon)(1-\tau(n-1,0))$. A special example of this case when $2q=\sqrt{2}$ is shown in \cref{subfig:sqrt200}, where $\tau(n,0)+\tau(n-1,0)=\tau(\{n,n-1\},0^2)=1$. 
	
	\item{$\sqrt{2}<2q<2$}: It is easy to get 
	\begin{align*}
		\tau(n-1,0)<\tau(\{n,n-1\},0^2)<1,
	\end{align*}
	so ${\cal I}(n-1,0)\neq\emptyset$ and ${\cal I}(\{n,n-1\},0^2)\neq\emptyset$. Then \eqref{eq:e00} is reduced to $\Pr(e|0^2) = \epsilon(1-\tau(n,0)) + \epsilon(1-\epsilon)(1-\tau(n-1,0))$. This case is illustrated by \cref{subfig:large00}.
\end{itemize}
\end{remark}

\subsection{Conditional FER Given $b^2=01$}
\begin{figure}
	\centering
	\subfigure[$1<2q\leq(\sqrt{5}-1)$]{
		\includegraphics[width=\linewidth]{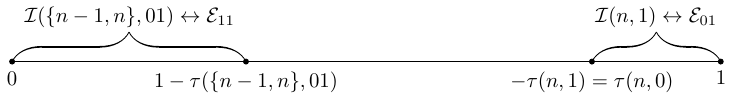}%
		\label{subfig:small01}
	}\\
	\subfigure[$2q=(\sqrt{5}-1)$]{
		\includegraphics[width=\linewidth]{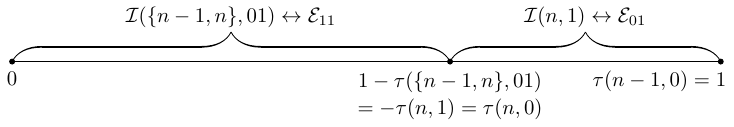}%
		\label{subfig:golden01}
	}\\
	\subfigure[$(\sqrt{5}-1)<2q\leq\sqrt{2}$]{
		\includegraphics[width=\linewidth]{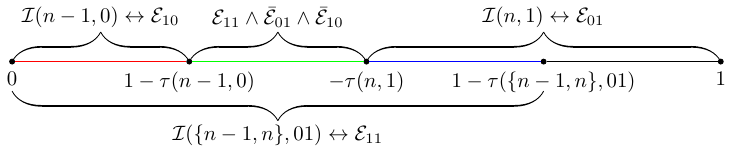}%
		\label{subfig:medium01}
	}\\
	\subfigure[$2q=\sqrt{2}$]{
		\includegraphics[width=\linewidth]{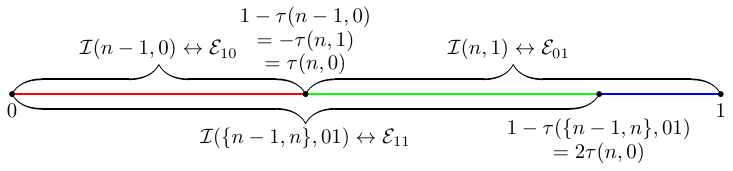}%
		\label{subfig:sqrt201}
	}\\
	\subfigure[$\sqrt{2}<2q<2$]{
		\includegraphics[width=\linewidth]{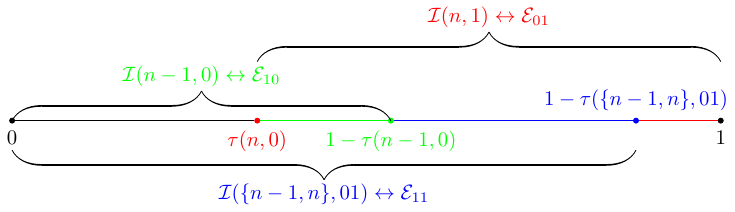}%
		\label{subfig:large01}	
	}
	\caption{Examples of $\tau(j^d,b^d)$ and ${\cal I}(j^d,b^d)$ for $b^2=01$.}
	\label{fig:err01}
\end{figure}

\begin{lemma}[$b^2=01$]
	Given $X^{n-2}$ known at the decoder and $X_{n-1}^n=01$, for almost every $0<r<1$, we have
	\begin{align}\label{eq:e01}
		\lim_{n\to\infty}\Pr(e|01) = \epsilon(2-\tau(n-1,0)) - \epsilon^2(1-\tau(n-1,0))^+.
	\end{align}
\end{lemma}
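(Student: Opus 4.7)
The plan is to parallel the $b^2 = 0^2$ analysis but with careful attention to interval geometry: here $X_n = 1$ produces a \emph{negative} shift, so one of the three coexisting intervals now sits on the opposite side of each unit interval. I would specialize the events of \eqref{eq:events} to $X_{n-1}^n = 01$ and compute the shifts from \eqref{eq:tauvar} as $\tau(n, 1) = -(2^r - 1) < 0$, $\tau(n-1, 0) = 2^{2r} - 2^r > 0$, and $\tau(\{n-1, n\}, 01) = \tau(n-1, 0) + \tau(n, 1) = (2^r - 1)^2 > 0$. By \eqref{eq:frakI} this places $\mathcal{I}_m(n, 1)$ on the right of each unit interval $(m-1, m]$ and both $\mathcal{I}_m(n-1, 0)$ and $\mathcal{I}_m(\{n-1, n\}, 01)$ on the left. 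The two geometric facts that drive the proof are: (i) the elementary inequality $(2^r - 1)^2 < 2^{2r} - 2^r$ (equivalent to $2^r > 1$), which yields the inclusion $\mathcal{I}_m(n-1, 0) \subseteq \mathcal{I}_m(\{n-1, n\}, 01)$ and hence the implication $\mathcal{E}_{10} \Rightarrow \mathcal{E}_{11}$; and (ii) a direct length computation showing $|\mathcal{I}_m(n, 1) \cap \mathcal{I}_m(\{n-1, n\}, 01)| = (2 - 2^r) - (2^r - 1)^2 = 1 - \tau(n-1, 0)$ whenever positive, which exactly matches $|\mathcal{I}_m(n-1, 0)| = (1 - \tau(n-1, 0))^+$.

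Invoking \cref{thm:prob}, extended in the obvious way from a single coexisting interval to the finite unions of sub-intervals required here (the u.d.\ mod 1 argument in its proof applies verbatim), I would obtain the asymptotic probabilities $\Pr(\mathcal{E}_{01}) \to 2 - 2^r$ and $\Pr(\mathcal{E}_{11}) \to 1 - (2^r - 1)^2$, together with the crucial numerical identity $\Pr(\mathcal{E}_{10}) \to (1 - \tau(n-1, 0))^+ = \Pr(\mathcal{E}_{01} \wedge \mathcal{E}_{11})$. Substituting into \eqref{eq:exn2}, the implication $\mathcal{E}_{10} \Rightarrow \mathcal{E}_{11}$ kills the term $\Pr(\bar{\mathcal{E}}_{01} \wedge \mathcal{E}_{10} \wedge \bar{\mathcal{E}}_{11})$ and lets every surviving joint probability be expressed in terms of $\Pr(\mathcal{E}_{01})$, $\Pr(\mathcal{E}_{11})$, $\Pr(\mathcal{E}_{10})$, $\Pr(\mathcal{E}_{01} \wedge \mathcal{E}_{11})$, and $P_A \triangleq \Pr(\mathcal{E}_{01} \wedge \mathcal{E}_{10})$. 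Careful bookkeeping shows that the net coefficient of $P_A$ is $\epsilon(2-\epsilon) + \epsilon - \epsilon(3-\epsilon) = 0$, and what remains collapses to $\epsilon[\Pr(\mathcal{E}_{01}) + \Pr(\mathcal{E}_{11})] - \epsilon^2 \Pr(\mathcal{E}_{01} \wedge \mathcal{E}_{11})$; the identity $\Pr(\mathcal{E}_{01}) + \Pr(\mathcal{E}_{11}) = 2 + 2^r - 2^{2r} = 2 - \tau(n-1, 0)$ then delivers \eqref{eq:e01}.

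The main obstacle is the case-by-case geometry exhibited in \cref{fig:err01}: as $r$ varies, $\mathcal{I}_m(n-1, 0)$ may be empty (when $2q \leq \sqrt{5} - 1$) and the intersection $\mathcal{I}_m(n, 1) \cap \mathcal{I}_m(n-1, 0)$ may or may not be empty (overlap requires $2q > \sqrt{2}$), yielding five visually distinct pictures. The reason the closed-form answer is nevertheless uniform across all sub-figures is precisely the two structural facts above: the $P_A$ coefficient always cancels, and the $(\cdot)^+$ clipping on $1 - \tau(n-1, 0)$ lets the second term vanish smoothly as $\mathcal{I}_m(n-1, 0)$ disappears. I would prove the cancellation symbolically once and then spot-check each sub-figure as a sanity check on the geometric identities.
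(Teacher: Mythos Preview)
Your proof is correct and is in fact cleaner than the paper's. Both proofs begin identically: compute the three shifts, observe that $\tau(\{n-1,n\},01)=(2^r-1)^2<\tau(n-1,0)$ so that $\mathcal{I}_m(n-1,0)\subseteq\mathcal{I}_m(\{n-1,n\},01)$ and hence $\mathcal{E}_{10}\Rightarrow\mathcal{E}_{11}$, and reduce \eqref{eq:exn2} accordingly. From that point the paper splits into three geometric regimes ($2q$ relative to $\sqrt{5}-1$ and $\sqrt{2}$), evaluates each joint probability from the picture in each regime, and simplifies each to the same closed form. You instead isolate one additional identity the paper never makes explicit, namely
\[
|\mathcal{I}_m(n,1)\cap\mathcal{I}_m(\{n-1,n\},01)|
=\bigl(1-\tau(n,0)-\tau(\{n-1,n\},01)\bigr)^+
=\bigl(1-\tau(n-1,0)\bigr)^+
=|\mathcal{I}_m(n-1,0)|,
\]
so that $\Pr(\mathcal{E}_{01}\wedge\mathcal{E}_{11})=\Pr(\mathcal{E}_{10})$ asymptotically; together with your observation that the coefficient of $P_A=\Pr(\mathcal{E}_{01}\wedge\mathcal{E}_{10})$ cancels algebraically, this lets you do the computation once, uniformly in $r$, and explains \emph{why} the paper's three cases all collapse to the same formula. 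What your approach buys is a case-free argument; what the paper's case analysis buys is a more concrete picture of the interval geometry at each rate (\cref{fig:err01}), which is useful elsewhere in the monograph but not needed for this lemma.
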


\begin{proof}
It is easy to know
\begin{align*}
	\begin{cases}
		\tau(n,1) = -\tau(n,0) = -(1-q)q^{-1}<0\\
		\tau(\{n-1,n\},01) = \tau(n-1,0)-\tau(n,0)>0,
	\end{cases}
	\label{eq3}
\end{align*}
where $\tau(n-1,0)$ and $\tau(n,0)$ are given by \eqref{eq1}. For $1<2q<2$, 
\begin{align*}
	0<\tau(\{n-1,n\},01)<\tau(n,0)<\min(1,\tau(n-1,0)).
\end{align*}
With the help of \cref{fig:err01}, we have ${\cal I}(n,1)\neq\emptyset$ and
\begin{align*}
	{\cal I}(n-1,0)\subset{\cal I}(\{n-1,n\},01)\neq\emptyset,
\end{align*}
which is followed by ${\cal E}_{10}\wedge{\cal E}_{11}={\cal E}_{10}$, $\bar{\cal E}_{10}\wedge\bar{\cal E}_{11}=\bar{\cal E}_{11}$, and ${\cal E}_{10}\wedge\bar{\cal E}_{11}=0$. Therefore, some event combinations can be simplified to
\begin{align*}
	\begin{cases}
		{\cal E}_{01}\wedge\bar{\cal E}_{10}\wedge\bar{\cal E}_{11} = {\cal E}_{01}\wedge\bar{\cal E}_{11}\\
		\bar{\cal E}_{01}\wedge{\cal E}_{10}\wedge{\cal E}_{11} = \bar{\cal E}_{01}\wedge{\cal E}_{10}\\
		\bar{\cal E}_{01}\wedge{\cal E}_{10}\wedge\bar{\cal E}_{11} = 0,
	\end{cases}
\end{align*}
and \eqref{eq:exn2} is reduced to
\begin{align}
	\Pr(e|01) 
	= \epsilon(2-\epsilon)\Pr({\cal E}_{01}\wedge{\cal E}_{10}) + \nonumber\\
	\epsilon\left(\Pr({\cal E}_{01}\wedge\bar{\cal E}_{11}) + 
	\Pr(\bar{\cal E}_{01}\wedge{\cal E}_{11}\wedge\bar{\cal E}_{10})\right) + \nonumber\\  
	\tfrac{\epsilon(3-\epsilon)}{2}\left(\Pr({\cal E}_{01}\wedge{\cal E}_{11}\wedge\bar{\cal E}_{10}) + 
	\Pr(\bar{\cal E}_{01}\wedge{\cal E}_{10})\right).
\end{align}
According to
\begin{align*}
	\begin{cases}
		(1-\tau(n-1,0))\leq\tau(n,0), &2q\leq\sqrt{2}\\
		(1-\tau(n-1,0))>\tau(n,0), 	  &2q>\sqrt{2}
	\end{cases}
\end{align*}
and
\begin{align*}
	\begin{cases}
		\tau(n-1,0)\geq 1, 	&2q\leq (\sqrt{5}-1)\\
		\tau(n-1,0)<1, 		&2q>(\sqrt{5}-1),
	\end{cases}
\end{align*}
event combinations can be further simplified depending on $q$.
\begin{itemize}
	\item{$1<2q\leq(\sqrt{5}-1)$}: It is easy to know $1\leq\tau(n-1,0)$ and
	\begin{align*}
		1-\tau(\{n-1,n\},01) = 1-\tau(n-1,0)+\tau(n,0) \leq \tau(n,0).
	\end{align*}
	Hence, we have (cf. \cref{subfig:small01})
	\begin{align*}
		{\cal I}(n-1,0) = {\cal I}(n,1)\cap{\cal I}(\{n-1,n\},01)=\emptyset.
	\end{align*}
	Equivalently, ${\cal E}_{10} = {\cal E}_{01}\wedge{\cal E}_{11}=0$. Now \eqref{eq3} is reduced to
	\begin{align*}
		\Pr(e|01) 
		&= \epsilon\left(\Pr({\cal E}_{01}\wedge\bar{\cal E}_{11}) + 
		\Pr(\bar{\cal E}_{01}\wedge{\cal E}_{11})\right)\\
		&= \epsilon\left(\Pr({\cal E}_{01}) + \Pr({\cal E}_{11})\right).
	\end{align*}
	According to \cref{thm:prob}, we have
	\begin{align*}
		\lim_{n\to\infty}\left(\Pr({\cal E}_{01}) + \Pr({\cal E}_{11})\right) = (2-\tau(n-1,0)).
	\end{align*}
	Finally, we obtain
	\begin{align}\label{eq:pre01small}
		\lim_{n\to\infty}\Pr(e|01) = \epsilon(2-\tau(n-1,0)).
	\end{align}	
	Especially, when $2q=(\sqrt{5}-1)$, we have $\tau(n-1,0)=1$ and
	\begin{align*}
		{\cal I}(n,1)\cup{\cal I}(\{n-1,n\},01)={\cal U}\triangleq(0,2^{nr}-1], 
	\end{align*}
	so $\bar{\cal E}_{01}\wedge\bar{\cal E}_{11}=\emptyset$ and $\Pr(e|01)=\epsilon$ (cf. \cref{subfig:golden01}). 
	
	\item{$(\sqrt{5}-1)<2q\leq\sqrt{2}$}: As shown by \cref{subfig:medium01}, we have	
	\begin{align*}
		1-\tau(n-1,0)\leq \tau(n,0) < 1-\tau(\{n-1,n\},01),
	\end{align*}
	which is followed by 
	\begin{align*}
		\begin{cases}
			\emptyset\neq{\cal I}(n-1,0)\subset{\cal I}(\{n-1,n\},01)\\
			{\cal I}(n,1)\cap{\cal I}(\{n-1,n\},01)\neq\emptyset\\
			{\cal I}(n,1)\cap{\cal I}(n-1,0)=\emptyset.
		\end{cases}
	\end{align*}
	Therefore, ${\cal E}_{01}\wedge{\cal E}_{10}=\bar{\cal E}_{01}\wedge\bar{\cal E}_{11}=0$. Now \eqref{eq3} is reduced to
	\begin{align*}
		\Pr(e|01) =\; 
		&\epsilon\left(\Pr(\bar{\cal E}_{11}) + 
		\Pr(\bar{\cal E}_{01}\wedge\bar{\cal E}_{10})\right) + \\  
		&\tfrac{\epsilon(3-\epsilon)}{2}
		\left(\Pr({\cal E}_{01}\wedge{\cal E}_{11})  
		+ \Pr({\cal E}_{10})\right).
	\end{align*}
	According to \cref{thm:prob}, we have
	\begin{align*}
		\lim_{n\to\infty}\Pr(\bar{\cal E}_{11}) 
		&= \tau(\{n-1,n\},01)\nonumber\\
		&= \tau(n-1,0)-\tau(n,0),
	\end{align*}
	\begin{align*}
		\lim_{n\to\infty}\Pr(\bar{\cal E}_{01}\wedge\bar{\cal E}_{10})
		&= \tau(n,0) - (1-\tau(n-1,0))\nonumber\\
		&= \tau(n-1,0)+\tau(n,0) - 1,
	\end{align*}
	and
	\begin{align*}
		\lim_{n\to\infty}\Pr({\cal E}_{10}) 
		&= \lim_{n\to\infty}\Pr({\cal E}_{01}\wedge{\cal E}_{11})\\
		&= 1-\tau(\{n-1,n\},01) - \tau(n,0)\\
		&= 1-\tau(n-1,0).
	\end{align*}
	Consequently, we obtain 	
	\begin{align}
		\lim_{n\to\infty}\Pr(e|01) 
		&= \epsilon(2\tau(n-1,0)-1) + \epsilon(3-\epsilon)(1-\tau(n-1,0))\nonumber\\
		&= (2-\tau(n-1,0))\epsilon - (1-\tau(n-1,0))\epsilon^2.\label{eq4}
	\end{align}

	Especially, when $2q=\sqrt{2}$, we have $(1-\tau(n-1,0))=\tau(n,0)$ and the universal set $(0,2^{nr}-1]$ is fully covered by ${\cal I}(n,1)$ and ${\cal I}(n-1,0)$, as shown by \cref{subfig:sqrt201}. Hence $\bar{\cal E}_{01}\wedge\bar{\cal E}_{10}=0$ and
	\begin{align*}
		\Pr(e|01) = \epsilon\Pr(\bar{\cal E}_{11}) +  
		\tfrac{\epsilon(3-\epsilon)}{2}\left(\Pr({\cal E}_{01}\wedge{\cal E}_{11}) + 
		\Pr({\cal E}_{10})\right).
	\end{align*}	
	By observing \cref{subfig:sqrt201}, we can find 
	\begin{align*}
		\Pr({\cal E}_{01}\wedge{\cal E}_{11}) + \Pr({\cal E}_{10}) = \Pr({\cal E}_{11}). 
	\end{align*}
	Therefore,
	\begin{align*}
		\Pr(e|01) 
		&= \epsilon\Pr(\bar{\cal E}_{11}) + \tfrac{\epsilon(3-\epsilon)}{2}\Pr({\cal E}_{11})\\
		&= \epsilon + \tfrac{\epsilon(1-\epsilon)}{2}\Pr({\cal E}_{11}).
	\end{align*}
	According to \cref{thm:prob}, we have
	\begin{align*}
		\lim_{n\to\infty}\Pr({\cal E}_{11}) = 2(1-\tau(n-1,0)) = 2\tau(n,0).
	\end{align*}
	
	\item{$\sqrt{2}<2q<2$}: It is easy to know $(1-\tau(n-1,0))>\tau(n,0)$. Hence, we have ${\cal I}(n,1)\cap{\cal I}(n-1,0)\neq\emptyset$, $\bar{{\cal E}}_{01}\wedge\bar{{\cal E}}_{10}=0$, and
	\begin{align*}
		\Pr(e|01) =\; 
		&\epsilon(2-\epsilon)\Pr({\cal E}_{01}\wedge{\cal E}_{10}) + 
		\epsilon\Pr(\bar{\cal E}_{11})\;+\\  
		&\tfrac{\epsilon(3-\epsilon)}{2}
		\left(\Pr({\cal E}_{01}\wedge\bar{\cal E}_{10}\wedge{\cal E}_{11})  
		+ \Pr(\bar{\cal E}_{01}\wedge{\cal E}_{10})\right).
	\end{align*}	
	By observing \cref{subfig:large01}, as $n\to\infty$,
	\begin{align*}
		\begin{cases}
			\Pr(\bar{\cal E}_{01}\wedge{\cal E}_{10}\} = \Pr({\cal E}_{01}\wedge\bar{\cal E}_{10}\wedge{\cal E}_{11}\} \to \tau(n,0)\\
			\Pr({\cal E}_{01}\wedge{\cal E}_{10}) \to 1-\tau(n-1,0) - \tau(n,0)\\
			\Pr(\bar{\cal E}_{11}) \to \tau(n-1,0)-\tau(n,0).
		\end{cases}
	\end{align*}
	Therefore, 
	\begin{align*}
		\lim_{n\to\infty}\Pr(e|01) =\; 
		&\epsilon(\tau(n-1,0)-\tau(n,0)) + \epsilon(3-\epsilon)\tau(n,0)\;+\nonumber\\ 
		&\epsilon(2-\epsilon)(1-\tau(n-1,0)-\tau(n,0)),
	\end{align*}
	which can be reduced to	
	\begin{align}
		\lim_{n\to\infty}\Pr(e|01) = (2-\tau(n-1,0))\epsilon - (1-\tau(n-1,0))\epsilon^2.\label{eq5}
	\end{align}	
\end{itemize}
We can merge \eqref{eq:pre01small}, \eqref{eq4}, and \eqref{eq5} into \eqref{eq:e01}.
\end{proof}

\subsection{FER for Two Unknown Ending Symbols}
\begin{theorem}[FER for Two Unknown Ending Symbols]
	Let $\alpha\triangleq(1+2^r-2^{2r})^+$. For almost every $0<r<1$, we have
	\begin{align}\label{eq:ex2}
		\lim_{n\to\infty}\Pr(e|X^{n-2}) = (4-2^{2r}+\alpha)\epsilon/2 - \alpha\epsilon^2.
	\end{align}	
\end{theorem}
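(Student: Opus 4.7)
The plan is to combine the two preceding lemmas, which already compute $\lim_{n\to\infty}\Pr(e|0^2)$ and $\lim_{n\to\infty}\Pr(e|01)$, via the symmetry reduction $\Pr(e|X^{n-2}) = \tfrac{1}{2}(\Pr(e|0^2)+\Pr(e|01))$ established just above the statement. Nothing new about coexisting intervals or about probability calculations is needed — the theorem is essentially a clean-up that repackages the two lemma expressions in terms of the single parameter $\alpha=(1+2^r-2^{2r})^+$.

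First I would substitute the explicit values of the shift function:
\begin{align}
\tau(n,0) &= (1-2^{-r})2^r = 2^r-1, \nonumber\\
\tau(n-1,0) &= (1-2^{-r})2^{2r} = 2^{2r}-2^r, \nonumber
\end{align}
computed directly from \eqref{eq:tau}. Then I would rewrite $(1-\tau(n-1,0))^+ = (1+2^r-2^{2r})^+ = \alpha$, noting that this identification is automatic because $\alpha$ is defined with exactly the same positive-part cutoff. Plugging into the lemma for $b^2=0^2$ gives $\lim_{n\to\infty}\Pr(e|0^2) = \epsilon(2-2^r) + \epsilon(1-\epsilon)\alpha$, and the lemma for $b^2=01$ gives $\lim_{n\to\infty}\Pr(e|01) = \epsilon(2+2^r-2^{2r}) - \epsilon^2\alpha$.

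Next I would add these two limits and divide by two. The $\pm 2^r$ terms cancel, producing the coefficient $(4-2^{2r})$ on $\epsilon$; the remaining $\alpha$ contributions collect to give $\tfrac{1}{2}\epsilon\alpha$ at first order and $-\epsilon^2\alpha$ at second order, which factor into the claimed form $(4-2^{2r}+\alpha)\epsilon/2 - \alpha\epsilon^2$.

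The main (and essentially only) subtlety is that the two lemmas were proved by case analysis over the three intervals $1<2q\leq\sqrt{5}-1$, $\sqrt{5}-1<2q\leq\sqrt{2}$, and $\sqrt{2}<2q<2$, with different coexisting-interval pictures in each regime. Before asserting the unified formula, I would verify that the final expressions $\epsilon(1-\tau(n,0))+\epsilon(1-\epsilon)(1-\tau(n-1,0))^+$ and $\epsilon(2-\tau(n-1,0))-\epsilon^2(1-\tau(n-1,0))^+$ really hold across all three regimes — in particular, that the boundary value $2q=\sqrt{5}-1$ (where $\tau(n-1,0)=1$) and the boundary value $2q=\sqrt{2}$ (where $\tau(n-1,0)+\tau(n,0)=1$) match the interior formulas. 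This amounts to checking that the positive-part truncation $(\cdot)^+$ correctly absorbs the case split, which is immediate from $\alpha=0$ precisely when $\tau(n-1,0)\geq 1$, i.e., precisely on the first regime. With this consistency check in hand, the algebraic consolidation described above completes the proof.
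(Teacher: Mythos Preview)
Your proposal is correct and follows essentially the same approach as the paper: invoke the two preceding lemmas, combine them via the symmetry reduction $\Pr(e|X^{n-2}) = \tfrac{1}{2}(\Pr(e|0^2)+\Pr(e|01))$, substitute $\tau(n,0)=2^r-1$ and $\tau(n-1,0)=2^{2r}-2^r$, and identify $(1-\tau(n-1,0))^+=\alpha$. The only cosmetic difference is that the paper first adds the two lemma expressions symbolically (introducing an intermediate $\beta=3-\tau(n,0)-\tau(n-1,0)+(1-\tau(n-1,0))^+$) and then substitutes, whereas you substitute first and then add; your extra consistency check across the three $q$-regimes is unnecessary since the lemmas already state their conclusions with the unifying $(\cdot)^+$, but it does no harm.
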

\begin{proof}
According to \eqref{eq:e00} and \eqref{eq:e01}, we have
\begin{align*}
	\lim_{n\to\infty}\left(\Pr(e|0^2)+\Pr(e|01)\right) = \beta\epsilon - 2\epsilon^2(1-\tau(n-1,0))^+,
\end{align*}
where
\begin{align*}
	\beta = 3-\tau(n,0)-\tau(n-1,0)+(1-\tau(n-1,0))^+.
\end{align*}
Therefore,
\begin{align*}
	\lim_{n\to\infty}\Pr(e|X^{n-2}) = \beta\epsilon/2 - \epsilon^2(1-\tau(n-1,0))^+.
\end{align*}
Since $\tau(n,0) = 2^r-1$ and $\tau(n-1,0) = 2^r\tau(n,0)=2^{2r}-2^r$, we have
\begin{align*}
	\begin{cases}
		3-\tau(n,0)-\tau(n-1,0) = 4-2^{2r}\\
		1-\tau(n-1,0) = 1+2^r-2^{2r}.
	\end{cases}
\end{align*}
Hence, $\beta=(4-2^{2r}+\alpha)$. Now \eqref{eq:ex2} follows immediately.
\end{proof}

\begin{remark}[General Case]
	After understanding the above analyses, we can extend the method to the general case. Given $X^{n-t}$ known at the decoder, there are $2^t-1$ spare codewords and we can enumerate $2^{2^t-1}$ event combinations. For each event combination, we enumerate $2^{t-1}$ symbol combinations. Finally, we make use of \cref{thm:prob} to calculate the FER. Clearly, the order of complexity is $O(2^{2^t-1}\cdot 2^{t-1}) = O(2^{2^t+t-2})$. As $t$ increases, the complexity will go up sharply. However, as shown by the above analysis, this complexity can be significantly reduced by making use of the properties of coexisting interval, and a concise expression of FER can be deduced.
\end{remark}

\section{Experimental Verification}\label{sec:example}
\begin{figure}[!t]
	\centering
	\subfigure[]{\includegraphics[width=.5\linewidth]{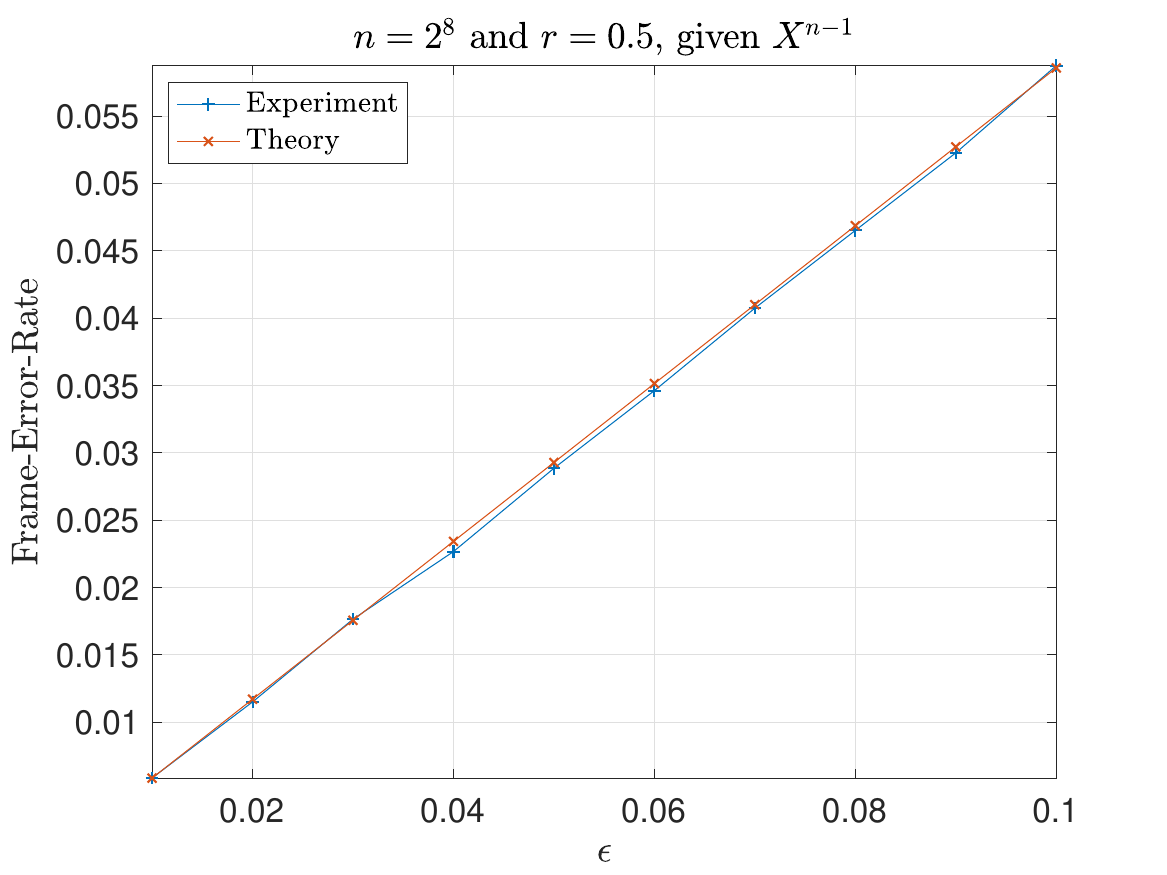}\label{subfig:fer1.5}}%
	\subfigure[]{\includegraphics[width=.5\linewidth]{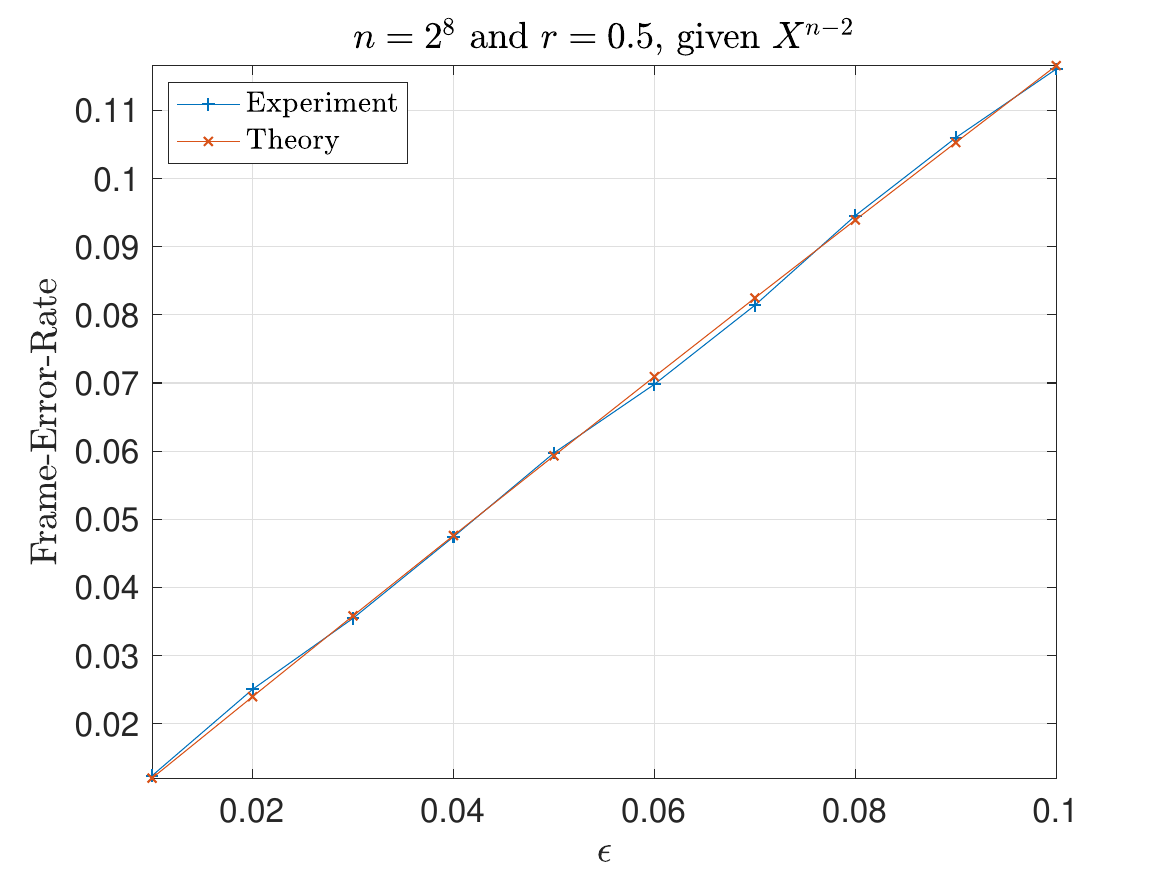}\label{subfig:fer2.5}}\\
	\subfigure[]{\includegraphics[width=.5\linewidth]{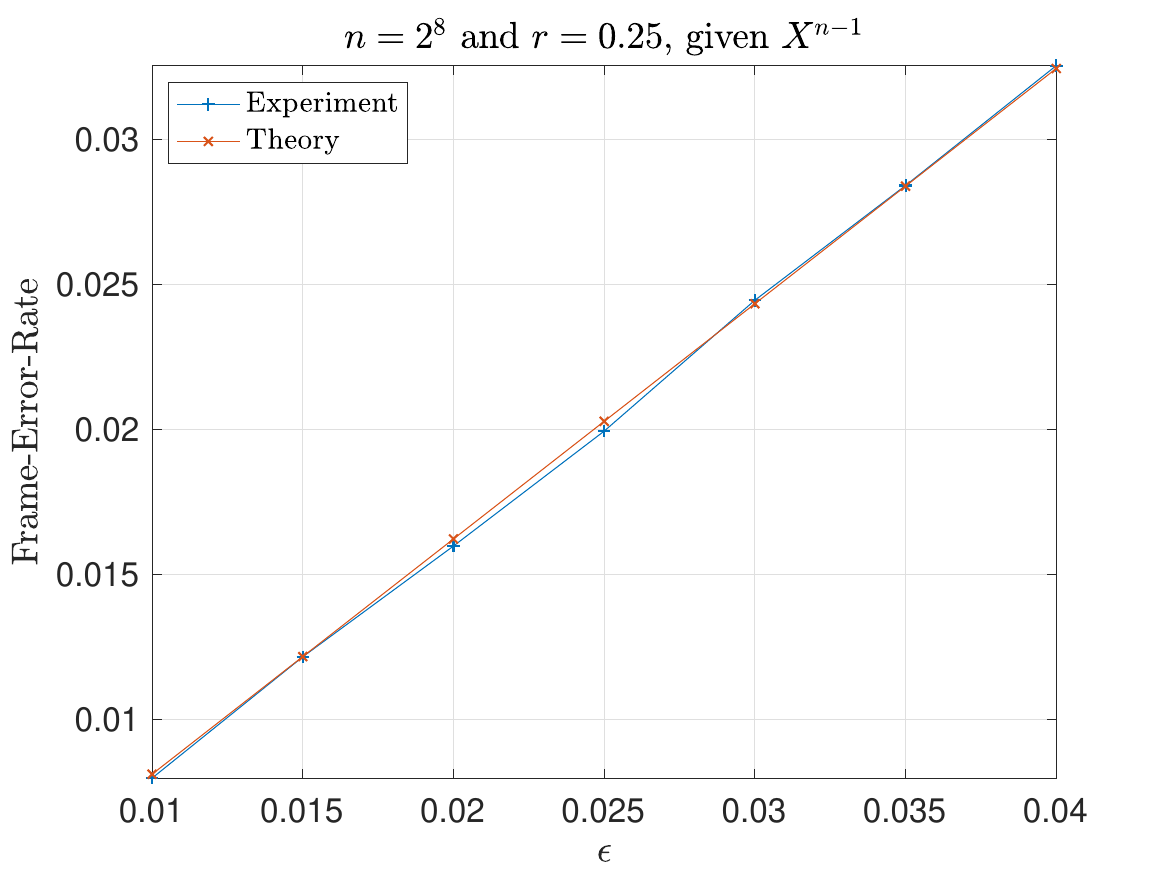}\label{subfig:fer1.25}}%
	\subfigure[]{\includegraphics[width=.5\linewidth]{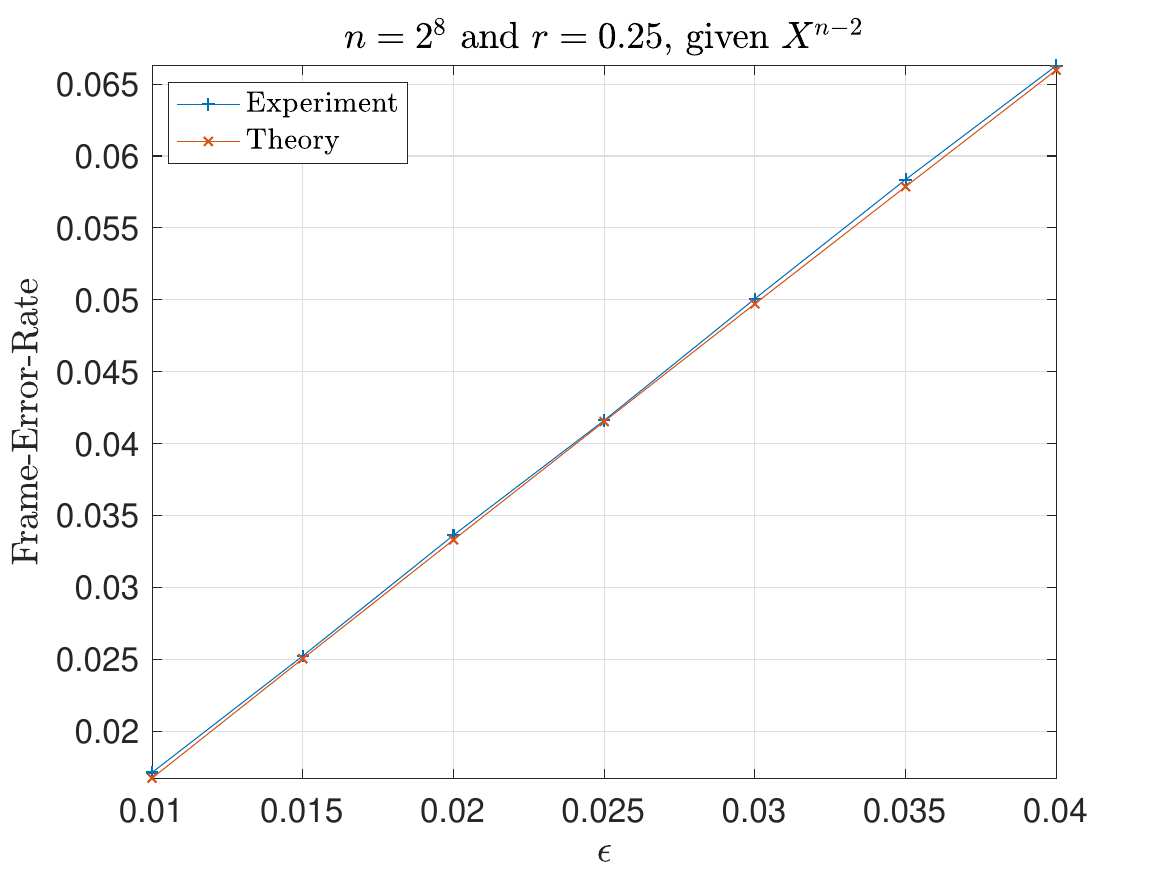}\label{subfig:fer2.25}}\\
	\caption{Theoretical and empirical FER results of overlapped arithmetic codes. (a) and (c) $\Pr(\hat{X}^n\neq X^n|Y^n,X^{n-1})$. (b) and (d) $\Pr(\hat{X}^n\neq X^n|Y^n,X^{n-2})$.}
	\label{fig:results}
\end{figure}

We consider two rates $r=1/2$ and $r=1/4$. For $r=1/2$, according to \eqref{eq:Pexn1}, we have
\begin{align*}
	\lim_{n\to\infty}\Pr(e|X^{n-1}) = (2-\sqrt{2})\epsilon\approx0.5858\epsilon; 
\end{align*}
while from \eqref{eq:ex2}, we can obtain (note $1+2^r-2^{2r}>0$ for $r=1/2$) 
\begin{align*}
	\lim_{n\to\infty}\Pr(e|X^{n-2}) 
	&= (1+\sqrt{2})\epsilon/2 - (\sqrt{2}-1)\epsilon^2\\
	&\approx 1.2071\epsilon - 0.4142\epsilon^2.
\end{align*}
When $r=1/4$, we have
\begin{align*}
	\lim_{n\to\infty}\Pr(e|X^{n-1}) = (2-2^{1/4})\epsilon\approx0.8108\epsilon
\end{align*}
and (note $1+2^r-2^{2r}>0$ for $r=1/4$)
\begin{align*}
	\lim_{n\to\infty}\Pr(e|X^{n-2}) 
	&= (5-2\sqrt{2}+2^{1/4})\epsilon/2 - (1-\sqrt{2}+2^{1/4})\\
	&\approx 1.6804\epsilon - 0.775\epsilon^2.
\end{align*}
To verify the correctness of the above analyses, we compare theoretical results with empirical ones in \cref{fig:results}. Experimental settings are shown in figure titles. It can be seen that the theoretical results closely match the empirical ones, providing strong supporting evidences for the correctness of our theoretical analyses.

\chapter{Hamming Distance Spectrum}\label{c-hds}
\vspace{-25ex}%
We have known in \cref{c-ccs} that overlapped arithmetic codes are actually a kind of nonlinear coset codes partitioning source space into unequal-sized cosets, and {\em Coset Cardinality Spectrum} (CCS) is a fundamental property of overlapped arithmetic codes that can describe the distribution of coset cardinality. However, this is not enough. Just as channel codes, another fundamental property of overlapped arithmetic codes is how far (in Hamming distance) the codewords within the same coset keep away from each other. To avoid verbosity, we introduce the terminology of \textit{mate codewords} as below.
\begin{definition}[Mate Codeword]
	We refer to the codewords belonging to the same coset as mate codewords, or just mates for short. In mathematical language, if $m(x^n)=m(y^n)$, where
	\begin{align*}
		m(x^n) = \lceil s(x^n)\rceil = \lceil (1-2^{-r})\sum_{i=1}^{n}x_i2^{ir} \rceil,
	\end{align*}
	as defined by \eqref{eq:ell} and \eqref{eq:mXn}, then $x^n$ and $y^n$ are mates.	
\end{definition}
We introduce the {\em Hamming Distance Spectrum} (HDS) to quantitatively measure the distribution of inter-mate Hamming distance, which will be helpful for us to understand overlapped arithmetic codes.

\section{Definition of HDS}\label{subsec:toy}
\begin{definition}[Codeword HDS]
	Let $n$ be code length. For $d\in[0:n]\triangleq\{0,1,\dots,n\}$, the HDS of codeword $x^n\in\mathbb{B}^n$ is defined as 
	\begin{align*}
		k_d(x^n) \triangleq \left|\left\{y^n: (y^n\in\mathbb{B}^n)\wedge (m(x^n)=m(y^n))\wedge(|x^n\oplus y^n|=d)\right\}\right|,
	\end{align*}
	where $|x^n\oplus y^n|=d$ means that $y^n$ is $d$-away from $x^n$.
\end{definition}
In plain words, $k_d(x^n)$ is the number of mate codewords $d$-away from $x^n$. Actually, for linear block codes, every codeword has the same HDS, \textit{i.e.}, $k_d(x^n)\equiv k_d(0^n)$ for any $x^n\in\mathbb{B}^n$. However, for overlapped arithmetic codes, every codeword may have different HDS.
\begin{lemma}[Properties of Codeword HDS]
	Let ${\cal C}_m$ be the coset containing $x^n$, whose cardinality is denoted by $|{\cal C}_m|$, where $m\in[0:2^{nr})$. It is easy to obtain the following properties of $k_d(x^n)$.
	\begin{itemize}
		\item $0\leq k_d(x^n)\leq\binom{n}{d}$.
		\item For $1\leq d\leq n$, the sum $\sum_{x^n\in{\cal C}_m}{k_d(x^n)}$ is twice the number of $d$-away codeword-pairs in ${\cal C}_m$. Therefore,
		\begin{align}\label{eq:sumk1}
			\sum_{d=1}^{n}{\sum_{x^n\in{\cal C}_m}{k_d(x^n)}} = 2\binom{|{\cal C}_m|}{2} = |{\cal C}_m|^2 - |{\cal C}_m|.
		\end{align}
		\item If we define $k_0(x^n)=1$, then $\sum_{d=0}^{n}{k_d(x^n)} = |{\cal C}_{m}|$ and further
		\begin{align}\label{eq:sumk}
			\sum_{x^n\in{\cal C}_m}{\sum_{d=0}^{n}{k_d(x^n)}} = |{\cal C}_m|^2.
		\end{align}
	\end{itemize}	
\end{lemma}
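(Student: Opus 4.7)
All three properties are pure combinatorial identities about the coset $\mathcal{C}_m$ containing $x^n$, so the plan is to prove each by a direct counting argument, without invoking any results about the spectrum or the $s$-function. The only definition that matters is that $y^n$ is a mate of $x^n$ iff $m(y^n) = m(x^n)$, and the only counting tool is partitioning $\mathcal{C}_m$ according to Hamming distance to a fixed reference codeword.

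For the first property, I would observe that $k_d(x^n)$ is the cardinality of a subset of the Hamming sphere of radius exactly $d$ around $x^n$, whose total cardinality is $\binom{n}{d}$; non-negativity and the upper bound are then immediate.

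For the second property, the natural tool is double counting on the set of ordered pairs
\begin{align*}
\mathcal{P}_d(\mathcal{C}_m) \triangleq \{(x^n,y^n) \in \mathcal{C}_m \times \mathcal{C}_m : |x^n \oplus y^n| = d\}.
\end{align*}
Summing $k_d(x^n)$ over $x^n \in \mathcal{C}_m$ counts $|\mathcal{P}_d(\mathcal{C}_m)|$, and for $d \geq 1$ every unordered pair $\{x^n,y^n\}$ with $x^n \neq y^n$ contributes exactly two ordered pairs. Summing further over $d \in [1:n]$ then accumulates each unordered pair of distinct mates exactly twice, so the total equals $2\binom{|\mathcal{C}_m|}{2} = |\mathcal{C}_m|^2 - |\mathcal{C}_m|$.

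For the third property, the convention $k_0(x^n) = 1$ exactly absorbs the contribution from $x^n$ itself, regarded as its own mate at distance $0$. Since every mate $y^n \in \mathcal{C}_m$ sits at a unique Hamming distance $d \in [0:n]$ from $x^n$, partitioning $\mathcal{C}_m$ by this distance yields $\sum_{d=0}^{n} k_d(x^n) = |\mathcal{C}_m|$; summing over the $|\mathcal{C}_m|$ choices of $x^n \in \mathcal{C}_m$ produces $|\mathcal{C}_m|^2$, which also matches \eqref{eq:sumk1} after adding back the $|\mathcal{C}_m|$ diagonal terms $k_0(x^n) = 1$.

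There is no real obstacle here; the statements are book-keeping about coset membership. The one point that warrants care is ensuring consistency between \eqref{eq:sumk1} and \eqref{eq:sumk}: the former excludes $d=0$ precisely because the definition of $k_d(\cdot)$ does not count $x^n$ as a mate of itself, whereas the latter reintroduces those $|\mathcal{C}_m|$ diagonal contributions via the convention $k_0(x^n)=1$, so that the two identities differ by exactly $|\mathcal{C}_m|$, as they should.
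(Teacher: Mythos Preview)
Your proposal is correct and is exactly the elementary counting argument the paper has in mind; in fact the paper does not spell out a proof at all, merely asserting that these properties are ``easy to obtain,'' so your double-counting of ordered mate pairs and partitioning of $\mathcal{C}_m$ by Hamming distance to a reference codeword is precisely the intended justification.
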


\begin{definition}[Coset HDS]
	The HDS of the $m$-th coset is
	\begin{align*}
		\phi_m(d) \triangleq \sum_{x^n\in{\cal C}_m}{\Pr(X^n=x^n|X^n\in{\cal C}_m)\cdot k_d(x^n)}.
	\end{align*}
\end{definition}

\begin{definition}[Code HDS]
	For $0\leq d\leq n$, the HDS of an overlapped arithmetic code is defined as
	\begin{align*}
		\psi(d;n) \triangleq \sum_{x^n\in\mathbb{B}^n}{\Pr(X^n=x^n)\cdot k_d(x^n)}.
	\end{align*}
\end{definition}

For uniform binary sources, \textit{i.e.}, $\Pr(X^n=x^n)\equiv2^{-n}$, we have
\begin{align}\label{eq:phidef}
	\phi_m(d) = \tfrac{1}{|{\cal C}_m|}\sum_{x^n\in{\cal C}_m}{k_d(x^n)}
\end{align}
and
\begin{align}\label{eq:psinddef}
	\psi(d;n) = 2^{-n}\sum_{x^n\in\mathbb{B}^n}{k_d(x^n)} 
	&= 2^{-n}\sum_{m=0}^{2^{nr}-1}\sum_{x^n\in{\cal C}_m}{k_d(x^n)}\nonumber\\
	&= 2^{-n}\sum_{m=0}^{2^{nr}-1}|{\cal C}_m|\cdot\phi_m(d).
\end{align}
From \eqref{eq:sumk} and \eqref{eq:phidef}, we have $\sum_{d=0}^{n}\phi_m(d)=|{\cal C}_m|$.

\begin{definition}[Asymptotic Code HDS]
	For $0\leq d\leq n$, the asymptotic HDS of an overlapped arithmetic code is
	\begin{align}
		\psi(d) \triangleq \lim_{n\rightarrow\infty}{\psi(d;n)}.\nonumber
	\end{align}
\end{definition}

For linear block codes, since $k_d(x^n)=k_d(0^n)$ for every $x^n\in\mathbb{B}^n$, it is easy to obtain $\psi(d;n)=k_d(0^n)$ for every $d\in[0:n]$. That is, the HDS of a linear block code is equivalent to its weight spectrum. However, for overlapped arithmetic codes, the problem is much more complex.

\begin{table*}[!t]
	\caption{Example of Codeword HDS}\label{tab:cwhds}\centering
	\begin{tabular}{c||c||c||c||c||c||c}
		\hline
		$x^n$  					& $s(x^n)$ 					& $m$ 			   & $k_1(x^n)$		& $k_2(x^n)$		& $k_3(x^n)$		 & $k_4(x^n)$\\			
		\hline\hline
		\color{blue}{$0000$} 	& \color{blue}{$0.0000$} 	& \color{blue}{0} 	&\color{blue}{0} 	&\color{blue}{0} 	 &\color{blue}{0} 	&\color{blue}{0}\\
		\hline
		\color{red}{$0001$} 	& \color{red}{$0.4142$} 	& \color{red}{1}	&\color{red}{1}		&\color{red}{2}		&\color{red}{0}		 &\color{red}{0}\\
		\hline
		\color{red}{$0010$} 	& \color{red}{$0.5858$} 	& \color{red}{1} 	&\color{red}{1}		&\color{red}{2}		&\color{red}{0}		 &\color{red}{0}\\
		\hline
		\color{red}{$0011$} 	& \color{red}{$1.0000$} 	& \color{red}{1}	&\color{red}{2}		&\color{red}{0}		&\color{red}{1}		 &\color{red}{0}\\
		\hline
		\color{red}{$0100$} 	& \color{red}{$0.8284$} 	& \color{red}{1}	&\color{red}{0}		&\color{red}{2}		&\color{red}{1}		 &\color{red}{0}\\
		\hline
		\color{green}{$0101$} 	& \color{green}{$1.2426$} 	& \color{green}{2}	&\color{green}{1}	&\color{green}{3}	 &\color{green}{1}	&\color{green}{1}\\
		\hline
		\color{green}{$0110$} 	& \color{green}{$1.4142$} 	& \color{green}{2}	&\color{green}{1}	&\color{green}{3}	 &\color{green}{1}	&\color{green}{1}\\
		\hline
		\color{green}{$0111$} 	& \color{green}{$1.8284$} 	& \color{green}{2}	&\color{green}{2}	&\color{green}{0}	 &\color{green}{3}	&\color{green}{1}\\
		\hline
		\color{green}{$1000$} 	& \color{green}{$1.1716$}	& \color{green}{2}	&\color{green}{3}	&\color{green}{0}	 &\color{green}{2}	&\color{green}{1}\\
		\hline
		\color{green}{$1001$} 	& \color{green}{$1.5858$} 	& \color{green}{2}	&\color{green}{1}	&\color{green}{3}	 &\color{green}{1}	&\color{green}{1}\\
		\hline
		\color{green}{$1010$} 	& \color{green}{$1.7574$} 	& \color{green}{2}	&\color{green}{1}	&\color{green}{3}	 &\color{green}{1}	&\color{green}{1}\\
		\hline
		\color{cyan}{$1011$} 	& \color{cyan}{$2.1716$} 	& \color{cyan}{3}	&\color{cyan}{1}	&\color{cyan}{2}	 &\color{cyan}{0}	&\color{cyan}{0}\\
		\hline	
		\color{green}{$1100$} 	& \color{green}{$2.0000$} 	& \color{green}{2}	&\color{green}{1}	&\color{green}{4}	 &\color{green}{1}	&\color{green}{0}\\
		\hline
		\color{cyan}{$1101$} 	& \color{cyan}{$2.4142$} 	& \color{cyan}{3}	&\color{cyan}{1}	&\color{cyan}{2}	 &\color{cyan}{0}	&\color{cyan}{0}\\
		\hline	
		\color{cyan}{$1110$} 	& \color{cyan}{$2.5858$} 	& \color{cyan}{3}	&\color{cyan}{1}	&\color{cyan}{2}	 &\color{cyan}{0}	&\color{cyan}{0}\\
		\hline	
		\color{cyan}{$1111$} 	& \color{cyan}{$3.0000$} 	& \color{cyan}{3}	&\color{cyan}{3}	&\color{cyan}{0}	 &\color{cyan}{0}	&\color{cyan}{0}\\
		\hline
		Sum 					& --- 						& ---				&20					&28					&12					 &6\\
		\hline
		$\psi(d;n)$ 		& --- 						& ---									&$\frac{20}{16}=\frac{5}{4}$					&$\frac{28}{16}=\frac{7}{4}$		&$\frac{12}{16}=\frac{3}{4}$			&$\frac{6}{16}=\frac{3}{8}$\\
		\hline
	\end{tabular}
	\label{tab:codewordhds}
\end{table*}	

\begin{table}[!t]
	\caption{Example of Coset HDS and Code HDS}\label{tab:cbhds}\centering
	\begin{tabular}{c||c||c||c||c||c||c}
		\hline
		Term 			& $d=0$ & $d=1$  	& $d=2$		& $d=3$		& $d=4$		& Sum\\				
		\hline\hline
		$\phi_0(d)$		& $1$ 	& $0$ 	 	& $0$		& $0$		& $0$		& $1$\\
		\hline
		$\phi_1(d)$ 	& $1$ 	& $4/4$	 	& $6/4$		& $2/4$		& $0$		& $4$\\
		\hline
		$\phi_2(d)$ 	& $1$ 	& $10/7$ 	& $16/7$	& $10/7$	& $6/7$		& $7$\\
		\hline
		$\phi_3(d)$ 	& $1$ 	& $6/4$	 	& $6/4$		& $0$		& $0$		& $4$\\
		\hline
		$\psi(d;n)$ & $1$ 	& $20/16$ 	& $28/16$ 	& $12/16$ 	& $6/16$ 	& $5.125$\\
		\hline
	\end{tabular}
	\label{tab:codehds}
\end{table}	

\begin{example}
	Let us use the code in \cref{fig:part}, where $n=4$ and $r=0.5$, to explain the HDS. This code partitions source space $\mathbb{B}^4$ into 4 cosets:
	\begin{itemize}
		\item ${\cal C}_0 = \{\underline{0000}\}$,
		\item ${\cal C}_1 = \{\underline{0001}, \underline{0010}, \underline{0011}, \underline{0100}\}$,
		\item ${\cal C}_2 = \{\underline{0101}, \underline{0110}, \underline{0111}, \underline{1000}, \underline{1001}, \underline{1010}, \underline{1100}\}$, and
		\item ${\cal C}_3 = \{\underline{1011}, \underline{1101}, \underline{1110}, \underline{1111}\}$,
	\end{itemize}
	where $\underline{b_1\cdots b_n}\in\mathbb{B}^n$ denotes a codeword of $n$ bits. Let us take ${\cal C}_1$ as an example. As shown by \cref{tab:cwhds}, 
	\begin{itemize}
		\item $k_1(\underline{0001})=1$, $k_2(\underline{0001})=2$, $k_3(\underline{0001})=0$, and $k_4(\underline{0001})=0$; 
		\item $k_1(\underline{0010})=1$, $k_2(\underline{0010})=2$, $k_3(\underline{0010})=0$, and $k_4(\underline{0010})=0$; 
		\item $k_1(\underline{0011})=2$, $k_2(\underline{0011})=0$, $k_3(\underline{0011})=1$, and $k_4(\underline{0011})=0$; 
		\item $k_1(\underline{0100})=0$, $k_2(\underline{0100})=2$, $k_3(\underline{0100})=1$, and $k_4(\underline{0100})=0$.
	\end{itemize}
	Let $k_0(x^n)=1$. Then $\sum_{d=0}^{4}{k_d(x^n)}=|{\cal C}_1|=4$ for every $x^n\in{\cal C}_1$. In ${\cal C}_1$,
	\begin{itemize}
		\item there are two $1$-away codeword pairs, \textit{i.e.}, 
		\begin{align*}
			(\underline{0001},\underline{0011}) \;{\rm and}\; (\underline{0010}, \underline{0011}), 
		\end{align*}
		while we have 
		\begin{align*}
			\sum_{x^n\in{\cal C}_1}{k_1(x^n)}=1+1+2+0=4, 
		\end{align*}
		twice the number of $1$-away codeword-pairs; 
		\item there are three $2$-away codeword-pairs, \textit{i.e.}, 
		\begin{align*}
			(\underline{0001},\underline{0010}), (\underline{0001},\underline{0100}), \;{\rm and}\;  
			(\underline{0010}, \underline{0100}), 
		\end{align*}
		while we have
		\begin{align*}
			\sum_{x^n\in{\cal C}_1}{k_2(x^n)}=2+2+0+2=6, 
		\end{align*}
		twice the number of $2$-away codeword-pairs; and 
		\item there is one $3$-away codeword-pair (\underline{0011},\underline{0100}), while
		\begin{align*}
			\sum_{x^n\in{\cal C}_1}{k_3(x^n)}=0+0+1+1=2, 
		\end{align*}
		twice the number of $3$-away codeword-pairs. 
	\end{itemize}
	Furthermore, we have
	\begin{align}
		\sum_{d=1}^{4}\sum_{x^n\in{\cal C}_1}{k_d(x^n)}=4+6+2+0=12=2\binom{|{\cal C}_1|}{2}=|{\cal C}_1|^2-|{\cal C}_1|,\nonumber
	\end{align}
	verifying \eqref{eq:sumk1}, and 
	\begin{align}
		\sum_{d=0}^{4}\sum_{x^n\in{\cal C}_1}{k_d(x^n)}=4+12=16=|{\cal C}_1|^2,\nonumber
	\end{align}
	verifying \eqref{eq:sumk}. According to \eqref{eq:psinddef}, we can easily obtain $\psi(d;n)$, as shown in the last rows of \cref{tab:codewordhds} and \cref{tab:codehds}. Hence,
	\begin{align*}
		\sum_{d=0}^{4}{\psi(d;4)}=82/16=5.125>4=2^{4(1-1/2)}. 
	\end{align*}
	In this example, we find $\sum_{d=0}^{n}{\psi(d;n)} > 2^{n(1-r)}$, which is not a coincidence, just as shown below.
\end{example}

\begin{lemma}[Sum of HDS]\label{lem:sumhds}
	Let $f(u)$ be the asymptotic initial CCS given by \eqref{eq:asympt}. Then
	\begin{align}
		\lim_{n\to\infty}\frac{\sum_{d=0}^{n}{\psi(d;n)}}{2^{n(1-r)}} = \int_{0}^{1}{f^2(u)du}.
	\end{align}
\end{lemma}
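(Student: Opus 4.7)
The plan is to recognize that $\sum_{d=0}^{n}\psi(d;n)$ is, up to normalization, nothing other than the expected coset cardinality $\mathbb{E}[|{\cal C}_M|]$ already analyzed in \cref{sec:infthm}. First I would swap the order of summation in the defining identity
\begin{align*}
\sum_{d=0}^{n}\psi(d;n) = 2^{-n}\sum_{m=0}^{2^{nr}-1}|{\cal C}_m|\cdot\sum_{d=0}^{n}\phi_m(d),
\end{align*}
which follows from \eqref{eq:psinddef}. Applying the identity $\sum_{d=0}^{n}\phi_m(d)=|{\cal C}_m|$ (established right after \eqref{eq:psinddef}) collapses this to
\begin{align*}
\sum_{d=0}^{n}\psi(d;n) = 2^{-n}\sum_{m=0}^{2^{nr}-1}|{\cal C}_m|^2.
\end{align*}

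Next I would observe that this quantity is exactly $\mathbb{E}[|{\cal C}_M|]$, since for uniform binary sources $p(m)=|{\cal C}_m|\cdot 2^{-n}$ and therefore
\begin{align*}
\mathbb{E}[|{\cal C}_M|] = \sum_{m=0}^{2^{nr}-1}p(m)\cdot|{\cal C}_m| = 2^{-n}\sum_{m=0}^{2^{nr}-1}|{\cal C}_m|^2.
\end{align*}
At this point the conclusion is immediate from \eqref{eq:ecc}, which has already shown that
\begin{align*}
\lim_{n\to\infty}\frac{\mathbb{E}[|{\cal C}_M|]}{2^{n(1-r)}} = \int_{0}^{1}f^2(u)\,du.
\end{align*}

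Alternatively, if one prefers a self-contained derivation, after reaching $\sum_{d=0}^{n}\psi(d;n) = 2^{-n}\sum_m |{\cal C}_m|^2$ I would substitute the asymptotic estimate $|{\cal C}_m|\approx f(m2^{-nr})\cdot 2^{n(1-r)}$ from \eqref{eq:C_m}, obtaining
\begin{align*}
\frac{\sum_{d=0}^{n}\psi(d;n)}{2^{n(1-r)}} \approx 2^{-nr}\sum_{m=0}^{2^{nr}-1} f^2(m\cdot 2^{-nr}),
\end{align*}
and then recognize the right-hand side as a Riemann sum of $f^2$ over $[0,1)$ with mesh size $2^{-nr}\to 0$, so the limit equals $\int_0^1 f^2(u)\,du$.

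The only delicate point is the rigorous passage to the limit: the approximation in \eqref{eq:C_m} is only asymptotic, so one must verify that the aggregate error in summing $|{\cal C}_m|^2$ (rather than $|{\cal C}_m|$) is negligible after dividing by $2^{n(1-r)}$. Since the author already accepted the analogous step in the derivation of \eqref{eq:ecc}, no new difficulty arises here, and invoking that result is the cleanest route.
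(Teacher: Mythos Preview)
Your proposal is correct and essentially identical to the paper's proof: both reduce $\sum_{d=0}^n\psi(d;n)$ to $2^{-n}\sum_m|{\cal C}_m|^2$ via \eqref{eq:psinddef} and the identity $\sum_d\phi_m(d)=|{\cal C}_m|$ (equivalently \eqref{eq:sumk}), then pass to $\int_0^1 f^2(u)\,du$ through the Riemann-sum argument---the paper does this directly, while you equivalently route through \eqref{eq:ecc}, which is the same computation already performed upstream.
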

\begin{proof}
	According to \eqref{eq:sumk} and \eqref{eq:psinddef}, we will obtain
	\begin{align}
		\sum_{d=0}^{n}{\psi(d;n)} 
		&= 2^{-n}\sum_{m=0}^{2^{nr}-1}{|{\cal C}_m|^2}\nonumber\\
		&= 2^{n(1-r)}\sum_{m=0}^{2^{nr}-1}{\left(\frac{|{\cal C}_m|}{2^{n(1-r)}}\right)^2}2^{-nr}.\nonumber
	\end{align}
	It can be written as
	\begin{align}
		\lim_{n\to\infty}\frac{\sum_{d=0}^{n}{\psi(d;n)}}{2^{n(1-r)}} 
		= \lim_{n\to\infty}\sum_{m=0}^{2^{nr}-1}{\left(\frac{|{\cal C}_m|}{2^{n(1-r)}}\right)^2}2^{-nr}
		\stackrel{(a)}{=} \int_{0}^{1}{f^2(u)du},\nonumber
	\end{align}
	where $(a)$ comes from the definition of $f(u)$.
\end{proof}	

\begin{lemma}[Convexity of HDS]
	For an overlapped arithmetic code with length $n$ and rate $r$, we have 
	\begin{align}\label{eq:conv}
		\sum_{d=0}^{n}{\psi(d;n)} \geq 2^{n(1-r)},
	\end{align}
	where the equality holds iff source space $\mathbb{B}^n$ is equally partitioned into $2^{nr}$ cosets of cardinality $2^{n(1-r)}$.
\end{lemma}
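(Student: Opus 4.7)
The plan is to build on the identity already derived in the proof of the previous lemma, namely
\begin{align}
\sum_{d=0}^{n}\psi(d;n) = 2^{-n}\sum_{m=0}^{2^{nr}-1}|{\cal C}_m|^2,\nonumber
\end{align}
which follows immediately from \eqref{eq:sumk} and \eqref{eq:psinddef}. Once this identity is in hand, the inequality \eqref{eq:conv} reduces to a lower bound on a sum of squares subject to a fixed linear constraint, and that is a one-line application of a standard convexity/Cauchy--Schwarz argument.

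First I would record the constraint: since the cosets partition $\mathbb{B}^n$, we have $\sum_{m=0}^{2^{nr}-1}|{\cal C}_m| = 2^n$. Next I would apply the Cauchy--Schwarz inequality (equivalently, the power-mean inequality, or Jensen's inequality for the convex map $t\mapsto t^2$) to the $2^{nr}$ nonnegative numbers $|{\cal C}_m|$:
\begin{align}
\sum_{m=0}^{2^{nr}-1}|{\cal C}_m|^2 \;\geq\; \frac{\bigl(\sum_{m=0}^{2^{nr}-1}|{\cal C}_m|\bigr)^2}{2^{nr}} \;=\; \frac{2^{2n}}{2^{nr}} \;=\; 2^{n(2-r)}.\nonumber
\end{align}
Substituting into the identity above yields $\sum_{d=0}^{n}\psi(d;n)\geq 2^{-n}\cdot 2^{n(2-r)} = 2^{n(1-r)}$, which is \eqref{eq:conv}.

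For the equality characterization, I would invoke the standard fact that equality in Cauchy--Schwarz (or Jensen for a strictly convex function) occurs iff all the $|{\cal C}_m|$ are equal. Combined with the constraint $\sum_m|{\cal C}_m|=2^n$, this forces $|{\cal C}_m|=2^{n(1-r)}$ for every $m\in[0:2^{nr})$, which is precisely the ``equal partitioning'' condition claimed. I do not anticipate any real obstacle here: the only mild subtlety is noting that equal-cardinality partitioning is in general impossible for overlapped arithmetic codes (the very first coset ${\cal C}_0$ almost always contains only $0^n$, as noted in the discussion around \cref{subfig:binning}), which is why the inequality is typically strict---this is consistent with $\int_0^1 f^2(u)\,du\geq 1$ from \cref{lem:sumhds}, with equality only in the degenerate $r=1$ case where $f(u)=\Pi(u)$.
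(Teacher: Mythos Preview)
Your proof is correct, and in fact cleaner than the paper's. Both arguments rest on the same identity $\sum_{d=0}^{n}\psi(d;n)=2^{-n}\sum_m|\mathcal C_m|^2$ extracted from \eqref{eq:sumk} and \eqref{eq:psinddef}, but diverge after that. The paper passes to the continuous limit via \cref{lem:sumhds}, obtaining $\sum_d\psi(d;n)\approx 2^{n(1-r)}\int_0^1 f^2(u)\,du$, and then argues $\int_0^1 f^2(u)\,du\geq 1$ by convexity of the $L^2$ norm subject to $\int_0^1 f(u)\,du=1$; strictly speaking this yields only an asymptotic inequality. You instead apply Cauchy--Schwarz directly to the finite sum $\sum_m|\mathcal C_m|^2$ under the exact constraint $\sum_m|\mathcal C_m|=2^n$, which gives the bound for every finite $n$ without any limiting argument and makes the equality case transparent. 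Your route is more elementary and more rigorous for the finite-$n$ statement; the paper's route has the advantage of tying the result to the CCS object $f(u)$ that the monograph is developing, but at the cost of an implicit $n\to\infty$ that the lemma statement does not actually assume.
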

\begin{proof}
	Since $\int_{0}^{1}{f^2(u)du}$ is a nonnegative and convex function in $f(u)$, we have $\int_{0}^{1}{f^2(u)du}\geq1$ and the equality holds iff $f(u)$ is uniform over $[0,1)$, \textit{i.e.}, source space $\mathbb{B}^n$ is equally partitioned into $2^{nr}$ cosets of cardinality $2^{n(1-r)}$. This lemma then follows \cref{lem:sumhds} immediately.
\end{proof}

\section{Two Toy Methods to Calculate HDS}\label{subsec:toy}
\begin{remark}[Exhaustive Enumeration]
	As shown by \eqref{eq:psinddef}, to obtain $\psi(d;n)$ for all $d\in[0:n]$, we should try every $x^n\in\mathbb{B}^n$ and every $y^n\in\mathbb{B}^n$, so the total complexity is ${\cal O}(2^n\times2^n)={\cal O}(4^n)$. More concretely, the computing complexity of $\psi(d;n)$ varies for different $d$. Let $z^n=x^n\oplus y^n\in\mathbb{B}^n$ and $|x^n\oplus y^n|=|z^n|=d$. For convenience, we define $z_{j^d}^n\in\mathbb{B}^n$ as a length-$n$ binary block with $z_{j^d}=1^d$ and $z_{[n]\setminus j^d}=0^{n-d}$, where $j^d$ and $[n]$ are defined by \eqref{eq:jd}. Let ${\bf 1}_A$ be the well-known indicator function equal to either $1$ if $A$ is true, or $0$ if $A$ is false. From \eqref{eq:psinddef}, we have
	\begin{align}\label{eq:psind}
		\psi(d;n) = 2^{-n}\sum_{j^d\in{\cal J}_{n,d}} \sum_{x^n\in\mathbb{B}^n}{\bf 1}_{m(x^n)=m(x^n\oplus z_{j^d}^n)},
	\end{align}
	where ${\cal J}_{n,d}$ is defined by \eqref{eq:Jnd}. Obviously, the computing complexity of \eqref{eq:psind} is ${\cal O}(2^n\binom{n}{d})$, following the binomial distribution w.r.t. $d$, extremely huge and unacceptable for every $d\in[0:n]$. Hence, we are badly in need of a fast method to calculate $\psi(d;n)$.
\end{remark}

\begin{theorem}[Binomial Approximation of HDS]\label{thm:coarsehds}
	If we take overlapped arithmetic codes as random codes, then $\psi(d;n)$ for $0\leq d\leq n$ obeys the following binomial distribution 
	\begin{align}\label{eq:coarsehds}
		\psi(d;n) \approx \binom{n}{d}\cdot 2^{-nr} \cdot \int_{0}^{1}{f^2(u)\,du}.
	\end{align}
\end{theorem}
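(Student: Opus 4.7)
The plan is to start from the exact enumeration formula \eqref{eq:psind} for $\psi(d;n)$, then inject the random-coding hypothesis to replace the deterministic coset-membership indicator ${\bf 1}_{m(x^n)=m(x^n\oplus z_{j^d}^n)}$ by its expected value under the assumption that $m(x^n)$ and $m(x^n\oplus z_{j^d}^n)$ are (asymptotically) independent, each distributed according to the coset probability $p(m)$. Under this hypothesis, the inner expectation collapses to $\Pr(m(x^n)=m(x^n\oplus z_{j^d}^n))\approx\sum_m p(m)^2$, so \eqref{eq:psind} becomes
\begin{align*}
\psi(d;n)\;\approx\;2^{-n}\binom{n}{d}\cdot 2^n\sum_{m=0}^{2^{nr}-1} p(m)^2 \;=\;\binom{n}{d}\sum_{m=0}^{2^{nr}-1} p(m)^2,
\end{align*}
where the factor $2^n$ comes from summing the (now $x^n$-independent) expectation over all $x^n\in\mathbb{B}^n$ and $|{\cal J}_{n,d}|=\binom{n}{d}$.

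Next I would convert the discrete sum $\sum_m p(m)^2$ into the claimed integral by invoking the asymptotic approximation \eqref{eq:pm}, namely $p(m)\approx f(m2^{-nr})\cdot 2^{-nr}$ for large $n$. This yields
\begin{align*}
\sum_{m=0}^{2^{nr}-1} p(m)^2\;\approx\;2^{-nr}\sum_{m=0}^{2^{nr}-1} f^2(m2^{-nr})\cdot 2^{-nr}\;\to\;2^{-nr}\int_0^1 f^2(u)\,du,
\end{align*}
recognising the inner Riemann sum of $f^2$ on the partition $\{m2^{-nr}\}_{m=0}^{2^{nr}-1}$ of $[0,1)$. Plugging this back gives exactly \eqref{eq:coarsehds}, and as a sanity check the sum $\sum_{d=0}^{n}\psi(d;n)\approx 2^n\cdot 2^{-nr}\int_0^1 f^2(u)\,du=2^{n(1-r)}\int_0^1 f^2\,du$, which is precisely the identity established in \cref{lem:sumhds}.

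The hard part is justifying the random-coding step in a clean way, since the mapping $x^n\mapsto m(x^n)$ is fully deterministic and nonlinear; the approximation must absorb the non-uniformity of coset sizes yet treat the pair $(m(x^n),m(x^n\oplus z_{j^d}^n))$ as essentially independent. I would attack this by leveraging \cref{thm:prob}: for any fixed $j^d$ with $|\tau(j^d,b^d)|$ typical, the event $\{m(x^n)=m(x^n\oplus z_{j^d}^n)\}$ corresponds to $s(x^n)$ landing in the coexisting-interval set ${\cal I}(j^d,x_{j^d})$, whose total length relative to the unit intervals is driven by the shift $\tau$. Averaging this probability over $x_{j^d}\in\mathbb{B}^d$ and over $j^d\in{\cal J}_{n,d}$, and combining with the equidistribution of $s(X^n)$ modulo $1$, should recover $\sum_m p(m)^2$ up to vanishing error in $n$. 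Formalising this equivalence and controlling the exceptional set of rates $r$ (where the u.d. mod $1$ property of the underlying geometric sequence fails) is the principal technical obstacle; everything else is Riemann summation and bookkeeping.
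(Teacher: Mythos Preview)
Your argument is correct, but it takes a genuinely different route from the paper. The paper's proof is a two-line normalization: it invokes \cref{lem:sumhds} to get the total
\[
\sum_{d=0}^{n}\psi(d;n)\approx 2^{n(1-r)}\int_0^1 f^2(u)\,du,
\]
then simply \emph{assumes} the binomial shape $\psi(d;n)\propto\binom{n}{d}$ and divides by $\sum_{d=0}^n\binom{n}{d}=2^n$. Your approach instead starts from the enumeration \eqref{eq:psind} and uses the random-coding hypothesis at the level of individual collisions: if $m(x^n)$ and $m(x^n\oplus z_{j^d}^n)$ are independent with common law $p(\cdot)$, the collision probability $\sum_m p(m)^2$ is the same for every flip pattern $z_{j^d}^n$, so the only $d$-dependence left is the count $|{\cal J}_{n,d}|=\binom{n}{d}$. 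This actually \emph{explains} why the binomial factor appears, whereas the paper merely postulates it; your Riemann-sum step then matches the paper's use of \eqref{eq:pm} inside \cref{lem:sumhds}. The trade-off is that the paper's version is shorter, while yours is more informative about the mechanism.

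One remark: your final paragraph about the ``hard part'' (rigorously justifying the independence via \cref{thm:prob} and equidistribution) goes beyond what is required. The theorem statement already stipulates ``if we take overlapped arithmetic codes as random codes'', so the independence is a hypothesis, not a claim to be proved. You can safely drop that discussion.
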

\begin{proof}
	As shown by \cref{lem:sumhds},
	\begin{align}
		\sum_{d=0}^{n}{\psi(d;n)} \approx 2^{n(1-r)}\cdot \int_{0}^{1}{f^2(u)du}.\nonumber
	\end{align}
	If $\psi(d;n)$ for $0\leq d\leq n$ obeys the binomial distribution, then
	\begin{align}
		\psi(d;n) \approx \frac{\binom{n}{d}\cdot 2^{n(1-r)}\cdot \int_{0}^{1}{f^2(u)du}}{\sum_{d=0}^{n}\binom{n}{d}}.\nonumber
	\end{align}
	As we know, $\sum_{d=0}^{n}\binom{n}{d}=2^n$. Now \eqref{eq:coarsehds} follows immediately.
\end{proof}

According to \eqref{eq:coarsehds}, the complexity of $\psi(d;n)$ is ${\cal O}(1)$, hence the total complexity is ${\cal O}(n)$ for all $d\in[0:n]$. Despite its low complexity, one serious drawback of \eqref{eq:coarsehds} is its low accuracy. Because overlapped arithmetic codes are not random codes, $\psi(d;n)$ for $0\leq d\leq n$ does not strictly follow the binomial distribution, especially at the two ends of $[0:n]$. Hence with \eqref{eq:coarsehds}, we can obtain only a coarse approximation of $\psi(d;n)$. For example, \eqref{eq:coarsehds} returns $2^{-nr}\int_{0}^{1}{f^2(u)\,du}$ for $d=0$, while $\psi(0;n)=1$ actually. It will be verified in \cref{sec:hdsexample} that \eqref{eq:coarsehds} works well only for $d\approx n/2$, while performs poorly in other cases, which motivates us to look for a more accurate method. 

\section{Soft Approximation of HDS for $d<n$}\label{subsec:hdscal}
\begin{theorem}[Soft Approximation of HDS for $d<n$]\label{thm:softhds}
	Let us define $j^d$ as \eqref{eq:jd} and ${\cal J}_{n,d}$ as \eqref{eq:Jnd}. For almost every $0<r<1$ and for every $d<n$, as $n\to\infty$, we have
	\begin{align}\label{eq:softhds}
		\psi(d;n)
		\to 2^{-d}\sum_{j^d\in{\cal J}_{n,d}}{\sum_{b^d\in\mathbb{B}^d}\left(1-|\tau(j^d,b^d)|\right)^+},
	\end{align}
	where $\tau(j^d,b^d)$ is defined by \eqref{eq:tau} and $(\cdot)^+\triangleq\max(0,\cdot)$.
\end{theorem}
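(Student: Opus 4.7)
The plan is to combine the exhaustive-enumeration formula \eqref{eq:psind} with the coexistence characterization in \cref{thm:equiv} and the asymptotic probability in \cref{thm:prob}. Nothing new needs to be invented; the work is in bookkeeping.

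First I would rewrite the indicator in \eqref{eq:psind}. For fixed $j^d\in{\cal J}_{n,d}$, let $y^n=x^n\oplus z_{j^d}^n$, so that $x^n\oplus y^n$ has support $j^d$. By \cref{thm:equiv}, the event $\{m(x^n)=m(y^n)\}$ is equivalent to $\{s(x^n)\in{\cal I}(j^d,x_{j^d})\}$, giving
\begin{align}
	\psi(d;n) = 2^{-n}\sum_{j^d\in{\cal J}_{n,d}}\sum_{x^n\in\mathbb{B}^n}{\bf 1}_{\,s(x^n)\in{\cal I}(j^d,x_{j^d})}.\nonumber
\end{align}
Next I would partition the inner sum by the value $x_{j^d}=b^d\in\mathbb{B}^d$. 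Conditioned on $x_{j^d}=b^d$, the remaining $(n-d)$ coordinates $x_{[n]\setminus j^d}$ are free, and $s(x^n)$ reduces (in distribution, with uniform $X_{[n]\setminus j^d}$) to the random variable $E(j^d,b^d)$ of \eqref{eq:E}. Grouping the $2^{n-d}$ identically-distributed free assignments yields
\begin{align}
	\psi(d;n) = 2^{-d}\sum_{j^d\in{\cal J}_{n,d}}\sum_{b^d\in\mathbb{B}^d}\Pr\!\left\{E(j^d,b^d)\in{\cal I}(j^d,b^d)\right\}.\nonumber
\end{align}

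At this point \cref{thm:prob} applies termwise: for almost every $0<r<1$, as $(n-d)\to\infty$, each inner probability converges to $(1-|\tau(j^d,b^d)|)^+$. Since $d<n$ and $n\to\infty$ forces $(n-d)\to\infty$ for any bounded $d$, substituting the asymptotic values in gives exactly \eqref{eq:softhds}. The exceptional set of rates where the convergence fails is a countable union (over $j^d$ and $b^d$) of Lebesgue-null sets, hence still null, so the ``almost every $r$'' qualifier is preserved.

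The main obstacle I anticipate is the uniformity of the limit across all $\binom{n}{d}\cdot 2^d$ terms, since the outer sum grows with $n$. Each summand is bounded in $[0,1]$, so termwise convergence yields the stated asymptotic equality for any fixed $d$; if one wants $d=d(n)$ to grow (subject to $n-d\to\infty$), one would need a quantitative rate of convergence in \cref{thm:prob} — for instance a Koksma–Erdős–Turán type bound on the discrepancy of $V([n]\setminus j^d)$ mod $1$ — which can control the error term uniformly in $j^d$. For the statement as given (fixed $d<n$, $n\to\infty$), the termwise argument above suffices, and the rest of the proof is purely algebraic manipulation of \cref{thm:equiv,thm:prob}.
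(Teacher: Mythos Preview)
Your proposal is correct and follows essentially the same route as the paper: rewrite \eqref{eq:psind} via \cref{thm:equiv}, condition on $x_{j^d}=b^d$ to get the exact identity $\psi(d;n)=2^{-d}\sum_{j^d,b^d}\Pr\{E(j^d,b^d)\in{\cal I}(j^d,b^d)\}$, and then invoke \cref{thm:prob} termwise. The one place the paper differs from what you anticipate is the regime you flag as an obstacle: for $d$ growing with $n$ so that $(n-d)$ stays bounded, the paper does not reach for quantitative discrepancy bounds but instead shifts the averaging from the $2^{n-d}$ free coordinates to the $2^d$ values of $b^d$---since $2^d\to\infty$, a law-of-large-numbers argument on $\sum_{b^d\in\mathbb{B}^d}\psi(d;n\mid j^d,b^d)$ for each fixed $j^d$ already yields \eqref{eq:softhds} in ratio form, with no equidistribution rate needed.
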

\begin{proof}
Let us define the conditional indicator function as
\begin{align*}
	{\bf 1}_{A|B} \triangleq 
	\begin{cases}
		1, & A{\rm~is~true~given}~B\\
		0, & A{\rm~is~false~give}~B.
	\end{cases}
\end{align*}
Then \eqref{eq:psind} can be written as
\begin{align*}
	\psi(d;n) = 2^{-d}\sum_{j^d\in{\cal J}_{n,d}}\sum_{b^d\in\mathbb{B}^d}{\psi(d;n|j^d,b^d)},
\end{align*}
where
\begin{align*}
	\psi(d;n|j^d,b^d) \triangleq 2^{-(n-d)}\sum_{x_{[n]\setminus j^d}\in\mathbb{B}^{n-d}}
	{\bf 1}_{m(x^n)=m(x^n\oplus z_{j^d}^n) | x_{j^d}=b^d}.
\end{align*}
Let us define the following binary random variable
\begin{align*}
	V(j^d,b^d) \triangleq {\bf 1}_{m(X^n)=m(X^n\oplus z_{j^d}^n) | X_{j^d}=b^d}.
\end{align*}
Then ${\bf 1}_{m(x^n)=m(x^n\oplus z_{j^d}^n) | x_{j^d}=b^d}$ is a realization of $V(j^d,b^d)$. According to \cref{thm:equiv}, we have
\begin{align*}
	\left\{m(x^n)=m(x^n\oplus z_{j^d}^n)\middle|x_{j^d}=b^d\right\} \leftrightarrow 
	\left\{s(x^n|x_{j^d}=b^d) \in {\cal I}(j^d,b^d)\right\},
\end{align*}
where $\{\cdot\}\leftrightarrow\{\cdot\}$ denotes two equivalent events, so $V(j^d,b^d)$ is a binary random variable with bias probability $\Pr\left\{E(j^d,b^d)\in{\cal I}(j^d,b^d)\right\}$, where $E(j^d,b^d)\triangleq s(X^n|X_{j^d}=b^d)$, as defined by \eqref{eq:E}. 
\begin{itemize}
	\item As $(n-d)\to\infty$, we have infinitely many independent realizations of $V(j^d,b^d)$ and thus according to the law of large numbers,
\begin{align}\label{eq:approx}
	\lim_{(n-d)\to\infty}\psi(d;n|j^d,b^d)
	&= \lim_{(n-d)\to\infty}\Pr\left\{E(j^d,b^d)\in{\cal I}(j^d,b^d)\right\}\nonumber\\
	&\stackrel{(a)}{=} \left(1-|\tau(j^d,b^d)|\right)^+,
\end{align}
where $(a)$ comes from \cref{thm:prob}. 

	\item For $1\leq (n-d)<\infty$, there are only finitely many realizations of $V(j^d,b^d)$, so \eqref{eq:approx} cannot be applied. However, by the generalized law of large numbers, for every $j^d\in{\cal J}_{n,d}$,
\begin{align}\label{eq:sumpsidnjdbd}
	\lim_{n\to\infty}\frac{\sum_{b^d\in\mathbb{B}^d}{\psi(d;n|j^d,b^d)}}{ \sum_{b^d\in\mathbb{B}^d}{\left(1-|\tau(j^d,b^d)|\right)^+}} = 1.
\end{align}
\end{itemize}
However, note that $V(j^d,b^d)$ is a function w.r.t. $(n-d)$ random variables $X_{[n]\setminus j^d}$ and will degenerate into a deterministic constant for $d=n$. Therefore, \eqref{eq:sumpsidnjdbd} does not hold for $d=n$. 
\end{proof}

We name \eqref{eq:softhds} as the \textit{Soft Approximation} of $\psi(d;n)$, to distinguish it from the \textit{Hard Approximation} of $\psi(d;n)$ named in the next sub-section. 

\begin{remark}[Accuracy]
Though \eqref{eq:softhds} is accurate for almost every $d$, there is an exception $\psi(n;n)$, which cannot be approximated by \eqref{eq:softhds} otherwise there will be a large deviation.
\end{remark}

\begin{remark}[Complexity]
	As we know, the cardinality of ${\cal J}_{n,d}$ is $\binom{n}{d}$, so the complexity to calculate $\psi(d;n)$ by \eqref{eq:softhds} is ${\cal O}(2^d\binom{n}{d})$. Compared with \eqref{eq:psind}, whose complexity is ${\cal O}(2^n\binom{n}{d})$, the complexity of \eqref{eq:softhds} is reduced by $2^{n-d}$ times. However, the complexity of \eqref{eq:softhds} goes up hyper-exponentially as $d$ increases. Therefore, even though \eqref{eq:softhds} is accurate for almost every $d$, it is feasible only for small $d$ in practice. It will be very desirable if the complexity of \eqref{eq:softhds} can be reduced for large $d$. Since $\sum_{d=0}^{n}2^d\binom{n}{d}=3^n$, the total complexity to calculate $\psi(d;n)$ for all $d\in[0:n]$ by \eqref{eq:softhds} is ${\cal O}(3^n)$. As a comparison, remember that the total complexity of the exhaustive enumeration \eqref{eq:psind} is ${\cal O}(4^n)$, and the total complexity of the binomial approximation \eqref{eq:coarsehds} is ${\cal O}(n)$. 
\end{remark}

\section{Hard Approximation of HDS for $1\ll d<n$}\label{sec:hdsn}
This sub-section will propose a variant of \eqref{eq:softhds} to calculate $\psi(d;n)$, which is called \textit{Hard Approximation}, just to distinguish it from the \textit{Soft Approximation} defined by \eqref{eq:softhds}. In essence, the {\em Hard Approximation} is an approximation of the {\em Soft Approximation} for $d\gg 1$. However, there is still an exception when $d=n$, which will be particularly discussed in the next sub-section. 
\begin{definition}[$j^d$-Active Set]
	The active set associated with $j^d\in{\cal J}_{n,d}$ is defined as
	\begin{align}
		{\cal B}_{j^d} \triangleq \{b^d: (b^d\in\mathbb{B}^d) \wedge (|\tau(j^d,b^d)|<1)\} \subseteq \mathbb{B}^d.
	\end{align}
\end{definition}

\begin{definition}[$j^d$-Universal Sequence and $j^d$-Active Sequence]
For every $j^d\in{\cal J}_{n,d}$, we define two sequences. One is the $j^d$-universal sequence
\begin{align}\label{eq:omegajd}
	\omega_{j^d} \triangleq \left(\tau(j^d,b^d)\right)_{b^d\in\mathbb{B}^d} \in (-2^{nr},2^{nr})^{|\omega_{j^d}|},
\end{align} 
and the other is the $j^d$-active sequence
\begin{align}\label{eq:omegajdb}
	\omega_{j^d,{\cal B}} \triangleq \left(\tau(j^d,b^d)\right)_{b^d\in{\cal B}_{j^d}} \in (-1,1)^{|\omega_{j^d,{\cal B}}|},
\end{align}
where
\begin{align*}
	|\omega_{j^d,{\cal B}}| = |{\cal B}_{j^d}| = \left(\sum_{b^d\in\mathbb{B}^d}{{\bf 1}_{|\tau(j^d,b^d)|<1}}\right) \leq |\omega_{j^d}| = 2^d.
\end{align*}
\end{definition}

\begin{definition}[$d$-Universal Sequence and $d$-Active Sequence]
For every $d\in[0:n]$, we define two sequences. One is the $d$-universal sequence
\begin{align}\label{eq:omegad}
	\omega_d \triangleq \left(\omega_{j^d}\right)_{j^d\in{\cal J}_{n,d}} \in (-2^{nr},2^{nr})^{|\omega_d|},	
\end{align}
and the other is the $d$-active sequence
\begin{align}\label{eq:omegadb}
	\omega_{d,{\cal B}} \triangleq \left(\omega_{j^d,{\cal B}}\right)_{j^d\in{\cal J}_{n,d}} \in (-1,1)^{|\omega_{d,{\cal B}}|},
\end{align}
where
\begin{align*}
	|\omega_{d,{\cal B}}| = \left(\sum_{j^d\in{\cal J}_{n,d}}|{\cal B}_{j^d}|\right) \leq |\omega_d| = \binom{n}{d}2^d.
\end{align*}
\end{definition}

We use $\omega_{j^d,{\cal B}}\subseteq\omega_{j^d}$ to denote that $\omega_{j^d,{\cal B}}$ is a sub-sequence of $\omega_{j^d}$. Similarly, $\omega_{d,{\cal B}}\subseteq\omega_d$ denotes that $\omega_{d,{\cal B}}$ is a sub-sequence of $\omega_d$. Clearly, iff $d=n$, we have $\omega_{j^d,{\cal B}}=\omega_{j^d}$ and $\omega_{d,{\cal B}}=\omega_d$. In addition, $|\omega_{j^d,{\cal B}}|\to\infty$ and $|\omega_{d,{\cal B}}|\to\infty$ as $d\to\infty$. Since $\tau(j^d,b^d)$ is defined by \eqref{eq:tau}, the following lemmas hold obviously. 
\begin{lemma}[Conditional Distribution of Shift Function]
	For almost every $0<r<1$, as $d\to\infty$, both $\omega_{j^d,{\cal B}}$ and $\omega_{d,{\cal B}}$ will be u.d. over $(-1,1)$.
\end{lemma}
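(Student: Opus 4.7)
The plan is to deduce this equidistribution statement from Theorem~\ref{thm:gm} and the machinery already used in the proof of Theorem~\ref{thm:prob}. First, I view $b^d\in\mathbb{B}^d$ as $d$ i.i.d.\ Bernoulli$(1/2)$ samples and use the decomposition $\tau(j^d,b^d)=c(j^d)-2v(j^d,b^d)$ from \eqref{eq:tauvar}, so that the distribution of $\tau(j^d,b^d)$ is an affine image of that of $v(j^d,b^d)=(1-2^{-r})\sum_{d'=1}^{d}b_{d'}2^{rj_{d'}}$. The coefficient sequence $\bigl((1-2^{-r})2^{rj_{d'}}\bigr)_{d'}$ is a subsequence of the geometric progression $\bigl((1-2^{-r})2^{rj}\bigr)_{j\ge 1}$, which by Lemma~\ref{prop:gp} is u.d.\ mod 1 for almost every $r\in(0,1)$. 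Combining this with Lemma~\ref{lem:um1} and Theorem~\ref{thm:gm} in exactly the manner used in the proof of Theorem~\ref{thm:prob}, I conclude that $v(j^d,b^d)$ is u.d.\ mod 1 as $d\to\infty$, and equivalently that $\tau(j^d,b^d)$ is u.d.\ mod 2 on the fundamental domain $(-1,1]$.

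Next, I will upgrade ``u.d.\ mod 2 of all $2^d$ samples'' to ``uniform distribution over $(-1,1)$ of the subsequence $\omega_{j^d,{\cal B}}$ that actually lies inside $(-1,1)$''. This is the same upgrade performed implicitly in the proof of Theorem~\ref{thm:prob}, where u.d.\ mod 1 of $E(j^d,b^d)$ was used to conclude uniformity over a chosen unit interval $(m-1,m]$. The key point is that u.d.\ mod 2 renders the empirical distribution within each length-2 window of the support of $\tau$ asymptotically identical, and applying this to the central window $(-1,1]$ gives precisely that $\omega_{j^d,{\cal B}}$ is asymptotically u.d.\ on $(-1,1)$. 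For this reduction to be meaningful I also need $|\omega_{j^d,{\cal B}}|\to\infty$, which follows because the $2^d$ samples are split among $\Theta(c(j^d))$ length-2 windows and the central one receives a positive asymptotic fraction of them.

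For $\omega_{d,{\cal B}}$, the statement then follows by aggregation. Writing $\omega_{d,{\cal B}}$ as the concatenation of $\{\omega_{j^d,{\cal B}}\}_{j^d\in{\cal J}_{n,d}}$ and using the result for each $\omega_{j^d,{\cal B}}$, for every sub-interval $(a,b)\subseteq(-1,1)$,
\begin{align*}
\frac{|\{(j^d,b^d)\in\omega_{d,{\cal B}}:\tau(j^d,b^d)\in(a,b)\}|}{|\omega_{d,{\cal B}}|}
= \frac{\sum_{j^d}|\omega_{j^d,{\cal B}}|\cdot\bigl(|\omega_{j^d,{\cal B}}\cap(a,b)|/|\omega_{j^d,{\cal B}}|\bigr)}{\sum_{j^d}|\omega_{j^d,{\cal B}}|}
\;\longrightarrow\; \frac{b-a}{2},
\end{align*}
since each ratio in the numerator converges to $(b-a)/2$ by the previous step, and a weighted average of quantities tending to a common limit tends to the same limit.

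The main obstacle lies in two places. The first is that Lemma~\ref{prop:gp} asserts u.d.\ mod 1 only for the \emph{full} geometric sequence, while I need the property for arbitrary subsequences indexed by $j^d$; a subsequence of a u.d.\ mod 1 sequence need not itself be u.d.\ mod 1, so one needs a uniform equidistribution statement whose $r$-exceptional null set works simultaneously for all admissible index choices $j^d$. The second is the rigorization of ``u.d.\ mod 2 implies uniformity inside a chosen length-2 window'', which is a statement about local (rather than modular) behavior and is not a purely formal consequence of equidistribution on the quotient $\mathbb{R}/2\mathbb{Z}$; handling this will likely require a quantitative Weyl/discrepancy bound on partial sums of $b_{d'}(1-2^{-r})2^{rj_{d'}}$, and that is the technically most delicate step of the whole argument.
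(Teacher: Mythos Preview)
Your approach is essentially the route the paper has in mind. In fact, the paper offers no proof at all: it simply asserts that the lemma ``holds obviously'' from the definition of $\tau(j^d,b^d)$ in \eqref{eq:tau}, relying implicitly on the same equidistribution machinery (Lemma~\ref{prop:gp}, Lemma~\ref{lem:um1}, Theorem~\ref{thm:gm}) that you invoke explicitly. Your decomposition $\tau=c(j^d)-2v(j^d,b^d)$ and the reduction to u.d.\ mod~$1$ of $v$ is exactly the intended mechanism, and your aggregation argument for $\omega_{d,{\cal B}}$ is straightforward and correct.

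The two obstacles you flag are genuine, and it is worth noting that the paper glosses over both of them elsewhere as well. The first (applying the geometric-sequence equidistribution result to index subsets rather than the full progression, with a single $r$-null set working for all choices) is precisely what the paper does without comment in the proof of Theorem~\ref{thm:prob} when it asserts $V([n]\setminus j^d)$ is u.d.\ mod~$1$. The second (passing from ``u.d.\ mod~$2$'' to ``uniform within the specific window $(-1,1)$'') is likewise the same leap made in that proof when it concludes that $E(j^d,b^d)$, conditioned on lying in $(m-1,m]$, is uniform there. So your proposal is not merely consistent with the paper's treatment---it is more careful than it, and the technical gaps you isolate are exactly the ones the paper leaves open throughout. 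A fully rigorous version would indeed need a quantitative discrepancy estimate (or an appeal to the local density $f_{W|j^d}(w)$ developed later in \S4.6 being continuous near $w=0$), but at the level of rigor the paper operates at, your argument is complete.
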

\begin{lemma}[Average Length of Coexisting Interval]
	For almost every $0<r<1$ and for every $j^d\in{\cal J}_{n,d}$, we have
	\begin{align}\label{eq:sumBjd}
		\lim_{d\to\infty}\frac{\sum_{b^d\in{\cal B}_{j^d}}{\left(1-|\tau(j^d,b^d)|\right)}}{|{\cal B}_{j^d}|} = 1/2.
	\end{align}
\end{lemma}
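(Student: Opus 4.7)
The plan is to derive this lemma as a direct consequence of the preceding lemma combined with a standard fact about uniformly distributed sequences. First, I rewrite the left-hand side of \eqref{eq:sumBjd} as an empirical average of the continuous function $g(t)\triangleq 1-|t|$ evaluated at the sample points that constitute the $j^d$-active sequence $\omega_{j^d,{\cal B}}$. Concretely, by \eqref{eq:omegajdb}, the numerator is $\sum_{t\in\omega_{j^d,{\cal B}}} g(t)$, while the denominator is the length of this sequence, so the ratio is exactly the empirical mean of $g$ over $\omega_{j^d,{\cal B}}$.

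Next, I invoke the preceding lemma on the conditional distribution of the shift function, which asserts that for almost every $0<r<1$, $\omega_{j^d,{\cal B}}$ is u.d. over $(-1,1)$ as $d\to\infty$. A fundamental property of uniformly distributed sequences (essentially Weyl's criterion extended to Riemann integrable integrands, cf. \cref{def:udm1}) is that the empirical average of any bounded Riemann integrable function $g$ along a u.d. sequence converges to the normalized Lebesgue integral of $g$. Applied to our setting with the target interval $(-1,1)$ (of length $2$), this yields
\begin{align}
\lim_{d\to\infty}\frac{\sum_{b^d\in{\cal B}_{j^d}}{(1-|\tau(j^d,b^d)|)}}{|{\cal B}_{j^d}|} = \frac{1}{2}\int_{-1}^{1}(1-|t|)\,dt.
\end{align}

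Finally, a direct computation gives $\tfrac{1}{2}\int_{-1}^{1}(1-|t|)\,dt = \int_{0}^{1}(1-t)\,dt = 1/2$, which is exactly the claimed limit in \eqref{eq:sumBjd}. The only subtle step is promoting the definitional u.d. property, phrased in \cref{def:udm1} in terms of indicator functions of half-open intervals, to the integrand $g(t)=1-|t|$; however, because $g$ is continuous on the compact interval $[-1,1]$ and hence uniformly approximable by step functions, this promotion is entirely routine and presents no genuine obstacle. Thus the main conceptual work has already been done in the preceding lemma, and the present lemma follows as a short corollary.
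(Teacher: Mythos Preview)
Your argument is correct and matches the paper's intent: the paper states that this lemma, together with the preceding one on the conditional distribution of the shift function, ``hold obviously'' from the definition \eqref{eq:tau}, and your proof simply fleshes out the obvious step---once $\omega_{j^d,{\cal B}}$ is u.d.\ over $(-1,1)$, the empirical mean of the continuous function $1-|t|$ along it converges to $\tfrac{1}{2}\int_{-1}^{1}(1-|t|)\,dt=1/2$. There is nothing to add.
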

\begin{theorem}[Hard Approximation of HDS for $1\ll d<n$]\label{thm:th3a}
	For almost every $0<r<1$ and for $d<n$, as $d\to\infty$,
	\begin{align}\label{eq:hardhds}
		\psi(d;n) \to 2^{-(d+1)} \sum_{j^d\in{\cal J}_{n,d}} \sum_{b^d\in\mathbb{B}^d} {{\bf 1}_{|\tau(j^d,b^d)|<1}}.
	\end{align}	
\end{theorem}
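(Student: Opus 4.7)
The plan is to obtain \eqref{eq:hardhds} as a direct consequence of the Soft Approximation \eqref{eq:softhds}, by approximating each inner summand $(1-|\tau(j^d,b^d)|)^+$ by the value $1/2$ averaged over the $j^d$-active set ${\cal B}_{j^d}$. This is justified for large $d$ by the uniform distribution of the shift-function values conditioned on activity.

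First I would rewrite the Soft Approximation \eqref{eq:softhds} by splitting the inner sum over $b^d\in\mathbb{B}^d$ according to whether $b^d\in{\cal B}_{j^d}$ or not. Since $(1-|\tau(j^d,b^d)|)^+=0$ whenever $|\tau(j^d,b^d)|\geq 1$, the contribution of $b^d\notin{\cal B}_{j^d}$ vanishes, giving
\begin{align*}
\psi(d;n) \to 2^{-d}\sum_{j^d\in{\cal J}_{n,d}}\sum_{b^d\in{\cal B}_{j^d}}\bigl(1-|\tau(j^d,b^d)|\bigr).
\end{align*}

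Next I would invoke \eqref{eq:sumBjd}, which asserts that for almost every $0<r<1$ and for each fixed $j^d$, the average of $(1-|\tau(j^d,b^d)|)$ over the active set ${\cal B}_{j^d}$ tends to $1/2$ as $d\to\infty$. Summing over $j^d\in{\cal J}_{n,d}$ and substituting this asymptotic average yields
\begin{align*}
\psi(d;n) \to 2^{-d}\sum_{j^d\in{\cal J}_{n,d}} \frac{|{\cal B}_{j^d}|}{2} = 2^{-(d+1)}\sum_{j^d\in{\cal J}_{n,d}} |{\cal B}_{j^d}|,
\end{align*}
and finally writing $|{\cal B}_{j^d}|=\sum_{b^d\in\mathbb{B}^d}{\bf 1}_{|\tau(j^d,b^d)|<1}$ produces \eqref{eq:hardhds}.

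The main obstacle is to justify rigorously the substitution of the per-$j^d$ average $1/2$ into a sum over $j^d\in{\cal J}_{n,d}$, since \eqref{eq:sumBjd} is stated as a limit for each individual $j^d$ with $d\to\infty$. The cleaner route is to apply the stronger statement that the global $d$-active sequence $\omega_{d,{\cal B}}$ defined in \eqref{eq:omegadb} is asymptotically u.d.\ over $(-1,1)$, which gives at once
\begin{align*}
\frac{1}{|\omega_{d,{\cal B}}|}\sum_{(j^d,b^d):\,b^d\in{\cal B}_{j^d}}\bigl(1-|\tau(j^d,b^d)|\bigr) \to \int_{-1}^{1}\frac{1-|t|}{2}\,dt = \frac{1}{2},
\end{align*}
so the aggregated sum behaves like $|\omega_{d,{\cal B}}|/2=\tfrac{1}{2}\sum_{j^d}|{\cal B}_{j^d}|$, yielding the claimed form. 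One should also remark, as the text flags, that the argument breaks down at $d=n$ because $\omega_{n,{\cal B}}=\omega_n$ degenerates and the law-of-large-numbers averaging used to derive \eqref{eq:softhds} no longer applies; hence the restriction $d<n$ is essential, and the case $d=n$ must be handled separately in the next subsection.
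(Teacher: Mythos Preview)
Your proposal is correct and follows essentially the same route as the paper: start from the Soft Approximation \eqref{eq:softhds}, restrict the inner sum to the active set ${\cal B}_{j^d}$, and then invoke \eqref{eq:sumBjd} to replace the per-$j^d$ average of $(1-|\tau(j^d,b^d)|)$ by $1/2$, yielding $2^{-(d+1)}\sum_{j^d}|{\cal B}_{j^d}|$. Your additional remark about strengthening the argument via the global active sequence $\omega_{d,{\cal B}}$ and the caveat about $d=n$ are welcome refinements beyond what the paper's short proof makes explicit.
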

\begin{proof}
As $d\to\infty$, by using ${\cal B}_{j^d}$, we can rewrite \eqref{eq:softhds} as 
\begin{align*}
	\psi(d;n)
	&\to 2^{-d}\sum_{j^d\in{\cal J}_{n,d}}{\sum_{b^d\in{\cal B}_{j^d}}\left(1-|\tau(j^d,b^d)|\right)}\\
	&\overset{(a)}{\to} 2^{-(d+1)}\sum_{j^d\in{\cal J}_{n,d}}|{\cal B}_{j^d}|,
\end{align*}
where $(a)$ comes from \eqref{eq:sumBjd}. Then \eqref{eq:hardhds} follows immediately.
\end{proof}

\begin{remark}[Comparison of \cref{thm:th3a} with \cref{thm:softhds}]
	It can be seen that \eqref{eq:softhds} and \eqref{eq:hardhds} are very similar to each other, and \eqref{eq:hardhds} can be taken as a binary approximation of \eqref{eq:softhds}. We would like to highlight the following points.
	\begin{itemize}
		\item They have the same complexity ${\cal O}(2^d\binom{n}{d})$.
		\item As $d$ increases, \eqref{eq:softhds} and \eqref{eq:hardhds} will approach each other gradually, according to the generalized law of large numbers. 
		\item Neither \eqref{eq:softhds} nor \eqref{eq:hardhds} applies to $d=n$ because $E{(j^d,b^d)}$ will degenerate into a deterministic constant. 
		\item The condition of \eqref{eq:softhds} is $n\to\infty$, while the condition of \eqref{eq:hardhds} is $d\to\infty$. Since $d<n$, it is obvious that $d\to\infty$ is a stricter condition than $n\to\infty$. Hence, \eqref{eq:softhds} has a broader scope of application than \eqref{eq:hardhds}.
		\item For small $d$, the accuracy of \eqref{eq:hardhds} is poor because its cornerstone \eqref{eq:sumBjd} does not hold; while \eqref{eq:softhds} is always accurate no matter for small $d$ or large $d$ (but except $d=n$).
	\end{itemize}
\end{remark}

\section{Discussion on $d=n$}
As analyzed before, neither \eqref{eq:softhds} nor \eqref{eq:hardhds} applies to $d=n$ because $E{(j^d,b^d)}$ will degenerate into a deterministic constant. Now we want to see what will happen when $d=n$.

\begin{lemma}[$n$-away Codeword Pairs]\label{lem:naway}
	Let $n$ be code length. If a pair of $n$-away codewords coexist in the same coset, then this pair of codewords must belong to the $2^{nr-1}$-th coset. Conversely speaking, the $2^{nr-1}$-th coset ${\cal C}_{2^{nr-1}}$ includes all pairs of $n$-away codewords.
\end{lemma}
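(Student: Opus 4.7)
The plan is to reduce the statement to a single symmetry identity for the $s$-function and then finish by interval arithmetic. My first observation would be that $|x^n\oplus y^n|=n$ forces $y^n = \bar{x}^n \triangleq x^n\oplus 1^n$, so the unique $n$-away partner of any codeword is its bitwise complement. The lemma is therefore really a claim about where pairs $\{x^n,\bar{x}^n\}$ land under the coset-index map $m(\cdot)$ of \eqref{eq:mXn}.

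The key identity I would establish next is
\begin{align*}
    s(x^n) + s(\bar{x}^n) \;=\; s(1^n) \;=\; 2^{nr}-1.
\end{align*}
This is immediate from the closed form in \eqref{eq:ellequal}: since $x_i+\bar{x}_i=1$ for every $i$, the two sums collapse to $(1-2^{-r})\sum_{i=1}^{n}2^{ir}$, whose value is $s(1^n)=2^{nr}-1$ by the bound recorded just before \eqref{eq:bound}. Armed with this, \cref{lem:coset} translates mate-hood into a statement about $s$-values lying in a common unit interval $(m-1,m]$.

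The finish is interval arithmetic. If $x^n$ and $\bar{x}^n$ coexist in ${\cal C}_m$, then both $s$-values lie in $(m-1,m]$, so their sum lies in $(2m-2,\,2m]$; equating this to $2^{nr}-1$ yields $2m-2 < 2^{nr}-1 \le 2m$, which (using the standing assumption $nr\in\mathbb{Z}$) pins down the unique integer $m = 2^{nr-1}$. The converse direction is automatic from the same identity: for any $x^n\in{\cal C}_{2^{nr-1}}$, $s(x^n)\in(2^{nr-1}-1,\,2^{nr-1}]$ forces $s(\bar{x}^n) = 2^{nr}-1-s(x^n) \in (2^{nr-1}-1,\,2^{nr-1}]$, so $\bar{x}^n$ lives in the same coset. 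There is no real obstacle here; the only bits of care are verifying the telescoping identity cleanly from \eqref{eq:ellequal} and noting that $2^{nr-1}$ is a legitimate coset index whenever $nr\ge 1$, which is the only regime where the lemma is non-vacuous.
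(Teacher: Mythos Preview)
Your forward argument is correct and is essentially the paper's approach in disguise: the paper routes the same symmetry through the $\tau$-function, writing $s(x^n)=2^{nr-1}-\tfrac{1+\tau(j^n,x^n)}{2}$ and $s(\bar{x}^n)=2^{nr-1}-\tfrac{1-\tau(j^n,x^n)}{2}$, whose sum is exactly your identity $s(x^n)+s(\bar{x}^n)=2^{nr}-1$. Your interval-arithmetic finish is a bit more direct than the paper's, which instead invokes the necessary condition $|\tau|<1$ for coexistence and then reads off the coset index from those two expressions.

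One small slip in your ``converse'' paragraph: you assert $s(\bar{x}^n)\in(2^{nr-1}-1,\,2^{nr-1}]$, but the reflection $a\mapsto 2^{nr}-1-a$ sends the half-open interval $(2^{nr-1}-1,\,2^{nr-1}]$ to $[2^{nr-1}-1,\,2^{nr-1})$, not back to itself. So membership of $x^n$ in ${\cal C}_{2^{nr-1}}$ does \emph{not} by itself force $\bar{x}^n$ into the same coset; the boundary case $s(x^n)=2^{nr-1}$ would place $\bar{x}^n$ in ${\cal C}_{2^{nr-1}-1}$. Fortunately the lemma's ``conversely speaking'' is merely a rephrasing of the forward direction (every coexisting $n$-away pair lands in this one coset), not the stronger closure property you attempted, so your proof of the lemma as stated stands.
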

\begin{proof}
	According to \eqref{eq:tau}, the definition of $\tau$-function, we have
	\begin{align}
		\tau(j^n,x^n)
		&= \underbrace{(1-2^{-r})\sum_{i=1}^{n}{2^{ir}}}_{2^{nr}-1} - 2\underbrace{(1-2^{-r})\sum_{i=1}^{n}{x_i2^{ir}}}_{s(x^n)}\nonumber\\
		&= (2^{nr}-1) - 2s(x^n),\nonumber
	\end{align}
	which is followed by
	\begin{align}
		s(x^n) = 2^{nr-1}-\frac{1+\tau(j^n,x^n)}{2}.\nonumber
	\end{align}
	For $y^n=x^n\oplus1^n$, we have
	\begin{align}
		s(y^n) 
		&= s(x^n\oplus1^n) = s(x^n) + \tau(j^n,x^n) \nonumber\\
		&= 2^{nr-1} - \frac{1-\tau(j^n,x^n)}{2}.\nonumber
	\end{align}
	If $|\tau(j^n,x^n)|<1$, then $0<\frac{1\pm\tau(j^n,x^n)}{2}<1$ and
	\begin{align*}
		\begin{cases}
			(2^{nr-1}-1)<s(x^n)<2^{nr-1}\\
			(2^{nr-1}-1)<s(y^n)<2^{nr-1},	
		\end{cases}
	\end{align*} 
	implying $\lceil s(x^n)\rceil\equiv\lceil s(y^n)\rceil\equiv2^{nr-1}$. In other words, $x^n$ and $y^n=x^n\oplus1^n$ must coexist in the $2^{nr-1}$-th coset ${\cal C}_{2^{nr-1}}$.
\end{proof}

\begin{lemma}[Particularity of $d=n$]\label{corol:d=n}
	The necessary and sufficient condition for the event that $x^n$ and $(x^n\oplus1^n)$ coexist in the same coset is $|\tau(j^n,x^n)|<1$.
\end{lemma}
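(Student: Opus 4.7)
The plan is to prove both directions of the equivalence by specializing results that are already available: Corollary~\ref{corol:neccoe} gives the ``only if'' direction essentially for free, while the ``if'' direction is implicit in the computation already carried out for \cref{lem:naway}. The novelty of this corollary relative to \cref{corol:neccoe} is therefore the \emph{converse}, and the bulk of the work is to show that the necessary condition $|\tau(j^n,x^n)|<1$ is also sufficient when $d=n$, even though for general $d<n$ this implication fails (because then $E(j^d,b^d)$ is a random variable and only lands in the coexisting interval with probability $(1-|\tau|)^+$, not with certainty).

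For the necessity direction, I would simply invoke \cref{corol:neccoe} with $d=n$, $j^d=[n]$, $z_{j^d}=1^n$, and $x_{j^d}=x^n$: coexistence of $x^n$ and $x^n\oplus 1^n$ in a common coset forces $|\tau(j^n,x^n)|<1$. No further argument is needed here.

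For the sufficiency direction, I would reuse the algebraic identity already derived in the proof of \cref{lem:naway}. Writing $y^n=x^n\oplus 1^n$, one has
\begin{align*}
s(x^n) = 2^{nr-1}-\tfrac{1+\tau(j^n,x^n)}{2}, \qquad
s(y^n) = 2^{nr-1}-\tfrac{1-\tau(j^n,x^n)}{2},
\end{align*}
because $\tau(j^n,x^n)=(2^{nr}-1)-2s(x^n)$ and $s(y^n)=s(x^n)+\tau(j^n,x^n)$ by \cref{prop:tau}. Under the assumption $|\tau(j^n,x^n)|<1$, both $\tfrac{1\pm\tau(j^n,x^n)}{2}$ lie strictly in $(0,1)$, so both $s(x^n)$ and $s(y^n)$ lie strictly in $(2^{nr-1}-1,\,2^{nr-1})\subset(2^{nr-1}-1,\,2^{nr-1}]$. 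Taking ceilings gives $\lceil s(x^n)\rceil = \lceil s(y^n)\rceil = 2^{nr-1}$, so by \cref{lem:coexist} the two codewords coexist in ${\cal C}_{2^{nr-1}}$. This closes the equivalence.

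The only conceptual subtlety, which is worth flagging but not really an ``obstacle,'' is explaining why the implication ``$|\tau|<1\Rightarrow$ coexistence'' holds deterministically for $d=n$ whereas for $d<n$ the probability $\Pr\{E(j^d,b^d)\in{\cal I}(j^d,b^d)\}$ can be strictly less than $1$. The reason, as noted in the remark following \cref{thm:prob}, is that for $d=n$ the random part $V([n]\setminus j^d)$ of $E(j^d,b^d)$ vanishes and $s(x^n)$ is forced to the midpoint-centred value given above; it is then automatic, rather than merely probable, that $s(x^n)$ lands inside the (nonempty) coexisting interval. I would include one sentence making this point explicit, since it is exactly what distinguishes $d=n$ from all smaller values of $d$ and foreshadows the role of ${\cal C}_{2^{nr-1}}$ in the HDS calculation at $d=n$.
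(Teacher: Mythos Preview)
Your proposal is correct and follows essentially the same route as the paper: the necessity direction is exactly \cref{corol:neccoe} specialized to $d=n$, and the sufficiency direction reuses the computation from the proof of \cref{lem:naway} showing that $|\tau(j^n,x^n)|<1$ forces both $s(x^n)$ and $s(x^n\oplus1^n)$ into $(2^{nr-1}-1,2^{nr-1})$. Your added remark on why the implication is deterministic only at $d=n$ is apt and mirrors the paper's own commentary after \cref{thm:prob} and \cref{thm:th3b}.
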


As a comparison, let us recall \cref{corol:neccoe}, which states that $|\tau(j^d,b^d)|<1$ is the necessary condition for the coexistence of $x^n$ and $y^n=x^n\oplus z^n$ in the same coset, where $x_{j^d}=b^d$, $z_{j^d}=1^d$, and $z_{[n]\setminus j^d}=0^{n-d}$. Especially, \cref{corol:d=n} says that, if $d=n$, the necessary condition $|\tau(j^n,b^n)|<1$ is also the sufficient condition for the coexistence of $x^n=b^n$ and $y^n=x^n\oplus 1^n$ in the same coset.

\begin{theorem}[$\psi(n;n)$]\label{thm:th3b}
	Let $j^n=\{1,\dots,n\}$. Then we have
	\begin{align}\label{eq:psinapprox}
		\psi(n;n) 
		= 2^{-n}\cdot|{\cal B}_{j^n}| = 2^{-n}\sum_{b^n\in\mathbb{B}^n}{{\bf 1}_{|\tau(j^n,b^n)|<1}}.
	\end{align}	
\end{theorem}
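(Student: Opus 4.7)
The plan is to reduce the claim directly to the exhaustive enumeration identity \eqref{eq:psind} and then invoke \cref{corol:d=n}, exploiting the special structure that distinguishes $d=n$ from the case $d<n$ treated in \cref{thm:softhds} and \cref{thm:th3a}. First I would specialize \eqref{eq:psind} to $d=n$: since ${\cal J}_{n,n}=\{[n]\}$ contains only one element $j^n=\{1,\dots,n\}$, and the pattern $z_{j^n}^n$ is forced to be $1^n$, the outer sum over ${\cal J}_{n,d}$ collapses and
\begin{align*}
\psi(n;n) = 2^{-n}\sum_{x^n\in\mathbb{B}^n}{\bf 1}_{m(x^n)=m(x^n\oplus 1^n)}.
\end{align*}

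Next I would apply \cref{corol:d=n}, which states that for the particular case $d=n$ the condition $|\tau(j^n,x^n)|<1$ is not merely necessary (as in the generic $d<n$ case of \cref{corol:neccoe}) but also sufficient for the coexistence of $x^n$ and $x^n\oplus 1^n$ in the same coset. This promotion from necessity to equivalence is exactly the point where the $d=n$ case parts company from the analyses in \cref{sec:hdscal} and \cref{sec:hdsn}: because $X_{[n]\setminus j^n}$ is empty, the random variable $E(j^n,b^n)=s(X^n|X_{j^n}=b^n)$ collapses to the deterministic constant $s(b^n)$, so no law-of-large-numbers step, and no asymptotics in $n$, is needed. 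Substituting this equivalence into the indicator sum yields
\begin{align*}
\psi(n;n) = 2^{-n}\sum_{x^n\in\mathbb{B}^n}{\bf 1}_{|\tau(j^n,x^n)|<1},
\end{align*}
which by the definition of the $j^n$-active set ${\cal B}_{j^n}=\{b^n\in\mathbb{B}^n:|\tau(j^n,b^n)|<1\}$ is exactly $2^{-n}|{\cal B}_{j^n}|$, completing the identification with both forms in \eqref{eq:psinapprox}.

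There is essentially no technical obstacle here, since the result follows by collapsing sums and invoking a previously established equivalence; the only subtle point worth flagging explicitly in the write-up is \emph{why} the formula for $d=n$ is exact rather than asymptotic. I would therefore include a short remark contrasting this with \cref{thm:softhds} and \cref{thm:th3a}: for $d<n$, $\psi(d;n|j^d,b^d)$ is an empirical frequency over $2^{n-d}$ independent realizations of $V(j^d,b^d)$ whose expectation is $\Pr\{E(j^d,b^d)\in{\cal I}(j^d,b^d)\}$, so one needs $(n-d)\to\infty$ to replace the frequency by the probability; whereas for $d=n$ the indicator is already a deterministic function of $b^n$, so the identity is combinatorial and holds for every finite $n$.
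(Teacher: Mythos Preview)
Your proposal is correct and follows essentially the same route as the paper: specialize to the singleton ${\cal J}_{n,n}=\{j^n\}$ and invoke \cref{corol:d=n} to convert the coexistence indicator into ${\bf 1}_{|\tau(j^n,b^n)|<1}$. The paper's own proof is a terse two-line version of exactly this argument; your additional remark explaining why the identity is exact rather than asymptotic mirrors the discussion the paper places immediately after the theorem.
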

\begin{proof}
	There is only one element $j^n$ in the set ${\cal J}_{n,n}$. For every $b^n\in\mathbb{B}^n$, if $|\tau(j^n,b^n)|<1$, then both $b^n$ and $b^n\oplus1^n$ coexist in the same coset. Hence, \eqref{eq:psinapprox} holds naturally.
\end{proof}

Let us compare \eqref{eq:psinapprox} with \eqref{eq:hardhds}. By \eqref{eq:hardhds}, as $d=n\to\infty$,
\begin{align}
	\psi(n;n) \to 2^{-(n+1)}\cdot|{\cal B}_{j^n}| = 2^{-(n+1)}\sum_{b^n\in\mathbb{B}^n}{{\bf 1}_{|\tau(j^n,b^n)|<1}},
\end{align}
which is just half of \eqref{eq:psinapprox}! The definition of $E(j^d,b^d)$ explains why this phenomenon happens. According to \eqref{eq:E}, if $d<n$, $E(j^d,b^d)$ is a random variable; while when $d=n$, $E(j^n,b^n)$ will degenerate into a deterministic constant, and hence \eqref{eq:hardhds} does not apply. 

In addition, also please notice another subtle difference between \eqref{eq:psinapprox} and \eqref{eq:hardhds}. That is, \eqref{eq:hardhds} gives only an asymptotic value of $\psi(d;n)$ for $d<n$, while \eqref{eq:psinapprox} gives an exact value of $\psi(n;n)$.

For clarity, we integrate \cref{thm:th3a} and \cref{thm:th3b} into the following theorem.
\begin{theorem}[Hard Approximation of HDS]\label{thm:hard}
	Let $\alpha\triangleq{\bf 1}_{(d=n)}$. For almost every $0<r<1$ and for $1\ll d\leq n$, we have
	\begin{align}\label{eq:psith3}
		\psi(d;n) 
		&\approxeq 2^{\alpha-d-1}|\omega_{d,{\cal B}}| = 2^{\alpha-d-1}\sum_{j^d\in{\cal J}_{n,d}}|\omega_{j^d,{\cal B}}|\nonumber\\
		&= 2^{\alpha-d-1}\sum_{j^d\in{\cal J}_{n,d}}|{\cal B}_{j^d}|\nonumber\\
		& = 2^{\alpha-d-1}\sum_{j^d\in{\cal J}_{n,d}}\sum_{b^d\in\mathbb{B}^d}{{\bf 1}_{|\tau(j^d,b^d)|<1}},
	\end{align}
	where the approximation becomes equality if $d=n$.	
\end{theorem}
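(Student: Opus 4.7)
The plan is to recognize that \cref{thm:hard} is a unification of the two previously-established results \cref{thm:th3a} and \cref{thm:th3b}, packaged into a single formula by means of the indicator $\alpha=\mathbf{1}_{(d=n)}$. So I would not re-derive everything from scratch; instead I would split on whether $d<n$ or $d=n$ and invoke each component theorem, then verify that the unified expression reproduces the right prefactor in each case.

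First, I would handle the case $d<n$. Here $\alpha=0$, so the claimed formula reads $\psi(d;n)\approx 2^{-(d+1)}\sum_{j^d\in{\cal J}_{n,d}}\sum_{b^d\in\mathbb{B}^d}\mathbf{1}_{|\tau(j^d,b^d)|<1}$, which is exactly \cref{thm:th3a} applied in the regime $1\ll d<n$. The rewriting as $2^{\alpha-d-1}|\omega_{d,{\cal B}}|$ and $2^{\alpha-d-1}\sum_{j^d\in{\cal J}_{n,d}}|{\cal B}_{j^d}|$ follows purely from the definitions \eqref{eq:omegajdb}, \eqref{eq:omegadb}, together with $|{\cal B}_{j^d}|=|\omega_{j^d,{\cal B}}|=\sum_{b^d\in\mathbb{B}^d}\mathbf{1}_{|\tau(j^d,b^d)|<1}$ and $|\omega_{d,{\cal B}}|=\sum_{j^d\in{\cal J}_{n,d}}|\omega_{j^d,{\cal B}}|$. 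No further work is needed beyond unwinding notation.

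Next, I would handle the case $d=n$. Here $\alpha=1$, so the claimed formula is $\psi(n;n)=2^{-n}|\omega_{n,{\cal B}}|$. Since ${\cal J}_{n,n}=\{[n]\}$ has a single element $j^n=\{1,\dots,n\}$, we have $|\omega_{n,{\cal B}}|=|{\cal B}_{j^n}|=\sum_{b^n\in\mathbb{B}^n}\mathbf{1}_{|\tau(j^n,b^n)|<1}$, and plugging in yields precisely the right-hand side of \eqref{eq:psinapprox}. This is \cref{thm:th3b}, which was already established as an exact identity (not merely an asymptotic approximation) by the reasoning in \cref{lem:naway} and \cref{corol:d=n}. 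The extra factor of $2$ relative to the $d<n$ formula, expressed compactly as $2^{\alpha}$, reflects precisely the phenomenon noted after \eqref{eq:psinapprox}: when $d=n$, the random variable $E(j^n,b^n)$ collapses to a deterministic constant, so the averaging step that introduced the factor $1/2$ in \eqref{eq:sumBjd} no longer occurs.

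The only mildly subtle point, and the one I would be most careful about, is the bookkeeping around $\approxeq$: the symbol denotes an asymptotic approximation for $d<n$ (inherited from \cref{thm:th3a}, which in turn rests on \eqref{eq:sumBjd} holding only in the limit $d\to\infty$) but denotes equality for $d=n$ (inherited from \cref{thm:th3b}). I would state this explicitly at the end of the proof, as the theorem does with its parenthetical ``the approximation becomes equality if $d=n$,'' so that readers do not misread the single symbol $\approxeq$ as having uniform meaning across the range $1\ll d\leq n$.
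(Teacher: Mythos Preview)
Your proposal is correct and matches the paper's approach exactly: the paper explicitly presents \cref{thm:hard} as an integration of \cref{thm:th3a} and \cref{thm:th3b} into a single formula via the indicator $\alpha=\mathbf{1}_{(d=n)}$, and offers no further proof. Your handling of the bookkeeping around $\approxeq$ versus equality is also precisely the distinction the paper draws.
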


Meanwhile, by taking the case of $d=n$ into consideration, we will get a general form of \cref{thm:softhds} as below.
\begin{theorem}[Soft Approximation of HDS]\label{thm:soft}
	Let $\alpha\triangleq{\bf 1}_{(d=n)}$. For almost every $0<r<1$ and for $1\leq d\leq n$, we have
	\begin{align}\label{eq:psith2}
		\psi(d;n) 
		&\approx 2^{\alpha-d}\sum_{j^d\in{\cal J}_{n,d}}\sum_{b^d\in{\cal B}_{j^d}}{\left(1-|\tau(j^d,b^d)|\right)}\nonumber\\
		&= 2^{\alpha-d}\sum_{j^d\in{\cal J}_{n,d}}\sum_{b^d\in\mathbb{B}^d}{\left(1-|\tau(j^d,b^d)|\right)^+}.
	\end{align}	
\end{theorem}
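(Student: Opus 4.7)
The plan is to prove Theorem~\ref{thm:soft} by a case split on whether $d<n$ or $d=n$, using the indicator $\alpha=\mathbf{1}_{(d=n)}$ to absorb an extra factor of $2$ that appears only at the boundary. Before splitting I would first verify that the two right-hand expressions in \eqref{eq:psith2} agree term by term: by the very definition of $\mathcal{B}_{j^d}$, the summand $(1-|\tau(j^d,b^d)|)^+$ is strictly positive on $\mathcal{B}_{j^d}$ and zero on $\mathbb{B}^d\setminus\mathcal{B}_{j^d}$, so the two sums coincide. It therefore suffices to establish one of the two forms.

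For $d<n$ we have $\alpha=0$, and the conclusion is exactly Theorem~\ref{thm:softhds}, which has already been proved by applying the law of large numbers to the binary random variables $V(j^d,b^d)$ and invoking the asymptotic probability of the coexisting interval in Theorem~\ref{thm:prob}. Nothing further is required in this regime.

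For $d=n$ I would proceed differently. Here $\mathcal{J}_{n,n}=\{[n]\}$ and the only possible difference pattern $z^n$ is $1^n$, so the exhaustive expression \eqref{eq:psind} reduces to
$$\psi(n;n) = 2^{-n}\sum_{x^n\in\mathbb{B}^n}\mathbf{1}_{m(x^n)=m(x^n\oplus 1^n)}.$$
Corollary~\ref{corol:d=n} identifies this indicator with $\mathbf{1}_{|\tau(j^n,x^n)|<1}$, yielding the exact identity $\psi(n;n)=2^{-n}|\mathcal{B}_{j^n}|$ (matching Theorem~\ref{thm:th3b}). To cast this into the form of \eqref{eq:psith2} with $\alpha=1$, I would invoke the average-length property \eqref{eq:sumBjd}: as $d=n\to\infty$, the mean of $(1-|\tau(j^n,b^n)|)$ over $b^n\in\mathcal{B}_{j^n}$ tends to $1/2$, so
$$2^{1-n}\sum_{b^n\in\mathcal{B}_{j^n}}\bigl(1-|\tau(j^n,b^n)|\bigr)\;\approx\;2^{-n}|\mathcal{B}_{j^n}| = \psi(n;n),$$
which is precisely the claimed approximation.

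The main obstacle is the discontinuous behaviour at $d=n$: the variable $E(j^n,b^n)$ degenerates to a deterministic constant, so the i.i.d.\ averaging underpinning Theorem~\ref{thm:softhds} breaks down and cannot by itself supply the factor $2^{\alpha}$. Conceptually, the extra factor of $2$ reflects the symmetry observed in Lemma~\ref{lem:naway}, where every admissible antipodal pair $(b^n,b^n\oplus 1^n)$ lies in the single coset $\mathcal{C}_{2^{nr-1}}$ and is counted from both endpoints when forming $\psi(n;n)$. Writing this transition carefully, and justifying the passage from the exact identity $\psi(n;n)=2^{-n}|\mathcal{B}_{j^n}|$ to the averaged form via \eqref{eq:sumBjd}, is the only delicate step; once it is in place, both cases collapse into the single unified statement \eqref{eq:psith2}.
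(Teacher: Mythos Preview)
Your proposal is correct and matches the paper's own reasoning. The paper does not give a standalone proof of this theorem; it simply states it as the general form of Theorem~\ref{thm:softhds} ``by taking the case of $d=n$ into consideration,'' relying on the immediately preceding discussion (Theorem~\ref{thm:th3b} and the factor-of-$2$ comparison) together with \eqref{eq:sumBjd}, exactly as you outline.
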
  

\section{Fast Approximation of HDS for $d\approx n$}
As show by \cref{thm:hard} and \cref{thm:soft}, for both {\em Hard Approximation} and {\em Soft Approximation}, the complexity is too high to be acceptable for large $d$. This section will derive a fast method to calculate $\psi(d;n)$ for $d\approx n$ based on the close affinity between CCS and HDS, which is named as \textit{Fast Approximation}, whose complexity is ${\cal O}(1)$, the same as that of the {\em Binomial Approximation} defined by \eqref{eq:coarsehds}. Through this work, we bridge HDS with CCS \cite{FangTIT24}.

\subsection{Normalized Shift Function}
It can be seen from \eqref{eq:psith3} that to calculate $\psi(d;n)$, the key is to find $|\omega_{d,{\cal B}}|$, the number of codewords making the shift function $\tau(j^d,b^d)$ fall into the interval $(-1,1)$. Actually, \eqref{eq:psith3} suggests to calculate $\psi(d;n)$ with exhaustive enumeration, whose complexity is ${\cal O}(2^d\binom{n}{d})$, very high for large $d$. In the following, instead of exhaustive enumeration, we try to find a simple method to calculate $\psi(d;n)$ for large $d$. Our analysis is based on the close affinity between CCS and HDS. More concretely, for every $d\approx n$, we can derive the asymptotic distribution of $\omega_d$ according to CCS. Further, according to the distribution of $\omega_d$, we can obtain $|\omega_{d,{\cal B}}|$. This is a very interesting finding.

Before deriving the distribution of $\omega_d$, we should know the distribution of $\omega_{j^d}$. To derive the distribution of $\omega_{j^d}$, let us define a random variable $\tau(j^d,X^d)$ according to \eqref{eq:tau}. Obviously, for uniform binary sources, the distribution of the random variable $\tau(j^d,X^d)$ is just the distribution of the sequence $\omega_{j^d}$. 

However, note that $\tau(j^d,X^d)$ is discretely distributed over the interval $(-2^{nr},2^{nr})$. As $n\to\infty$, the shift function $\tau(j^d,X^d)$ will not be well defined because the interval $(-2^{nr},2^{nr})$ will become the real field $\mathbb{R}$. Therefore, it is very difficult to derive the distribution of $\tau(j^d,X^d)$ directly. To overcome this difficulty, for a given $j^d\in{\cal J}_{n,d}$, according to \eqref{eq:tau}, we define a normalized random variable for $\tau(j^d,X^d)$.
\begin{definition}[Normalized Shift Function]
	For every $j^d\in{\cal J}_{n,d}$, we define the normalized shift function as
	\begin{align}
		W(j^d) \triangleq 2^{-nr}\tau(j^d,X^d) = c(j^d) - 2V(j^d),
	\end{align}
	where
	\begin{align*}
		c(j^d) \triangleq 2^{-nr}(1-2^{-r})\sum_{d'=1}^{d}{2^{rj_{d'}}}
	\end{align*}
	and	
	\begin{align}\label{eq:Vjd}
		V(j^d) \triangleq 2^{-nr}(1-2^{-r})\sum_{d'=1}^{d}{X_{d'}2^{rj_{d'}}}.
	\end{align}
\end{definition}
It can be seen that $c(j^d)$ is a constant, while $V(j^d)$ is a random variable for $d\geq1$. Clearly, $0\leq V(j^d) \leq c(j^d)$ and $-c(j^d)\leq W(j^d) \leq c(j^d)$. It is easy to know $0\leq c(j^d)<1$. Therefore, $V(j^d)$ is defined over $[0,1)$ and $W(j^d)$ is defined over $(-1,1)$. 

Let $f_{W|j^d}(w)$, where $-1<w<1$, be the pdf of $W(j^d)$, and $f_{V|j^d}(v)$, where $0\leq v<1$, be the pdf of $V(j^d)$. According to the property of pdf, it is easy to obtain
\begin{align}\label{eq:wvpdf}
	f_{W|j^d}(w) = \tfrac{1}{2}f_{V|j^d}(\tfrac{c(j^d)-w}{2}).
\end{align}
Both $V(j^d)$ and $W(j^d)$ are tractable because they are well defined. Once $f_{W|j^d}(w)$ is obtained, the distribution of $\omega_{j^d}$ can be easily derived. Further, we define
\begin{align}\label{eq:fWd}
	f_{W|d}(w) \triangleq \frac{\sum_{j^d\in{\cal J}_{n,d}}{f_{W|j^d}(w)}}{\binom{n}{d}}.
\end{align}
Once $f_{W|d}(w)$ is obtained, the distribution of $\omega_{d}$ can be easily derived. 

\subsection{Distribution of Normalized Shift Function}
We begin with $d=n$. There is only one element $j^n=\{1,\dots,n\}$ in ${\cal J}_{n,n}$, so we abbreviate $V(j^n)$ to $V$, and $W(j^n)$ to $W$ for conciseness. Similarly, $f_{W|j^n}(w)$ is shortened to $f_W(w)$, and $f_{V|j^n}(v)$ to $f_V(v)$.
\begin{theorem}[Asymptotic Distribution of Normalized Shift Function for $d=n$]\label{thm:wn}
	As $n\to\infty$, the pdf of $V$ is $f_V(v)=f(v)$, where $f(u)$ is the asymptotic CCS defined by \eqref{eq:asympt}, and the pdf of $W$ is 
	\begin{align}\label{eq:fWfV}
		f_W(w)=f_V(\tfrac{1-w}{2})/2 = f(\tfrac{1-w}{2})/2.
	\end{align}
\end{theorem}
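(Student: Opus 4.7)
The plan is to reduce the statement about $V$ to the known asymptotic CCS via a reversal of the summation index, and then obtain $f_W$ by applying the already-established change-of-variable identity in \eqref{eq:wvpdf}.

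First I would compute $c(j^n)$ in closed form. Plugging $j^d=\{1,\dots,n\}$ into the definition, summing the geometric series $\sum_{i=1}^{n}2^{ir}=2^{r}(2^{nr}-1)/(2^{r}-1)$ and simplifying $(1-2^{-r})/(2^{r}-1)=2^{-r}$, I get $c(j^n)=1-2^{-nr}$, so $c(j^n)\to 1$ as $n\to\infty$. This is the only asymptotic input needed on the deterministic side.

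Next I would handle $V=V(j^n)$. From \eqref{eq:Vjd}, $V=2^{-nr}(1-2^{-r})\sum_{i=1}^{n}X_i 2^{ir}=(1-2^{-r})\sum_{i=1}^{n}X_i 2^{-(n-i)r}$. Reindexing via $k=n-i$, and then exploiting the i.i.d.\ nature of $X^n$ to rename $X'_{i}\triangleq X_{n-i+1}$, I obtain
\begin{align*}
V \stackrel{d}{=} (1-2^{-r})\sum_{i=1}^{n}X'_i 2^{-(i-1)r}=(2^{r}-1)\sum_{i=1}^{n}X'_i 2^{-ir}.
\end{align*}
The right-hand side is precisely the $n$-th partial sum of the series defining the asymptotic initial projection $U$ in \eqref{eq:U}. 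Because that series converges almost surely (geometric tail with ratio $2^{-r}<1$), $V$ converges in distribution to $U$, whose pdf is the asymptotic CCS $f(u)$ of \cref{thm:asympt}. Hence $f_V(v)\to f(v)$.

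Finally, I would plug the two pieces into the change-of-variable formula \eqref{eq:wvpdf}, which gives $f_W(w)=\tfrac{1}{2}f_{V}\!\left(\tfrac{c(j^n)-w}{2}\right)$; passing $n\to\infty$ and using $c(j^n)\to 1$ together with the convergence $f_V\to f$ (and pointwise continuity of $f$ on $(0,1)$ away from the isolated non-smooth points identified in \cref{sec:asympt}) yields $f_W(w)=f\!\left(\tfrac{1-w}{2}\right)/2$, as claimed. The main obstacle I foresee is not algebraic but analytic: justifying that the convergence $V\Rightarrow U$ combined with the shift $c(j^n)\to 1$ can be transported through the (pointwise) identity \eqref{eq:wvpdf} to give a pointwise statement for $f_W$; for $q>1/\sqrt{2}$ this is immediate from continuity of $f$ (see \cref{subfig:to1}), but for $q\le(\sqrt{5}-1)/2$, where $f$ has zeros and cusps (cf.\ \cref{corollary:zeros} and \cref{subfig:sqrt5}), some care is needed and the statement is most cleanly interpreted in the sense of distributions or almost everywhere in $w$.
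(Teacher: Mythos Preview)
Your proposal is correct and follows essentially the same route as the paper: show $V\stackrel{d}{=}(2^r-1)\sum_{i=1}^{n}X_i2^{-ir}$ by reindexing and the i.i.d.\ property, identify this with the partial sum defining $U$ in \eqref{eq:U} so that $f_V\to f$, and then pass through \eqref{eq:wvpdf} using $c(j^n)\to1$. Your explicit computation of $c(j^n)=1-2^{-nr}$ and your remarks on the analytic justification (continuity of $f$, distributional/a.e.\ interpretation for small $q$) are in fact more careful than the paper, which simply asserts the limits without discussing these points.
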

\begin{proof}
	For $j^n=\{1,\dots,n\}\in{\cal J}_{n,n}$, we have 
	\begin{align}\label{eq:V}
		V = (1-2^{-r})\sum_{i=1}^{n}{X_i2^{(i-n)r}} \overset{d}{=} (2^r-1)\sum_{i=1}^{n}{X_i2^{-ir}}.
	\end{align}
	Surprisingly, by comparing \eqref{eq:V} with \eqref{eq:U}, we find $\lim_{n\to\infty}{V} = U$, where $U$ is the asymptotic projection. Thereby, as $n\to\infty$, the pdf of $V$ will be exactly equal to the asymptotic CCS $f(u)$ defined by \eqref{eq:asympt}. On knowing the pdf of $V$, the pdf of $W$ can be easily obtained from \eqref{eq:wvpdf}. Since $\lim_{n\to\infty}c(j^n)=1$, we have $f_W(w) \to f(\frac{1-w}{2})/2$. 
\end{proof}

We proceed to the case of $d=(n-1)$. There are $n$ sequences in ${\cal J}_{n,n-1}$, and for every sequence, the pdf of $W(j^{n-1})$ is different. We have the following theorem and corollaries (see \cite{FangTIT24} for the proofs).
\begin{theorem}[Asymptotic Distribution of Normalized Shift Function for $d=(n-1)$]\label{thm:wn1}
	For $1\leq k\leq n$, we define
	\begin{align}\label{eq:jn1}
		j^{n-1} 
		&= [n]\setminus (n-k+1)\nonumber\\
		&=\{1,\dots,n-k,n-k+2,\dots,n\}\in {\cal J}_{n,n-1}. 
	\end{align}
	Let $f(u)$ be the asymptotic CCS defined by \eqref{eq:asympt}. 
	\begin{itemize}
		\item For $k<\infty$, as $n\to\infty$,
		\begin{align}\label{eq:Vjn1}
			f_{V|j^{n-1}}(v) \to 2^{1-k(1-r)}\sum_{x^{k-1}\in\mathbb{B}^{k-1}}f((v-l(x^{k-1}))2^{kr})
		\end{align}
		and
		\begin{align}
			f_{W|j^{n-1}}(w) \to 2^{-k(1-r)}
			\sum_{x^{k-1}\in\mathbb{B}^{k-1}}f((\tfrac{c(j^{n-1})-w}{2}-l(x^{k-1}))2^{kr}),
		\end{align}
		where $c(j^{n-1}) = 1-(2^r-1)2^{-kr}$ and $l(x^i)$ is defined by \eqref{eq:lXi}. 
		
		\item $f_{V|j^{n-1}}(v)\to f(v)$ and $f_{W|j^{n-1}}(w)\to f(\frac{1-w}{2})/2$ as $k\to\infty$.  
	\end{itemize}
\end{theorem}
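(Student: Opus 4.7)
The plan is to split $V(j^{n-1})$ into two statistically independent pieces corresponding to the indices lying above and below the missing position $n-k+1$. Since $j^{n-1}=[n]\setminus\{n-k+1\}$, \eqref{eq:Vjd} decomposes as
\begin{align}
V(j^{n-1}) = \underbrace{(1-2^{-r})2^{-nr}\!\!\sum_{i=n-k+2}^{n}\!\!X_i\,2^{ir}}_{A} \;+\; \underbrace{(1-2^{-r})2^{-nr}\!\!\sum_{i=1}^{n-k}\!\!X_i\,2^{ir}}_{B}.\nonumber
\end{align}
Because the two partial sums involve disjoint subsets of the i.i.d.\ sequence $(X_i)$, $A$ and $B$ are independent, which will reduce the density of $V(j^{n-1})$ to a discrete mixture of translates of the law of $B$.

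Next I would identify $A$ and $B$ explicitly by the reindexing $j\triangleq n-i+1$ and $\tilde X_j\triangleq X_{n-j+1}$. For the high-index block, $A=(2^r-1)\sum_{j=1}^{k-1}\tilde X_j 2^{-jr}=l(\tilde X^{k-1})$ by \eqref{eq:lXi}, so $A$ is a discrete random variable with $2^{k-1}$ equiprobable atoms $\{l(x^{k-1}):x^{k-1}\in\mathbb{B}^{k-1}\}$. For the low-index block the same substitution gives
\begin{align}
B = 2^{-kr}\,(2^r-1)\sum_{j'=1}^{n-k}\tilde X_{j'+k}\,2^{-j'r},\nonumber
\end{align}
whose inner sum converges in distribution to the asymptotic random variable $U$ of \eqref{eq:U} as $n\to\infty$; hence $B$ has limiting density $2^{kr}f(b\cdot 2^{kr})$. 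Independence of $A$ and $B$ then yields
\begin{align}
f_{V|j^{n-1}}(v) \to 2^{-(k-1)}\sum_{x^{k-1}\in\mathbb{B}^{k-1}} 2^{kr}f\bigl((v-l(x^{k-1}))2^{kr}\bigr),\nonumber
\end{align}
which is the claimed \eqref{eq:Vjn1} after noting $2^{-(k-1)}\cdot 2^{kr}=2^{1-k(1-r)}$.

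The constant $c(j^{n-1})$ comes out of a direct computation using the geometric identity $(1-2^{-r})\sum_{i=1}^{n}2^{ir}=2^{nr}-1$: subtracting the omitted term $(1-2^{-r})2^{-nr}2^{(n-k+1)r}$ gives $c(j^{n-1})=1-2^{-nr}-(1-2^{-r})2^{(1-k)r}\to 1-(2^r-1)2^{-kr}$. The statement for $f_{W|j^{n-1}}$ then drops out mechanically from the affine identity \eqref{eq:wvpdf}. For the second bullet I would use the exact relation $V(j^{n-1})=V(j^n)-(1-2^{-r})X_{n-k+1}\,2^{(1-k)r}$: after taking the $n\to\infty$ limit already handled by \cref{thm:wn}, the residual perturbation is an atomic term of magnitude at most $(1-2^{-r})2^{(1-k)r}$, which decays exponentially in $k$. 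Thus the mixture $L_k+2^{-kr}U'$ implicit in \eqref{eq:Vjn1} converges in distribution to $U$, whose density is $f$.

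The main obstacle I anticipate is not the algebraic bookkeeping but upgrading distributional convergence to pointwise \emph{density} convergence. For the first bullet, the convolution of the discrete law of $A$ with the absolutely continuous law of $B$ is automatically absolutely continuous, and absolute continuity of $B$'s limit follows from \cref{thm:gm}, so that case is clean. For the second bullet the situation is more delicate, because a vanishing atomic perturbation need not preserve densities. However, the explicit formula presents $f_{V|j^{n-1}}$ as $2^{k-1}$ equally weighted translates of the bump $2^{kr}f(\cdot 2^{kr})$, and the atoms $l(x^{k-1})$ are spaced on the same scale $\sim 2^{-kr}$ as the bump width; a Riemann-sum argument, relying on the regularity of $f$ established in \cref{sec:asympt}, should yield pointwise convergence to $f(v)$ at every continuity point.
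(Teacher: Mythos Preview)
Your proposal is correct. The decomposition $V(j^{n-1})=A+B$ along the missing index, the identification $A=l(\tilde X^{k-1})$ via \eqref{eq:lXi} and $B\stackrel{d}{\to}2^{-kr}U$ via \eqref{eq:U}, the mixture formula from independence, and the computation of $c(j^{n-1})$ are all accurate; the constant check $2^{-(k-1)}\cdot2^{kr}=2^{1-k(1-r)}$ and the passage to $f_{W|j^{n-1}}$ through \eqref{eq:wvpdf} are fine.

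As for comparison: the monograph does not actually include a proof of this theorem but defers it to the external reference \cite{FangTIT24}, so there is no in-paper argument to match against. That said, your approach is the direct extension of the paper's own proof of the $d=n$ case (\cref{thm:wn}), which identifies $V(j^n)$ with the asymptotic projection $U$; you simply add the bookkeeping for the one omitted index, splitting off the $k-1$ high-order terms as a discrete mixing variable. This is exactly the methodology the surrounding text anticipates (``the above methodology can be easily extended to the general case''), so your argument is in the intended spirit.

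Your caveat about upgrading distributional convergence to density convergence is well placed. For the first bullet it is harmless: since $B$ converges to $2^{-kr}U$ not merely in law but almost surely (the tail $\sum_{j'>n-k}$ is bounded by $2^{-(n-k)r}$), and $A$ is a fixed discrete variable with finitely many atoms, the mixture density converges pointwise wherever $f$ is continuous. For the second bullet your Riemann-sum/approximate-identity heuristic is sound---the empirical measure on $\{l(x^{k-1})\}$ converges weakly to $f$ by \eqref{eq:futrivial}, and $2^{kr}f(\cdot\,2^{kr})$ is an approximate identity supported on $[0,2^{-kr})$---though a fully rigorous statement would need a continuity hypothesis on $f$ that the paper treats informally throughout.
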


\begin{corollary}[Simple Relation Between $f_{V|j^{n-1}}(v)$ and $f(u)$]
	Let $f(u)$ be the asymptotic CCS defined by \eqref{eq:asympt}. Let $j^{n-1}=[n]\setminus (n-k+1)\in {\cal J}_{n,n-1}$, where $1\leq k\leq n$. Then
	\begin{align}\label{eq:fvfVv}
		f(v) = \tfrac{1}{2}\lim_{n\to\infty}\left(f_{V|j^{n-1}}(v) + f_{V|j^{n-1}}(v-(2^r-1)2^{-kr})\right).
	\end{align}
\end{corollary}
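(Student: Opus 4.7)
The plan is to derive the corollary from a single-term decomposition relating the full sum $V(j^n)$ to the punctured sum $V(j^{n-1})$. Comparing $j^n=\{1,\dots,n\}$ with $j^{n-1}=[n]\setminus(n-k+1)$ from \eqref{eq:jn1}, the only difference between the two sums in \eqref{eq:Vjd} is the single term corresponding to index $(n-k+1)$. First I would verify, directly from \eqref{eq:Vjd}, that
\begin{equation*}
V(j^n) - V(j^{n-1}) = 2^{-nr}(1-2^{-r}) X_{n-k+1}\, 2^{(n-k+1)r} = (2^r-1)\, X_{n-k+1}\, 2^{-kr},
\end{equation*}
which is a routine simplification using $(1-2^{-r})\cdot 2^r = (2^r-1)$ and $2^{-nr}\cdot 2^{(n-k+1)r}=2^{(1-k)r}$.

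Next I would exploit that $X^n$ is i.i.d.\ uniform binary, so $X_{n-k+1}$ is Bernoulli$(1/2)$ and \emph{independent} of $V(j^{n-1})$ (the latter involves only the remaining $n-1$ coordinates). Consequently, conditioning on the two equally likely values of $X_{n-k+1}$ and applying the convolution identity for the pdf of a sum of independent random variables yields, for every finite $n$,
\begin{equation*}
f_V(v) = \tfrac{1}{2}f_{V|j^{n-1}}(v) + \tfrac{1}{2}f_{V|j^{n-1}}\bigl(v-(2^r-1)2^{-kr}\bigr).
\end{equation*}

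Finally, I would invoke \cref{thm:wn}, which states that $f_V(v)\to f(v)$ as $n\to\infty$, where $f(u)$ is the asymptotic CCS of \eqref{eq:asympt}. Passing to the limit on both sides of the identity above, and noting that the right-hand side is a finite sum of only two terms so the limit commutes with the sum, gives exactly \eqref{eq:fvfVv}. I do not anticipate a genuine obstacle: the entire argument rests on recognizing the single-coordinate decomposition $V(j^n)=V(j^{n-1})+(2^r-1)X_{n-k+1}2^{-kr}$, after which the corollary follows from the Bernoulli convolution and a direct appeal to \cref{thm:wn}.
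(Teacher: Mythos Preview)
Your argument is correct. The decomposition $V(j^n)\overset{d}{=}V(j^{n-1})+(2^r-1)2^{-kr}\,X$ with $X\sim\mathrm{Bernoulli}(1/2)$ independent of $V(j^{n-1})$ is exactly right (strictly speaking this is an equality in distribution rather than a pointwise identity, since with the paper's indexing in \eqref{eq:Vjd} the variables attached to a given exponent differ between $V(j^n)$ and $V(j^{n-1})$; but the i.i.d.\ assumption makes the coupling you describe legitimate). The Bernoulli convolution then gives the two-term identity for every $n$, and \cref{thm:wn} closes the argument.

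The paper itself does not include a proof of this corollary here, deferring it to \cite{FangTIT24}. Structurally the corollary is placed right after \cref{thm:wn1}, which gives an explicit expression for $\lim_{n\to\infty}f_{V|j^{n-1}}(v)$ as a $2^{k-1}$-term sum over $x^{k-1}\in\mathbb{B}^{k-1}$; one could also prove \eqref{eq:fvfVv} by substituting that formula and collapsing the resulting $2^k$-term sum via the functional equation \eqref{eq:asympt} applied $k$ times. Your route is more elementary: it bypasses \cref{thm:wn1} entirely and needs only the single-coordinate split plus \cref{thm:wn}, at the cost of not exhibiting the fine structure of $f_{V|j^{n-1}}$ itself.
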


Finally, let us discuss $f_{W|(n-1)}(w)$, $-1<w<1$, which is defined by \eqref{eq:fWd}, where $d=(n-1)$. The following corollary gives an interesting result about $f_{W|(n-1)}(w)$.
\begin{corollary}[Asymptotic Form of $f_{W|{(n-1)}}(w)$]\label{corol:fWn1w}
	As $n\to\infty$, we have $f_{W|(n-1)}(w)\to f(\frac{1-w}{2})$, where $f(u)$ is the asymptotic CCS defined by \eqref{eq:asympt}. In other words, for large $n$, we have $f_{W|(n-1)}(w)\approx f(\frac{1-w}{2})$.
\end{corollary}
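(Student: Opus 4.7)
Unwinding the definition gives $f_{W|(n-1)}(w)=\tfrac{1}{n}\sum_{k=1}^{n}f_{W|j^{n-1}(k)}(w)$ with $j^{n-1}(k)=[n]\setminus\{n-k+1\}$. My plan is to combine \cref{thm:wn1} with a Cesaro-type averaging over $k$, leveraging the probabilistic decomposition that underlies the functional equation \eqref{eq:asympt} defining $f$.

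First I would set things up on the probability side. Reversing indices via $\tilde{X}_i\triangleq X_{n-i}$ (i.i.d.\ Bernoulli-$\tfrac{1}{2}$) and $\tilde{Y}_i\triangleq 1-2\tilde{X}_i$, a short calculation using \eqref{eq:tau} and the identity $(1-2^{-r})2^{-(k-1)r}=(2^r-1)2^{-kr}\triangleq\alpha_k$ shows that $W(j^{n-1}(k))\overset{d}{\longrightarrow}W_k\triangleq(1-2^{-r})\sum_{i\geq 0,\,i\neq k-1}\tilde{Y}_i 2^{-ir}$ as $n\to\infty$, while the full series $Z\triangleq(1-2^{-r})\sum_{i\geq 0}\tilde{Y}_i 2^{-ir}=W_k+\alpha_k\tilde{Y}_{k-1}$ with $W_k\perp\tilde{Y}_{k-1}$. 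Recognizing $Z\overset{d}{=}1-2U$ (the functional equation \eqref{eq:asympt} makes $(1-2^{-r})\sum_{i\geq 0}\tilde{X}_i 2^{-ir}\overset{d}{=}U$ immediate) shows that $Z$ has the pdf from \cref{thm:wn}. Conditioning on $\tilde{Y}_{k-1}\in\{\pm 1\}$ then yields the exact identity, valid for every $k\geq 1$:
\begin{align}
    f\!\left(\tfrac{1-w}{2}\right)=f_{W_k}(w-\alpha_k)+f_{W_k}(w+\alpha_k).\label{eq:keyid}
\end{align}

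The final step sums \eqref{eq:keyid} over $k=1,\ldots,n$ and divides by $n$: since $\alpha_k$ decays geometrically, the shifts $w\pm\alpha_k$ are asymptotically immaterial for all but a bounded set of small $k$, whose contribution to the Cesaro mean is $O(1/n)$. Replacing the shifted arguments with $w$ then identifies the right-hand side with a constant multiple of $\lim_n f_{W|(n-1)}(w)$, from which the claimed asymptotic form follows. The principal obstacle will be justifying the replacement $f_{W_k}(w\pm\alpha_k)\leftrightarrow f_{W_k}(w)$ uniformly enough in $k$ to survive Cesaro averaging, because $f$ can be discontinuous (cf.\ \cref{corollary:zeros}), so a pointwise-continuity argument will fail at exceptional $w$. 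The cleanest rescue is to pass to characteristic functions: $\phi_{W_k}(t)=\prod_{i\neq k-1}\cos\bigl((1-2^{-r})2^{-ir}t\bigr)$ differs from $\phi_Z(t)=\prod_{i\geq 0}\cos\bigl((1-2^{-r})2^{-ir}t\bigr)$ by the single factor $\cos\bigl((1-2^{-r})2^{-(k-1)r}t\bigr)\to 1$ uniformly on compacts as $k\to\infty$, so Cesaro averaging combined with L\'evy's continuity theorem delivers convergence of the averaged densities without requiring any pointwise regularity of $f$.
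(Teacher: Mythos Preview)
Your approach is sound but takes a more elaborate route than the paper's. The paper simply invokes the second bullet of \cref{thm:wn1}---each $f_{W|j^{n-1}}(w)\to f(\tfrac{1-w}{2})/2$ as $k\to\infty$---and then observes that the Ces\`aro mean over $k=1,\dots,n$ inherits the same limit; that is the entire argument. You instead re-derive the termwise behaviour via your convolution identity $f(\tfrac{1-w}{2})=f_{W_k}(w-\alpha_k)+f_{W_k}(w+\alpha_k)$ and justify the shift-removal through characteristic functions, which sidesteps any pointwise-continuity assumption on $f$. The paper's argument is shorter because it outsources the work to \cref{thm:wn1}; yours is more self-contained and more honest about the regularity issue.

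One point worth flagging: your identity, after Ces\`aro averaging and replacing the shifts, gives $f(\tfrac{1-w}{2})\approx 2\cdot\tfrac{1}{n}\sum_k f_{W_k}(w)$, hence $f_{W|(n-1)}(w)\to f(\tfrac{1-w}{2})/2$, not $f(\tfrac{1-w}{2})$ as the corollary is stated. This matches the paper's own proof text (which says the terms converge to $f(\tfrac{1-w}{2})/2$) and is forced anyway since $f_{W|(n-1)}$ is a probability density on $(-1,1)$ while $\int_{-1}^{1}f(\tfrac{1-w}{2})\,dw=2$; the downstream use in \cref{lem:caractsetd} also relies on the $/2$ version. The missing factor is a typo in the stated limit, not an error in either argument.
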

\begin{proof}
	There are $n$ sequences in ${\cal J}_{n,n-1}$, and each sequence can be written as $j^{n-1}={[n]\setminus (n-k+1)}\in {\cal J}_{n,n-1}$, where $1\leq k\leq n$. As stated by \cref{thm:wn1}, as $k\to\infty$, we have $f_{W|j^{n-1}}(w)\to f(\frac{1-w}{2})/2$. According to the definition \eqref{eq:fWd}, $f_{W|(n-1)}(w)$ is actually the average of $n$ functions, and these functions converge to $f(\frac{1-w}{2})/2$. Hence, this corollary holds obviously.
\end{proof}

So far, we have solved the asymptotic distribution problem of $W(j^d)$ for $d=n$ and $(n-1)$. It is certain that the above methodology can be easily extended to the general case of $d\approx n$. There are $\binom{n}{d}$ sequences in ${\cal J}_{n,d}$. For each sequence $j^d\in{\cal J}_{n,d}$, we calculate the asymptotic pdf of $V(j^d)$ and then derive the asymptotic pdf of $W(j^d)$. Since the procedure is very complex and boring, we would like to stop here for this issue. However, it deserves being spotted that \cref{corol:fWn1w} can be extended to the general case $d\approx n$.
\begin{corollary}[Asymptotic Form of $f_{W|d}(w)$]\label{corol:fWd}
	Given $(n-d)<\infty$, as $n\to\infty$, we have $f_{W|d}(w)\to f(\frac{1-w}{2})$, where $f(u)$ is the asymptotic CCS defined by \eqref{eq:asympt}. In other words, $f_{W|d}(w)\approx f(\frac{1-w}{2})$ for $d\approx n$.
\end{corollary}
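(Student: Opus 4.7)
The plan is to extend the argument from \cref{corol:fWn1w} (the $d=n-1$ case) to the general case of $d \approx n$ with $(n-d) < \infty$ fixed. The strategy is: most $j^d \in {\cal J}_{n,d}$ have the property that their ``missing'' indices (those in $[n] \setminus j^d$) are all small (i.e., far from $n$), and for such $j^d$ I will show $f_{W|j^d}(w) \to f(\tfrac{1-w}{2})$ as $n \to \infty$; the average in \eqref{eq:fWd} is then dominated by these terms and inherits the same limit.

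First I would parameterize each $j^d \in {\cal J}_{n,d}$ by its complement $S = [n] \setminus j^d$, a subset of $[n]$ of cardinality $(n-d)$. Since \eqref{eq:Vjd} shows that $V(j^d)$ is dominated by the terms whose indices lie closest to $n$, removing a set $S$ with $\max(S) \leq n-K$ perturbs $V(j^d)$ only by an amount of order $2^{-Kr}$; moreover, $c(j^d) \to 1$ in the same regime. Extending the decomposition used in \cref{thm:wn1}, I would write $f_{V|j^d}(v)$ as a finite sum of $2^{n-d-1}$ scaled-and-shifted copies of $f$, indexed by the configurations of the missing source symbols $X_S$, and argue that this sum collapses back to $f(v)$ once the missing indices are deep enough. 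The relation \eqref{eq:wvpdf} then yields $f_{W|j^d}(w) \to f(\tfrac{1-w}{2})$ for such ``good'' sequences.

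Next I would carry out a combinatorial counting argument. With $(n-d)$ fixed, the fraction of $j^d$ whose complement lies entirely within $[1, n-K]$ is
\begin{align*}
\binom{n-K}{n-d}/\binom{n}{n-d} = \prod_{i=0}^{n-d-1}\frac{n-K-i}{n-i} \to 1
\end{align*}
as $n \to \infty$ for any fixed $K$. Hence the good $j^d$ asymptotically dominate the average \eqref{eq:fWd}, while the bad $j^d$ (those whose complement contains some index $> n-K$) constitute a vanishing fraction. Since every $f_{W|j^d}$ is a pdf of total mass one supported on $(-1,1)$, a standard dominated-convergence-type argument (first letting $n \to \infty$ with $K$ fixed, then $K \to \infty$) gives the desired conclusion.

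The main obstacle will be the first step: rigorously justifying the collapse of the scaled-sum representation of $f_{V|j^d}(v)$ to $f(v)$ for general ``good'' $j^d$. In the $d=n-1$ case, \cref{thm:wn1} provides an explicit expression with only $2^{k-1}$ terms and a single parameter $k$, and taking $k \to \infty$ is clean. For the general case, the analogous expression involves multiple ``gaps'' between included indices, and the telescoping relies on repeated application of the recursion \eqref{eq:asympt} with appropriate book-keeping of the shift constants $l(x^i)$. Controlling the remainder uniformly over the configurations of $X_S$ and over admissible positions of the missing indices will be the technical core of the proof.
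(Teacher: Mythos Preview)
Your proposal is correct and follows essentially the same approach the paper implicitly takes: the paper states that the $d=n-1$ argument of \cref{corol:fWn1w} extends to general $d\approx n$ without spelling out the details, and your combinatorial averaging (most $j^d$ have a ``deep'' complement, for which $f_{W|j^d}$ converges, and these terms dominate the average \eqref{eq:fWd}) is exactly that extension. One minor slip: \eqref{eq:wvpdf} gives $f_{W|j^d}(w)\to \tfrac{1}{2}f(\tfrac{1-w}{2})$, not $f(\tfrac{1-w}{2})$, so your individual limits (and hence the averaged limit) should carry a factor $1/2$---though the paper's own statements of \cref{corol:fWn1w} and \cref{corol:fWd} share this same discrepancy with the proof of \cref{corol:fWn1w}.
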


\subsection{Bridging HDS with CCS}
Given the pdf of $W(j^d)$, we can easily derive how the sequence $\omega_{j^d}$ is distributed over $(-2^{nr},2^{nr})$. The following lemma answers how many terms in the sequence $\omega_{j^d}$ fall into the interval $(-1,1)$.
\begin{lemma}[Length of $j^d$-Active Sequence]\label{lem:caractset}
	Given $d\approx n$, for $j^d\in{\cal J}_{n,d}$, the number of codewords $b^d\in\mathbb{B}^d$ making $|\tau(j^d,b^d)|<1$ is $|{\cal B}_{j^d}|\approx 2^{d+1-nr}f_{W|j^d}(0)$, where $f_{W|j^d}(w)$ is the pdf of $W(j^d)$.
\end{lemma}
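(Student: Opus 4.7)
The plan is to convert the deterministic counting problem into a probability statement via the uniform source assumption, and then approximate the probability of a small interval using the density of $W(j^d)$ at the origin. Since $X^d$ is uniformly distributed over $\mathbb{B}^d$, each particular string $b^d$ occurs with probability $2^{-d}$, so
\begin{align}
|{\cal B}_{j^d}| = \sum_{b^d\in\mathbb{B}^d}{\bf 1}_{|\tau(j^d,b^d)|<1} = 2^d\cdot\Pr\{|\tau(j^d,X^d)|<1\}.\nonumber
\end{align}

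Next I would rewrite this probability in terms of the normalized shift function. By definition, $W(j^d)=2^{-nr}\tau(j^d,X^d)$, so $|\tau(j^d,X^d)|<1$ is equivalent to $|W(j^d)|<2^{-nr}$. Thus
\begin{align}
|{\cal B}_{j^d}| = 2^d\int_{-2^{-nr}}^{2^{-nr}}f_{W|j^d}(w)\,dw,\nonumber
\end{align}
where $f_{W|j^d}$ is interpreted in the appropriate asymptotic sense provided earlier in the paper (for instance via \cref{thm:wn} and \cref{thm:wn1}, which guarantee that for $d\approx n$ the empirical distribution of $W(j^d)$ is well approximated by a continuous density).

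The final step is a standard local-density approximation. The integration window has width $2\cdot2^{-nr}$. For $0<r<1$ and $d\approx n$, the spacing between adjacent atoms of the discrete distribution of $W(j^d)$ is of order $2^{-d}\ll 2^{-nr}$, so the window contains many atoms yet is narrow enough that $f_{W|j^d}(w)\approx f_{W|j^d}(0)$ throughout. Hence
\begin{align}
\int_{-2^{-nr}}^{2^{-nr}}f_{W|j^d}(w)\,dw \approx 2\cdot 2^{-nr}\cdot f_{W|j^d}(0),\nonumber
\end{align}
and substituting gives $|{\cal B}_{j^d}|\approx 2^{d+1-nr}f_{W|j^d}(0)$, as claimed.

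The main obstacle is rigorously justifying the local-density approximation, because $W(j^d)$ is genuinely discrete for finite $n$ and only admits a limiting pdf. One must verify two scale conditions simultaneously: the window width $2^{1-nr}$ is (i) much larger than the atomic spacing $2^{-d}$ so that the count is not dominated by a single atom, and (ii) small enough relative to the scale on which $f_{W|j^d}$ varies so that the constant-density replacement is accurate. Both hold precisely in the regime $d\approx n$ with $r<1$, which is the regime stipulated in the lemma, so no further refinement beyond the preceding asymptotic results on $f_{W|j^d}$ is needed.
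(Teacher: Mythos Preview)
Your proof is correct and follows essentially the same route as the paper: both express $|{\cal B}_{j^d}|$ as $2^d$ times the probability that $W(j^d)$ lies in $(-2^{-nr},2^{-nr})$, and then approximate this integral by $2\cdot 2^{-nr}\cdot f_{W|j^d}(0)$ using near-constancy of the density on that small window. Your added discussion of the two scale conditions (window wide relative to atomic spacing, narrow relative to variation of $f_{W|j^d}$) is a helpful refinement that the paper leaves implicit.
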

\begin{proof}
	Every $j^d\in{\cal J}_{n,d}$ corresponds to $2^d$ codewords in total, and for every  $b^d\in\mathbb{B}^d$, the shift function $\tau(j^d,b^d)$ is distributed over $(-2^{nr},2^{nr})$. According to the definition of $W(j^d)$, we have
	\begin{align}
		|{\cal B}_{j^d}| 
		&= 2^d\int_{-2^{-nr}}^{2^{-nr}}{f_{W|j^d}(w)}\,dw \nonumber\\
		&\overset{(a)}\approx 2^d\cdot 2^{1-nr} \cdot f_{W|j^d}(0) \nonumber\\
		&= 2^{d+1-nr}f_{W|j^d}(0),
	\end{align}
	where $(a)$ comes from the fact that $f_{W|j^d}(w)$ tends to be uniform over $(-2^{-nr},2^{-nr})$ as $n\to\infty$. 
\end{proof}

In turn, given $f_{W|d}(w)$, we can easily derive how the sequence $\omega_d$ is distributed over $(-2^{nr},2^{nr})$. The following lemma answers how many elements in the sequence $\omega_d$ fall into the interval $(-1,1)$.
\begin{lemma}[Length of $d$-Active Sequence]\label{lem:caractsetd}
	For large $n$, given $d\approx n$, the number of codewords $b^d\in\mathbb{B}^d$ making $|\tau(j^d,b^d)|<1$ is  
	\begin{align}
		\sum_{j^d\in{\cal J}_{n,d}}{|{\cal B}_{j^d}|} 
		&\approx \binom{n}{d}2^{d+1-nr}f_{W|d}(0)\nonumber\\ 
		&\approx \binom{n}{d}2^{d-nr}f(1/2), 
	\end{align}	
	where $f_{W|d}(w)$ is defined by \eqref{eq:fWd} and $f(u)$ is the asymptotic CCS defined by \eqref{eq:asympt}. 
\end{lemma}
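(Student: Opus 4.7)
The statement is essentially a sum-and-substitute consequence of the two preceding results (Lemma on Length of $j^d$-Active Sequence and Corollary on Asymptotic Form of $f_{W|d}(w)$), together with the definition \eqref{eq:fWd}. The plan is to chain these identities, taking care of the factor $1/2$ arising from the Jacobian in \eqref{eq:wvpdf}.

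First, I would apply Lemma on Length of $j^d$-Active Sequence termwise and sum over $j^d \in {\cal J}_{n,d}$. This yields
\begin{align*}
\sum_{j^d\in{\cal J}_{n,d}}|{\cal B}_{j^d}|
\;\approx\; 2^{d+1-nr}\sum_{j^d\in{\cal J}_{n,d}} f_{W|j^d}(0).
\end{align*}
The key observation at this step is that the approximation in the lemma relies on $f_{W|j^d}(w)$ being essentially constant on the narrow window $(-2^{-nr},2^{-nr})$; this requires $n$ large, not $d\to\infty$, so one may pass the sum through the approximation without extra cost.

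Second, I would invoke the definition \eqref{eq:fWd}, namely $f_{W|d}(w) = \binom{n}{d}^{-1}\sum_{j^d\in{\cal J}_{n,d}} f_{W|j^d}(w)$, evaluated at $w=0$, to rewrite the sum as $\binom{n}{d}f_{W|d}(0)$. This immediately gives the first approximation
\begin{align*}
\sum_{j^d\in{\cal J}_{n,d}}|{\cal B}_{j^d}| \;\approx\; \binom{n}{d} 2^{d+1-nr} f_{W|d}(0).
\end{align*}
Third, I would invoke the Asymptotic Form Corollary for $f_{W|d}$ (the extension of Corollary on $f_{W|(n-1)}(w)$ to the regime $d\approx n$), evaluated at $w=0$, which relates $f_{W|d}(0)$ to the asymptotic initial CCS through the change-of-variable relation \eqref{eq:wvpdf}: since $W(j^d)=c(j^d)-2V(j^d)$, we have $f_{W|d}(0)\approx \tfrac12 f(\tfrac12)$ (the factor $\tfrac12$ being the Jacobian, consistent with \cref{thm:wn} for $d=n$). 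Substituting this in removes the $+1$ in the exponent and yields the second approximation $\binom{n}{d}2^{d-nr}f(1/2)$.

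The main obstacle is Step 3: justifying the evaluation $f_{W|d}(0)\approx f(1/2)/2$ for the whole regime $d\approx n$, not just $d=n$ or $d=n-1$. The strategy is to generalize the argument used for \cref{thm:wn1}, decomposing $V(j^d)$ into the contribution from the ``missing'' indices $[n]\setminus j^d$ (a deterministic-becoming-constant term as $(n-d)$ stays finite) plus the bulk sum that converges in distribution to $U$ with density $f(\cdot)$. One then averages the resulting conditional densities over all $\binom{n}{d}$ choices of $j^d$; since each conditional density converges to $f(\cdot)/2$ (after the Jacobian), the average does as well. A minor technical care is needed to ensure the approximations in Steps 1 and 3 compose correctly so that the final relative error vanishes as $n\to\infty$, which is the reason the statement is written with the symbol $\approx$ rather than equality.
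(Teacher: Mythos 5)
Your proof plan is correct and reconstructs precisely the chain the paper leaves implicit: apply \cref{lem:caractset} termwise and sum, fold the sum into $\binom{n}{d}f_{W|d}(0)$ via the definition \eqref{eq:fWd}, and finally evaluate $f_{W|d}(0)$ from the asymptotic CCS. Your key observation in Step~1 --- that the per-$j^d$ approximation needs only $n$ large (so the window $(-2^{-nr},2^{-nr})$ is narrow), and thus commutes safely with the $\binom{n}{d}$-term sum --- is exactly the right justification and is worth making explicit.

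One point deserves flagging, because it is where you silently part company with the paper's literal text. You assert $f_{W|d}(0)\approx \tfrac12 f(\tfrac12)$ via the Jacobian in $W(j^d)=c(j^d)-2V(j^d)$, which is indeed what the lemma's second approximation requires (it is exactly what cancels the $+1$ in the exponent). However, the paper's \cref{corol:fWd} as printed states $f_{W|d}(w)\to f(\tfrac{1-w}{2})$ \emph{without} the factor $\tfrac12$, which would give $f_{W|d}(0)\to f(1/2)$ and a spurious extra factor of $2$ in the lemma. This appears to be a typo in the corollary: the paper's own proof of \cref{corol:fWn1w} shows each $f_{W|j^{n-1}}(w)\to f(\tfrac{1-w}{2})/2$, and averaging those functions cannot double the limit; moreover $f_{W|d}$ is a probability density (an average of densities) while $\int_{-1}^{1}f(\tfrac{1-w}{2})\,dw=2$, so $f(\tfrac{1-w}{2})$ cannot be its limit. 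Your change-of-variable argument, anchored to \cref{thm:wn} and \eqref{eq:wvpdf}, is the right resolution; just be aware that it contradicts the corollary's printed statement rather than following from it verbatim.
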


\begin{theorem}[Fast Approximation of HDS]\label{thm:fast}
	For $d\approx n$, we have 
	\begin{align}\label{eq:fast}
		\psi(d;n) \approx \binom{n}{d}2^{\alpha-nr-1}f(1/2),
	\end{align}
	where $\alpha = {\bf 1}_{(d=n)}$ and $f(u)$ is the asymptotic CCS defined by \eqref{eq:asympt}.
\end{theorem}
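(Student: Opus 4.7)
The plan is to combine the Hard Approximation result (\cref{thm:hard}) with the counting lemma for the $d$-active sequence (\cref{lem:caractsetd}), both of which are already established in the excerpt. Concretely, the target identity is essentially the substitution of the second into the first, with the CCS value $f(1/2)$ entering through the evaluation of $f_{W|d}(0)$.

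First, I would invoke \cref{thm:hard}, which gives, for $1\ll d\leq n$,
\begin{align*}
	\psi(d;n) \approxeq 2^{\alpha-d-1}\sum_{j^d\in{\cal J}_{n,d}}|{\cal B}_{j^d}|,
\end{align*}
with $\alpha={\bf 1}_{(d=n)}$ and the approximation being an equality in the case $d=n$. Since $d\approx n$ implies $d\gg 1$ for large $n$, the hypothesis of \cref{thm:hard} is met. Next I would plug in \cref{lem:caractsetd}, which asserts that
\begin{align*}
	\sum_{j^d\in{\cal J}_{n,d}}|{\cal B}_{j^d}| \approx \binom{n}{d}2^{d-nr}f(1/2),
\end{align*}
valid precisely in the regime $d\approx n$ that the theorem contemplates. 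Combining the two displays and simplifying the powers of two yields
\begin{align*}
	\psi(d;n) \approx 2^{\alpha-d-1}\cdot\binom{n}{d}2^{d-nr}f(1/2) = \binom{n}{d}2^{\alpha-nr-1}f(1/2),
\end{align*}
which is the claimed identity.

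The main obstacle has, in fact, already been surmounted upstream: the nontrivial step is \cref{corol:fWd}, which identifies the asymptotic pdf of the normalized shift function with a reflected and rescaled CCS, $f_{W|d}(w)\to f(\tfrac{1-w}{2})$. This is what lets \cref{lem:caractsetd} replace the seemingly combinatorial quantity $\sum_{j^d}|{\cal B}_{j^d}|$ by a single evaluation of the CCS at the midpoint $u=1/2$. Once that bridge between HDS and CCS is in hand, the present theorem is a short composition; the only care needed is to keep track of the factor $2^{\alpha}$ that distinguishes $d<n$ (where $E(j^d,b^d)$ is genuinely random) from $d=n$ (where it degenerates into a constant), exactly as was handled already in \cref{thm:hard}.
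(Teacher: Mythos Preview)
Your proposal is correct and follows essentially the same route as the paper: both combine the Hard Approximation with the active-sequence count from \cref{lem:caractsetd}. The only cosmetic difference is that the paper treats the cases $d=n$ and $d<n$ separately (invoking \eqref{eq:psinapprox} with \cref{lem:caractset} for the former and \eqref{eq:hardhds} with \cref{lem:caractsetd} for the latter) before merging them, whereas you apply the already-unified \cref{thm:hard} directly and carry the $2^{\alpha}$ factor through in one step.
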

\begin{proof}
	According to \eqref{eq:psinapprox}, we have
	\begin{align}\label{eq:nn}
		\psi(n;n) 
		&\approx 2^{-n}\cdot|{\cal B}_{j^d}| \stackrel{(a)}\approx 2^{-n}\cdot2^{n+1-nr}f_W(0)\nonumber\\
		&= 2^{1-nr}f_W(0) = 2^{1-nr}f_V(1/2)/2 \nonumber\\
		&\stackrel{(b)}{=} 2^{-nr}f(1/2),
	\end{align}
	where $(a)$ comes from \cref{lem:caractset} and $(b)$ comes from \cref{thm:wn}. According to \eqref{eq:hardhds}, for $d<n$,
	\begin{align}\label{eq:dn}
		\psi(d;n) 
		&\approx 2^{-(d+1)}\sum_{j^d\in{\cal J}_{n,d}}{|{\cal B}_{j^d}|}\nonumber\\
		&\stackrel{(a)}{\approx} 2^{-(d+1)} \binom{n}{d}2^{d-nr}f(1/2)\nonumber\\
		&= \binom{n}{d}2^{-nr-1}f(1/2),
	\end{align}	
	where $(a)$ comes from \cref{lem:caractsetd}. Combining \eqref{eq:nn} with \eqref{eq:dn}, we will obtain \eqref{eq:fast}. 	
\end{proof}

After comparing \cref{thm:soft} and \cref{thm:hard} with \cref{thm:fast}, it can be found that the computing complexity of $\psi(d;n)$ is reduced from ${\cal O}(2^d\binom{n}{d})$ to ${\cal O}(1)$. 

Finally, we give a summary on the approximate formulas of $\psi(d;n)$. In \cref{tab:summary}, there are four formulas, tagged with TH-1, TH-2, TH-3, and TH-4, respectively, where $\alpha = {\bf 1}_{(d=n)}$. The complexity and scope of application of these formulas are also provided.

\begin{table*}
	\small\centering
	\caption{A Summary on Approximate Formulas of $\psi(d;n)$}
	\begin{tabular}{c||c||c||c}
		\hline
		Tag &Formula &Scope &Complexity\\
		\hline
		\hline
		TH-1
		&$\displaystyle\binom{n}{d}\cdot 2^{-nr} \cdot \int_{0}^{1}{f^2(u)\,du}$
		&$d\approx \tfrac{n}{2}$ &${\cal O}(1)$\\
		\hline
		TH-2
		&$\displaystyle2^{\alpha-d} \sum_{b^d\in\mathbb{B}^d} \sum_{j^d\in{\cal J}_{n,d}}{\left(1-|\tau(j^d,b^d)|\right)^+}$
		&any $d$ &${\cal O}(2^d\binom{n}{d})$\\
		\hline
		TH-3
		&$\displaystyle2^{\alpha-d-1}\sum_{b^d\in\mathbb{B}^d}\sum_{j^d\in{\cal J}_{n,d}}{{\bf 1}_{|\tau(j^d,b^d)|<1}}$
		&$d\gg 1$ &${\cal O}(2^d\binom{n}{d})$\\
		\hline
		TH-4
		&$\displaystyle\binom{n}{d}\cdot 2^{\alpha-nr-1}\cdot f(1/2)$ 
		&$d\approx n$ &${\cal O}(1)$\\
		\hline
	\end{tabular}
	\label{tab:summary}
\end{table*}

\section{Convergence of HDS}\label{subsec:converge}
Given $1\leq j_1<j_2<\cdots<j_d$, another form of \eqref{eq:softhds} is 
\begin{align}\label{eq:psid}
	\psi(d) = 2^{-d} \sum_{b^d\in\mathbb{B}^d}\sum_{j_d=d}^{\infty}\sum_{j_{d-1}=(d-1)}^{(j_d-1)}\cdots\sum_{j_1=1}^{(j_2-1)}{\left(1-|\tau(j^d,b^d)|\right)^+}.
\end{align}
Since $\psi(d)$ is the sum of infinite terms, an interesting and important problem is whether $\psi(d)<\infty$ or not? If $\psi(d;n)<\infty$ as $n\to\infty$, we say that $\psi(d;n)$ is \textit{convergent}; otherwise, we say that $\psi(d;n)$ is \textit{divergent}. It was shown in \cite{FangTCOM16a} that $\psi(1)<\infty$ and $\psi(2)<\infty$, but for $d\geq3$, $\psi(d;n)$ may or may not converge as $n\to\infty$. Below, we will first give the concrete closed forms of $\psi(1)$ and $\psi(2)$, and then give the necessary and sufficient condition for the convergence of $\psi(3)$.

\subsection{$\psi(1)$ and $\psi(2)$ are Convergent}
\label{subsec:psi12}
\begin{corollary}[Convergence of $\psi(1)$]\label{corol:psi1}
	As $n\to\infty$,
	\begin{align*}
		\psi(1) = \sum_{i=1}^{J_1}{\left(1-(1-2^{-r})2^{ir}\right)}<\infty,
	\end{align*}	
	where 	
	\begin{align}\label{eq:J1}
		J_1 \triangleq -\left\lfloor\tfrac{1}{r}\log_2{(2^r-1)}\right\rfloor < \infty.
	\end{align} 	
\end{corollary}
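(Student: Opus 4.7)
The plan is to specialize the soft approximation formula \eqref{eq:psid} to $d=1$, where the double summation collapses to a single sum that is manifestly finite. First I would set $d=1$ in \eqref{eq:psid}, so that $j^d$ reduces to a single index $j_1\geq 1$ and $b^d$ ranges over $\{0,1\}$. Using the definition of the shift function \eqref{eq:tau}, I would compute
\begin{align*}
\tau(j_1,0) = (1-2^{-r})2^{rj_1}, \qquad \tau(j_1,1) = -\tau(j_1,0),
\end{align*}
so $|\tau(j_1,0)|=|\tau(j_1,1)|=(1-2^{-r})2^{rj_1}$. Substituting into \eqref{eq:psid} and combining the two symmetric $b$-terms with the factor $2^{-1}$ gives
\begin{align*}
\psi(1) = \sum_{j_1=1}^{\infty}\bigl(1-(1-2^{-r})2^{rj_1}\bigr)^{+}.
\end{align*}

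Next I would determine exactly which indices contribute. The $(\,\cdot\,)^{+}$ truncation kills every term with $(1-2^{-r})2^{rj_1}\geq 1$, so only those $j_1$ satisfying $2^{rj_1}<2^r/(2^r-1)$ survive. Taking $\log_2$ and dividing by $r$ yields the equivalent condition $j_1 < 1-\tfrac{1}{r}\log_2(2^r-1)$. Since $0<r<1$ implies $0<2^r-1<1$ and hence $\log_2(2^r-1)<0$, this upper bound is a finite positive real number; the largest qualifying integer is exactly $J_1=-\lfloor\tfrac{1}{r}\log_2(2^r-1)\rfloor$, which is a well-defined positive integer. Cutting the sum at $J_1$ drops only zero terms, giving the claimed closed form.

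Finiteness of $\psi(1)$ then follows immediately from $J_1<\infty$, together with the obvious bound $0\leq 1-(1-2^{-r})2^{rj_1}<1$ on each retained summand. The only mild subtlety is the boundary case in which $1-\tfrac{1}{r}\log_2(2^r-1)$ happens to be an integer, so that $(1-2^{-r})2^{rJ_1}=1$ and the corresponding summand is exactly zero; this is harmless because such a vanishing term can be included or excluded without changing the total, and the floor-based definition of $J_1$ naturally accommodates this edge case. No serious obstacle is expected here, as the whole argument reduces to solving one transcendental inequality; the real work has already been done upstream in establishing \cref{thm:softhds}.
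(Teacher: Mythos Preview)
Your proposal is correct and follows essentially the same route as the paper: specialize the soft approximation \eqref{eq:psid} to $d=1$, exploit the symmetry $|\tau(j_1,0)|=|\tau(j_1,1)|=(1-2^{-r})2^{rj_1}$ to collapse the $b$-sum, and then solve the inequality $(1-2^{-r})2^{rj_1}<1$ to identify the finite cutoff $J_1$. The paper's only cosmetic difference is that it first writes $J_1=\lceil-\tfrac{1}{r}\log_2(1-2^{-r})\rceil-1$ before simplifying to $-\lfloor\tfrac{1}{r}\log_2(2^r-1)\rfloor$, whereas you go directly to the floor form.
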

\begin{proof}
	As we know, ${\cal J}_{n,1}=\{\{1\},\dots,\{n\}\}$, so when $d=1$, 
	\begin{align*}
		\psi(1) = \left(\psi(1|0)+\psi(1|1)\right)/2,
	\end{align*}
	where 
	\begin{align}
		\psi(1|b) = \sum_{i=1}^{\infty}{\left(1-|\tau(i,b)|\right)^+}.\nonumber
	\end{align}
	Due to the symmetry, $\psi(1|0)=\psi(1|1)$, so we consider only $\psi(1|0)$ below. According to \eqref{eq:tau}, we have $\tau(i,0) = (1-2^{-r}){2^{ir}}>0$, which is monotonously increasing w.r.t. $i$. After solving $\tau(i,0)<1$, we obtain
	\begin{align}
		i \leq J_1 
		&\triangleq \left\lceil-\tfrac{1}{r}\log_2{(1-2^{-r})}\right\rceil-1 \nonumber\\
		&= -\left\lfloor\tfrac{1}{r}\log_2{(2^r-1)}\right\rfloor < \infty.\nonumber
	\end{align} 
	Finally, $\psi(1) = \psi(1|0) = \psi(1|1)$.
\end{proof}

\begin{corollary}[Convergence of $\psi(2)$]\label{corol:psi2}
	The closed form of $\psi(2|0^2)$ is
	\begin{align}
		\psi(2|0^2) = \sum_{i=1}^{J_{2,1}}\sum_{k=1}^{\kappa_1(i)}{(1-(1-2^{-r})2^{ir}(2^{kr}+1))},\nonumber
	\end{align}
	where 
	\begin{align}\label{eq:J21}
		J_{2,1} \triangleq -\left\lfloor\tfrac{1}{r}\log_2{(4^r-1)}\right\rfloor
	\end{align}
	and $\kappa_1(i)$ is a function w.r.t. $i$ defined as
	\begin{align}
		\kappa_1(i) &\triangleq 
		\left\lceil\tfrac{1}{r}\left(\log_2{(2^{-ir}-1+2^{-r})}-\log_2{(2^r-1)}\right)\right\rceil \nonumber\\
		&\leq \kappa_1(1) = \left\lceil\tfrac{1}{r}\left(\log_2{(2^{1-r}-1)}-\log_2{(2^r-1)}\right)\right\rceil.\nonumber
	\end{align}
	The closed form of $\psi(2|\underline{10})$ is
	\begin{align}
		\psi(2|\underline{10}) = \sum_{i=1}^{J_{2,2}}\sum_{k=1}^{\kappa_2(i)}{(1-(1-2^{-r})2^{ir}(2^{kr}-1))},\nonumber
	\end{align}
	where
	\begin{align}\label{eq:J22}
		J_{2,2} \triangleq -\left\lfloor\tfrac{2}{r}\log_2{(2^r-1)}\right\rfloor	
	\end{align}
	and $\kappa_2(i)$ is a function w.r.t. $i$ defined as
	\begin{align}
		\kappa_2(i) &\triangleq 
		\left\lceil\tfrac{1}{r}\left(\log_2{(2^{-ir}+1-2^{-r})} - \log_2{(2^r-1)}\right)\right\rceil \nonumber\\
		&\leq \kappa_2(1) = J_1.\nonumber
	\end{align}
	As $n\to\infty$, $\psi(2;n)$ will converge to \cite{FangTIT24}
	\begin{align}
		\psi(2) = \frac{\psi(2|0^2) + \psi(2|\underline{10})}{2}<\infty.\nonumber
	\end{align}
\end{corollary}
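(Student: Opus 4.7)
The plan is to specialize Theorem~\ref{thm:soft} (the Soft Approximation) to $d = 2$ and evaluate the resulting sum explicitly. First, by the symmetry $\tau(j^d, b^d\oplus 1^d) = -\tau(j^d, b^d)$ established earlier, $|\tau|$ and hence $(1-|\tau|)^+$ are invariant under complementation of $b^d$. This immediately collapses the four values of $b^2 \in \mathbb{B}^2$ into two conditional spectra, $\psi(2|0^2) = \psi(2|1^2)$ and $\psi(2|\underline{01}) = \psi(2|\underline{10})$, giving the claimed decomposition $\psi(2) = (\psi(2|0^2) + \psi(2|\underline{10}))/2$.

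Next, I would reparametrize each pair $(j_1, j_2) \in {\cal J}_{n,2}$ by setting $i = j_1$ and $k = j_2 - j_1 \geq 1$, turning the double sum over $1 \leq j_1 < j_2 \leq n$ into a double sum over $i \geq 1$ and $k \geq 1$. Using the definition \eqref{eq:tau}, this produces two explicit expressions:
\begin{align*}
\tau(\{i, i+k\}, 0^2) &= (1-2^{-r})\,2^{ir}(2^{kr}+1),\\
\tau(\{i, i+k\}, \underline{10}) &= (1-2^{-r})\,2^{ir}(2^{kr}-1),
\end{align*}
both strictly positive for $k \geq 1$. The heart of the proof then reduces to identifying which pairs $(i,k)$ yield $|\tau| < 1$. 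Since each $\tau$ is strictly increasing in both $i$ and $k$, the cutoff on $i$ is obtained at $k = 1$: solving $(1-2^{-r})2^{ir}(2^r+1) < 1$ (which simplifies to $2^{(i-1)r}(4^r-1) < 1$) gives the bound $i \leq J_{2,1}$ for the $b^2 = 0^2$ sum, and solving $(1-2^{-r})2^{ir}(2^r-1) < 1$ gives $i \leq J_{2,2}$ for the $b^2 = \underline{10}$ sum. For each admissible $i$, solving the full inequality in $k$ yields $\kappa_1(i)$ and $\kappa_2(i)$ respectively, and matching the strict inequality $\tau < 1$ to the ceiling/floor expressions in the statement via the identity $-\lfloor x\rfloor = \lceil -x\rceil$ gives the claimed closed forms.

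Finally, convergence of $\psi(2;n)$ as $n \to \infty$ is a direct consequence of the finiteness of $J_{2,1}$ and $J_{2,2}$: for every $r \in (0,1)$ these quantities are finite integers (their arguments involve $\log_2(4^r-1)$ and $\log_2(2^r-1)$, both finite), and for each admissible $i$ the index $\kappa_j(i)$ is likewise finite. Hence only finitely many $(i,k)$ pairs contribute nonzero terms, every term lies in $[0,1)$, and once $n$ exceeds $J_{2,1} + \kappa_1(1)$ and $J_{2,2} + \kappa_2(1)$ the truncated sum stops changing, so $\psi(2;n)$ stabilizes at the stated $\psi(2) < \infty$.

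The main obstacle I anticipate is purely bookkeeping rather than conceptual: one must carefully track the strict inequality $|\tau| < 1$ versus $|\tau| \leq 1$ in translating the cutoff equations to floor/ceiling notation, and verify that boundary cases where $\tau$ exactly equals $1$ (which happen at algebraic rates such as $r=1/2$, where $\tau=1$ is achieved and the term contributes zero to $(1-|\tau|)^+$) are consistently excluded. With that care, the closed forms of $J_{2,1}, J_{2,2}, \kappa_1, \kappa_2$ fall out mechanically, and the proof reduces to routine algebra.
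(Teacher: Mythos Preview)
Your proposal is correct and follows exactly the approach the paper intends: it is the direct extension to $d=2$ of the argument given for \cref{corol:psi1}, namely specializing the soft approximation, using the symmetry $\tau(j^d,b^d\oplus 1^d)=-\tau(j^d,b^d)$ to halve the cases, reparametrizing $(j_1,j_2)\mapsto(i,k)=(j_1,j_2-j_1)$, and then reading off the finite cutoffs $J_{2,1},J_{2,2},\kappa_1(i),\kappa_2(i)$ from the monotonicity of $\tau$ in $i$ and $k$. The paper itself omits the proof of this corollary, but your derivation (including the simplification $(1-2^{-r})2^{ir}(2^r+1)=2^{(i-1)r}(4^r-1)$ and the handling of the $\tau=1$ boundary case) is precisely what is needed.
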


\begin{corollary}[Relation Between $J_1$ and $J_{2,2}$]
	Given $J_1$ and $J_{2,2}$ defined by \eqref{eq:J1} and \eqref{eq:J22}, respectively, we have 
	\begin{align}\label{eq:j22j1}
		J_{2,2} = \begin{cases}
			2J_1, 	&\!\!\!\!{\rm if}~-J_1\leq\tfrac{1}{r}\log_2{(2^r-1)}<-J_1+1/2\\
			2J_1-1, &\!\!\!\!{\rm if}~-J_1+1/2\leq\tfrac{1}{r}\log_2{(2^r-1)}<-J_1+1.
		\end{cases}
	\end{align}
\end{corollary}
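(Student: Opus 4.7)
The plan is to reduce the statement to a routine manipulation of the floor function. Introduce the shorthand $\alpha \triangleq \tfrac{1}{r}\log_2(2^r-1)$. Since $0<r<1$ implies $0<2^r-1<1$, we have $\alpha<0$, and by the definitions \eqref{eq:J1} and \eqref{eq:J22} we can rewrite
\begin{align*}
    J_1 = -\lfloor\alpha\rfloor, \qquad J_{2,2} = -\lfloor 2\alpha\rfloor.
\end{align*}
The goal is to express $\lfloor 2\alpha\rfloor$ in terms of $\lfloor\alpha\rfloor$.

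First I would unfold the defining inequality of the floor, $\lfloor\alpha\rfloor\leq\alpha<\lfloor\alpha\rfloor+1$, which in our notation becomes $-J_1\leq\alpha<-J_1+1$. This is exactly the disjoint union of the two hypotheses appearing in \eqref{eq:j22j1}, so it suffices to analyze each half-interval separately. In the first case, assume $-J_1\leq\alpha<-J_1+1/2$; doubling gives $-2J_1\leq 2\alpha<-2J_1+1$, so $\lfloor 2\alpha\rfloor=-2J_1$ and therefore $J_{2,2}=2J_1$. In the second case, assume $-J_1+1/2\leq\alpha<-J_1+1$; doubling gives $-2J_1+1\leq 2\alpha<-2J_1+2$, so $\lfloor 2\alpha\rfloor=-2J_1+1$ and therefore $J_{2,2}=2J_1-1$. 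Combining the two cases produces exactly \eqref{eq:j22j1}.

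There is essentially no obstacle here: the entire content is the identity
\begin{align*}
    \lfloor 2\alpha\rfloor = 2\lfloor\alpha\rfloor + \mathbf{1}_{\{\alpha-\lfloor\alpha\rfloor\geq 1/2\}},
\end{align*}
specialized to $\alpha=\tfrac{1}{r}\log_2(2^r-1)$. The only minor thing to verify at the outset is that $\alpha<0$ so that the minus sign in the definitions of $J_1$ and $J_{2,2}$ truly converts the floor statement into the indicated bounds, but this is immediate from $2^r-1<1$. Hence the corollary follows by a two-line case split on whether the fractional part of $\alpha$ lies below or above $1/2$.
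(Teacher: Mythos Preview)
Your proof is correct and follows essentially the same approach as the paper: introduce $\alpha=\tfrac{1}{r}\log_2(2^r-1)$ (the paper calls it $x$), rewrite $J_1=-\lfloor\alpha\rfloor$ and $J_{2,2}=-\lfloor 2\alpha\rfloor$, and then split on whether $\alpha$ lies in $[-J_1,-J_1+1/2)$ or $[-J_1+1/2,-J_1+1)$ to compute $\lfloor 2\alpha\rfloor$. Your additional remark that this is just the standard identity $\lfloor 2\alpha\rfloor = 2\lfloor\alpha\rfloor + \mathbf{1}_{\{\alpha-\lfloor\alpha\rfloor\geq 1/2\}}$ is a nice way to package it.
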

\begin{proof}
	For simplicity, let $x\triangleq\tfrac{1}{r}\log_2{(2^r-1)}$. Then $\lfloor x\rfloor=-J_1$ and $x\in[-J_1,-J_1+1)$. If $x\in[-J_1,-J_1+1/2)$, then $2x\in[-2J_1,-2J_1+1)$ and $\lfloor 2x\rfloor=-2J_1$. If $x\in[-J_1+1/2,-J_1+1)$, then $2x\in[-2J_1+1,-2J_1+2)$ and $\lfloor 2x\rfloor=-2J_1+1$. Since $J_1=-\lfloor x\rfloor$ and $J_{2,2}=-\lfloor 2x\rfloor$, \eqref{eq:j22j1} holds obviously. 
\end{proof}

To understand the above corollaries, the curves of $J_1$, $J_{2,1}$, and $J_{2,2}$ w.r.t. $r$ are plotted in \cref{subfig:J12} and the curves of $\psi(1)$ and $\psi(2)$ w.r.t. $r$ are plotted in \cref{subfig:psi12}. It is easy to verify that $J_1=J_{2,2}=0$ when $r=1$. As $r$ decreases from $1$, $J_1$ and $J_{2,2}$ will jump from $0$ to $1$ immediately. Then at $r\approx0.8114$, which corresponds to $2^r-1=2^{-r/2}$, $J_{2,2}$ jumps from $1$ to $2$; at $r=\log_2{\varphi}\approx 0.6942$, where $\varphi\approx1.618$ is the golden ratio, $J_1$ jumps from $1$ to $2$ and $J_{2,2}$ jumps from $2$ to $3$. The simple relation between $J_1$ and $J_{2,2}$, \textit{i.e.}, $J_{2,2}=2J_1$ or $2J_1-1$, is also verified. As for $J_{2,1}$, according to \eqref{eq:J21}, we have $J_{2,1}=-1$ for $r\geq\log_2{\varphi}$ and $J_{2,1}=0$ for $0.5\leq r<\log_2{\varphi}$. However, negative $J_{2,1}$ makes no sense, so we lower bound $J_{2,1}$ by $0$ in \cref{subfig:J12}.

As for $\psi(d)$, it can be found from \cref{subfig:psi12} that as $r$ decreases, $\psi(d)$ will strictly go up, coinciding with our intuition. However, the curves of $\psi(d)$ are not always smooth and there are many turning points, roughly corresponding to the jump points of $J_1$, $J_{2,1}$, and $J_{2,2}$. By intuition, there should be $\psi(2)>\psi(1)$. However, surprisingly, we find that $\psi(2)>\psi(1)$ does not hold always.

\begin{corollary}[An Exception]
	Let $r_0\approx0.8114$ be the root of $2^{-r/2}=2^r-1$. Then at least for $r_0\leq r<1$, we have $\psi(1)>\psi(2)$.
\end{corollary}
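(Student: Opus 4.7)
The plan is to show that in the range $r_0 \leq r < 1$ the sums defining $\psi(1)$ and $\psi(2)$ both collapse to a very small number of terms, after which the inequality becomes an elementary algebraic check.

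First, I would pin down the relevant index bounds. For $r\in[r_0,1)$, the quantity $\tfrac{1}{r}\log_2(2^r-1)$ lies in $[-\tfrac{1}{2},0)$, so by \eqref{eq:J1}, $J_1=1$. By the same token, $\tfrac{2}{r}\log_2(2^r-1)\in[-1,0)$, so by \eqref{eq:J22}, $J_{2,2}=1$. For $J_{2,1}$, I would note that $r\geq r_0>\tfrac12$ gives $4^r-1>1$, hence $\tfrac{1}{r}\log_2(4^r-1)>0$ and thus $J_{2,1}\leq 0$; by the usual empty-sum convention, this forces $\psi(2\mid 0^2)=0$. Finally, by its definition, $\kappa_2(1)=J_1=1$.

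Next I would substitute these bounds into the closed forms of \cref{corol:psi1} and \cref{corol:psi2}. Only the term $i=1$ survives in $\psi(1)$, giving
\begin{align}
\psi(1) = 1-(1-2^{-r})2^{r} = 2-2^{r}.
\end{align}
Similarly, only $i=1,k=1$ survives in $\psi(2\mid\underline{10})$, giving
\begin{align}
\psi(2\mid\underline{10}) = 1-(1-2^{-r})2^{r}(2^{r}-1) = 1-(2^{r}-1)^{2},
\end{align}
and therefore
\begin{align}
\psi(2) = \tfrac{1}{2}\bigl(\psi(2\mid 0^2)+\psi(2\mid\underline{10})\bigr) = \tfrac{1}{2}\bigl(1-(2^{r}-1)^{2}\bigr).
\end{align}

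The last step is a one-line algebraic check. Setting $t\triangleq 2^{r}-1\in(0,1)$ for $r\in[r_0,1)$, we have $\psi(1)=1-t$ and $\psi(2)=\tfrac{1}{2}(1-t^{2})=\tfrac{1}{2}(1-t)(1+t)$, so
\begin{align}
\psi(1)-\psi(2) = (1-t)\bigl(1-\tfrac{1}{2}(1+t)\bigr) = \tfrac{1}{2}(1-t)^{2} = \tfrac{1}{2}(2-2^{r})^{2}>0,
\end{align}
which gives the claim. The only mildly delicate part is verifying that the index bounds really behave as described at the endpoint $r=r_0$; this is just unwinding the floor functions at $\tfrac{1}{r}\log_2(2^r-1)=-\tfrac12$, and there is no genuine obstacle. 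The real content of the corollary is simply that for $r$ close to $1$ the spectra are trivially short sums and the elementary identity $\psi(1)-\psi(2)=\tfrac12(2-2^{r})^{2}$ does all the work.
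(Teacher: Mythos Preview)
Your proof is correct and follows essentially the same approach as the paper: both reduce the sums via the index bounds $J_1=J_{2,2}=1$, $J_{2,1}\le 0$, $\kappa_2(1)=1$, and then finish with an elementary algebraic comparison. The only cosmetic differences are that you verify the index bounds analytically (the paper just reads them off \cref{subfig:J12}) and that you compute the difference $\psi(1)-\psi(2)=\tfrac12(2-2^{r})^{2}$ whereas the paper computes the ratio $\psi(2)/\psi(1)=2^{r-1}$; both conclusions are immediate.
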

\begin{proof}
	For $r_0\leq r<1$, it can be found from \cref{subfig:J12} that $J_1=J_{2,2}=1$ and $J_{2,1}\leq 0$, so we have
	\begin{align}
		\psi(1) = 1-(1-2^{-r})2^{r} = 1-(2^r-1) = 1-x\nonumber
	\end{align}	
	and 
	\begin{align}
		\psi(2) = \frac{\psi(2|\underline{10})}{2} 
		&= \frac{1}{2}\sum_{k=1}^{\kappa_2(1)}{(1-(1-2^{-r})2^{r}(2^{kr}-1))}\nonumber\\
		&= \frac{1-(1-2^{-r})2^{r}(2^{r}-1)}{2}\nonumber\\
		&= \frac{1-(2^{r}-1)^2}{2} = \frac{1-x^2}{2},\nonumber
	\end{align}
	where $x\triangleq(2^r-1)\in[0,1]$. The ratio between $\psi(2)$ and $\psi(1)$ is
	\begin{align}
		\frac{\psi(2)}{\psi(1)} = \frac{1+x}{2} = 2^{r-1} \leq 1,\nonumber
	\end{align}
	and the equality holds iff $x=r=1$.
\end{proof}

\begin{figure*}[!t]
	\centering
	\subfigure[]{\includegraphics[width=.5\linewidth]{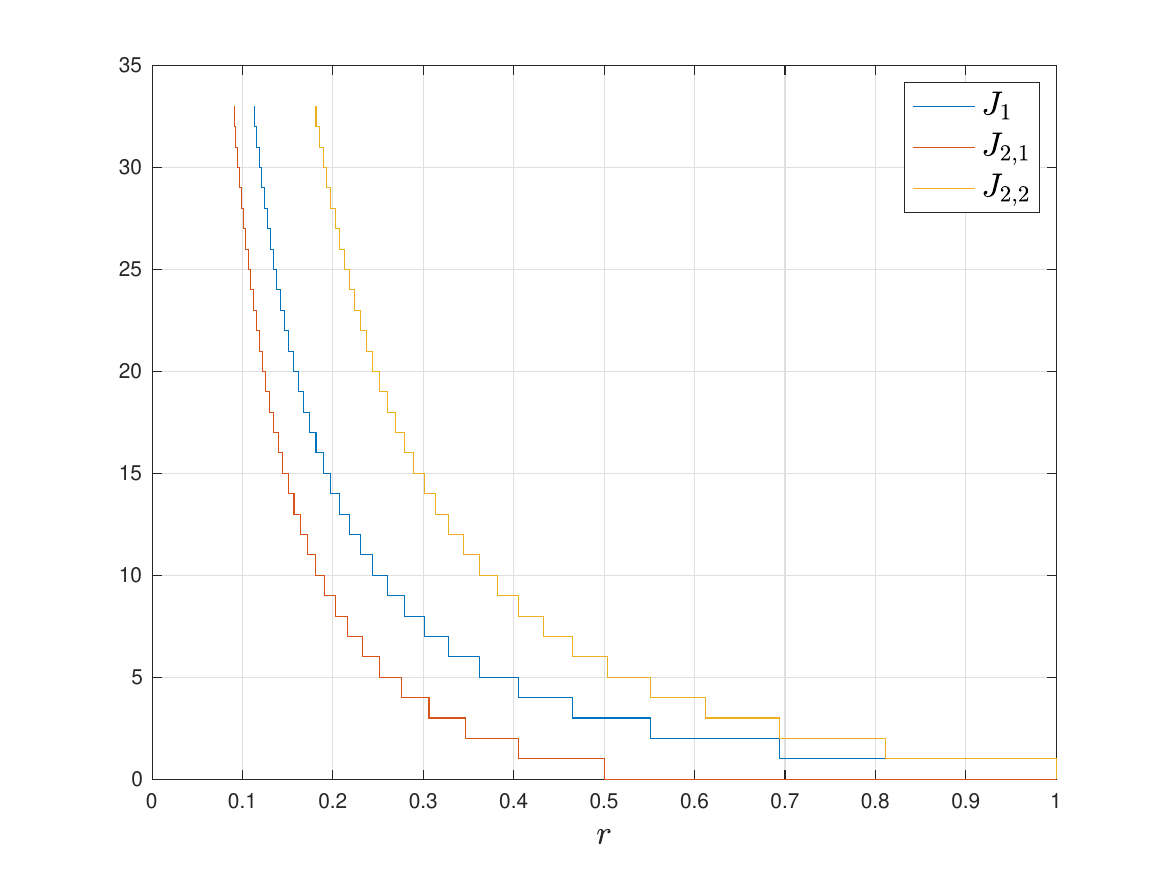}\label{subfig:J12}}%
	\subfigure[]{\includegraphics[width=.5\linewidth]{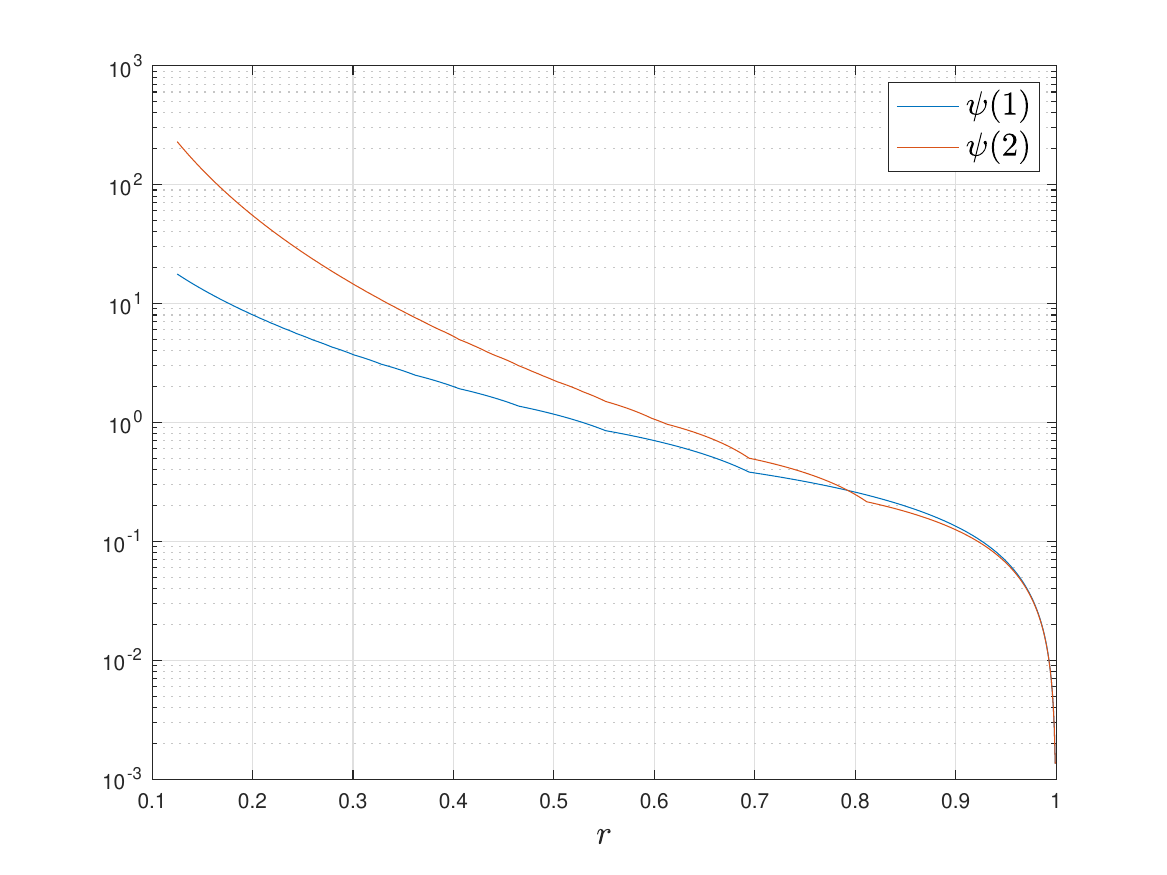}\label{subfig:psi12}}
	\caption{(a) Variations of $J_1$, $J_{2,1}$, and $J_{2,2}$ w.r.t. overlapping factor $r$, where $J_1$, $J_{2,1}$, and $J_{2,2}$ are defined by \eqref{eq:J1},  \eqref{eq:J21}, and \eqref{eq:J22}, respectively. Note that $J_{2,1}$ is lower bounded by $0$ because negative $J_{2,1}$ makes no sense. (b) Variations of $\psi(1)$ and $\psi(2)$ w.r.t. $r$, which are given by \cref{corol:psi1} and \cref{corol:psi2}, respectively.}
	\label{fig:hds12}
\end{figure*}

\subsection{Convergence of $\psi(3)$}\label{subsec:psi3}
In theory, if only $\psi(d)<\infty$, it can be calculated following the same methodology developed for $\psi(1)$ and $\psi(2)$. However, the procedure will be more and more complex. What's worse, $\psi(d)=\infty$ often happens for $d\geq 3$, as observed in \cite{FangTCOM16a}. The following theorem gives the necessary and sufficient condition for the convergence of $\psi(3)$.

\begin{theorem}[Necessary and Sufficient Condition for the Convergence of $\psi(3)$]\label{thm:psi3}
	If there is no pair of integers $i\geq1$ and $j\geq1$ such that $2^{ir}(2^{jr}-1)=1$, then $\psi(3)<\infty$; otherwise, $\psi(3)=\infty$.
\end{theorem}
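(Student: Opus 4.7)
The plan is to work directly from \eqref{eq:psid} with $d=3$, namely
\begin{align*}
\psi(3) = 2^{-3}\sum_{b^3\in\mathbb{B}^3}\sum_{j_3=3}^{\infty}\sum_{j_2=2}^{j_3-1}\sum_{j_1=1}^{j_2-1}\bigl(1-|\tau(j^3,b^3)|\bigr)^+,
\end{align*}
and reduce the convergence question to the zero set of a single two-variable function. By the antisymmetry $\tau(j^3,b^3\oplus1^3)=-\tau(j^3,b^3)$, I would pair up the eight sign patterns and keep the four representatives $b^3\in\{000,001,010,011\}$. A direct estimate shows that for the three patterns $000$, $010$, and $011$ the dominant term $\pm 2^{rj_3}$ is not cancelled by the other two: respectively $|\tau|\geq(1-2^{-r})\,2^{rj_3}$, $|\tau|\geq(2^r-1)(1-2^{-r})\,2^{rj_2}$, and $|\tau|\geq(1-2^{-r})\,2^{rj_3}$. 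Hence for these three patterns only finitely many triples satisfy $|\tau|<1$, so their total contribution to $\psi(3)$ is automatically finite.

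Everything therefore reduces to the critical pattern $b^3=001$, where
\begin{align*}
\tau(j^3,001) = (1-2^{-r})(2^{rj_1}+2^{rj_2}-2^{rj_3}) = -(1-2^{-r})\,2^{rj_1}\,g(a,b)
\end{align*}
after the substitution $a=j_2-j_1\geq1$, $b=j_3-j_2\geq1$, and with
\begin{align*}
g(a,b) \;\triangleq\; 2^{r(a+b)} - 2^{ra} - 1 \;=\; 2^{ra}(2^{rb}-1) - 1.
\end{align*}
The key observation is that $g(i,j)=0$ is exactly the condition $2^{ri}(2^{rj}-1)=1$ in the theorem. If such positive integers $(i,j)$ exist, fixing $(a,b)=(i,j)$ gives $\tau\equiv 0$ for every $j_1\geq1$, so the summand equals $1$ on an infinite set of triples and $\psi(3)=\infty$, settling the \emph{otherwise} direction.

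For the converse, I assume $g(a,b)\neq0$ for all $a,b\geq1$. The condition $|\tau|<1$ rearranges to $j_1<J(a,b)\triangleq-\tfrac{1}{r}\log_2\!\bigl[(1-2^{-r})|g(a,b)|\bigr]$, so the contribution of each pair $(a,b)$ is bounded by $J(a,b)^+$, which is finite because $g(a,b)\neq0$. It then remains to show that only finitely many pairs $(a,b)$ are \emph{active}, in the sense that $J(a,b)\geq1$; equivalently, $|g(a,b)|<1/(2^r-1)$, which yields $2^{ra}(2^{rb}-1)<2^r/(2^r-1)$. Since $a\geq1$ and $b\geq1$ give $2^{rb}-1\geq2^r-1$, this upper bound already forces $2^{ra}\leq 2^r/(2^r-1)^2$, bounding $a$ by a finite constant; then for each admissible $a$ the same inequality bounds $b$. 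Hence the active set is finite, each $J(a,b)$ on it is finite by hypothesis, and $\psi(3)<\infty$.

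The main technical obstacle I anticipate is the careful verification of the three non-critical sign patterns---particularly $b^3=010$, where the lower bound on $|\tau|$ must be sharpened via $2^{rj_3}-2^{rj_2}\geq(2^r-1)\,2^{rj_2}$---together with the sign-change behaviour of $g(a,b)$ near the golden-ratio threshold $r=\log_2\varphi$, which means the active-set bound must be phrased with $|g(a,b)|$ rather than $g(a,b)$ itself so that both signs are covered simultaneously.
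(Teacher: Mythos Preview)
Your proposal is correct and follows essentially the same route that the paper signals (the full proof is deferred to \cite{FangTIT24}, but the surrounding material---\cref{lem:signum} and \cref{thm:psi3linear}---uses exactly your substitution $i=j_1$, $a=j_2-j_1$, $b=j_3-j_2$ and the same reduction to the zero set of $2^{ra}(2^{rb}-1)-1$). The four-pattern reduction via $\tau(j^3,b^3\oplus1^3)=-\tau(j^3,b^3)$, the identification of $b^3=001$ as the sole critical pattern, and the finite-active-set bound on $(a,b)$ are all the right ingredients; your slope of $1/4$ per divergent pair matches \eqref{eq:psi3ngeneral} exactly.

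One small point to tighten when you write it out: for the pattern $b^3=010$ your stated bound $|\tau|\geq(2^r-1)(1-2^{-r})\,2^{rj_2}$ controls $j_2$ (hence $j_1$) but not $j_3$ by itself. The cleanest fix is to use $2^{rj_2}\leq 2^{-r}\cdot 2^{rj_3}$ to get $|\tau|\geq(1-2^{-r})^2\,2^{rj_3}$ directly, which bounds $j_3$ and hence the whole triple in one stroke; alternatively, once $j_1,j_2$ are bounded, note that $\tau\to\infty$ as $j_3\to\infty$ for each fixed pair. You already flag this in your final paragraph, so just make sure the written version closes the loop.
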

\begin{proof}
	Since the proof is overly long, it is not included in this monograph. The reader may refer to \cite{FangTIT24} for a detail proof.
\end{proof}

\begin{corollary}[Sufficient Condition for the Convergence of $\psi(3)$]
	If $2^r$ is a transcendental number, $\psi(3)<\infty$.
\end{corollary}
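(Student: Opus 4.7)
The plan is to argue by contrapositive using the preceding theorem, which already characterizes convergence of $\psi(3)$ in terms of a Diophantine-type condition on $2^r$. Concretely, I would assume that $\psi(3) = \infty$ and, by that theorem, extract a pair of positive integers $i, j$ satisfying $2^{ir}(2^{jr} - 1) = 1$. The goal is then to show that this forces $2^r$ to be algebraic, contradicting the hypothesis that $2^r$ is transcendental.

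First, I would set $x \triangleq 2^r$ and rewrite the identity $2^{ir}(2^{jr} - 1) = 1$ as $x^{i+j} - x^i - 1 = 0$. Thus $x = 2^r$ is a root of the polynomial
\begin{align*}
P(T) \triangleq T^{i+j} - T^i - 1 \in \mathbb{Z}[T],
\end{align*}
which has integer (hence rational) coefficients, is nonzero, and has degree $i+j \geq 2$. This exhibits $2^r$ as an algebraic number, contradicting the assumption that $2^r$ is transcendental. Hence no such pair $(i,j)$ can exist, and by the necessary and sufficient condition of the preceding theorem, $\psi(3) < \infty$.

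There is essentially no obstacle: the only subtle point is verifying that $P(T)$ is genuinely nonzero as a polynomial so that having $2^r$ as a root is meaningful, which is immediate since the leading coefficient of $T^{i+j}$ is $1$ and the constant term is $-1$. The corollary therefore follows in a few lines once the algebraic reformulation $x^{i+j} - x^i - 1 = 0$ is noticed.
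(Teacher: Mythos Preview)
Your proposal is correct and is exactly the argument the paper intends: the paper's proof consists of the single sentence ``This is a direct result of \cref{thm:psi3},'' and what you have written simply unpacks that sentence by observing that $2^{ir}(2^{jr}-1)=1$ rewrites as the integer polynomial equation $x^{i+j}-x^i-1=0$ in $x=2^r$. There is nothing to add or change.
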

\begin{proof}
	This is a direct result of \cref{thm:psi3}.
\end{proof}

We have given the necessary and sufficient condition for the convergence of $\psi(3)$. Similarly, it is possible to deduce the necessary and sufficient condition for the convergence of $\psi(d)$ for $d>3$. However, the procedure will become more and more complex. Actually, we strongly believe that if $2^r$ is a transcendental number, then $\psi(d)<\infty$ for any $d$, not just for $d=3$. However, we are not able to provide a strict proof at present, so we remain it as future work.

\begin{figure*}[!t]
	\centering
	\subfigure[]{\includegraphics[width=.5\linewidth]{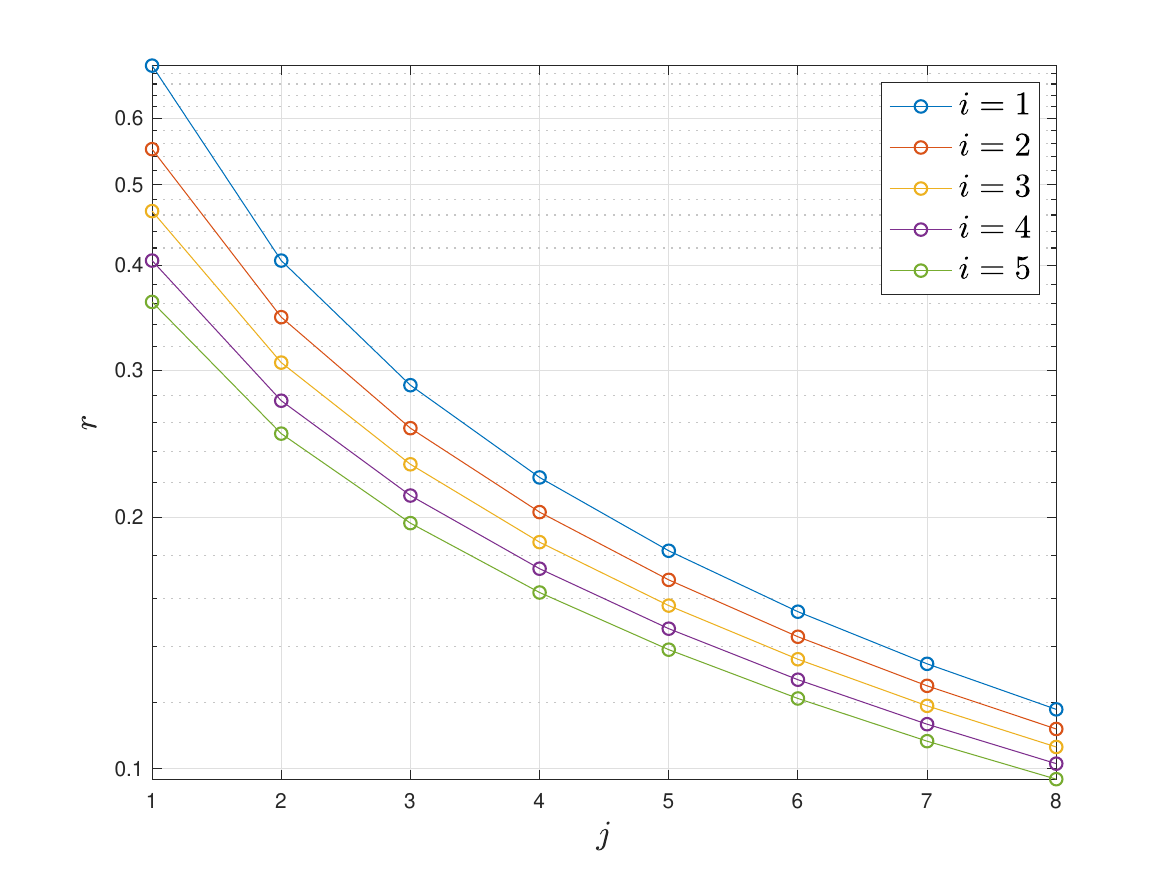}\label{subfig:d3a}}%
	\subfigure[]{\includegraphics[width=.5\linewidth]{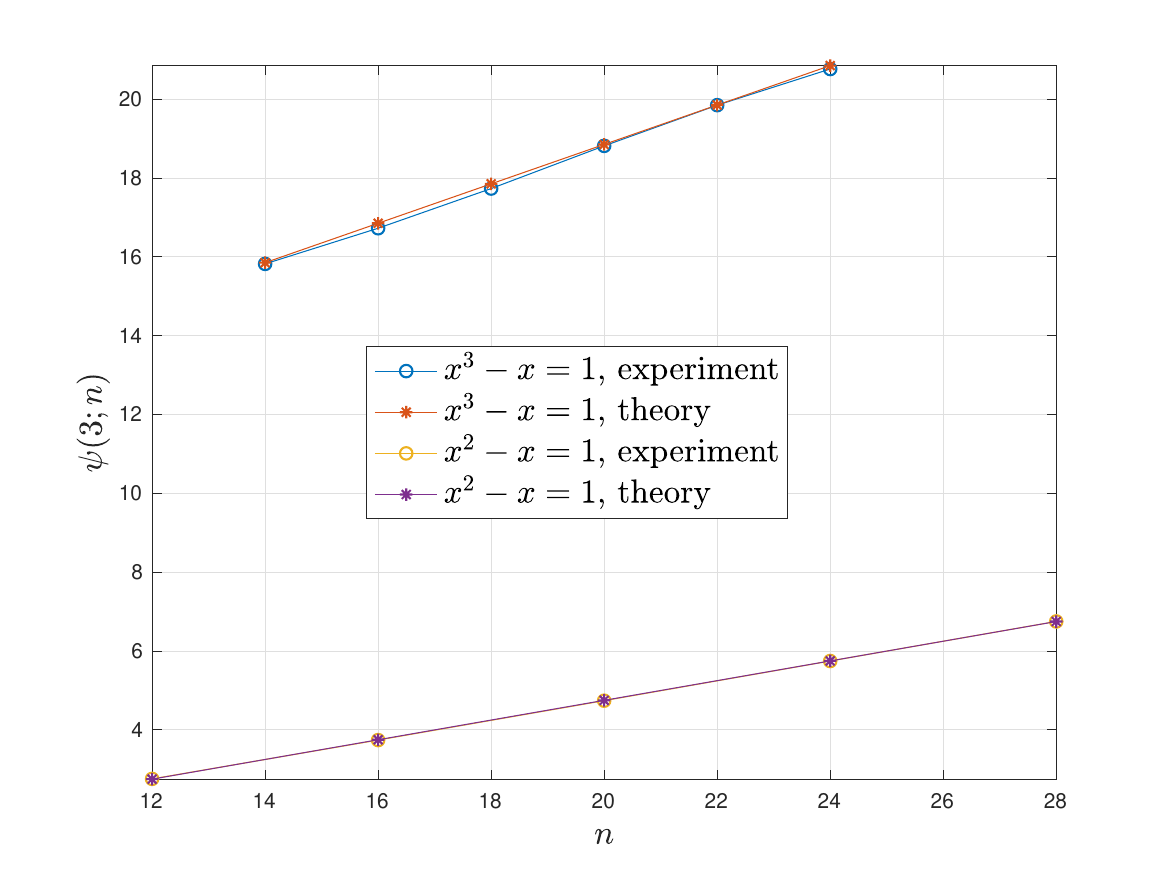}\label{subfig:d3b}}
	\caption{(a) Each circle corresponds to a tuple of $(i,j,r)$ making $2^{ir}(2^{jr}-1)=1$, where $i,j\in\mathbb{Z}$ and $r\in[0,1]$. Note that for some specific $r$, there may be more than one $(i,j)$-pairs satisfying $2^{ir}(2^{jr}-1)=1$. For example, if $2^r(2^{2r}-1)=1$, then $2^{4r}(2^r-1)=1$. (b) Theoretical and experimental results of $\psi(3;n)$ for two divergent settings of $x=2^r$.}
	\label{fig:d3}
\end{figure*}

\section{Closed-Form Expression of Divergent $\psi(3;n)$}\label{subsec:diverge}
For every pair of $i\geq1$ and $j\geq1$, there must be an overlapping factor $r\in[0,1]$ making $2^{ir}(2^{jr}-1)=1$. Some examples of $(i,j)$-pairs and the corresponding overlapping factors satisfying $2^{ir}(2^{jr}-1)=1$ are given in \cref{subfig:d3a}. It can be seen that when $2^r$ is the golden ratio, $i=j=1$ will make $2^{ir}(2^{jr}-1)=1$. Actually, this is the largest $r$ making $\psi(3)=\infty$, as proved by the following lemma. As $i$ or $j$ increases, the overlapping factor making $2^{ir}(2^{jr}-1)=1$ will be smaller. 

\begin{lemma}[Sign of $(2^{j_3r}-2^{j_2r}-2^{j_1r})$]\label{lem:signum}
	Let $\varphi$ denote the golden ratio. Depending on the relation between $2^r$ and $\varphi$, the value of $(2^{j_3r}-2^{j_2r}-2^{j_1r})$ will have different signs.
	\begin{itemize}
		\item $2^r=\varphi$: We have $(2^{j_3r}-2^{j_2r}-2^{j_1r})\geq0$ and the equality holds iff $(j_3-j_2)=(j_2-j_1)=1$. 
		\item $2^r>\varphi$: We have $(2^{j_3r}-2^{j_2r}-2^{j_1r})>0$ for every $1\leq j_1<j_2<j_3$ and hence $\psi(3)<\infty$.
		\item $2^r<\varphi$: The value of $(2^{j_3r}-2^{j_2r}-2^{j_1r})$ can be positive or negative, and for some special $2^r$, may be $0$.
	\end{itemize}
\end{lemma}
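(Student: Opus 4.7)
Setting $x\triangleq 2^r>1$ and letting $a\triangleq j_3-j_2\ge 1$, $b\triangleq j_2-j_1\ge 1$, I would first rewrite
\begin{align*}
	2^{j_3r}-2^{j_2r}-2^{j_1r} = 2^{j_1r}\bigl(x^{a+b}-x^b-1\bigr) = 2^{j_1r}\bigl(x^b(x^a-1)-1\bigr),
\end{align*}
so that since $2^{j_1r}>0$ the sign question reduces to studying $g(a,b)\triangleq x^b(x^a-1)-1$ over $a,b\ge 1$. The key algebraic fact I would rely on throughout is the defining identity $\varphi^2=\varphi+1$, which I prefer to use in the multiplicative form $\varphi(\varphi-1)=1$.

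For the critical case $x=\varphi$, the plan is to verify $g(a,b)\ge 0$ with equality iff $a=b=1$ by splitting into four sub-cases. The case $a=b=1$ is immediate from $\varphi(\varphi-1)=1$. When $a=1$ and $b\ge 2$, I would write $x^b(x-1)=x^{b-1}\cdot x(x-1)=x^{b-1}\ge x>1$. When $a\ge 2$ and $b\ge 1$, the identity $x^2=x+1$ gives $x^a-1\ge x^2-1=x$, so $x^b(x^a-1)\ge x^{b+1}\ge x^2>1$. This exhausts the remaining configurations and pins the equality case down to $(a,b)=(1,1)$, i.e.\ $(j_3-j_2)=(j_2-j_1)=1$. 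For the case $x>\varphi$, the same identity sharpens to $x(x-1)>1$; since $x^b\ge x$ and $x^a-1\ge x-1$ for $a,b\ge 1$, I immediately get $x^b(x^a-1)\ge x(x-1)>1$ strictly, so $g(a,b)>0$ for every admissible triple. Applied to \cref{thm:psi3} via $i=b$, $j=a$, this rules out the existence of integers $i,j\ge 1$ with $2^{ir}(2^{jr}-1)=1$, yielding $\psi(3)<\infty$.

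For the case $x<\varphi$, $x(x-1)<1$ produces $g(1,1)<0$ directly (so $2^{j_3r}-2^{j_2r}-2^{j_1r}$ is strictly negative at $(j_1,j_2,j_3)=(1,2,3)$), whereas fixing $b$ and letting $a\to\infty$ forces $g(a,b)\to+\infty$ and furnishes strictly positive triples; the zero locus is precisely the set of integer pairs forbidden in \cref{thm:psi3}, which may or may not be realized depending on the arithmetic nature of $x$ (illustrated in \cref{subfig:d3a}). The hard part will be the equality characterization at $x=\varphi$: I must rule out any ``non-consecutive'' triple accidentally producing $g=0$, and the argument has to use $\varphi^2=\varphi+1$ tightly enough to give equality at $(1,1)$ but strict positivity everywhere else. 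The monotonicity of $x^n$ combined with the Fibonacci-type recursion $x^{n+2}=x^{n+1}+x^n$ is what makes the case split both finite and sharp.
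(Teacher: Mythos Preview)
Your proposal is correct and follows essentially the same route as the paper: the same factorization $2^{j_1r}\bigl(x^b(x^a-1)-1\bigr)$ (the paper writes it as $2^{ir}(2^{jr}(2^{kr}-1)-1)$ with $i=j_1$, $j=j_2-j_1$, $k=j_3-j_2$), reducing the sign question to comparing $x^b(x^a-1)$ with $1$ and minimizing over $a,b\ge 1$. The only minor difference is that for the equality characterization at $x=\varphi$ the paper argues in one stroke via strict monotonicity of $(j,k)\mapsto 2^{jr}(2^{kr}-1)$, whereas you do a short case split using $\varphi(\varphi-1)=1$; both are fine and yield the same conclusion.
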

\begin{proof}
	We can write $(2^{j_3r}-2^{j_2r}-2^{j_1r})$ as
	\begin{align*}
		&(2^{j_3r}-2^{j_2r}-2^{j_1r}) 
		= 2^{j_1r}(2^{(j_3-j_1)r}-2^{(j_2-j_1)r}-1)\\
		&\qquad\qquad= 2^{j_1r}(2^{(j_2-j_1)r}(2^{((j_3-j_1)-(j_2-j_1))r}-1)-1)\\
		&\qquad\qquad= 2^{ir}(2^{jr}(2^{kr}-1)-1),
	\end{align*}
	where $i=j_1\geq1$, $j=(j_2-j_1)\geq1$, and $k=((j_3-j_1)-(j_2-j_1))=(j_3-j_2)\geq1$. Clearly, $2^{jr}(2^{kr}-1)>0$ and it is strictly increasing w.r.t. both $j$ and $k$. Thus $2^{jr}(2^{kr}-1)\geq2^r(2^r-1)$. After solving $2^r(2^r-1)=1$, we obtain $2^r=\varphi\approx1.618$ and $r\approx 0.6942$. Hence, if $2^r=\varphi$, we have
	\begin{align*}
		(2^{j_3r}-2^{j_2r}-2^{j_1r}) 
		&= 2^{ir}(2^{jr}(2^{kr}-1)-1)\\
		&\geq (2^{jr}(2^{kr}-1)-1)\\
		&\geq (2^r(2^r-1)-1) = 0,
	\end{align*}
	where the equality holds iff $k=(j_3-j_2)=j=(j_2-j_1)=1$.
	
	Since $2^r(2^r-1)$ is strictly increasing w.r.t. $2^r$ and $r$, if $2^r>\varphi$, then $(2^{j_3r}-2^{j_2r}-2^{j_1r})\geq (2^r(2^r-1)-1)>0$ for any $1\leq j_1<j_2<j_3$ and thus $\psi(3)<\infty$. Similarly, the third bullet point also holds.
\end{proof}

Now we wonder whether it is possible to find the analytical form of $\psi(3;n)$ even if $\psi(3;n)$ does not converge? Below we first give the general form of $\psi(3;n)$ when it does not converge, and then derive the concrete forms of $\psi(3;n)$ in two special cases. The proofs of the following theorem and corollaries can be found in \cite{FangTIT24}.

\begin{theorem}[Linearity of $\psi(3;n)$ for Large $n$]\label{thm:psi3linear}
	If there exist one or more pairs of integers $i\geq1$ and $j\geq1$ such that $2^{ir}(2^{jr}-1)=1$, then for large $n$, $\psi(3;n)$ will increase linearly. Let ${\cal P}$ denote the set of all pairs of integers $i\geq1$ and $j\geq1$ satisfying $2^{ir}(2^{jr}-1)=1$. Every $(i,j)\in{\cal P}$ will cause a divergent term $(n-(i+j))$, and
	\begin{align}\label{eq:psi3ngeneral}
		\psi(3;n) &\approx c_0 + (1/4)\sum_{(i,j)\in{\cal P}}{(n-(i+j))}\nonumber\\
		&= c_1 + n|{\cal P}|/4,
	\end{align}	
	where $c_0$ is the sum of convergent terms, $|{\cal P}|$ is the cardinality of ${\cal P}$, and $c_1=c_0-\frac{1}{4}\sum_{(i,j)\in{\cal P}}{(i+j)}$.
\end{theorem}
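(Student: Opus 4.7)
The plan is to invoke the Soft Approximation of HDS (\cref{thm:soft}) with $d=3$ and $\alpha=0$,
\begin{align*}
\psi(3;n) \;\approx\; \frac{1}{8}\sum_{b^3\in\mathbb{B}^3}\sum_{j^3\in\mathcal{J}_{n,3}}\left(1-|\tau(j^3,b^3)|\right)^+,
\end{align*}
and then isolate the source of divergence as $n\to\infty$. Using the symmetry $\tau(j^3,b^3\oplus 1^3)=-\tau(j^3,b^3)$, the eight patterns $b^3$ pair up, so it suffices to evaluate the four representatives $b^3\in\{000,001,010,011\}$ and double. I would parameterize $j^3=(j_1,j_2,j_3)$ by the triple $(j_1,a,b)$ with $a=j_2-j_1\ge 1$, $b=j_3-j_2\ge 1$, and $1\le j_1\le n-(a+b)$, and factor
\begin{align*}
\tau(j^3,b^3) \;=\; (1-2^{-r})\,2^{rj_1}\,P(a,b;b^3),
\end{align*}
where $P$ depends only on the shape $(a,b)$ and the label $b^3$.

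Next I would show that for $b^3\in\{000,010,011\}$ the factor $|P(a,b;b^3)|$ is bounded below by a positive constant uniformly in $(a,b)$: for instance $P(\,\cdot\,;000)=1+2^{ra}+2^{r(a+b)}\ge 3$, $P(\,\cdot\,;010)=1+2^{ra}(2^{rb}-1)\ge 1+2^{r}(2^{r}-1)$, and $|P(\,\cdot\,;011)|=|1-2^{ra}(1+2^{rb})|\ge 2^{r}(1+2^{r})-1$. Consequently $|\tau|<1$ forces $2^{rj_1}$ below a fixed threshold, which leaves only finitely many admissible triples $(j_1,a,b)$; the total contribution of each of these three cases is a finite constant independent of $n$, all of which I absorb into $c_0$.

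The only potentially divergent case is $b^3=001$, where $P(a,b;001)=-\mu_{a,b}$ with $\mu_{a,b}\triangleq 2^{ra}(2^{rb}-1)-1$, so $\tau=-(1-2^{-r})\,2^{rj_1}\mu_{a,b}$. When $(a,b)\in\mathcal{P}$ we have $\mu_{a,b}=0$, hence $\tau\equiv 0$ for every admissible $j_1$, contributing exactly $(n-(a+b))$ to the inner sum. When $(a,b)\notin\mathcal{P}$ but $\mu_{a,b}\ne 0$, the condition $|\tau|<1$ admits at most $\bigl\lfloor r^{-1}\log_{2}\!\bigl(1/((1-2^{-r})|\mu_{a,b}|)\bigr)\bigr\rfloor$ values of $j_1$, producing a bounded geometric-type tail. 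Collecting all terms, multiplying by the symmetry factor $2$, and dividing by $8$ yields
\begin{align*}
\psi(3;n) \;\approx\; c_0 + \tfrac{1}{4}\sum_{(i,j)\in\mathcal{P}}\bigl(n-(i+j)\bigr),
\end{align*}
which is exactly \eqref{eq:psi3ngeneral} after splitting the linear term.

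The hard part will be establishing that the non-resonant contribution from $b^3=001$ really is bounded in $n$: one must argue that only finitely many pairs $(a,b)\notin\mathcal{P}$ even allow $j_1=1$ to satisfy $|\tau|<1$. This reduces to showing that the level set $\{(a,b)\in\mathbb{Z}_{\ge 1}^{2}:|\mu_{a,b}|<1/(2^{r}-1)\}$ is finite, which I would derive from the componentwise growth $\mu_{a,b}\to\infty$ as $\max(a,b)\to\infty$: fixing $a$ and letting $b\to\infty$ forces $2^{r(a+b)}\to\infty$, and fixing $b$ and letting $a\to\infty$ forces $2^{ra}(2^{rb}-1)\to\infty$. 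Once this finite bad set is pinned down, the finite convergence argument mirrors the treatment of $\psi(1)$ and $\psi(2)$ in \cref{subsec:psi12}, and every non-divergent piece is safely aggregated into the constant $c_0$.
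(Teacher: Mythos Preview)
Your overall strategy matches the paper's framework: the monograph defers the full proof to \cite{FangTIT24}, but its \cref{lem:signum} uses exactly your parameterization $j^3\leftrightarrow(j_1,j_2-j_1,j_3-j_2)$ and isolates the same resonance condition $2^{ir}(2^{jr}-1)=1$ coming from the pattern $b^3=001$ (equivalently $110$). Your identification of the divergent term and the coefficient $1/4$ is correct.

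There is, however, a genuine gap in the argument for the three ``easy'' patterns $b^3\in\{000,010,011\}$. You write that $|P(a,b;b^3)|$ is bounded below by a positive constant, and conclude that $|\tau|<1$ forces $2^{rj_1}$ below a fixed threshold, ``which leaves only finitely many admissible triples $(j_1,a,b)$.'' But a uniform lower bound on $|P|$ only bounds $j_1$; it says nothing about $(a,b)$, and since $a+b$ may range up to $n-1$, this alone would still allow the number of contributing triples to grow with $n$. What you actually need---and what is immediate from the explicit forms you already wrote down---is that $|P(a,b;b^3)|\to\infty$ as $\max(a,b)\to\infty$ for each of these three patterns. Combined with $j_1\ge 1$, the inequality $(2^{r}-1)|P(a,b;b^3)|\le(1-2^{-r})2^{rj_1}|P|<1$ then bounds $(a,b)$ to a finite set, and only then does the triple count become finite independently of $n$. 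This is exactly the kind of two-sided control you do carry out carefully for the $b^3=001$ non-resonant case; you simply need to apply it to the other three patterns as well.
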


As shown by \eqref{eq:psi3ngeneral}, to obtain $\psi(3;n)$ for a specific $r$, the key is to determine $c_0$, the sum of convergent terms. Following are two examples to show how to calculate $c_0$ for a specific $r$. In the first example, the set ${\cal P}$ includes only one $(i,j)$-pair, while in the second example, the set ${\cal P}$ includes two $(i,j)$-pairs. 
\begin{corollary}[$\psi(3;n)$ for Golden Ratio]\label{corol:psi3}
	If $(2^{2r}-2^r-1)=0$, then $\psi(3;n)\approx (n-1)/4$ for $n\geq5$. 
\end{corollary}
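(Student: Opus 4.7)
The plan is to specialize Theorem \ref{thm:psi3linear} to the golden-ratio case and then pin down the constant $c_0$ by enumerating the finitely many convergent contributions to $\psi(3;n)$. First, I would observe that $(2^{2r}-2^r-1)=0$ is exactly the characteristic equation of the golden ratio, so $2^r=\varphi=(1+\sqrt{5})/2$, which by Lemma~\ref{lem:signum} is the unique ``critical'' value at which $2^{j_3r}-2^{j_2r}-2^{j_1r}$ vanishes (for $(j_3-j_2)=(j_2-j_1)=1$). This immediately tells us that $\psi(3)$ is divergent and that Theorem~\ref{thm:psi3linear} applies, so it remains to identify ${\cal P}$ and to evaluate $c_0$.

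Second, I would enumerate ${\cal P}=\{(i,j)\in\mathbb{Z}_{\geq 1}^2:\ 2^{ir}(2^{jr}-1)=1\}$ at $2^r=\varphi$. Using $\varphi^n=F_n\varphi+F_{n-1}$ (with $F_n$ the Fibonacci numbers), the equation $\varphi^{i+j}=\varphi^i+1$ separates, via linear independence of $1$ and $\varphi$ over $\mathbb{Q}$, into $F_{i+j}=F_i$ and $F_{i+j-1}=F_{i-1}+1$. Strict monotonicity of $F_n$ for $n\geq 2$ together with $F_1=F_2=1$ forces $(i,j)=(1,1)$, giving $|{\cal P}|=1$ and $\sum_{(i,j)\in{\cal P}}(i+j)=2$. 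Plugged into Theorem~\ref{thm:psi3linear} this yields $\psi(3;n)\approx c_0+(n-2)/4$, so the whole claim $\psi(3;n)\approx(n-1)/4$ reduces to showing $c_0=1/4$.

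Third — and this is the main obstacle — I would compute $c_0$ by going back to the expansion \eqref{eq:psid} specialized to $d=3$. By symmetry $\tau(j^3,b^3\oplus 1^3)=-\tau(j^3,b^3)$, so there are effectively $4$ sign patterns $b^3\in\{000,001,010,011\}$ (up to a factor $2$), and for each one $\tau(j^3,b^3)$ is a closed expression in $2^{j_1r},2^{j_2r},2^{j_3r}$. For each pattern I would determine, using $2^r=\varphi$ and the Fibonacci identity above, the finitely many index tuples $1\le j_1<j_2<j_3$ for which $|\tau(j^3,b^3)|<1$; by Lemma~\ref{lem:signum} the candidate tuples concentrate near the ``Fibonacci ladders'' $(j_1,j_1+1,j_1+2)$, and all other tuples either violate the bound outright or produce a geometric tail whose sum I would evaluate in closed form using $\varphi^{-2}=1-\varphi^{-1}$ and the identities of Corollary~\ref{corol:psi2}. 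Summing $(1-|\tau|)$ over these finitely many surviving tuples (subtracting the zero-contribution from $(1,2,3)$, which is precisely the divergent ``ladder'' tuple accounted for by $|{\cal P}|$), dividing by $2^3=8$, and collecting telescoping sums should give exactly $c_0=1/4$.

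Finally, I would track where the approximation $\psi(3;n)\approx c_1+n/4$ becomes an equality, rather than an asymptotic: the only index tuples whose boundary effect survives at finite $n$ are those with $j_3>n$, and a quick count of the truncated Fibonacci ladder $(k,k+1,k+2)$ with $k+2\leq n$ shows that the truncation matches the asymptotic count as soon as $n\geq 5$ (the shortest ladder $(1,2,3)$ demands $n\geq 3$, and the next few non-ladder convergent tuples populated at $(1,2,4),(1,3,4),(2,3,4),(1,2,5)$ are all captured by $n=5$). This yields the stated bound $n\geq 5$, completing the proof.
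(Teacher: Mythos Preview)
Your overall strategy is correct and is exactly the route the paper takes (the proof is deferred to \cite{FangTIT24}, but it proceeds via Theorem~\ref{thm:psi3linear} and an explicit enumeration of the convergent contributions). Your identification of ${\cal P}=\{(1,1)\}$ via the Fibonacci argument is clean, and the reduction to showing $c_0=1/4$ is the right framing.

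However, your enumeration in the last two steps contains concrete errors. At $2^r=\varphi$ one checks directly that the sign patterns $b^3\in\{000,100,010\}$ (and their complements) never give $|\tau|<1$, so \emph{all} contributions to $\psi(3;n)$ come from the single pattern $b^3=110$ (and its complement $001$), i.e.\ from $(2-\varphi)(\varphi^{j_3}-\varphi^{j_2}-\varphi^{j_1})$. Parametrizing $a=j_2-j_1$, $b=j_3-j_2$, the off-ladder cases with $|\tau|<1$ are exactly $(a,b)=(2,1)$ with $j_1\in\{1,2\}$, i.e.\ the two tuples $(1,3,4)$ and $(2,4,5)$, contributing $(\varphi-1)$ and $(2-\varphi)$ respectively. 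Your list $(1,2,4),(1,3,4),(2,3,4),(1,2,5)$ is wrong: $(2,3,4)$ is a \emph{ladder} tuple (it belongs to the divergent family, not the convergent part), $(1,2,4)$ gives $|\tau|=1$ on the boundary, $(1,2,5)$ gives $|\tau|=\varphi^2>1$, and you are missing $(2,4,5)$. With the correct two tuples one gets $c_0=2\bigl[(\varphi-1)+(2-\varphi)\bigr]/8=1/4$, as desired.

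The threshold $n\geq 5$ then drops out for the right reason: the largest $j_3$ among the convergent contributors is $5$, coming from $(2,4,5)$; once $n\geq 5$ every convergent tuple is present and the ladder count $(n-2)$ is exact, so the soft-approximation formula \eqref{eq:softhds} yields $(2(n-2)+2)/8=(n-1)/4$ on the nose. Your stated justification (``captured by $n=5$'') is right in spirit but rests on the wrong tuple list; fix the enumeration and the bound follows immediately.
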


As shown by the third bullet point of \cref{lem:signum}, if $2^r$ is smaller than the golden ratio, $(2^{j_3r}-2^{j_2r}-2^{j_1r})$ has zero-crossing points. An interesting thing is that for some special overlapping factors, there may be more than one pairs of $i\geq1$ and $j\geq1$ such that $2^{ir}(2^{jr}-1)=1$. For example, in \cref{subfig:d3a}, we find an overlapping factor $r\approx0.4057$ that results in $2^r(2^{2r}-1)=2^{4r}(2^r-1)=1$. To prove this point, let us define $x\triangleq 2^r\in(1,2)$ for convenience. Then $2^r(2^{2r}-1)=x(x^2-1)=1$, which is followed by $\alpha\triangleq(x^3-x-1)=0$. We call $\alpha$ as the \textbf{zero element} just as in finite field. With polynomial division,
\begin{align}
	2^{4r}(2^r-1) = x^4(x-1) = (x^5-x^4)\bmod\alpha = 1.\nonumber
\end{align}
The closed form of $\psi(3;n)$ for $\alpha\triangleq(x^3-x-1)=0$ is given below.

\begin{corollary}[$\psi(3;n)$ for $(x^3-x-1)=0$]\label{corol:psi3b}
	Let $x\triangleq 2^r\in(1,2)$. If $\alpha\triangleq(x^3-x-1)=0$, then for $n\geq14$,
	\begin{align}
		\psi(3;n)\approx \frac{-12x^2-17x+79}{4} + n/2.
	\end{align}
\end{corollary}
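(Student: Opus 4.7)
The plan is to invoke \cref{thm:psi3linear} and reduce the task to an explicit computation of the constant $c_0$. The hypothesis $\alpha=x^3-x-1=0$, combined with the identity $x^5-x^4=1$ (which one verifies by reducing $x^4=x^2+x$ and $x^5=x^2+x+1$ modulo $\alpha$), exhibits two elements of $\mathcal{P}$: $(i,j)=(1,2)$ and $(i,j)=(4,1)$. First I would show these are the only ones. Reducing $x^{i+j}-x^i-1$ modulo $\alpha$ produces an element $a_0+a_1x+a_2x^2\in\mathbb{Z}[x]/(\alpha)$; as $i+j$ grows, the coefficient vector is bounded away from $(0,0,0)$ because $1,x,x^2$ are linearly independent over $\mathbb{Q}$, so only finitely many pairs $(i,j)$ need to be checked by hand, leaving $\mathcal{P}=\{(1,2),(4,1)\}$. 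Thus $|\mathcal{P}|=2$, the linear coefficient in \eqref{eq:psi3ngeneral} is $2/4=1/2$, and the additive constant takes the form $c_1=c_0-\tfrac{1}{4}(3+5)=c_0-2$.

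Next I would compute $c_0$ directly from \eqref{eq:psid} with $d=3$. Using the antisymmetry $\tau(j^3,b^3\oplus1^3)=-\tau(j^3,b^3)$ I restrict to $b^3\in\{000,001,010,011\}$ and double. For each pattern the quantity
\begin{align}
\tau(j^3,b^3)=(1-x^{-1})\sum_{d'=1}^{3}(1-2b_{d'})x^{j_{d'}}\nonumber
\end{align}
reduces via $x^3\equiv x+1$ to a canonical form $a_0+a_1x+a_2x^2$ with integer coefficients depending on $(j_1,j_2,j_3,b^3)$. Because $x\in(1,2)$, the term $(1-x^{-1})x^{j_3}$ dominates once $j_3$ is large; quantitatively, $|\tau(j^3,b^3)|\ge(1-x^{-1})x^{j_3}-(1-x^{-1})(x^{j_1}+x^{j_2})$, so $|\tau|<1$ forces an explicit upper bound on $j_3$. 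This cuts the triple-sum down to a finite, fully enumerable list. From that list I would \emph{delete} the triples that belong to either divergent family, namely those of the shape $(j_1,j_2,j_3)=(k,k+1,k+3)$ with $b^3\in\{001,110\}$ (for $(i,j)=(1,2)$) and $(k,k+4,k+5)$ with $b^3\in\{001,110\}$ (for $(i,j)=(4,1)$), since \cref{thm:psi3linear} has already accounted for their combined linear contribution. Summing $(1-|\tau(j^3,b^3)|)$ over the remaining convergent tuples, weighting by $2\cdot 2^{-3}=1/4$, and substituting reductions of the form $x^k\bmod\alpha$ collapses the answer to a polynomial in $x$ of degree at most $2$; matching this with the claim delivers $c_0=\tfrac{-12x^2-17x+87}{4}$, whence $c_1=c_0-2=\tfrac{-12x^2-17x+79}{4}$, proving the corollary. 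The cutoff $n\geq 14$ is read off as the largest $j_3$ that can appear either in the convergent enumeration or as the endpoint $k+i+j$ of the divergent families with $k=1$.

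The main obstacle is the bookkeeping of step two. Unlike the single infinite family that produced the divergence, the convergent part of the sum is a long list of isolated triples whose $\tau$-values are algebraic numbers of the form $a_0+a_1x+a_2x^2$, and several such triples may yield identical $\tau$ because $\alpha=0$ creates non-obvious coincidences among monomials (indeed, this is exactly how two divergent families rather than one appear in the first place). One must ensure that (i) no boundary triple with $|\tau|=1$ is mistakenly treated as convergent, (ii) each divergent family is subtracted exactly once with no overlap between the $(1,2)$ and $(4,1)$ families, and (iii) the $j_3$-cutoff derived from $|\tau|\ge 1$ is sharp enough that the stated threshold $n\geq14$ actually governs when the closed form becomes exact. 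A practical route is to run the enumeration inside the ring $\mathbb{Z}[x]/(x^3-x-1)$, which gives an exact, machine-checkable bookkeeping and then verifies that the polynomial residue after summation matches the claimed expression.
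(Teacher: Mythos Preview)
Your overall strategy is correct and matches the framework the paper sets up through \cref{thm:psi3linear}: identify $\mathcal{P}=\{(1,2),(4,1)\}$, read off the slope $|\mathcal{P}|/4=1/2$, and compute the constant $c_0$ by enumerating the non-divergent triples with $|\tau|<1$. The paper itself defers the detailed computation to \cite{FangTIT24}, so there is no independent argument here to compare against; your plan is precisely the intended one.

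There is, however, one point where your argument as written does not go through. The inequality $|\tau(j^3,b^3)|\ge(1-x^{-1})(x^{j_3}-x^{j_1}-x^{j_2})$ does \emph{not} force a bound on $j_3$: since $x<2$, taking $j_1,j_2$ close to $j_3$ makes the right-hand side negative for every $j_3$, so the inequality is vacuous. Indeed the triple sum is infinite exactly because the divergent families (pattern $b^3=001$ with differences $(j_2-j_1,j_3-j_1)\in\{(1,3),(4,5)\}$) give $\tau=0$ for every $k$. You therefore cannot first produce a finite list and then delete the divergent triples. The correct order is: for patterns $000,010,011$ the dominant monomial genuinely bounds $j_3$; for pattern $001$, factor $\tau=(1-x^{-1})x^{j_1}(1+x^{j_2-j_1}-x^{j_3-j_1})$, split off the two difference pairs that annihilate the bracket (these give the linear term via \cref{thm:psi3linear}), and note that for every other difference pair the bracket is a fixed nonzero algebraic number, which bounds $j_1$, while a short check bounds the differences as well. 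After this reordering, the enumeration in $\mathbb{Z}[x]/(x^3-x-1)$ proceeds exactly as you describe and yields the stated constant, with $n\ge14$ arising as the largest $j_3$ among the surviving convergent triples.
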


With the above two examples, we show that the analytical form of $\psi(3;n)$ is calculable for every $n$ even though it does not converge. Of course, following the methodology developed in \cite{FangTIT24}, the reader can also derive the mathematical expression of $\psi(3;n)$ in other divergent cases. However, the procedure will be very complex.

To verify the correctness of \cref{corol:psi3} and \cref{corol:psi3b}, some results are given in \cref{subfig:d3b}. It can be seen that theoretical curves almost coincide with  experimental curves.

\begin{remark}[Propagation of Divergence]
	Finally, we would like to end this sub-section with an interesting phenomenon. If for some overlapping factor $r$, there are two or more pairs of $i\geq1$ and $j\geq1$ such that $2^{ir}(2^{jr}-1)=1$, then $\psi(4;n)$ and $\psi(6;n)$ may not converge. For example, if
	$x^3-x-1=x^5-x^4-1=0$, then 
	\begin{align*}
		x^2(x^5-x^4-1)+(x^3-x-1) = x^7-x^6+x^3-x^2-x-1=0,
	\end{align*}
	which will cause a divergent term $(n-7)$ and further make $\psi(6;n)$ divergent. Meanwhile, we have
	\begin{align*}
		(x^5-x^4)-(x^3-x) 
		&= x^5-x^4-x^3+x\\
		&= x(x^4-x^3-x^2+1) = 0,
	\end{align*}
	and thus $(x^4-x^3-x^2+1)=0$, which will cause a divergent term $(n-4)$ and further make $\psi(4;n)$ divergent. By repeatedly combining the above zero elements, we can even observe that $\psi(d;n)$ is divergent for other $d$. This phenomenon is coined as \textit{Propagation of Divergence} in \cite{FangTIT24} and still remains a difficult open question.
\end{remark}

\section{Experimental Verification}\label{sec:hdsexample}
We run experiments under the special setting of $n=20$ and $r=1/2$ to verify \cref{tab:summary}. We choose $r=1/2$ because the asymptotic CCS $f(u)$ is given by \eqref{eq:closedForm_halfRate}. We choose $n=20$ because the complexity is too high for larger $n$. This section will first discuss the distribution of shift function $\tau(j^d,b^d)$ and then demonstrate $\psi(d;n)$. 

\subsection{Distribution of Shift Function}\label{subsec:shift}
For conciseness, we give the theoretical results only for $d=n$ and $(n-1)$, while the theoretical results for any $d\approx n$ can be obtained in a similar way. For $d=n$, there is only one size-$n$ set $j^n=\{1,\dots,n\}\in{\cal J}_{n,n}$. According to \cref{thm:wn}, we have $f_{V|n}(v) = f_{V|j^n}(v) \approx f(v)$, where $f(u)$ is the asymptotic CCS given by \eqref{eq:closedForm_halfRate}, and 
\begin{align}\label{eq:fWn}
	f_{W|n}(w) &= f_{W|j^n}(w) \approx f(\tfrac{1-w}{2})/2 \nonumber\\
	&= 
	\begin{cases}
		\frac{1+w}{12\sqrt{2}-16}, 	&-1 \leq w < 2\sqrt{2}-3\\
		\frac{1}{4-2\sqrt{2}}, 		&2\sqrt{2}-3 \leq w < 3-2\sqrt{2}\\
		\frac{1-w}{12\sqrt{2}-16},	&3-2\sqrt{2} \leq w < 1.%
	\end{cases}
\end{align}
Then we discuss $d=(n-1)$. Though there are $\binom{n}{n-1}=n$ size-$(n-1)$ sets $j^{n-1}\in{\cal J}_{n,n-1}$, only two subcases are considered below.
\begin{itemize}
	\item If $j^{n-1}=\{1,\dots,n-1\}$, \textit{i.e.}, $k=1$ in \cref{thm:wn1}, we have
	\begin{align*}
		&f_{V|j^{n-1}}(v) \approx \sqrt{2}f(\sqrt{2}v) = \\
		&\qquad\begin{cases}
			\frac{2v}{3\sqrt{2}-4}, 		&0 \leq v < 1-1/\sqrt{2}\\
			\frac{\sqrt{2}}{2-\sqrt{2}}, 	&1-1/\sqrt{2} \leq v < \sqrt{2}-1\\
			\frac{\sqrt{2}-2v}{3\sqrt{2}-4},&\sqrt{2}-1 \leq v < 1/\sqrt{2}%
		\end{cases}
	\end{align*}
	and
	\begin{align}\label{eq:fWn1k1}
		&f_{W|j^{n-1}}(w) \approx 2^{-1/2}f(1/2-w/\sqrt{2}) = \nonumber\\
		&\quad\begin{cases}
			\frac{1/2+w/\sqrt{2}}{6-4\sqrt{2}}, 		&-1/\sqrt{2} \leq w < 2-3/\sqrt{2}\\
			\frac{1}{2\sqrt{2}-2}, 		&2-3/\sqrt{2} \leq w < 3/\sqrt{2}-2\\
			\frac{1/2-w/\sqrt{2}}{6-4\sqrt{2}},	&3/\sqrt{2}-2 \leq w < 1/\sqrt{2}.%
		\end{cases}
	\end{align}
	
	\item If $j^{n-1}=\{1,\dots,n-2,n\}$, \textit{i.e.}, $k=2$ in \cref{thm:wn1}, we have $l(0)=0$ and $l(1)=(1-2^{-r})=(1-1/\sqrt{2})$. Hence,
	\begin{align*}
		f_{V|j^{n-1}}(v) &\approx f(2v) + f(2(v-(1-1/\sqrt{2}))).
	\end{align*}
	After arrangement, we obtain
	\begin{align*}
		f_{V|j^{n-1}}(v) \approx		
		\begin{cases}
			\frac{2v}{3\sqrt{2}-4}, 			&0 \leq v < (\sqrt{2}-1)/2\\
			\frac{1}{2-\sqrt{2}}, 				&(\sqrt{2}-1)/2 \leq v < (2-\sqrt{2})\\
			\frac{(3-\sqrt{2})-2v}{3\sqrt{2}-4},&(2-\sqrt{2}) \leq v < (3-\sqrt{2})/2.%
		\end{cases}
	\end{align*}
	As for $f_{W|j^{n-1}}(w)$, since $1-(2^r-1)2^{-kr} = \frac{3-\sqrt{2}}{2}$, we have
	\begin{align*}
		2f_{W|j^{n-1}}(w)
		&\approx 
		f(\tfrac{3-\sqrt{2}}{2}-w) + f(\tfrac{3-\sqrt{2}}{2}-w-(2-\sqrt{2}))\\
		&= f(\tfrac{3-\sqrt{2}}{2}-w) + f(\tfrac{\sqrt{2}-1}{2}-w).
	\end{align*}
	After arrangement, we obtain
	\begin{align}\label{eq:fWn1k2}
		f_{W|j^{n-1}}(w) \approx 
		\begin{cases}
			\frac{\tfrac{3-\sqrt{2}}{2}+w}{6\sqrt{2}-8}, & -\tfrac{3-\sqrt{2}}{2} < w\leq -\tfrac{5-3\sqrt{2}}{2}\\
			\frac{1}{4-2\sqrt{2}}, & -\tfrac{5-3\sqrt{2}}{2} \leq w \leq \tfrac{5-3\sqrt{2}}{2}\\
			\frac{\tfrac{3-\sqrt{2}}{2}-w}{6\sqrt{2}-8}, & \tfrac{5-3\sqrt{2}}{2} \leq w < \tfrac{3-\sqrt{2}}{2}.
		\end{cases}
	\end{align}
\end{itemize}
For other $j^{n-1}\in{\cal J}_{n,n-1}$, we can derive $f_{V|j^{n-1}}(v)$ and $f_{W|j^{n-1}}(w)$ according to \cref{thm:wn1} in a similar way.

To obtain experimental results, we run an exhaustive search, whose total complexity is ${\cal O}(3^n)$, where $3^n=\sum_{d=0}^{n}\binom{n}{d}2^d$. For every $j^d\in{\cal J}_{n,d}$ and every $b^n\in\mathbb{B}^d$, we define 
\begin{align}
	t(j^d,b^d)\triangleq{\rm sgn}(\tau(j^d,b^d)) \cdot \lceil|\tau(j^d,b^d)|\rceil,\nonumber
\end{align}
where ${\rm sgn}(\cdot)$ denotes the sign function. According to the properties of shift function, it is easy to know $t(j^d,b^d)\in(-2^{nr}:2^{nr})$. For a specific $j^d\in{\cal J}_{n,d}$, let $c(x;j^d)$, where $x\in(-2^{nr}:2^{nr})$, denote the number of $b^d\in\mathbb{B}^d$ such that $t(j^d,b^d)=x$. Obviously, $c(x;j^d)=c(-x;j^d)$ and $\sum_{x=-(2^{nr}-1)}^{2^{nr}-1}{c(x;j^d)}=2^d$, which can be rewritten as 
\begin{align}
	\sum_{x=-(2^{nr}-1)}^{2^{nr}-1}{c(x;j^d)2^{-d}} = 1.\nonumber
\end{align}
Now we need to build the connection between $c(x;j^d)$ for $x\in(-2^{nr}:2^{nr})$ and $f_{W|j^d}(w)$ for $w\in(-1,1)$. According to the definition of normalized shift function, we have
\begin{align}
	\sum_{x=-(2^{nr}-1)}^{2^{nr}-1}f_{W|j^d}(x2^{-nr})2^{-nr} \approx \int_{-1}^{1}f_{W|j^d}(w)\,dw=1.\nonumber
\end{align}
Hence $f_{W|j^d}(x2^{-nr})2^{-nr} \approx c(x;j^d)2^{-d}$ and for any $x\in(-2^{nr}:2^{nr})$,
\begin{align}
	f_{W|j^d}(x2^{-nr})=f_{W|j^d}(-x2^{-nr})\approx c(x;j^d)2^{nr-d}.\nonumber
\end{align}
Further, we define $c(x;d)\triangleq\sum_{j^d\in{\cal J}_{n,d}}{c(x;j^d)}$. Obviously,
\begin{align*}
	\sum_{x=-(2^{nr}-1)}^{2^{nr}-1}c(x;d)=\binom{n}{d}2^d. 
\end{align*}
Similarly, $f_{W|d}(w)$ for $w\in(-1,1)$ and $c(x;d)$ for $x\in(-2^{nr}:2^{nr})$ can be connected by
\begin{align}
	f_{W|d}(x2^{-nr}) = f_{W|d}(-x2^{-nr}) \approx \frac{c(x;d)2^{nr}}{\binom{n}{d}2^d}.\nonumber
\end{align}

We plot the results for $n=20$ and $r=1/2$ in \cref{fig:fW}. Let us check the correctness of theoretical analyses one by one. 
\begin{itemize}
	\item For $d=n$, we have $f_{W|n}(w)=f_{W|j^n}(w)=f(\frac{1-w}{2})/2$, where $j^n=\{1,\dots,n\}$ is the only element in ${\cal J}_{n,n}$. The theoretical result of $f_{W|n}(w)$ (and also $f_{W|j^n}(w)$), which is given by \eqref{eq:fWn}, corresponds to the TH curve in \cref{subfig:fWn}, while the experimental result of $f_{W|n}(w)$ (and also $f_{W|j^n}(w)$) corresponds to the EXP curve in \cref{subfig:fWn}. It can be observed from \cref{subfig:fWn} that these two curves almost coincide with each other, strongly confirming the correctness of \cref{thm:wn}. 
	
	\item For $d=(n-1)$, we have derived the expression of $f_{W|j^{n-1}}(w)$ for $k=1$ and $k=2$ according to \cref{thm:wn1}, as given by \eqref{eq:fWn1k1} and \eqref{eq:fWn1k2}, respectively. We compare the theoretical results of $f_{W|j^{n-1}}(w)$ for $k=1$ and $k=2$ with their experimental results in \cref{subfig:fWjn1_1} and \cref{subfig:fWjn1_2}, respectively. It can be observed that the theoretical curves almost coincide with the corresponding experimental curves, perfectly confirming the correctness of \cref{thm:wn1}. 
	
	\item It is also declared by \cref{thm:wn1} that $f_{W|j^{n-1}}(w)$ will converge to $f(\frac{1-w}{2})/2$, where $f(u)$ is the asymptotic initial CCS given by \eqref{eq:asympt}, as $k$ increases. To confirm this prediction, we plot several curves of $f_{W|j^{n-1}}(w)$ for $k=4$, $8$, $12$, and $16$ in \cref{subfig:fWjn1}, where the TH curve is $f(\frac{1-w}{2})/2$. It can be seen that as $k$ increases, $f_{W|j^{n-1}}(w)$ does converge to the TH curve $f(\frac{1-w}{2})/2$, verifying the correctness of \cref{thm:wn1}.
\end{itemize}

\begin{figure*}[!t]
	\subfigure[]{\includegraphics[width=.5\linewidth]{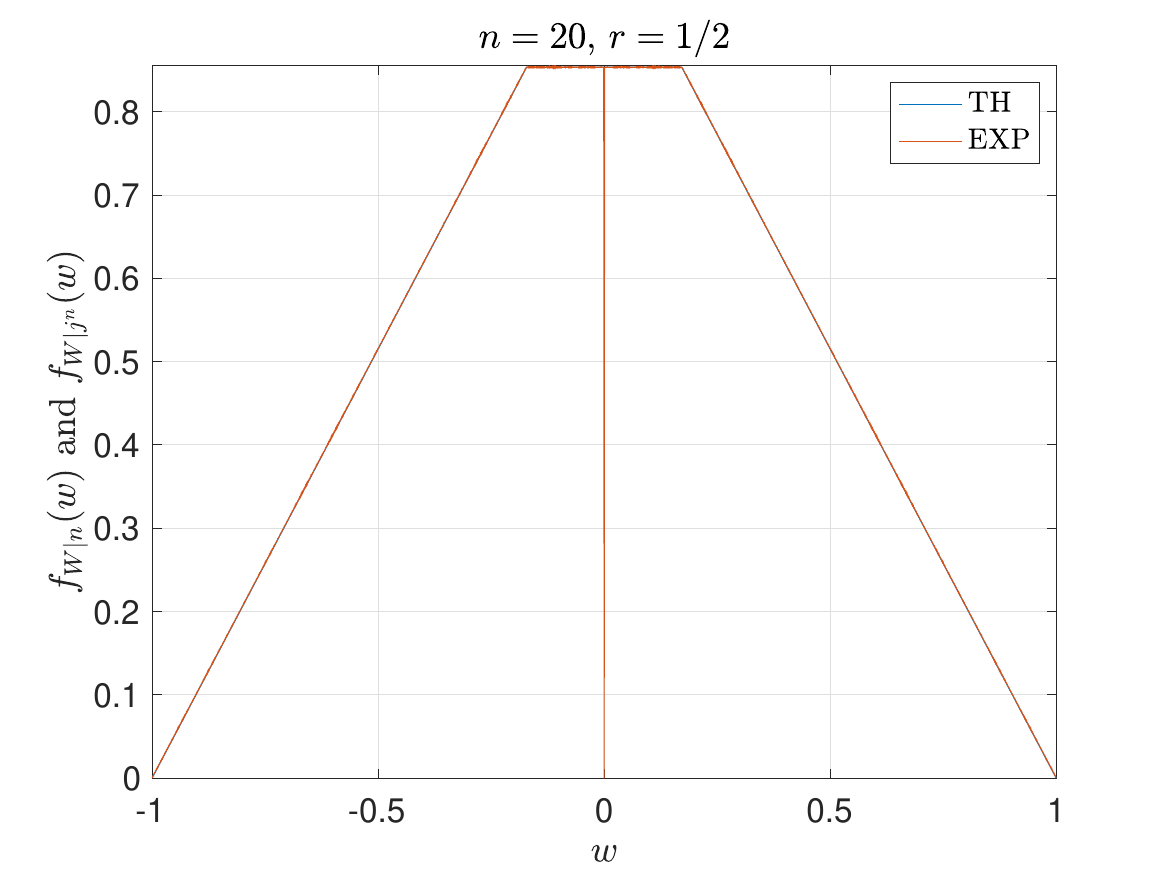}\label{subfig:fWn}}%
	\subfigure[]{\includegraphics[width=.5\linewidth]{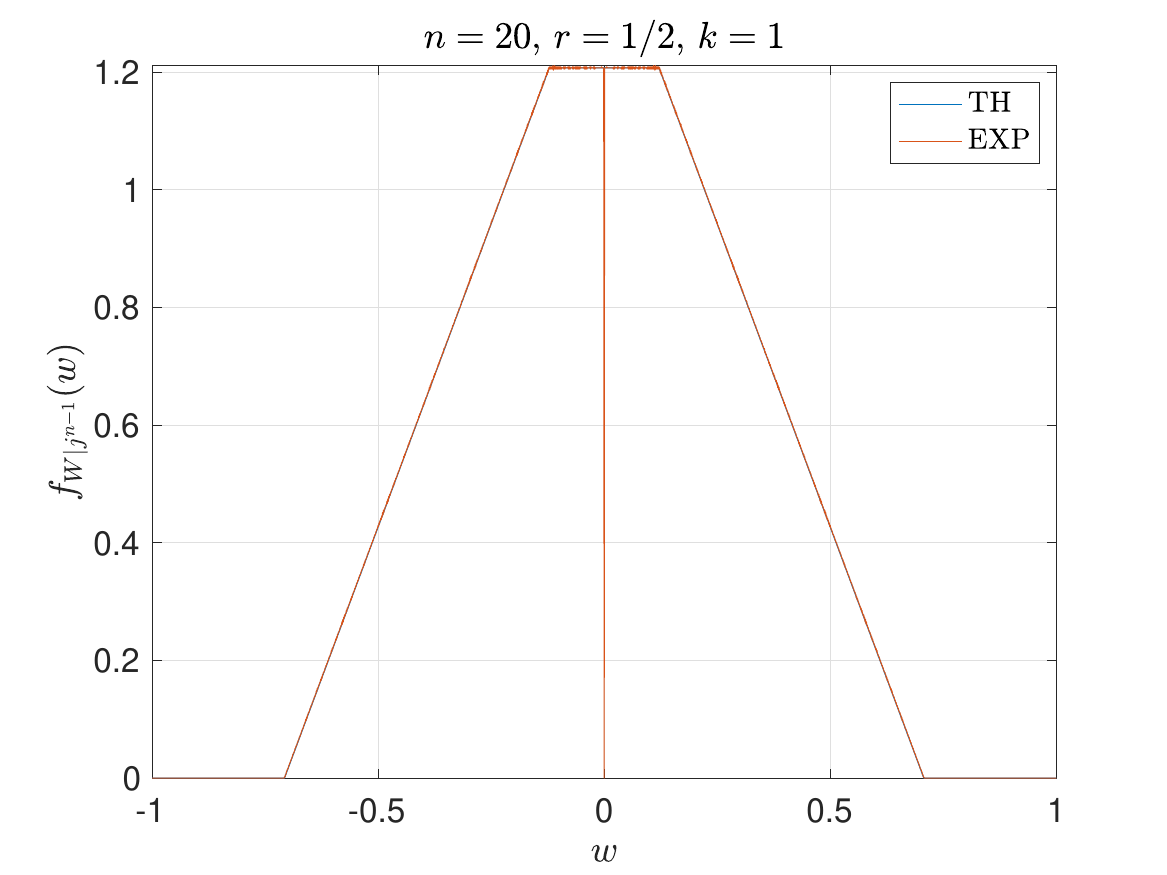}\label{subfig:fWjn1_1}}\\
	\subfigure[]{\includegraphics[width=.5\linewidth]{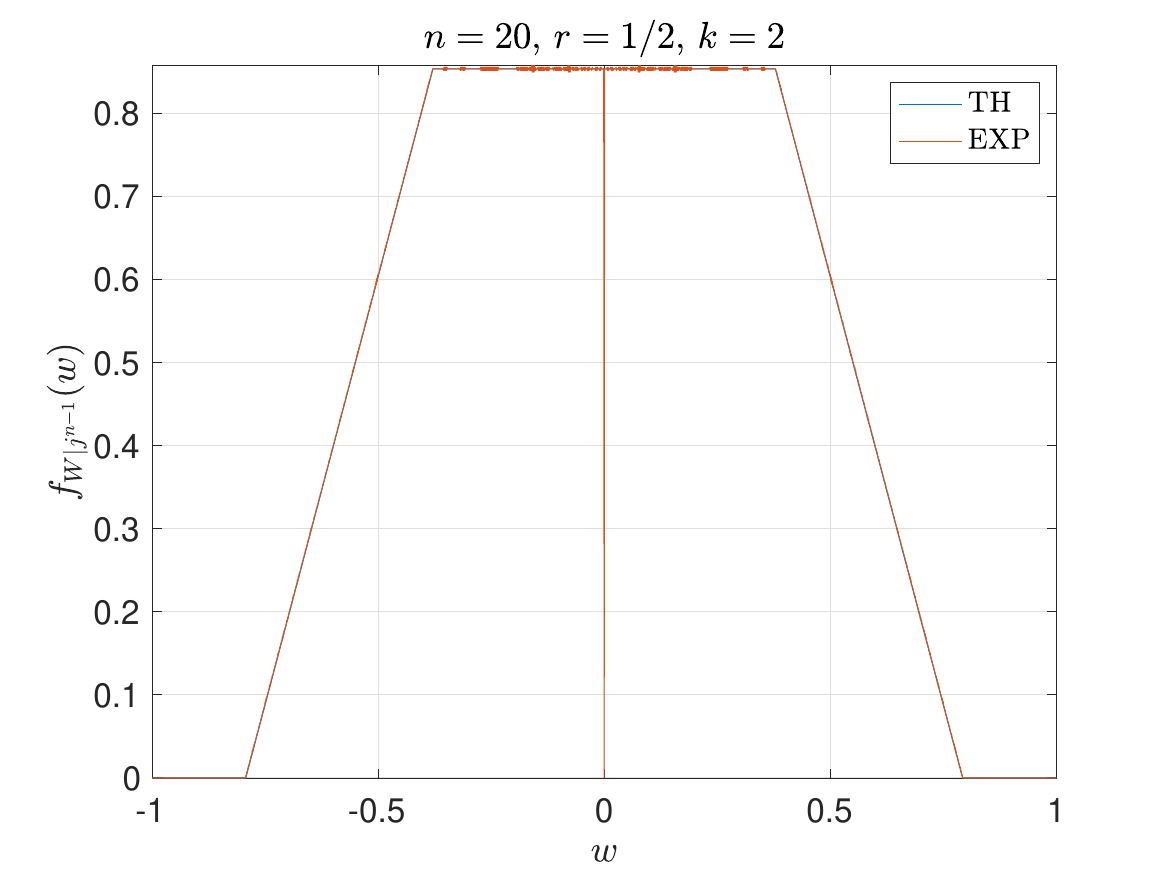}\label{subfig:fWjn1_2}}%
	\subfigure[]{\includegraphics[width=.5\linewidth]{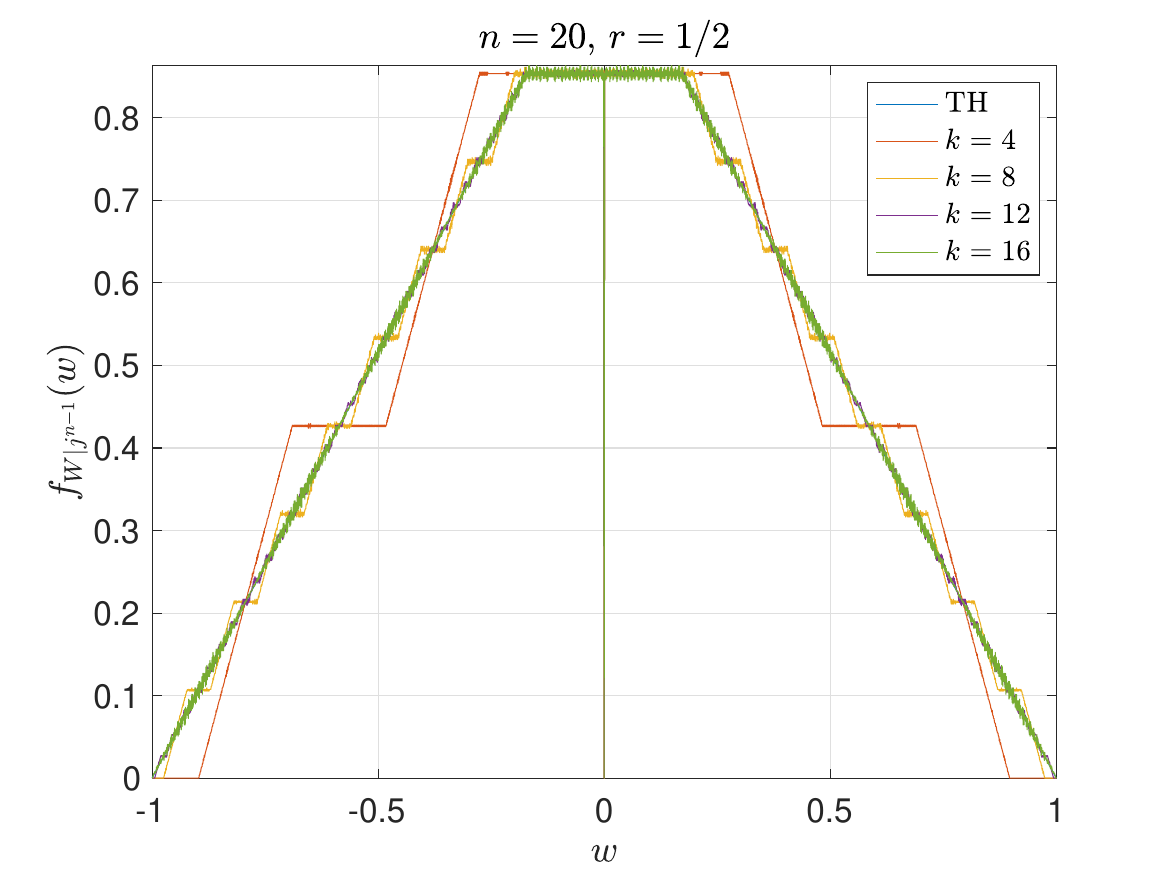}\label{subfig:fWjn1}}
	\caption{Distribution of normalized shift function. The TH curves in (a) and (d) are given by \eqref{eq:fWn}, while the TH curves in (b) and (c) are given by \eqref{eq:fWn1k1} and \eqref{eq:fWn1k2}, respectively. In (b), (c), and (d), the parameter $k$ is defined by \eqref{eq:jn1}. Note that for every EXP/experimental curve, there is a zero point at $w=0$, because $\tau(j^d,b^d)\neq 0$ for every $j^d\in{\cal J}_{n,d}$ and every $b^d\in\mathbb{B}^d$ when $r=1/2$.}
	\label{fig:fW}
\end{figure*}

\begin{figure*}[!t]
	\centering
	\subfigure[]{\includegraphics[width=.5\linewidth]{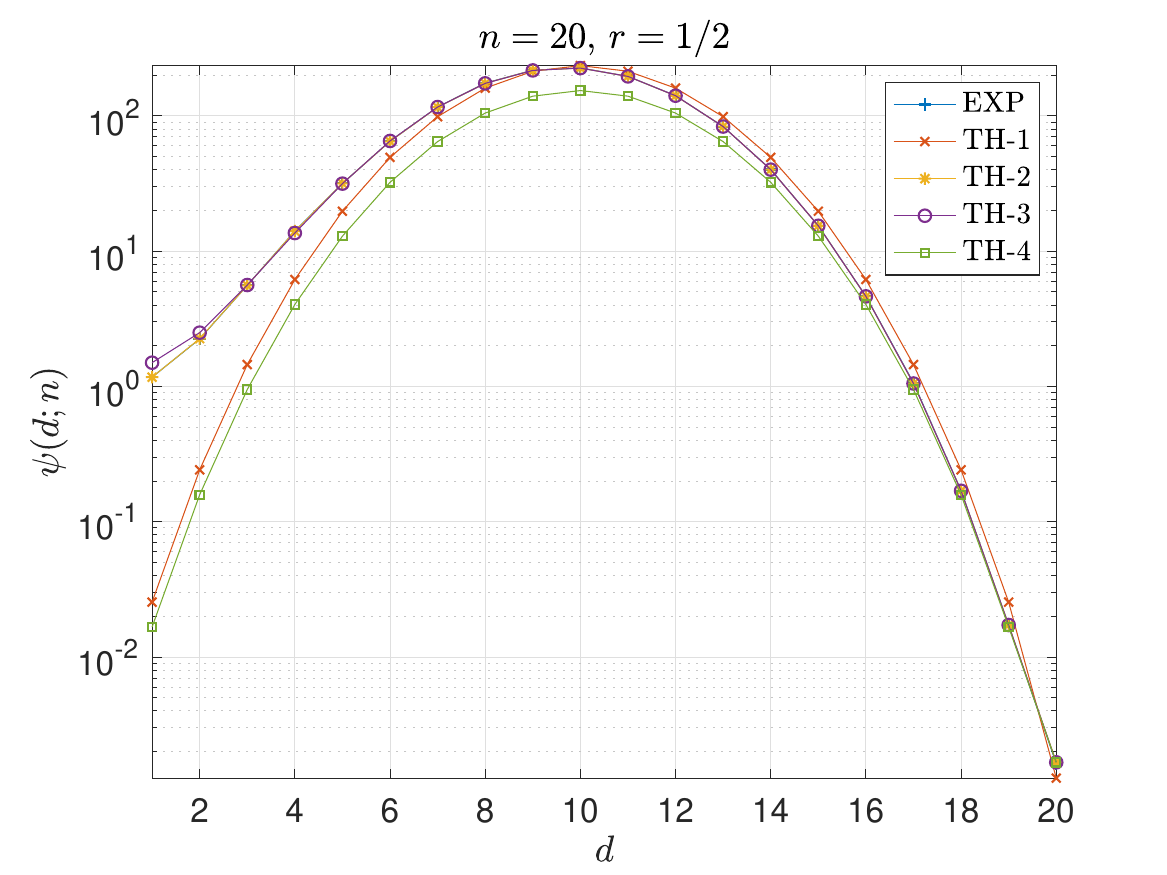}\label{subfig:hds}}%
	\subfigure[]{\includegraphics[width=.5\linewidth]{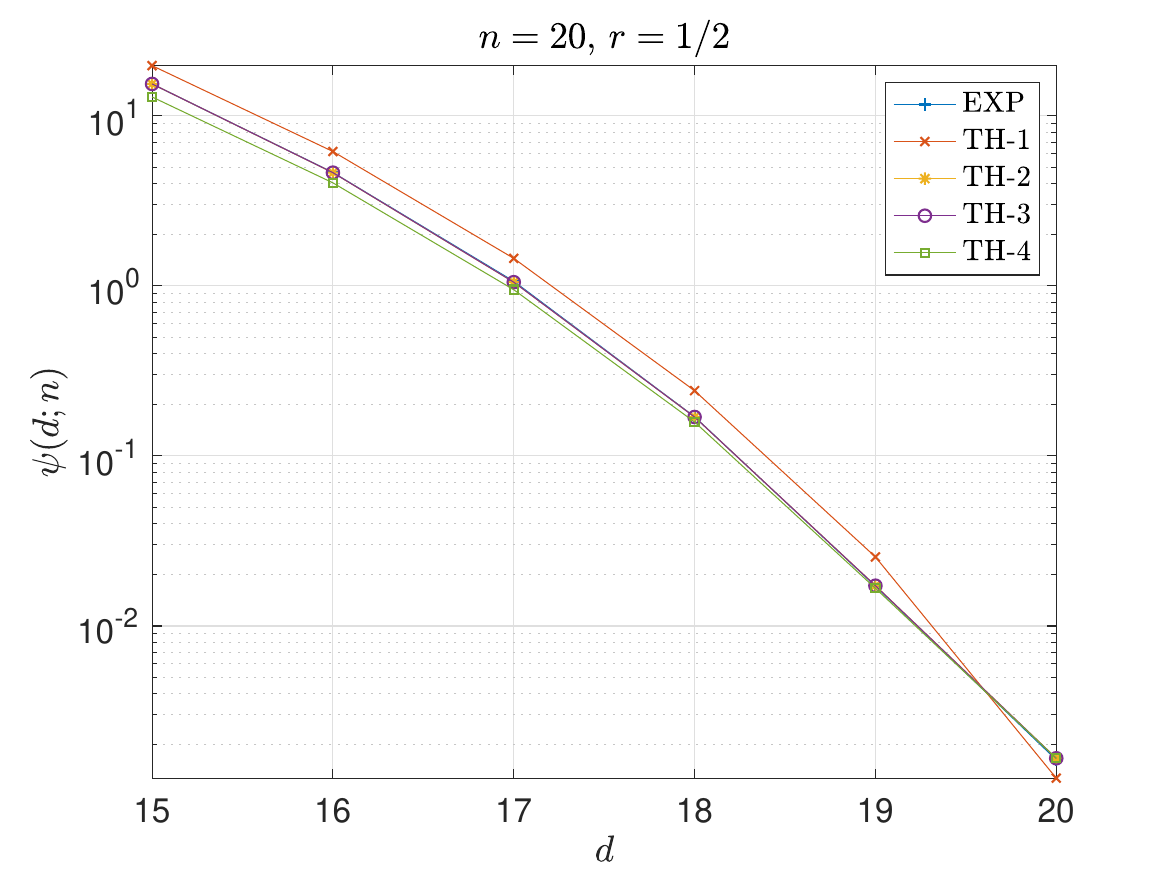}\label{subfig:hds_zoom}}
	\caption{HDS obtained by different approximate formulas, where TH-1, TH-2, TH-3, and TH-4 are defined by \cref{tab:summary}, and EXP refers to experimental curves.}
	\label{fig:hds}
\end{figure*}

\subsection{HDS Obtained by Different Methods}\label{subsec:larged}
Now we verify the four approximate formulas of $\psi(d;n)$ listed in \cref{tab:summary}. When $r=1/2$, we have $f(1/2)=\frac{1}{2-\sqrt{2}}\approx 1.7$, so for the TH-4/fast approximate formula \eqref{eq:fast}, $\psi(n;n) \approx 1.7\times 2^{-n/2}\binom{n}{n} \approx 0.0017$ and $\psi(d;n) \approx 1.7\times 2^{-n/2-1}\binom{n}{d}$ for $d<n$, where $n=20$; while for the TH-1/binomial approximate formula \eqref{eq:coarsehds}, we have
\begin{align*}
	\int_0^1{f^2(u)\,du} 
	&= \frac{2}{(3\sqrt{2}-4)^2}\int_{0}^{\sqrt{2}-1}u^2\,du + \frac{1}{(2-\sqrt{2})^2} \int_{\sqrt{2}-1}^{2-\sqrt{2}}1\,du\\
	&= \frac{2(\sqrt{2}-1)^3}{3(3\sqrt{2}-4)^2} + \frac{3-2\sqrt{2}}{(2-\sqrt{2})^2}
	= \frac{1}{3(\sqrt{2}-1)} + 1/2 \approx 1.3
\end{align*}
and hence $\psi(d;n)\approx 1.3\times 2^{-n/2}\binom{n}{d}$ for every $d$. The whole HDS for all $d\in[1:n]$ is shown by \cref{subfig:hds}, while for clarity, \cref{subfig:hds_zoom} zooms in on the partial HDS for large $d$. We have the following findings. 
\begin{itemize}
	\item The TH-1/binomial formula \eqref{eq:coarsehds} is only a coarse approximation of the EXP/experimental curve. As a rule of thumb, with \eqref{eq:coarsehds}, we will get a smaller value of $\psi(d;n)$ than its real value for $d<n/2$, and a larger value than its real value for $d>n/2$, implying that overlapped arithmetic codes are worse than random codes \cite{FangTCOM16a}. In other words, \eqref{eq:coarsehds} is relatively accurate only for $d\approx n/2$. 
	
	\item The TH-2/soft approximate formula \eqref{eq:psith2} perfectly coincides with the EXP/experimental curve for all $[1:n]$, however the cost is high complexity for large $d$. 
	
	\item The TH-3/hard approximate formula \eqref{eq:psith3} almost coincides with the EXP/experimental curve for large $d$, but there will be a large deviation as $d$ decreases. 
	
	\item The TH-4/fast approximate formula \eqref{eq:fast} coincides with the EXP/experimental curve perfectly for $d\approx n$, but as $d$ decreases, it will be much lower than the EXP/experimental curve. 
\end{itemize}
In one word, all theoretical predictions summarized in \cref{tab:summary} are perfectly verified by \cref{fig:hds}.

Finally, based on theoretical analyses and experimental results, we give the following suggestion. That is, if we want to calculate the HDS for an overlapped arithmetic code with a good compromise between accuracy and complexity, the best way is to use the TH-2/soft formula \eqref{eq:psith2} for $d\approx 1$ and the TH-4/fast formula \eqref{eq:fast} for $d\approx n$, while use the TH-1/binomial formula \eqref{eq:coarsehds} for other $d$.

\chapter{Tailed Overlapped Arithmetic Codes}\label{c-tailed}
\vspace{-25ex}%
From \cref{fig:ccsdec}, the reader may have found that the performance of overlapped arithmetic coeds is very poor and the residual decoding errors cannot be removed even though the source and side information are strongly correlated \cite{FangTCOM16a}. This finding is very similar to the well-known error-floor phenomenon of LDPC codes, and can be theoretically explained with HDS. From \cref{tab:summary}, the reader may have found an important property of overlapped arithmetic codes: For small Hamming distance $d$, \textit{e.g.}, $1$, $\psi(d;n)$ does not converge to $0$ as block length $n\to\infty$. Since the minimum Hamming distance of overlapped arithmetic codes is always $1$, closely-spaced codewords in the same coset cannot be removed by increasing code length, which explains why the performance of overlapped arithmetic codes is so poor. 

The performance of overlapped arithmetic codes can be improved by mapping the last few source symbols of a block, called {\em tail},  to non-overlapped intervals \cite{GrangettoCL07,GrangettoTSP09}. This improvement can be theoretically explained from the viewpoint of HDS: By mapping the tail of a block to non-overlapped intervals, the minimum Hamming distance of overlapped arithmetic codes may be increased and closely-spaced codewords in the same coset can be removed \cite{FangTCOM16b}.

\section{Principle}\label{sec:tdacencoding}
Now we divide each source block $x^n$ into body $x^{n-t}$ and tail $x_{n-t+1}^n$. Every tail symbol is always coded at rate $1$, so the mapping rule is: $0\rightarrow [0,2^{-1})$ and $1\rightarrow[2^{-1},1)$. To compress $x^n$ at the average rate $R$, every body symbol should be coded at rate 
\begin{align}\label{eq:r}
	r = \frac{nR-t}{n-t}\leq R, 
\end{align} 
so the mapping rule is: $0 \rightarrow [0,2^{-r})$ and $1\rightarrow [(1-2^{-r}),1)$. Obviously, $r$ is strictly decreasing w.r.t. $t$. To guarantee $r\geq 0$, $t$ can not be larger than $nR$, so $t\in[0:nR]$, where $nR$ is assumed to be an integer for simplicity. If $t=0$ and $r=R$, it is called a {\em tailless} overlapped arithmetic code; while if $t>0$ and $r<R$, it is called a {\em tailed} overlapped arithmetic code. According to \eqref{eq:lXi}, the final interval after coding $x^n$ is $[l(x^n),l(x^n)+2^{-nR})$, where
\begin{align}
	l(x^n) 
	&= (2^r-1)\sum_{i=1}^{n-t}{x_i 2^{-ir}} + 2^{-(n-t)r}\sum_{i=1}^{t}{x_{i+(n-t)} 2^{-i}}\nonumber\\
	&\overset{(a)}{=} (2^r-1)\sum_{i=1}^{n-t}{x_i 2^{-ir}} + 2^{n(1-R)}\sum_{i=n-t+1}^{n}{x_i 2^{-i}},
\end{align}
where $(a)$ comes from $(n-t)(1-r)=n(1-R)$. Since $nR=(n-t)r+t$, according to \eqref{eq:ell}, we have
\begin{align}\label{eq:newsxn}
	s(x^n) = 2^{nR}l(x^n) \overset{d}{=} 2^t(1-2^{-r})\sum_{i=1}^{n-t}{x_i2^{ir}} + \sum_{i=1}^{t}{x_{i+(n-t)}2^{i-1}}.
\end{align}
The normalized final interval is $[s(x^n),s(x^n)+1)$ and $s(x^n) \leq (2^{nR}-1)$. It can also be seen that $m(x^n)\triangleq\lceil s(x^n)\rceil\in[0:2^{nR})$.

\section{Hamming Distance Spectrum}\label{sec:encoding}
Let $x_i\in\mathbb{B}$, $y_i\in\mathbb{B}$, and $z_i=x_i\oplus y_i\in\mathbb{B}$. Let $\{\pm 1\}\triangleq\{-1,1\}$ and $\{0,\pm1\}\triangleq \{-1,0,1\}$. We define the following ternary variable 
\begin{align*}
	w_i \triangleq (x_i\oplus y_i)(1-2x_i) = z_i(1-2x_i) \in \{0,\pm 1\}.
\end{align*}
Clearly, $w_i=0$ if $z_i=0$ and $w_i=\pm 1$ if $z_i=1$. Conversely, if $w_i=\pm 1$, we have $z_i=1$ and $x_i=(1-w_i)/2$; otherwise, \textit{i.e.}, $w_i=0$, we can get $z_i=0$, but $x_i$ is unknown. Thus, given $w^n$, $z^n$ is fully determined, but $x^n$ is partially determined. Let $d=|w^n|$ be the support (number of nonzero elements) of $w^n$ and let
\begin{align*}
	j^d = \{j_1,\dots,j_d\} \subseteq [n]\triangleq\{1,\dots,n\},
\end{align*}
where $1\leq j_1<\dots<j_d\leq n$, be the set of the indices of nonzero elements of $w^n$. In other words, $w_i=\pm 1$ if $i\in j^d$ and $w_i=0$ if $i\in[n]\setminus j^d$. Actually, every $w^n$ leads $2^{n-d}$ different $x^n$'s. From $x^n$, we draw all elements indexed by $j^d$ to form a sub-vector $x_{j^d}=b^d\in\mathbb{B}^d$. Then $w_{j^d} = z_{j^d}(1-2x_{j^d}) = (1-2x_{j^d})\in\{\pm 1\}^d$. Now we define
\begin{align}\label{eq:newtau}
	\tau_{n,r}(w^n) 
	&\triangleq (1-2^{-r})\sum_{i=1}^{n}{w_i 2^{ir}} = (1-2^{-r})\sum_{i\in j^d}{w_i 2^{ir}}\nonumber\\
	&= (1-2^{-r})\sum_{i\in j^d}{(1-2x_i)2^{ir}} = \tau(j^d,b^d),	
\end{align}
where $\tau(j^d,b^d)$ is defined by \eqref{eq:tau}. Following \eqref{eq:newsxn}, we define
\begin{align}\label{eq:tauTailed}
	\tau(w^n;n,t,r) 
	&\triangleq 2^t(1-2^{-r})\sum_{i=1}^{n-t}{w_i 2^{ir}} + 
	\sum_{i=1}^{t}{w_{i+(n-t)}2^{i-1}}\nonumber\\
	&= 2^t \tau_{n-t,r}(w^{n-t}) + \tau_{t,1}(w_{n-t+1}^n).
\end{align}
According to \eqref{eq:softhds}, the following theorem holds obviously.
\begin{theorem}[HDS of Tailed Overlapped Arithmetic Codes]
	Let $t$ be tail length of an overlapped arithmetic code. If body symbols are coded at rate $r$, then as code length $n\to\infty$, 
\begin{align}
	\psi(d;n) 
	&\to 2^{\alpha-d}\sum_{w^n:|w^n|=d}{\left(1-|\tau(w^n;n,t,r)|\right)^+}\nonumber\\
	&= 2^{\alpha-d}\sum_{w^n:|w^n|=d}{\left(1-|2^t \tau_{n-t,r}(w^{n-t}) + \tau_{t,1}(w_{n-t+1}^n)|\right)^+},
\end{align}
where $\alpha = {\bf 1}_{d=n}$ and $(\cdot)^+ = \max(0,\cdot)$.
\end{theorem}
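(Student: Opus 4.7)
The plan is to mimic the structure of \cref{thm:softhds} (the soft approximation of HDS, equation \eqref{eq:softhds}), adapted to the tailed $s$-function in \eqref{eq:newsxn}. First, I would rewrite the HDS definition as
\[
\psi(d;n) = 2^{-n}\sum_{\substack{x^n,y^n\in\mathbb{B}^n\\|x^n\oplus y^n|=d}} {\bf 1}_{m(x^n)=m(y^n)}
\]
and re-parametrize ordered pairs $(x^n,y^n)$ of Hamming distance $d$ by the signed difference vector $w^n=(x^n\oplus y^n)(1-2x^n)\in\{0,\pm1\}^n$, which is a bijection between such pairs and ternary vectors $w^n$ with $|w^n|=d$ together with an assignment of the $(n-d)$ free bits of $x^n$ on the zero-support of $w^n$. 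This collapses the sum to $2^{n-d}\sum_{w^n:|w^n|=d}$ times a conditional coexistence probability.

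Second, I would establish the algebraic identity $s(y^n)=s(x^n)+\tau(w^n;n,t,r)$ by expanding \eqref{eq:newsxn} for $y^n$ and subtracting, using $y_i-x_i=w_i$ at nonzero positions of $w^n$ and $0$ elsewhere; the body/tail split of $s$ in \eqref{eq:newsxn} then produces exactly the body/tail split of $\tau$ in \eqref{eq:tauTailed}, with the $2^t$ prefactor on the body term arising from the $2^t$ scaling. Consequently, the coexistence event $\lceil s(x^n)\rceil=\lceil s(y^n)\rceil$ becomes equivalent to $s(x^n)$ lying in a shifted coexisting interval of length $(1-|\tau(w^n;n,t,r)|)^+$ within its unit interval $(m-1,m]$, precisely mirroring the coexisting-interval machinery of \cref{subsec:coexist}.

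Third, I would invoke \cref{thm:prob}. Conditional on $w^n$ (i.e., after fixing the $d$ coordinates of $x^n$ on the support of $w^n$ to the values encoded by the signs of $w^n$), the remaining randomness in $s(X^n)$ lives in the body coordinates $X_{[n-t]\setminus j_B^{d_B}}$, whose contribution to $s(X^n)$ is a weighted partial sum of the geometric sequence $(2^r,2^{2r},\ldots)$. Since $t$ is fixed, the number of such free coordinates tends to infinity, so by \cref{thm:gm} the conditional value of $s(X^n)$ is u.d.\ mod $1$ for almost every $r$, and the conditional coexistence probability tends to $(1-|\tau(w^n;n,t,r)|)^+$ exactly as in \cref{thm:prob}. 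Summing over $w^n$ with $|w^n|=d$ and collecting the $2^{n-d}/2^n=2^{-d}$ normalization yields the claimed expression.

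The main obstacle is the edge case $d=n$ (captured by $\alpha={\bf 1}_{d=n}$). When every coordinate is pinned by $w^n$, the ``random variable'' $s(X^n|w^n)$ collapses to a deterministic constant, the u.d.\ mod $1$ argument of \cref{thm:prob} no longer applies, and a direct deterministic count is needed, exactly in the spirit of \cref{thm:th3b}. This count produces an extra factor of $2$ relative to the $d<n$ expression, which is precisely the $2^\alpha$ correction in the statement. The rest of the argument is bookkeeping: verifying that a bounded number ($\le t$) of difference positions in the tail does not disrupt the u.d.\ mod $1$ property of the body contribution, and checking that $\tau(w^n;n,t,r)$ as defined in \eqref{eq:tauTailed} specializes correctly to $\tau(j^d,b^d)$ when $t=0$, thereby recovering \cref{thm:softhds} as a special case.
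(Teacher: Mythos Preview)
Your proposal is correct and follows exactly the approach the paper takes: the paper's ``proof'' is the single sentence ``According to \eqref{eq:softhds}, the following theorem holds obviously,'' and your plan is precisely to rerun the soft-approximation argument of \cref{thm:softhds} with the tailed $s$-function \eqref{eq:newsxn} and the tailed shift function \eqref{eq:tauTailed} in place of their tailless counterparts. Your write-up in fact supplies the details (the $w^n$-reparametrization, the coexisting-interval step via \cref{thm:prob}, and the $d=n$ correction via \cref{thm:th3b}) that the paper leaves implicit.
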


\section{Closely-Spaced Mate Codewords}
Experiments show that tailed overlapped arithmetic codes are better than tailless overlapped arithmetic codes \cite{GrangettoCL07,GrangettoTSP09}. Below, we will reveal that closely-spaced mate codewords can be removed by increasing tail length, which explains the superiority of tailed overlapped arithmetic codes over tailless overlapped arithmetic codes. Due to computation complexity and to keep the exposition simple, we consider below only $\psi(1;n)$ and $\psi(2;n)$. To begin with, we give the following lemma.

\begin{lemma}\label{lem:tail}
	If $|z^{n-t}|=0$ and $|z_{n-t+1}^n|>0$, then $m(x^n) \neq m(x^n\oplus z^n)$. Conversely, if $|z^n|>0$ and $m(x^n) = m(x^n\oplus z^n)$, then $|z^{n-t}|>0$.
\end{lemma}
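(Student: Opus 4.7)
The plan is to exploit the additive decomposition of the shift function in \eqref{eq:tauTailed}:
\begin{align*}
\tau(w^n;n,t,r) = 2^t \tau_{n-t,r}(w^{n-t}) + \tau_{t,1}(w_{n-t+1}^n),
\end{align*}
together with the tailed analogue of \cref{thm:equiv}, namely that $m(x^n)=m(x^n\oplus z^n)$ requires $s(x^n)$ and $s(x^n)+\tau(w^n;n,t,r)$ to lie in the same unit interval $(m-1,m]$, which in turn forces $|\tau(w^n;n,t,r)|<1$ (the tailed version of \cref{corol:neccoe}).

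First I would handle the hypothesis $|z^{n-t}|=0$. Since $w_i = z_i(1-2x_i)$, this gives $w_i=0$ for every $i\in[n-t]$, hence $\tau_{n-t,r}(w^{n-t})=0$ and therefore
\begin{align*}
\tau(w^n;n,t,r) \;=\; \tau_{t,1}(w_{n-t+1}^n) \;=\; \sum_{i=1}^{t} w_{i+(n-t)}\, 2^{i-1} \;\in\; \mathbb{Z},
\end{align*}
because each $w_j\in\{-1,0,1\}$ and the weights $2^{i-1}$ are integers. The central step is then to argue this integer is nonzero whenever $|z_{n-t+1}^n|>0$. Let $k$ be the largest index in $\{1,\dots,t\}$ with $w_{k+(n-t)}\neq 0$; such a $k$ exists by hypothesis. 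Then
\begin{align*}
\Bigl|\sum_{i=1}^{k-1} w_{i+(n-t)}\, 2^{i-1}\Bigr| \;\le\; \sum_{i=1}^{k-1} 2^{i-1} \;=\; 2^{k-1}-1 \;<\; 2^{k-1} \;=\; |w_{k+(n-t)}|\,2^{k-1},
\end{align*}
so the highest-order term dominates and the total sum cannot vanish. This is essentially the uniqueness of the signed-binary representation of zero. Consequently $|\tau(w^n;n,t,r)|\geq 1$, and the coexistence criterion fails: $m(x^n)\neq m(x^n\oplus z^n)$, which is the first claim.

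The second statement is the contrapositive of the first. If $|z^n|>0$ and $|z^{n-t}|=0$, then $|z_{n-t+1}^n|=|z^n|>0$, so the first part yields $m(x^n)\neq m(x^n\oplus z^n)$, contradicting $m(x^n)=m(x^n\oplus z^n)$. Hence $|z^{n-t}|>0$.

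The only subtlety worth flagging is the extension of the coexistence criterion of \cref{thm:equiv} and \cref{corol:neccoe} from the tailless to the tailed setting; this is immediate once one observes that equation \eqref{eq:newsxn} gives $s(x^n\oplus z^n)=s(x^n)+\tau(w^n;n,t,r)$, so that the reasoning behind \cref{thm:equiv} carries over verbatim with $\tau(j^d,b^d)$ replaced by $\tau(w^n;n,t,r)$. Everything else is the signed-binary dominance estimate above, which is elementary.
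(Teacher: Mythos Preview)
Your proof is correct and follows essentially the same approach as the paper: both observe that $|z^{n-t}|=0$ forces $\tau(w^n;n,t,r)=\tau_{t,1}(w_{n-t+1}^n)\in\mathbb{Z}$, and conclude non-coexistence from there. You are in fact more careful than the paper's one-line proof, since you explicitly verify via the signed-binary dominance estimate that this integer is nonzero (which the paper leaves implicit) and you spell out why the coexistence criterion of \cref{corol:neccoe} carries over to the tailed setting.
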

\begin{proof}
	If $|z^{n-t}|=0$, then $\tau(w^n;n,t,r)=\tau_{t,1}(w_{n-t+1}^n)\in\mathbb{Z}$, implying $m(x^n) \neq m(x^n\oplus z^n)$ for any $|z_{n-t+1}^n|>0$.
\end{proof}

In plain words, if a pair of codewords coexist in the same coset, it is impossible that these two codewords differ from each other only in tails. With the help of \cref{lem:tail}, we explain below why $\psi(1;n)$ and $\psi(2;n)$ will tend to $0$ as $t$ increases.

\begin{theorem}[$1$-Away Mate Codeword Pairs]
	As tail length $t$ increases, $1$-away mate codeword pairs will be removed and hence the minimum Hamming distance of overlapped arithmetic codes is at least $2$.
\end{theorem}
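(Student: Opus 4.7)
The plan is to reduce the question to verifying the necessary coexistence condition $|\tau(w^n;n,t,r)|\ge 1$ for every signed weight-one error pattern $w^n$, and then to show that this condition is forced by taking $t$ large (with $n$ even larger). First, I would apply \cref{lem:tail} to any pair $x^n,\,y^n=x^n\oplus z^n$ with $m(x^n)=m(y^n)$ and $|z^n|=1$: the converse clause forces $|z^{n-t}|\ge 1$, so because the total weight equals $1$ the unique differing bit must lie at some body position $j\in[1:n-t]$ and the tail is unchanged.

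Next, I would evaluate \eqref{eq:tauTailed} in this regime. With $w_{n-t+1}^n=0^t$ the tail contribution vanishes, $\tau_{n-t,r}(w^{n-t})=(1-2^{-r})w_j 2^{jr}$, and hence
\begin{align*}
|\tau(w^n;n,t,r)| \;=\; 2^t(2^r-1)\cdot 2^{(j-1)r},
\end{align*}
a quantity strictly increasing in $j\ge 1$ and therefore bounded below by its $j=1$ value $2^t(2^r-1)$. I would then invoke the tailed analog of \cref{corol:neccoe}: coexistence in a common coset requires $|\tau(w^n;n,t,r)|<1$, since the cosets are unit intervals in the normalized $s$-domain of \eqref{eq:newsxn} and $s(x^n\oplus z^n)-s(x^n)=\tau(w^n;n,t,r)$ by the same signed-summation identity that underlies \cref{prop:tau}. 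Combining the two observations, a $1$-away mate pair can exist only if $2^t(2^r-1)<1$, i.e., $r<\log_2(1+2^{-t})$.

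Finally, substituting $r=(nR-t)/(n-t)$ from \eqref{eq:r}, one sees that $r\to R$ as $n\to\infty$ with $t$ fixed, whereas the right-hand side $\log_2(1+2^{-t})\to 0$ as $t\to\infty$. Hence once $t$ exceeds the finite threshold $-\log_2(2^R-1)$ and $n$ is sufficiently large for $r$ to lie within a small margin of $R$, the inequality $|\tau|<1$ fails for every body-supported weight-one pattern. Combined with the tail-only case already ruled out by \cref{lem:tail}, no codeword pair at Hamming distance $1$ can share a coset, so the minimum Hamming distance is at least $2$.

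The main obstacle I expect is making the tailed version of the coexistence criterion fully rigorous: one must verify from \eqref{eq:newsxn} that $s(x^n\oplus z^n)-s(x^n)=\tau(w^n;n,t,r)$ and that each coset is still a unit interval under $\lceil s(\cdot)\rceil$. Both follow by a bookkeeping exercise parallel to \cref{prop:tau} and the definition of $m(x^n)$ in \eqref{eq:mXn}, but they need to be spelled out explicitly before the strict inequality $|\tau|<1$ can be applied verbatim in the tailed setting.
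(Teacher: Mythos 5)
Your proof is correct and follows essentially the same route as the paper: rule out tail-only error patterns via \cref{lem:tail}, compute the lower bound $|\tau(w^n;n,t,r)|\ge 2^t(2^r-1)$ for body-supported weight-one patterns, and then show this exceeds $1$ once $t$ is large. The only cosmetic difference is how the $(t,r)$ trade-off is controlled: the paper imposes $t\le nR/2$ to obtain the uniform bound $r\ge R/(2-R)$, whereas you let $n\to\infty$ at fixed $t$ so that $r\to R$; both deliver the same threshold-type conclusion.
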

\begin{proof}
Given $|w^n|=1$, there are only two cases: 
\begin{itemize}
	\item $|w^{n-t}|=0$ and $|w_{n-t+1}^n|=1$; 
	\item $|w^{n-t}|=1$ and $|w_{n-t+1}^n|=0$. 
\end{itemize}
According to \cref{lem:tail}, in the former case, $m(x^n) \neq m(x^n\oplus z^n)$ holds always. Hence, we need to consider only the later case, which is followed by $\tau(w^n;n,t,r) = 2^t\tau_{n-t,r}(w^{n-t})$. Given $|w^{n-t}|=1$, we can obtain from \eqref{eq:tauTailed} that
\begin{align*}
	|\tau_{n-t,r}(w^{n-t})| \geq (1-2^{-r})2^r = (2^r-1)
\end{align*}
and further $|\tau(w^n;n,t,r)| \geq 2^t(2^r -1)$. If we suppose $t \leq {nR \over 2}$, it can be deduced from \eqref{eq:r} that $r \geq {R \over 2-R}$ and thus
\begin{align}\label{eq:temp}
	\tau(w^n;n,t,r) = 2^t (2^r -1) \geq 2^t (2^{R/(2-R)} -1). 
\end{align}
Increasing $t$ will certainly make the right hand side of \eqref{eq:temp} greater than $1$. An example of $2^t(2^r-1)$ is given in \cref{subfig:variation} for $n=64$ and $R=0.5$, which shows that $2^t(2^r-1)$ increases monotonously for small $t$. Hence, it is possible to make $|\tau(w^n;n,t,r)|\geq 1$ hold always for all $w^n$'s satisfying $|w^n|=1$ by increasing $t$. In other words, $1$-away mate codeword pairs can be removed by simply increasing tail length $t$, even for finite code length $n$.
\end{proof}

\begin{theorem}[$2$-Away Mate Codewords Pairs]\label{thm:2away}
	Given $n/t=\infty$, at least for $R>\log_2\frac{3}{2}\approx 0.585$, $2$-away mate codeword pairs can be removed by simply increasing tail length $t$ and the minimum Hamming distance of overlapped arithmetic codes is at least $3$.
\end{theorem}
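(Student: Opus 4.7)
The plan is to bound $|\tau(w^n;n,t,r)|$ from below, for every $w^n$ with $|w^n|=2$, by a quantity that grows without bound as $t\to\infty$ with $n/t\to\infty$, so that the lower bound eventually exceeds $1$ and no such $w^n$ can produce a pair of coexisting mates. By \cref{lem:tail}, I may assume $|w^{n-t}|\geq 1$, which reduces the analysis to two sub-cases: (i) both active bits lie in the body, $|w^{n-t}|=2$ and $|w_{n-t+1}^n|=0$; (ii) one bit in the body and one in the tail, $|w^{n-t}|=1$ and $|w_{n-t+1}^n|=1$.

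First I would handle sub-case (i) using \eqref{eq:tauTailed}, which gives $\tau(w^n;n,t,r)=2^t\tau_{n-t,r}(w^{n-t})$. A short minimization over index pairs $1\leq j_1<j_2\leq n-t$ and signs shows that the infimum of $|\tau_{n-t,r}(w^{n-t})|$ equals $(2^r-1)^2$, attained at $(j_1,j_2)=(1,2)$ with opposite signs. Therefore $|\tau(w^n;n,t,r)|\geq 2^t(2^r-1)^2$, which diverges as $t\to\infty$ with $r\to R>0$; this sub-case requires no restriction on $R$ beyond positivity.

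Next I would tackle sub-case (ii), where the body contribution equals $\pm 2^{t+(j_1-1)r}(2^r-1)$ for some $j_1\in[1:n-t]$ and the tail contribution equals $\pm 2^{k-1}$ for some $k\in[1:t]$, so that the tail magnitude is at most $2^{t-1}$. The body magnitude is monotone increasing in $j_1$, so it is minimized at $j_1=1$ with value $2^t(2^r-1)$. The hypothesis $R>\log_2(3/2)$, together with $r\to R$ as $n/t\to\infty$, guarantees $2^r-1>1/2$ eventually, hence the minimum body magnitude $2^t(2^r-1)$ strictly exceeds the maximum tail magnitude $2^{t-1}$. Consequently no choice of signs produces exact cancellation and
\begin{align*}
    |\tau(w^n;n,t,r)|\;\geq\; 2^t(2^r-1)-2^{t-1}\;=\;2^{t-1}(2^{r+1}-3)\;>\;0,
\end{align*}
which grows exponentially with $t$ and therefore exceeds $1$ for all sufficiently large $t$.

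The main obstacle lies in sub-case (ii): a priori, resonances between the body value $2^{t+(j_1-1)r}(2^r-1)$ and the tail value $2^{k-1}$ could collapse $|\tau|$ to something less than $1$ for some $(j_1,k)$. The key observation that defuses this worry is that, under $R>\log_2(3/2)$, the smallest admissible body magnitude already dominates the largest admissible tail magnitude, so the worst case occurs at $j_1=1$ and the gap $2^{t-1}(2^{r+1}-3)$ is precisely the margin produced by the threshold $R>\log_2(3/2)$; the same argument fails for $R\leq\log_2(3/2)$ because then $2^{r+1}-3\leq 0$ and body and tail can resonate. Combining both sub-cases with the preceding theorem on $1$-away pairs then shows that neither $1$-away nor $2$-away mate pairs survive, so the minimum Hamming distance is at least $3$.
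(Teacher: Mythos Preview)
Your proposal is correct and follows essentially the same route as the paper: the identical three-way case split (with \cref{lem:tail} discarding the all-tail case), the bound $|\tau|\geq 2^t(2^r-1)^2$ for the two-body-bits case, and the key inequality $|\tau|\geq 2^t(2^r-1)-2^{t-1}=2^t(2^r-3/2)$ for the mixed case, which is exactly where the threshold $R>\log_2(3/2)$ enters. Your treatment of the mixed case is slightly cleaner in that you explicitly note that once the minimum body magnitude dominates the maximum tail magnitude, the reverse triangle inequality gives the uniform bound for either sign choice, whereas the paper handles the ``$+$'' and ``$-$'' signs separately.
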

\begin{proof}
Given $|w^n|=2$, there are only three cases: 
\begin{itemize}
	\item $|w^{n-t}|=0$ and $|w_{n-t+1}^n|=2$; 
	\item $|w^{n-t}|=2$ and $|w_{n-t+1}^n|=0$; 
	\item $|w^{n-t}|=|w_{n-t+1}^n|=1$. 
\end{itemize}
According to \cref{lem:tail}, in the first case, $m(x^n)\neq m(x^n\oplus z^n)$, so we need to consider only the later two cases.

In the case of $|w^{n-t}|=2$ and $|w_{n-t+1}^n|=0$, we have $\tau(w^n;n,t,r) = 2^t\tau_{n-t,r}(w^{n-t})$. It is easy to get from \eqref{eq:tauTailed} that
\begin{align*}
	|\tau_{n-t,r}(w^{n-t})| 
	&\geq (1-2^{-r})(2^{2r}-2^{r})\\
	&= (2^r-1)^2.
\end{align*}
Still suppose $t \leq {nR \over 2}$, which is followed by 
\begin{align}\label{eq:2t}
	2^t(2^r-1)^2 \geq 2^t(2^{R/(2-R)}-1)^2.
\end{align}
Increasing $t$ will make the right hand side of \eqref{eq:2t} not smaller than $1$ and in turn $|\tau(w^n;n,t,r)|\geq 2^t(2^r-1)^2\geq 1$ holds always for all $w^n$'s satisfying $|w^{n-t}|=2$ and $|w_{n-t+1}^n|=0$. An example of $2^t(2^r-1)^2$ is shown in \cref{subfig:variation}, where $n=64$ and $R=0.5$.

Finally, we consider the case of $|w^{n-t}|=|w_{n-t+1}^n|=1$. We have 
\begin{itemize}
	\item $|\tau_{n-t,r}(w^{n-t})| = (2^r-1)2^{ir}$, where $i\in[0:(n-t))$; and
	\item $|\tau_{t,1}(w_{n-t+1}^n)| = 2^{j}$, where $j\in[0:t)$. 
\end{itemize}
For simplicity, let $\gamma_i \triangleq (2^r-1)2^{t+ir}$. Then $|\tau(w^n;n,t,r)| = |\gamma_i \pm 2^{j}|$, where $(i,j)\in[0:(n-t))\times[0:t)$. 
\begin{itemize}
	\item Since $\gamma_i>0$ and $2^{j}\geq 1$, we have $|\gamma_i + 2^{j}| > 1$.	
	\item Since $\gamma_i\geq (2^r-1)2^t$ and $2^j\leq 2^{t-1}$, we have 	
	\begin{align*}
		\gamma_i-2^{j} \geq (2^r-1)2^t-2^{t-1} \geq 2^t(2^r-3/2). 
	\end{align*}
	Given $n/t=\infty$, we have $r=R$. If $(2^R-3/2)>0$, we can always make $2^t(2^R-3/2)\geq1$ by increasing $t$.
\end{itemize}
Based on the above analyses, we conclude that $2$-away mate codeword pairs can be removed by simply increasing tail length $t$.
\end{proof}

We can understand \cref{thm:2away} from a more intuitive viewpoint. There are $(n-t)\times t$ possible values for $(i,j)\in[0:(n-t))\times[0:t)$. For $n$ very large and $t\ll n$, we have $r\approx R$ and thus $(\gamma_{i+1} - \gamma_i)$ increases w.r.t. $t$, \textit{i.e.}, $\gamma_i$'s tend to be sparser as $t$ increases. Hence as $t$ increases, it is less likely that $\gamma_i$'s fall within $(2^j-1,2^j+1)$'s. An example is given in \cref{subfig:delta} to confirm this point, where
\begin{align}\label{eq:delta}
	\Delta_i \triangleq \min(\gamma_i - 2^{\lfloor\log_2{\gamma_i}\rfloor}, 2^{\lceil\log_2{\gamma_i}\rceil} - \gamma_i).
\end{align}
As shown by \cref{subfig:delta}, as $t$ increases, fewer $\Delta_i$'s will be less than $1$ and when $t>6$, there is no $\Delta_i$ less than $1$. Therefore, it is possible to make $|\tau(w^n;n,t,r)| = |\gamma_i\pm 2^j| \geq 1$ hold always for all $w^n$'s satisfying $|w^{n-t}|=|w_{n-t}^n|=1$ by increasing $t$.

\begin{figure*}[!t]
	\centering
	\subfigure[]{\includegraphics[width=.5\linewidth]{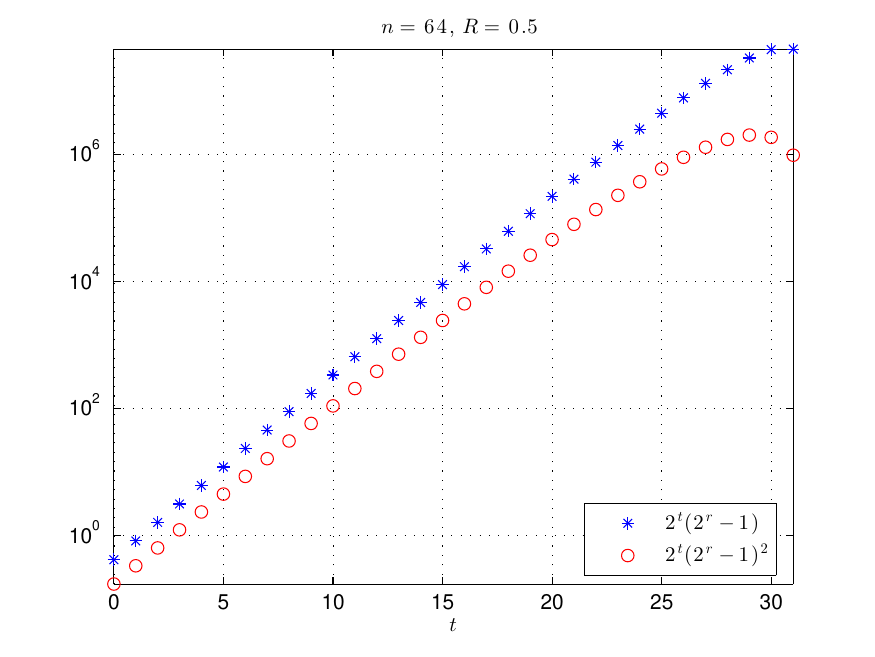}\label{subfig:variation}}%
	\subfigure[]{\includegraphics[width=.5\linewidth]{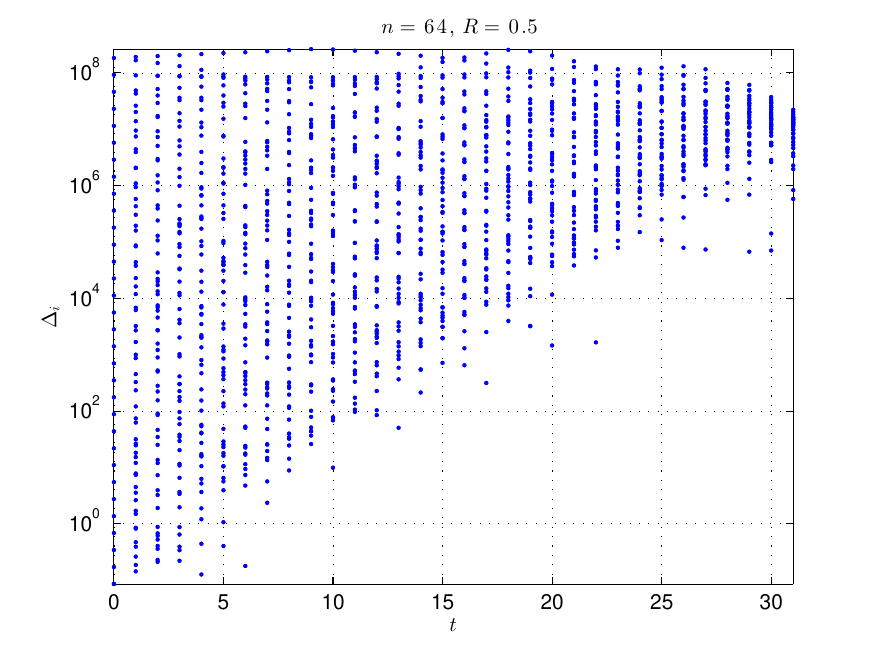}\label{subfig:delta}}
	\caption{
		(a) $2^t(2^r-1)$ and $2^t(2^r-1)^2$ versus tail length $t$. 
		(b) Distribution of $\Delta_i$, which is defined by \eqref{eq:delta}, versus tail length $t$.
	}
	\label{fig:removing}
\end{figure*}

The analysis of the general case $d>2$ is more complex, but the principle is similar to the above cases. For $n$ sufficiently large and $d\ll n$, $\psi(d;n)$ will tend to $0$ as $t$ increases. However, note that as $d$ increases, larger $t$ is required to make $\psi(d;n) = 0$.

\section{Effects of Tail Length}\label{subsec:lengthVSloss}
Following \eqref{eq:Uin}, we define the bitstream projection of tailed overlapped arithmetic codes. According to \eqref{eq:Ui}, we have
\begin{align*}
	\begin{cases}
		U_{i-1} = 2^{-r}U_i + X_i(1-2^{-r}), & i\in[1:(n-t)]\\
		U_{i-1} = (U_i+X_i)/2, 				 & i\in[(n-t+1):n].
	\end{cases}
\end{align*}
Especially, $U_0=2^{-nR}\lceil s(x^n)\rceil$ and $U_n=\lceil s(x^n)\rceil-s(x^n)$. We also define CCS for tailed overlapped arithmetic codes. Following the analyses in \cref{sec:ccs}, we can easily know $f_i(u)=\Pi(u)$ for $i\in[(n-t):n]$ and
\begin{align}\label{eq:ccstailed}
	f_{i-1}(u) = 2^{r-1}\left(f_i(u2^r) + f_i((u-(1-2^{-r}))2^{r})\right)
\end{align}
for $i\in[1:(n-t)]$. The overall metric of the $x^i$-path can still be given by \eqref{eq:logmusi}. Especially for $i\in[(n-t):n]$, the term $\log{f_i(u(x^i))}$ can be ignored because $f_i(u)=\Pi(u)$, and hence \eqref{eq:logmusi} can be reduced to
\begin{align*}
	\sigma(x^i) = |x^i|\cdot\log{\epsilon} + (i-|x^i|)\cdot\log(1-\epsilon).
\end{align*}
Similarly, following the analyses in \cref{sec:infthm}, the block rate loss of tailed overlapped arithmetic codes is 
\begin{align*}
	\lim_{n\to\infty}(nR-H(M)) = -h(U) = \int_{0}^{1}f(u)\log_2{f(u)}\,du \geq 0,
\end{align*}
where $U$ is the asymptotic initial projection defined by \eqref{eq:U} and $f(u)$ is the asymptotic initial CCS defined by \eqref{eq:asympt}. For $n$ sufficiently large, the cardinality of ${\cal C}_m$ is
\begin{align*}
	|{\cal C}_m| \approx f(m2^{-nR}) \cdot 2^{n(1-R)} = f(m2^{-nR}) \cdot 2^{(n-t)(1-r)}
\end{align*}
and the {\em Expected Coset Cardinality} (ECC) is
\begin{align*}
	\mathbb{E}[|{\cal C}_M|] \approx 2^{n(1-R)}\int_{0}^{1}{f^2(u)\,du} = 2^{(n-t)(1-r)}\int_{0}^{1}{f^2(u)\,du}.
\end{align*}
The normalized ECC is
\begin{align*}
	\lim_{n\to\infty}\frac{\mathbb{E}[|{\cal C}_M|]}{2^{n(1-R)}} = \lim_{n\to\infty}\frac{\mathbb{E}[|{\cal C}_M|]}{2^{(n-t)(1-r)}} = \int_{0}^{1}{f^2(u)\,du} \geq 1.
\end{align*}

We have learned that from the viewpoint of HDS, larger tail length $t$ has a positive effect of removing closely-spaced mate codewords. Now we discuss the effect of tail length $t$ from the viewpoint of CCS. For fairness, the average rate $R$ of $x^n$ should be fixed by coding the body $x^{n-t}$ at rate $r=\frac{nR-t}{n-t}$, which is monotonously decreasing w.r.t. $t$. As analyzed in \cref{sec:ccs}, the asymptotic initial CCS $f(u)$ will be spikier and spikier as $r$ decreases. Two extreme cases are $f(u)=\Pi(u)$ for $r=1$ and $f(u)=\delta(u-0.5)$ for $r=0$. Hence both $-h(U)$ and $\int_{0}^{1}{f^2(u)\,du}$ are monotonously decreasing w.r.t. $r$ and in turn monotonously increasing w.r.t. $t$. That means, greater $t$ will lead to larger rate loss and higher decoding complexity (as we know, the expected number of codewords searched by a full-search decoder is equal to the ECC). Thus from the viewpoint of CCS, larger tail length $t$ has a negative effect of increasing rate loss and decoding complexity. In a word, $t$ should be carefully selected to reach a compromise between HDS and CCS.

\begin{remark}[A Weird Phenomenon When $t=nR$]
If $t=nR$, we have $r=(1-2^{-r})=0$. In this extreme case, $x_{n-t+1}^n$ is transmitted directly without coding, while $x^{n-t}$ is not transmitted. Now \eqref{eq:ccstailed} becomes $f_{i-1}(u)=f_i(u)$ and thus $f_0(u)=\dots=f_n(n)=\Pi(u)$, which leads to $h(U)=0$ and $\int_{0}^{1}{f^2(u)\,du}=1$. Hence, there is no rate loss and the complexity of full-search decoder is minimized.   

Meanwhile, since $r=0$, we have 
\begin{align*}
	\tau(w^n;n,t,r) = \tau(w^n;n,nR,0) = \tau_{t,1}(w_{n-t+1}^n) \in \mathbb{Z}, 
\end{align*}
showing that $\tau(w^n;n,t,r)$ purely depends on $w_{n-t+1}^n$ and has nothing to do with $w^{n-t}$. In addition, since $\tau(w^n;n,t,r)$ is always an integer, $x^n$ and $y^n=(x^n\oplus z^n)$ cannot coexist in the same coset if $z_{n-t+1}^n\neq 0^t$. In other words, $\psi(d;n)=0$ for $d>(n-t)=n(1-R)$. Finally,
\begin{align}\label{eq:extremehds}
	\psi(d;n) = 
	\begin{cases}
		\binom{n(1-R)}{d}, & d\in[0:n(1-R)]\\
		0, & d\in(n(1-R):n],
	\end{cases}		
\end{align}
showing that $\psi(d;n)$ for small $d$ does not converge to $0$, \textit{i.e.}, closely-spaced mate codewords cannot be removed. 
\end{remark}

\section{Experimental Verification}
This section presents four experiments to verify theoretical analyses from different aspects. We use the first experiment to show how tail length $t$ changes the HDS of overlapped arithmetic codes. Then the second experiment reveals how $\psi(d;n)$ for small $d$ varies w.r.t. tail length $t$. After that, the third experiment illustrates the impact of tail length $t$ on rate loss and decoding complexity. Finally, the last experiment demonstrates how the performance of overlapped arithmetic codes can be improved by increasing tail length $t$.

\subsection{Hamming Distance Spectrum}
\begin{figure*}[!t]
	\centering
	\subfigure[]{\includegraphics[width=.5\linewidth]{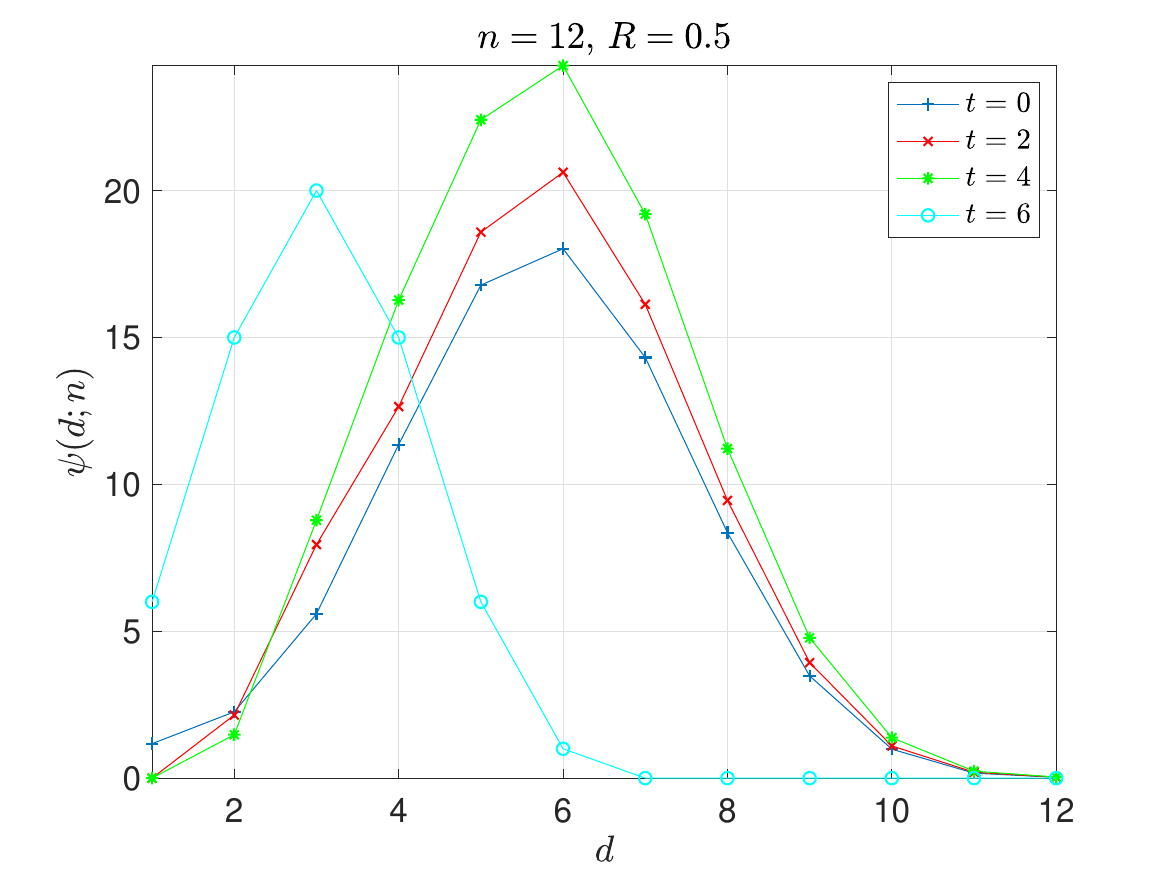}\label{subfig:hds5}}%
	\subfigure[]{\includegraphics[width=.5\linewidth]{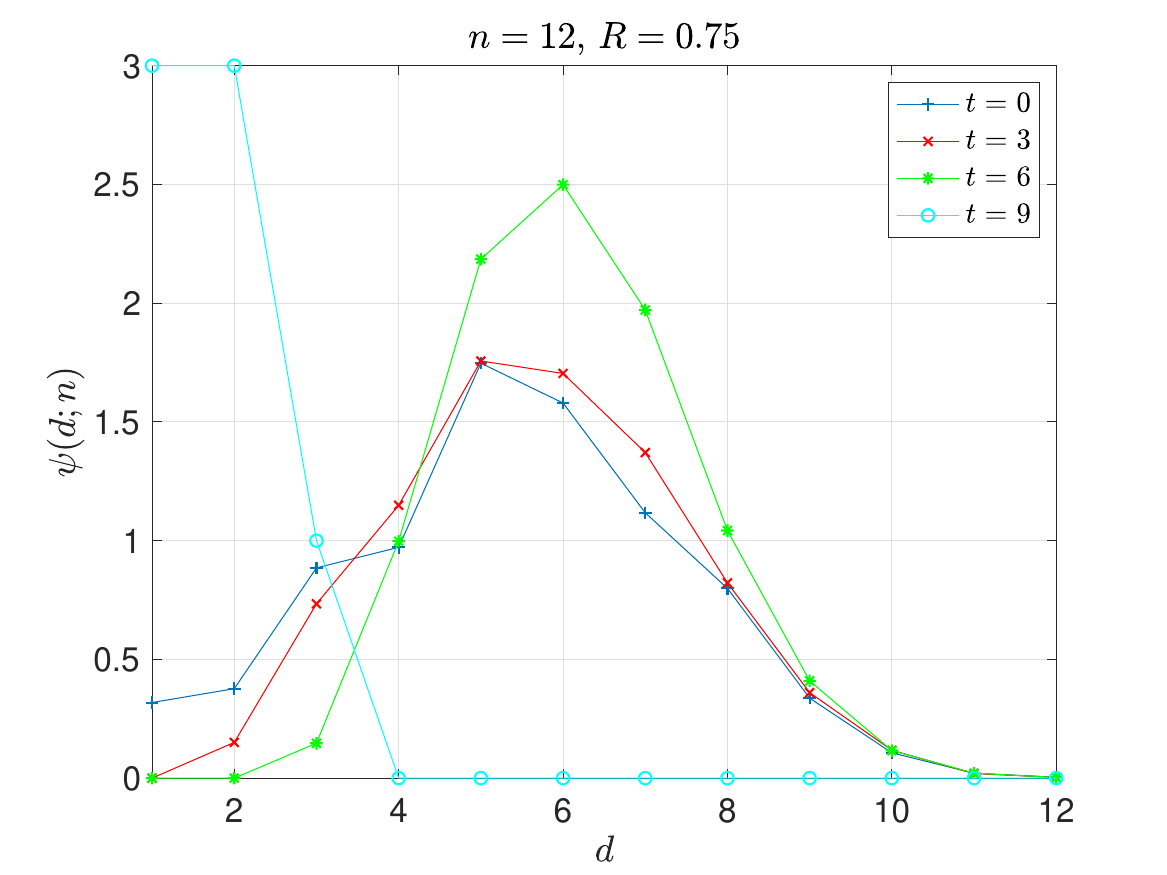}\label{subfig:hds75}}
	\caption{Examples of HDS for different tail lengths.}
	\label{fig:thds}
\end{figure*}

The first experiment studies the effect of tail length $t$ on the HDS of overlapped arithmetic codes. Some examples are given in \cref{fig:thds}, where code length $n=12$. In \cref{subfig:hds5}, we set $R=0.5$ and thus $t\in[0:nR]=[0:6]$; while in \cref{subfig:hds75}, we set $R=0.75$ and thus $t\in[0:nR]=[0:9]$. It can be seen that, as $t$ increases (but not very close to bitstream length $nR$), $\psi(d;n)$ will tend to be smaller (and even become $0$ when $t$ is large) for small $d$ but tend to be larger for large $d$. Note that the HDS when $t=nR$ is very different from the HDS in other cases, confirming the correctness of \eqref{eq:extremehds}.

\subsection{Closely-Spaced Mate Codewords}
\begin{figure*}[!t]
	\centering
	\subfigure[]{\includegraphics[width=.5\linewidth]{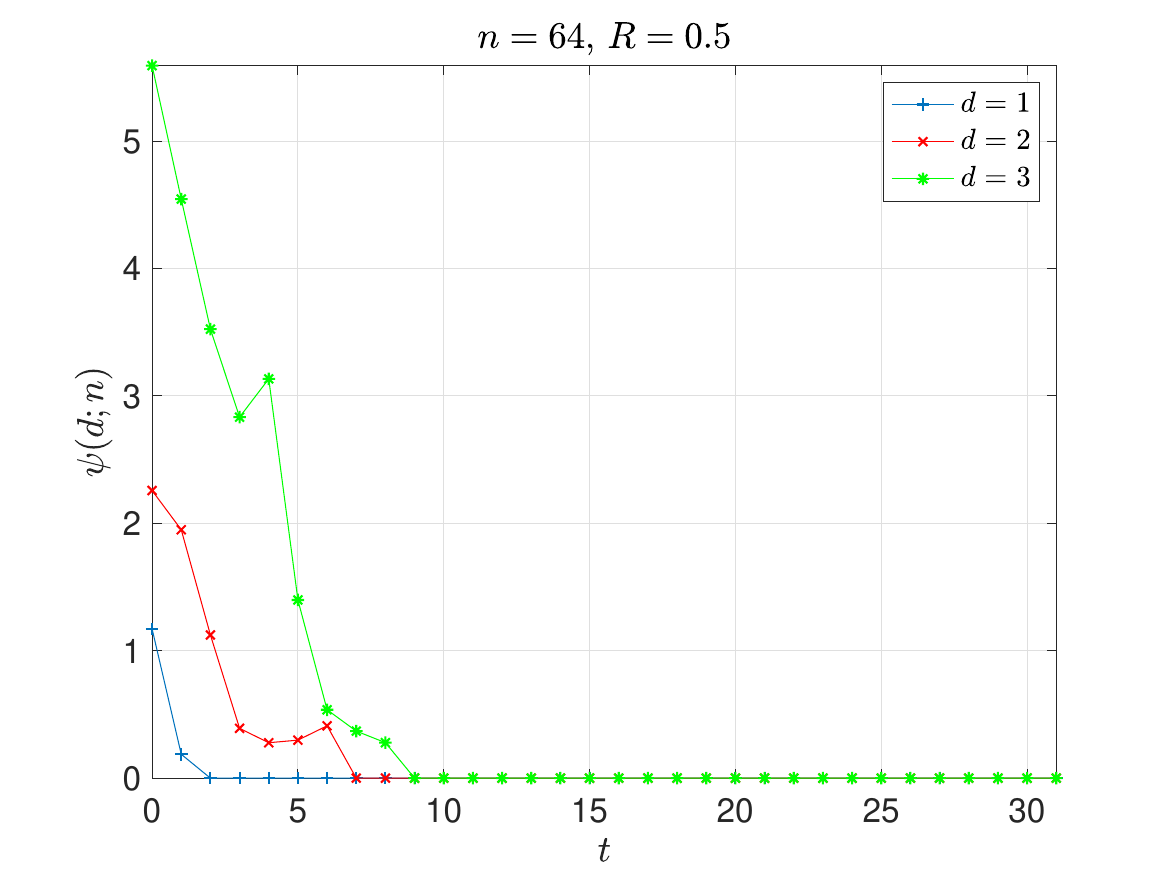}\label{subfig:hdsvstH}}%
	\subfigure[]{\includegraphics[width=.5\linewidth]{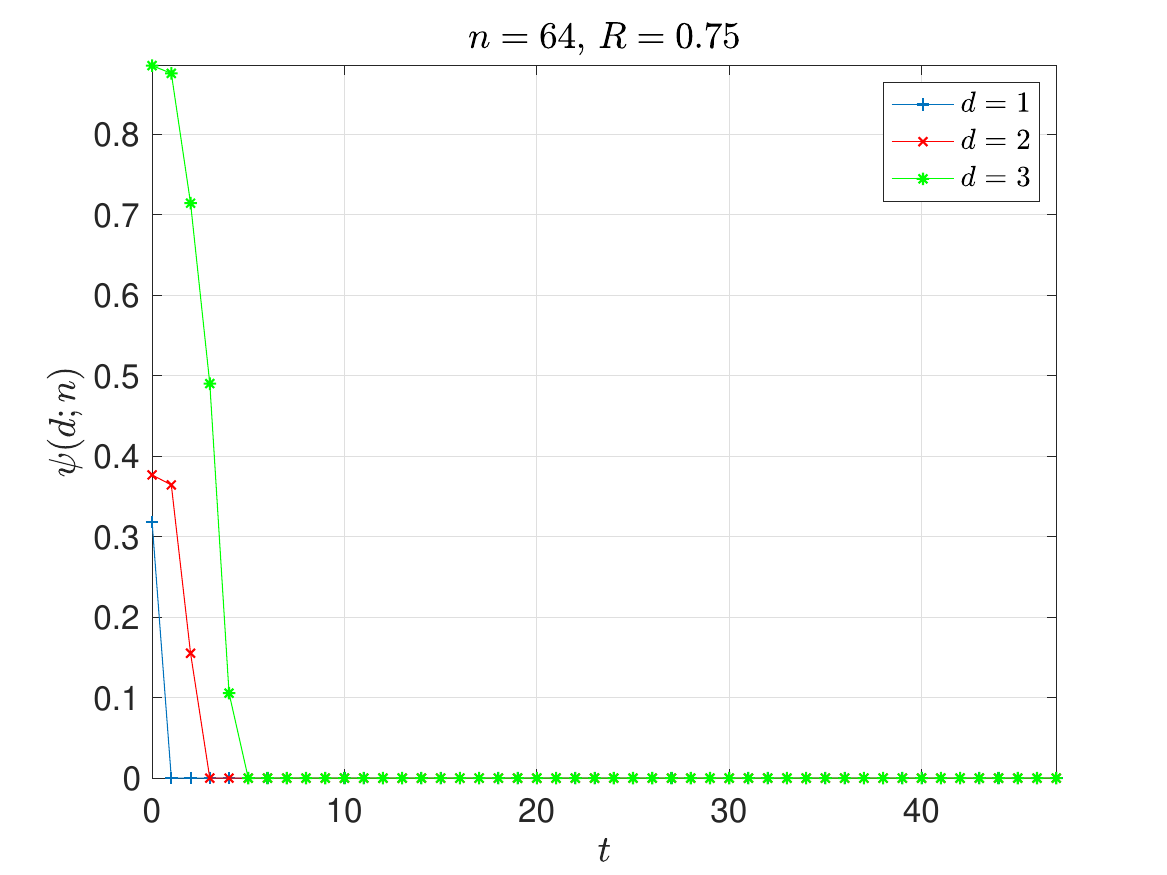}\label{subfig:hdsvst3Q}}
	\caption{Effect of tail length $t$ on $\psi(d;n)$ for small $d$.}
	\label{fig:psivst}
\end{figure*}

The second experiment studies the effect of tail length $t$ on $\psi(d;n)$ for small $d$. Two examples are given in \cref{fig:psivst}, where code length $n=64$ and code rate $R=0.5$ or $0.75$. It can be seen that in general, $\psi(d;n)$ for small $d$ tends to be smaller as $t$ increases, implying that closely-spaced mate codewords can be removed by increasing tail length $t$ (but not very close to bitstream length $nR$). However, it must be pointed out that the decrease of $\psi(d;n)$ for small $d$ w.r.t. $t$ is not strictly monotonous, so tail length $t$ should be carefully chosen in practice to optimize the overall performance.

\subsection{Rate Loss and Decoding Complexity}
\begin{figure*}[!t]
	\centering
	\subfigure[]{\includegraphics[width=.5\linewidth]{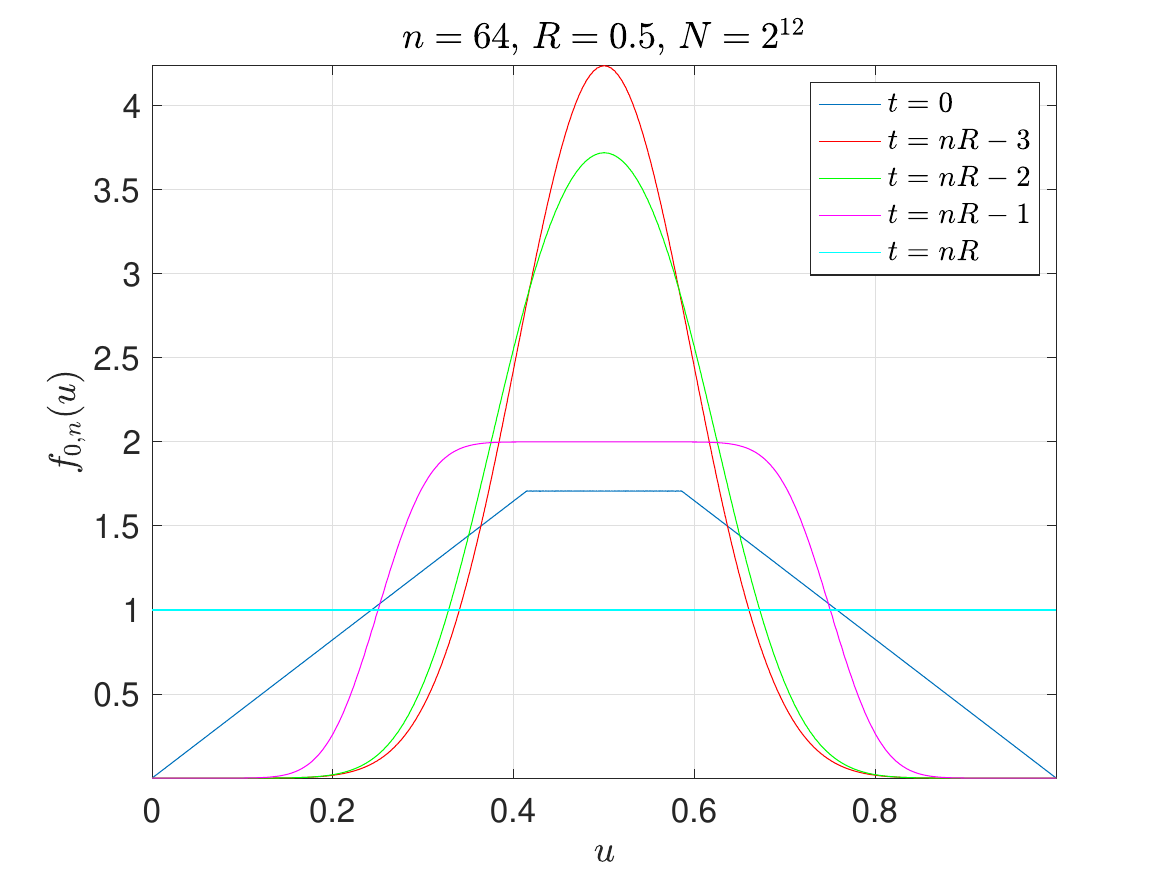}\label{subfig:ccs}}%
	\subfigure[]{\includegraphics[width=.5\linewidth]{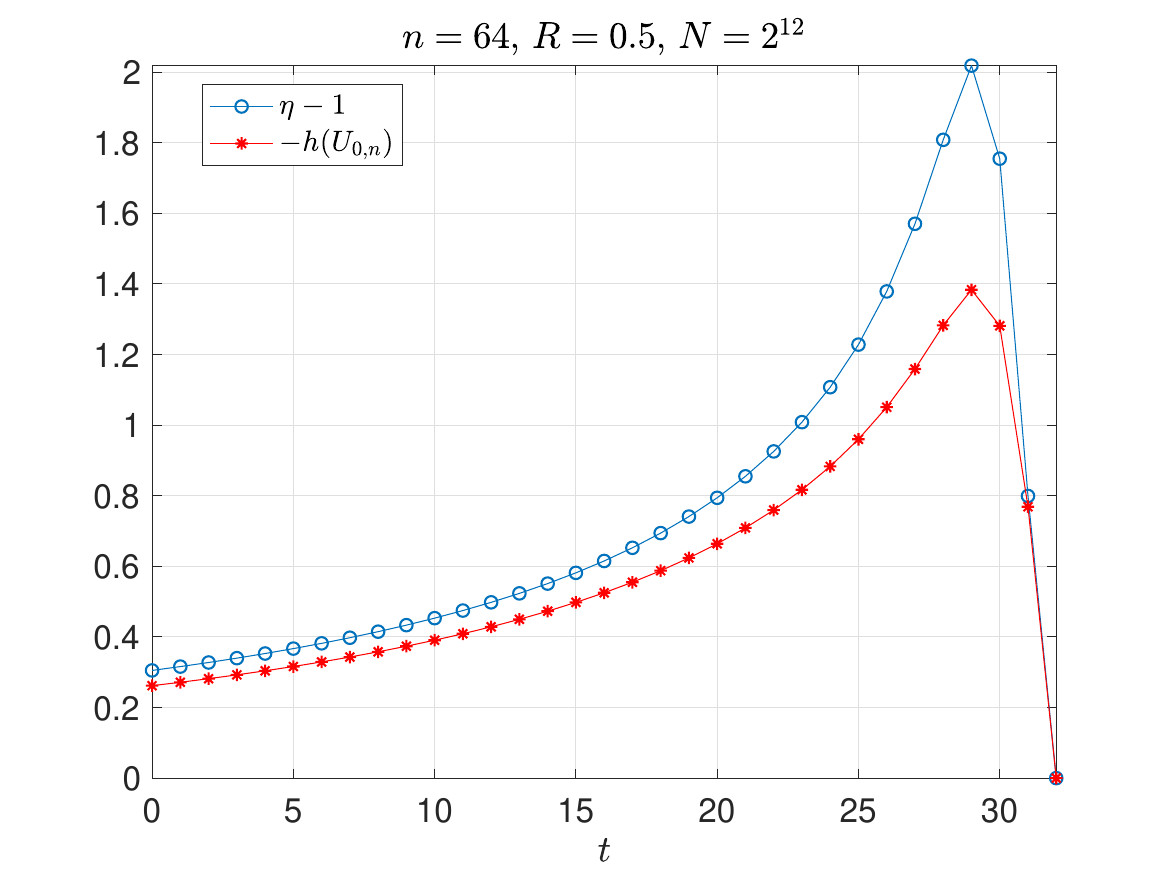}\label{subfig:loss}}
	\caption{Examples of initial CCS, normalized ECC, and rate loss of tailed overlapped arithmetic codes. The results are obtained by the numerical algorithm given in \cref{sec:numerical} with segment number $N=2^{12}$. (a) Initial CCS versus tail length. (b) Rate loss and normalized ECC versus tail length.}
	\label{fig:ccsloss}
\end{figure*}

The third experiment studies the effect of tail length $t$ on rate loss (equal to $-h(U_0)$) and decoding complexity (measured by normalized ECC). Since both rate loss and normalized ECC are closely related to the initial CCS, some examples of $f_0(u)$ are given in \cref{subfig:ccs}, where code length $n=64$ and code rate $R=0.5$. By comparing the curve of $t=0$ with the curve of $t=(nR-3)$, it can be seen that for $t\ll nR$, the initial CCS does tend to be spikier as $t$ increases. However, as $t\to nR$, there is an opposite trend, \textit{i.e.}, $f_0(u)$ will tend to be flatter. In the extreme case of $t=nR$, $f_0(u)$ is uniform over $[0,1)$.

Then some examples of $\eta\triangleq\int_{0}^{1}{f_0^2(u)\,du}$ and $-h(U_0)$ versus tail length $t$ are given in \cref{subfig:loss}, where code length $n=64$ and code rate $R=0.5$. It can be seen that, for $t\ll nR$, both $\eta$ and $-h(U_0)$ will go up as $t$ increases, implying higher decoding complexity and greater rate loss. This is the negative effect of larger tail length. As $t$ approaches $nR$, there is an opposite trend, \textit{i.e.}, both $\eta$ and $-h(U_0)$ will go down. In the extreme case $t=nR$, we have $\eta=1$ and $-h(U_0)=0$, implying zero rate loss and minimized decoding complexity. However, as shown by \eqref{eq:extremehds} and \cref{fig:thds}, the HDS is very bad in this case.

\subsection{Frame Error Rate}
\begin{figure*}[!t]
	\centering
	\subfigure[]{\includegraphics[width=.5\linewidth]{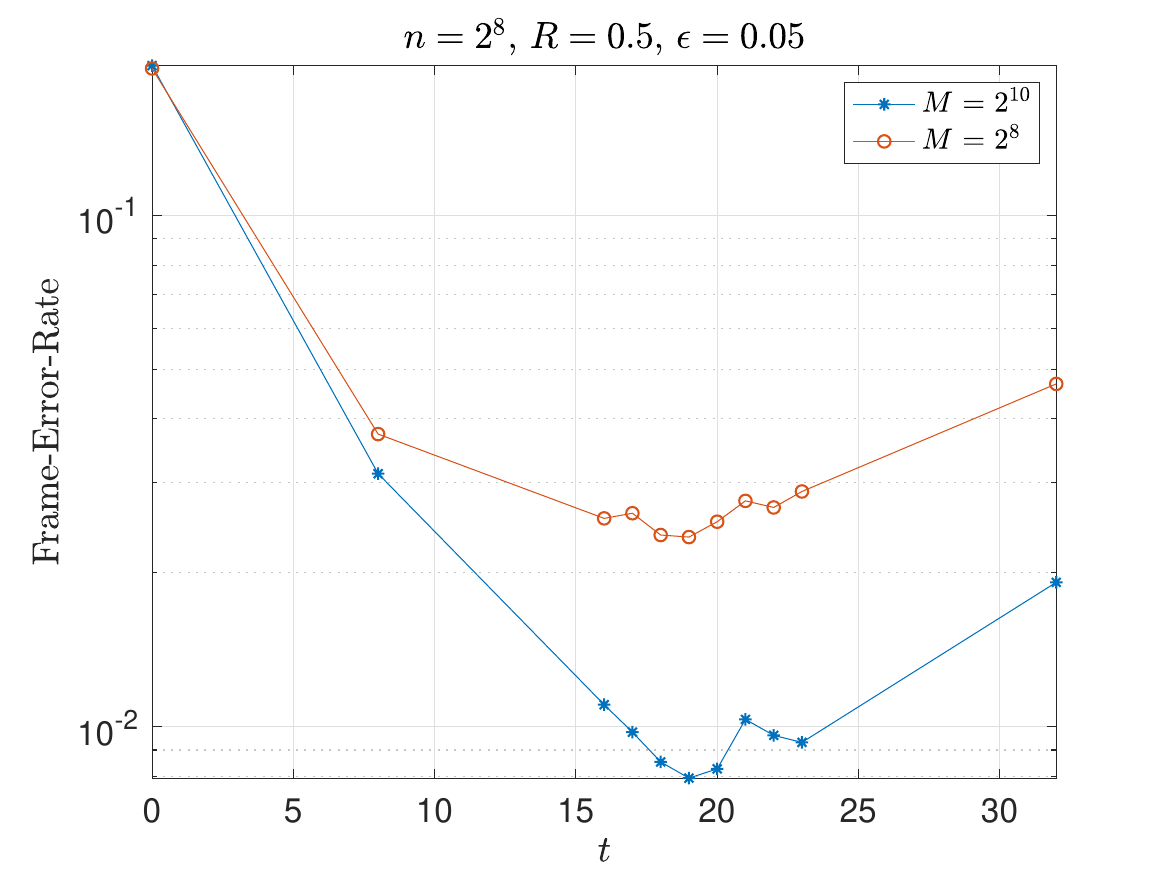}\label{subfig:fer5}}%
	\subfigure[]{\includegraphics[width=.5\linewidth]{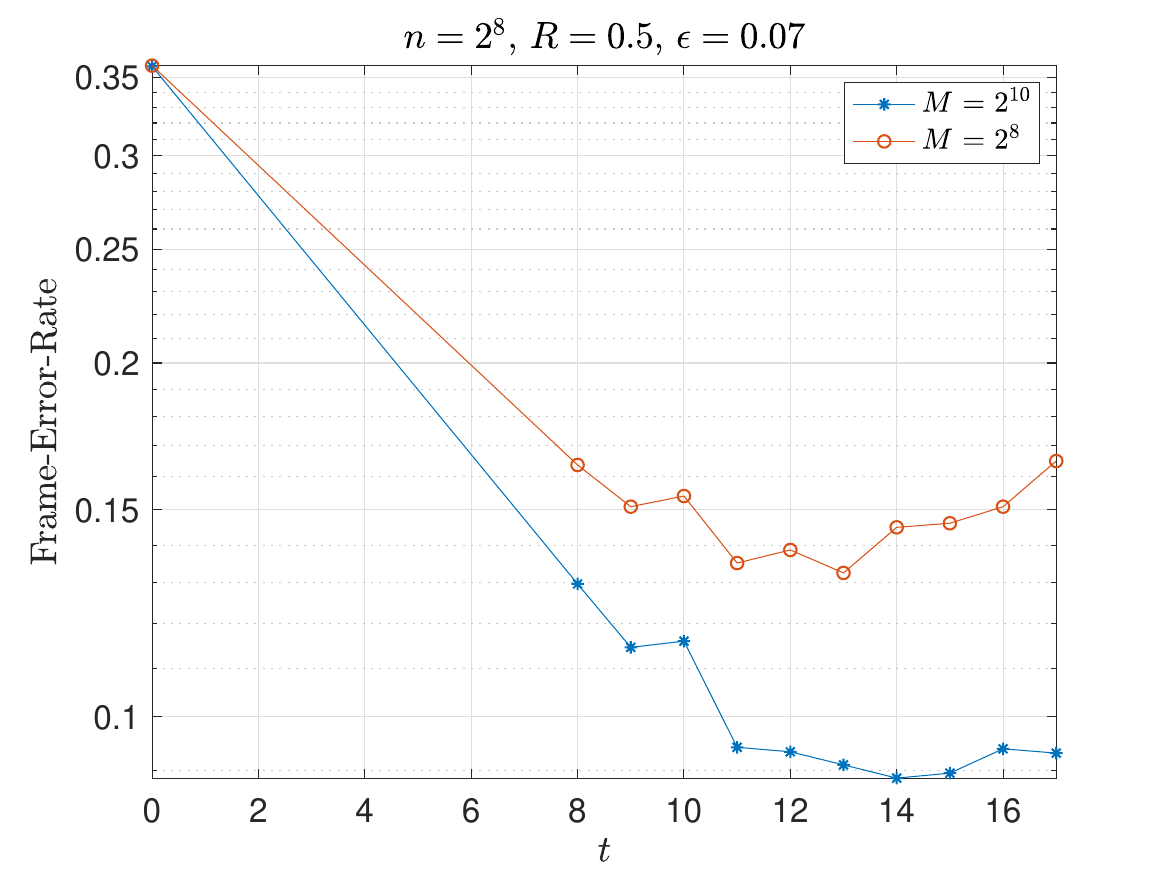}\label{subfig:fer7}}
	\caption{Examples to show how tail length $t$ impacts {\em frame-error-rate}.}
	\label{fig:fer}
\end{figure*}
The last experiment studies the effect of tail length $t$ on {\em frame-error-rate} (FER). Two examples are given in \cref{fig:fer}. The experimental settings can be found in the titles and legends, where $M$ is the maximum number of survival paths after pruning and $\epsilon$ is the crossover probability between the source and side information. As shown in \cref{subsec:decoding}, for large $M$, no significant gain can be achieved by making use of CCS during decoding, so all results in \cref{fig:fer} are obtained without using CCS. It can be seen that the FER of tailless overlapped arithmetic codes ($t=0$) is very high. As tail length $t$ increases, the FER will first drop sharply, then fluctuate slightly, and finally rebound gradually. The lowest FER is observed at $t\approx 19$ for $\epsilon=0.05$ and at $t\approx 14$ for $\epsilon=0.07$. In general, $t\approx 16$ is a good choice for almost every setting.

\chapter{Conclusion}\label{c-conclusion}
\vspace{-25ex}%
This monograph begins with arithmetic codes by emphasizing two implementation issues---{\em underflow handling} and {\em bitstream termination}. After a brief description of overlapped arithmetic codes, this monograph introduces related theoretical advances. It is revealed that overlapped arithmetic codes are actually a kind of nonlinear coset codes partitioning source space into unequal-sized cosets, and hence have two fundamental properties---{\em Coset Cardinality Spectrum} (CCS), which is a continuous function describing how coset cardinality is distributed, and {\em Hamming Distance Spectrum} (HDS), which is a discrete function describing how inter-mate Hamming distance is distributed, where mates refer to codewords belonging to the same coset. For both CCS and HDS, theoretical formulas are derived, and their tight connection is found; while for CCS, numerical algorithms are also given. On the basis of CCS, the decoding algorithm of overlapped arithmetic codes is derived. Experimental results show that aided by CCS, the performance of low-complexity decoder can be significantly improved. To deduce HDS, a powerful analysis tool---{\em Coexisting Interval} is developed, which can also be used to deduce the frame error rate of overlapped arithmetic codes.

The performance of overlapped arithmetic codes is very poor, if all source symbols of a block are mapped to partially overlapped intervals, because the minimum Hamming distance is always 1. The performance of overlapped arithmetic codes can be improved by mapping the last few source symbols of a block, called {\em tail}, to non-overlapped intervals. This improvement can be theoretically explained from the viewpoint of HDS, {\em i.e.}, the minimum Hamming distance of overlapped arithmetic codes may be increased by mapping the {\em tail} of a block to non-overlapped intervals. This monograph discusses the effect of tail length on overlapped arithmetic codes from the aspects of both HDS and CCS. It is shown that closely-spaced mate codewords can be removed by increasing tail length, which perfectly explains the superiority of tailed overlapped arithmetic codes over tailless overlapped arithmetic codes. It is also shown that increasing tail length will cause higher decoding complexity and larger rate loss meanwhile. These findings indicate that increasing tail length usually brings both positive (removing closely-spaced mate codewords) and negative (higher decoding complexity and larger rate loss) effects, so tail length should be tuned carefully to optimize the overall performance. At the same time, it is also found that the effects of tail length are sometimes weird, so how to optimize overlapped arithmetic codes still remains a very difficult open question.

However, it should be pointed out that this monograph is limited to uniformly distributed binary sources. For overlapped arithmetic codes, there have been some advances on non-uniform binary sources \cite{FangTIT20,FangTIT21} and uniform non-binary sources \cite{FangTIT23}.

All theoretical analyses of this monograph are perfectly verified by experiments. A software package of this monograph for result reproduction is available on \url{https://github.com/fy79/NOW-FTCIT}.

\backmatter  

\printbibliography

\end{document}